\DeclarePairedDelimiter\bra{\langle}{\rvert}
\DeclarePairedDelimiter\ket{\lvert}{\rangle}
\DeclarePairedDelimiterX\braket[2]{\langle}{\rangle}{#1 \delimsize\vert #2}
\newenvironment{bprooftree}
  {\leavevmode\hbox\bgroup}
  {\DisplayProof\egroup}
\newcommand{\secref}[1]{\S\ref{#1}}
\newcommand{\Ob}{\text{Ob}}%
\newcommand{\id}{\mathrm{id}}
\newcommand{\Id}{\mathrm{Id}}
\newcommand{\op}{\ensuremath{\mathrm{op}}}
\newcommand{\up}{\ensuremath{{\uparrow\,}}}
\newcommand{\down}{\ensuremath{\mathop{\downarrow}}}
\newcommand{\Na}{\ensuremath{\mathbb N}}
\newcommand{\VQPL}{%
  \ensuremath{\mathtt{VQPL}}
}
\newcommand{\QPL}{%
  \ensuremath{\mathtt{QPL}}
}
\newcommand{\PFPC}{\ensuremath{\mathtt{PFPC}}}
\newcommand{\Paths}{\mathrm{Paths}}
\newcommand{\TPaths}{\mathrm{TPaths}}
\newcommand{\AAA}{\ensuremath{\mathbf{A}}}
\newcommand{\BBB}{\ensuremath{\mathbf{B}}}
\newcommand{\CC}{\ensuremath{\mathbf{C}}}
\newcommand{\DCPO}{\ensuremath{\mathbf{DCPO}}}
\newcommand{\dcpo}{\DCPO}
\newcommand{\dcpobs}{\ensuremath{\dcpo_{\perp!}}}
\newcommand{\RR}{\ensuremath{\mathbf{R}}}
\newcommand{\Set}{\ensuremath{\mathbf{Set}}}
\newcommand{\KL}{\ensuremath{\dcpo}_{\MM}}
\newcommand{\QQ}{\ensuremath{\mathbf{Q}}}
\newcommand{\QQs}{\ensuremath{\QQ_*}}
\newcommand{\DOM}{\ensuremath{\mathbf{DOM}}}
\newcommand{\vN}{\ensuremath{\mathbf{vN}}}
\newcommand{\vNs}{\ensuremath{\vN_*}}
\newcommand{\HA}{\ensuremath{\mathbf{HA}}}
\newcommand{\HAs}{\ensuremath{\HA_*}}
\newcommand{\TD}{\ensuremath{\mathbf{TD}}}
\newcommand{\PD}{\ensuremath{\mathbf{PD}}}
\newcommand{\PDe}{\ensuremath{\PD_e}}
\newcommand{\JJ}{\ensuremath{\mathcal{J}}}
\newcommand{\II}{\ensuremath{\mathcal{I}}}
\newcommand{\MM}{\ensuremath{\mathcal{M}}}
\newcommand{\VV}{\ensuremath{\mathcal{V}}}
\newcommand{\UU}{\ensuremath{\mathcal{U}}}
\newcommand{\LL}{\ensuremath{\mathcal{L}}}
\newcommand{\TTT}{\ensuremath{\mathcal{T}}}
\newcommand{\SSS}{\ensuremath{\mathcal{S}}}
\newcommand{\OO}{\ensuremath{\mathbf{O}}}
\newcommand{\sfold}{\ensuremath{\mathrm{fold}}}
\newcommand{\qfold}{\mathrm{fold}}
\newcommand{\qunfold}{\mathrm{unfold}}
\newcommand{\sunfold}{\ensuremath{\mathrm{unfold}}}
\newcommand{\CN}{\ensuremath{\mathbb C}}
\newcommand{\Alpha}{\ensuremath{\mathrm{A}}}
\newcommand{\ff}{\mathbf{f}}
\newcommand{\vq}{\mathbf{v}}
\newcommand{\wq}{\mathbf{w}}
\newcommand{\qq}{\mathbf{q}}
\newcommand{\rr}{\mathbf{r}}
\newcommand{\xx}{\mathbf{x}}
\newcommand{\yy}{\mathbf{y}}
\newcommand{\BGamma}{\mathbf{\Gamma}}
\newcommand{\lambdaq}{{\lambda}}
\newcommand{\run}{\text{run}}
\newcommand{\qsubst}{\ensuremath{\twoheadleftarrow}}
\newcommand{\mat}[1]{\ensuremath{\mathrm{M}_{#1}}}
\newcommand{\trace}[1]{\ensuremath{\mathrm{tr}(#1)}}
\newcommand{\matrixAlg}[1]{\ensuremath{\mathrm{M}_{#1}(\mathbb C)}}
\newcommand{\probto}[1]{\ensuremath{\xrightarrow{#1}}}
\newcommand{\Halt}{\mathrm{Halt}}
\newcommand{\Val}{\mathrm{Val}}
\newcommand{\ValC}{\mathrm{ValC}}
\newcommand{\Prog}{\mathrm{Prog}}
\newcommand{\ValRel}{\mathrm{ValRel}}
\newcommand{\ValReld}{\ValRel(X,P,e)}
\newcommand{\qbit}{\textnormal{\textbf{qbit}}}
\newcommand{\bit}{\textnormal{\textbf{bit}}}
\newcommand{\fold}{\textnormal{\text{fold}}}
\newcommand{\bfold}{\textbf{fold}}
\newcommand{\unfold}{\textnormal{\text{unfold}}}
\newcommand{\bunfold}{\textbf{unfold}}
\newcommand{\FOLD}[1]{\mathbb{I}^{#1}}
\newcommand{\UNFOLD}[1]{\mathbb{E}^{#1}}
\newcommand{\lrb}[1]{{\llbracket #1 \rrbracket}}
\newcommand{\qlrb}[1]{\mathbf{\lrb{#1}}}
\newcommand{\flrb}[1]{{\llparenthesis #1 \rrparenthesis}}
\newcommand{\elrb}[2]{{\lVert #1 \rVert^{#2}}}
\newcommand{\elrbc}[1]{{\elrb{#1}{\vec C}}}
\newcommand{\elrbs}[1]{{\lVert #1 \rVert}}
\newcommand{\tleq}{\ensuremath{\vartriangleleft}}
\newcommand{\tleqd}{\ensuremath{\tleq_{X,P}^e}}
\newcommand{\qtleqd}{\ensuremath{\ \overline{\tleq_{X,P}^e}\ } }
\newcommand{\tleqone}{\ensuremath{\tleq_{X_1,P_1}^{e_1}}}
\newcommand{\tleqtwo}{\ensuremath{\tleq_{X_2,P_2}^{e_2}}}
\newcommand{\qtleqtwo}{\ensuremath{\ \overline{\tleq_{X_2,P_2}^{e_2}}\ }}
\newcommand{\btleq}{\ensuremath{\blacktriangleleft}}
\newcommand{\ol}[1]{\ensuremath{\ \overline{#1}\ }}
\newcommand{\defeq}{\stackrel{\textrm{{\scriptsize def}}}{=}}
\newcommand{\eqdef}{\defeq}
\newcommand{\naturalto}{\ensuremath{\Rightarrow}}
\newcommand{\ktimes}{%
  \mathbin{\vbox{\offinterlineskip
    \mathsurround=0pt
    \ialign{\hfil##\hfil\cr
      \normalfont\scalebox{1}{.}\cr
      \noalign{\kern-.05ex}
      $\times$\cr}
}}%
}
\newcommand{\kplus}{%
  \mathbin{\vbox{\offinterlineskip
    \mathsurround=0pt
    \ialign{\hfil##\hfil\cr
      \normalfont\scalebox{1}{.}\cr
      \noalign{\kern+.2ex}
      $+$\cr}
}}%
}
\newcommand{\kstar}{%
  \mathbin{\vbox{\offinterlineskip
    \mathsurround=0pt
    \ialign{\hfil##\hfil\cr
      \normalfont\scalebox{1}{.}\cr
      \noalign{\kern.1ex}
      $\star$\cr}
}}%
}
\newcommand{\kto}{%
  \mathbin{\vbox{\offinterlineskip
    \mathsurround=0pt
    \ialign{\hfil##\hfil\cr
      \normalfont\scalebox{2}{.}\cr
      \noalign{\kern-.7ex}
      $\to$\cr}
}}%
}
\newcommand{\ktwo}{%
  \mathbin{\vbox{\offinterlineskip
    \mathsurround=0pt
    \ialign{\hfil##\hfil\cr
      \normalfont\scalebox{1}{.}\cr
      \noalign{\kern.1ex}
      $\to$\cr}
}}%
}
\newcommand{\kcirc}{%
  \ensuremath{
    \mathbin{\raisebox{0.9pt}{\scalebox{0.7}{$\varodot$}}}
  }%
}%
\newcommand{\kid}{%
\ensuremath{\textbf{id}}
}
\newsavebox{\@brx}
\newcommand{\llangle}[1][]{\savebox{\@brx}{\(\m@th{#1\langle}\)}%
\mathopen{\copy\@brx\kern-0.5\wd\@brx\usebox{\@brx}}}
\newcommand{\rrangle}[1][]{\savebox{\@brx}{\(\m@th{#1\rangle}\)}%
\mathclose{\copy\@brx\kern-0.5\wd\@brx\usebox{\@brx}}}
\def\moverlay{\mathpalette\mov@rlay}
\def\mov@rlay#1#2{\leavevmode\vtop{%
   \baselineskip\z@skip \lineskiplimit-\maxdimen
   \ialign{\hfil$\m@th#1##$\hfil\cr#2\crcr}}}
\newcommand{\charfusion}[3][\mathord]{
    #1{\ifx#1\mathop\vphantom{#2}\fi
        \mathpalette\mov@rlay{#2\cr#3}
      }
    \ifx#1\mathop\expandafter\displaylimits\fi}
\newcommand{\cupdot}{\charfusion[\mathbin]{\cup}{\cdot}}
\numberwithin{equation}{section}
\theoremstyle{plain}
\newtheorem{theorem}{Theorem}[subsection]
\newtheorem{lemma}[theorem]{Lemma}
\newtheorem{proposition}[theorem]{Proposition}
\newtheorem{corollary}[theorem]{Corollary}
\theoremstyle{definition}
\newtheorem{definition}[theorem]{Definition}
\newtheorem{example}[theorem]{Example}
\newtheorem{notation}[theorem]{Notation}
\theoremstyle{remark}
\newtheorem{remark}[theorem]{Remark}
\newtheorem{assumption}[theorem]{Assumption}
\tikzstyle{braceedge}=[decorate,decoration={brace,amplitude=10pt}]
\tikzstyle{square box}=[rectangle,fill=white,draw=black,minimum height=6mm,minimum width=6mm,yshift=0.7mm]
\tikzstyle{wire label}=[font=\footnotesize, auto,swap]
\tikzstyle{none}=[inner sep=0pt]
\tikzstyle{empty}=[rectangle,fill=none,draw=none]
\tikzstyle{scaled}=[rectangle,fill=none,draw=none, font=\small]
\tikzstyle{to}=[->,draw=black]
\tikzstyle{naturalto}=[-{Implies},double distance=1.5pt]
\tikzstyle{hook}=[right hook->, draw=black]
\tikzstyle{blueArr}=[->, draw=blue]
\tikzstyle{equal-arrow}=[double equal sign distance]
\tikzstyle{every picture}=[baseline=-0.25em]
\newcommand{
\InputIfFileExists{}{}{\input{./tikz/}}
}[1]{
\InputIfFileExists{#1}{}{\input{./tikz/#1}}
}
\newcommand{\InputIfFileExists{}{}{\input{./tikz/}}}[1]{\InputIfFileExists{#1}{}{\input{./tikz/#1}}}
\newcommand{\stikz}[2][1]{\scalebox{#1}{
\InputIfFileExists{#2}{}{\input{./tikz/#2}}
}}
\newcommand{\cstikz}[2][1]{\begin{center}\stikz[#1]{#2}\end{center}}
\begin{document}

\title{Semantics for Variational Quantum Programming}         


\author{Xiaodong Jia}
\affiliation{
  \department{School of Mathematics}
  \institution{Hunan University}
  \city{Changsha}
  \postcode{410082}
  \country{China}          
}

\author{Andre Kornell}
\affiliation{
  \department{Department of Computer Science}
  \institution{Tulane University}
  \city{New Orleans}
  \state{Louisiana}
  \country{USA}          
}

\author{Bert Lindenhovius}
\affiliation{
  \department{Department of Knowledge-Based Mathematical Systems}
  \institution{Johannes Kepler Universität}
  \city{Linz}
  \country{Austria}
}

\author{Michael Mislove}
\affiliation{
  \department{Department of Computer Science}
  \institution{Tulane University}
  \city{New Orleans}
  \state{Louisiana}
  \country{USA}          
}

\author{Vladimir Zamdzhiev}
\affiliation{
  \department{Université de Lorraine, LORIA}
  \institution{Inria}
  \city{Nancy}
  \postcode{F 54000}
  \country{France}          
}

\begin{abstract}
We consider a programming language that can manipulate both classical and
quantum information. Our language is type-safe and designed for variational
quantum programming, which is a hybrid classical-quantum computational
paradigm.  The classical subsystem of the language is the Probabilistic
FixPoint Calculus (PFPC), which is a lambda calculus with mixed-variance
recursive types, term recursion and probabilistic choice. The quantum
subsystem is a first-order linear type system that can manipulate quantum
information. The two subsystems are related by mixed classical/quantum terms
that specify how classical probabilistic effects are induced by quantum
measurements, and conversely, how classical (probabilistic) programs can
influence the quantum dynamics. We also describe a sound and computationally
adequate denotational semantics for the language.  Classical probabilistic
effects are interpreted using a recently-described commutative probabilistic
monad on $\DCPO$. Quantum effects and resources are interpreted in a category
of von Neumann algebras that we show is enriched over (continuous) domains.
This strong sense of enrichment allows us to develop novel semantic methods
that we use to interpret the relationship between the quantum and classical
probabilistic effects. By doing so we provide the first denotational analysis
that relates models of classical probabilistic programming to models of
quantum programming.
\end{abstract}

\begin{CCSXML}
\end{CCSXML}

\ccsdesc[300]{Semantics of programming languages}

\keywords{Quantum Programming, Probabilistic Programming, Semantics}  

\maketitle

\section{Introduction}
\label{sec:intro}

Variational quantum algorithms \cite{vqa1,vqa2} are increasingly
important in quantum computation. The main idea is to use hybrid
classical-quantum algorithms that work in tandem to solve computational
problems. The classical part of the computation is executed on a classical
processor and the quantum part on a quantum device. During the overall
computation, intermediary results produced by the quantum device occur with
certain probabilities, and then are passed to the classical processor, which
performs computations that are used to tune the parameters of
the quantum component of the algorithm, thereby influencing the quantum
dynamics.

These kinds of hybrid classical-quantum algorithms pose interesting challenges
for the design of suitable programming languages.  Clearly, if we wish to
understand how to program in such scenarios, we need to devise a type system
equipped with an operational semantics that correctly models the manipulation
of \emph{quantum resources}. This includes accounting for the fact that quantum
measurements induce \emph{probabilistic computational effects} that are
inherited by the classical side of the system. Moreover, quantum information
behaves very differently from classical information. For instance, quantum
information cannot be copied \cite{no-cloning}.  In order to avoid potential
runtime errors, a \emph{substructural typing discipline}
\cite{girard,benton-small,benton-wadler} where contraction is restricted is
appropriate for the quantum subsystem. But, when manipulating classical
information, such restrictions are unnecessary and often inconvenient.
Therefore we wish to have a classical (non-linear) subsystem together with a
quantum (linear) one that interact nicely with each other.  Furthermore,
separating the quantum and classical modes of operations has the added benefit
that it makes it easier to \emph{extend} existing classical programming
languages with the necessary features for type-safe variational quantum
programming.

The purpose of the present paper is to address this challenge by describing a
type-safe programming language that combines classical (probabilistic)
computation with quantum computation. Another one of our goals is to provide a
denotational interpretation so that we may establish useful reasoning
principles and therefore cement the design of our language.

\subsection{Our Contributions}

We describe a programming language that is suitable for hybrid
classical-quantum computation that we call $\mathtt{VQPL}$, the
\emph{Variational Quantum Programming Language} (\secref{sec:syntax}).  The
language has two kinds of judgements: a classical (non-linear) judgement that
represents classical programs, and a quantum (linear) judgement that represents
quantum programs. Our type system also contains hybrid classical-quantum
formation rules that explain how classical probabilistic and quantum
computation interact with each other (see Figure \ref{fig:syntax-mixed}).

From an operational perspective, \VQPL supports both classical probabilistic
and quantum effects. The quantum dynamics are modelled via a probabilistic
reduction relation on quantum configurations (terms with quantum data embedded
within them), where the probabilities of reduction are determined in accordance
with the laws of quantum mechanics. The classical dynamics are modelled via a
probabilistic reduction relation on terms, where the probabilities of reduction
are induced by the quantum dynamics. We show that our system $\VQPL$ is type-safe
(\secref{sub:type-safety}).

We also provide a denotational interpretation of our system. We use a
recently-described commutative probabilistic monad on the category $\DCPO$
\cite{m-monad} in order to interpret the classical probabilistic effects (we
recall this construction in \secref{sec:probabilistic}).  We interpret quantum
effects and resources in the category of hereditarily atomic von Neumann
algebras (\secref{sec:operator-algebras}), which are mathematical structures
used by physicists to study quantum foundations \cite{takesaki:oa1}.  We prove that this category
is enriched over \emph{continuous domains} (\secref{subsection:enrichment vN}).
This is a very strong sense of enrichment that allows us to develop novel
semantic methods that we use to interpret the relationship between the quantum
and classical probabilistic effects (\secref{sec:relationship}). In particular,
we show that the theory of Kegelspitzen \cite{keimelplotkin17} provides a
crucial link between the two different ways that probability arises on the
classical and quantum sides, respectively. This allows us to systematically
present all the relevant mathematical structure within a categorical model
(\secref{sec:categorical-model}) and to use our model to provide a sound and strongly
adequate interpretation of $\VQPL$ (\secref{sec:semantics}). Our paper is the first to present a mathematical and denotational analysis on the
link between models of classical probabilistic programming and quantum
programming. We discuss related work and provide concluding remarks in
\secref{sec:related}.

\section{VQPL - The Variational Quantum Programming Language}
\label{sec:syntax}

In this section we describe the syntax and operational semantics for \VQPL. The
classical subsystem  is the \emph{Probabilistic FixPoint Calculus}
($\mathtt{PFPC}$), the same language as in \cite{m-monad}.  $\PFPC$ is a
call-by-value simply-typed lambda calculus with mixed-variance recursive types,
(induced) term recursion and discrete probabilistic choice.  The quantum
fragment of the language is a first-order linear type system with inductive
types and equipped with the usual primitives for manipulating quantum
information.  This fragment is most similar to \cite{qpl-fossacs}, however in
the present paper we choose a Church-style syntax in order to more easily
relate it to the classical subsystem. The distinguishing feature of our system
is the mixed linear/non-linear and quantum/classical rules that allow the
programmer to switch between the classical and quantum modes of operation.
These features make our language suitable for programming variational
quantum algorithms, where both classical and quantum computation work in
synchrony in order to solve computational problems. Our mixed quantum/classical
rules have some similarities with the QWIRE/EWIRE languages \cite{qwire,ewire},
but both of these languages have some severe limitations that make them unsuitable for
describing variational quantum algorithms, whereas our language does not. This
is discussed in more detail in \secref{sec:related}.

In order to make the paper easier to read, we use \textbf{bold} notation for the
quantum types, contexts and terms, so that we can easily distinguish them from
the classical primitives.

\subsection{The Type Structure}

We use $X,Y$ to range over \emph{classical type variables} and we use $\mathbf
X,\mathbf Y$ to range over \emph{quantum type variables}.  We use $\Theta$ and
$\mathbf \Theta$ to range over classical and quantum \emph{type contexts},
respectively. Type variables and type contexts are used for the formation of
recursive types, just like in FPC \cite{fpc-syntax,fiore-plotkin}.  We say that
a classical type context $\Theta = X_1, \ldots, X_n$ is \emph{well-formed},
written $\Theta \vdash$, whenever all type variables within it are distinct, and
likewise for a quantum type context.  The \emph{classical types} of our
language are ranged over by $P,R$, and the \emph{quantum types} are ranged over
by $\AAA, \BBB$. The grammars and formation rules for our types are specified
in Figure \ref{fig:syntax-grammars-types}. The notation $\Theta \vdash P$
indicates that type $P$ is \emph{well-formed} in type context $\Theta$, and
likewise for the quantum types. Of course, we are only interested in
well-formed types and from now on we only deal with such types.  The
\emph{closed classical types} are those where $\cdot \vdash P$ and the
\emph{closed quantum types} are those where $\cdot \vdash \AAA$.  Notice that
recursive types may be formed with no restrictions on the admissible logical
polarities, just like in FPC.

\begin{figure}
\small{
\centering
\begin{tabular}{l l l l l l l l}
  Quantum Type Variables   & $\mathbf X, \mathbf Y$ &     & \multicolumn{4}{l}{Classical Type Variables \quad $X,Y$      }    \\ 
  Quantum Type Contexts    & $\mathbf \Theta$       & ::= & $\mathbf X_1, \mathbf X_2, \ldots, \mathbf X_n$ \\
  Classical Type Contexts  & $\Theta$               & ::= & $X_1, X_2, \ldots, X_n$  \\[1ex]
  Quantum Types            & $\mathbf A, \mathbf B$ & ::= & \multicolumn{5}{l}{ $\mathbf X$ | $\mathbf I$ | \qbit\ | $\mathbf{A \oplus B}$ | $\mathbf{A \otimes B}$ | $\mathbf{\mu \mathbf X.A}$} \\
	Classical Types          & $P, R$                 & ::= & \multicolumn{5}{l}{$X$  | $1$ | $P+R$ | $P \times R$ | $P \to R$ | $Q(\mathbf A, \mathbf B)$ | $\mu X.P$} \\
  Observable Quantum Types & $\mathbf O$            & ::= & \multicolumn{5}{l}{ $\mathbf X$ | $\mathbf I$ | $\mathbf{O_1 \oplus O_2}$ | $\mathbf{O_1 \otimes O_2}$ | $\mathbf{\mu \mathbf X.O}$} \\
  Observable Classical Types & $O$                  & ::= & \multicolumn{5}{l}{ $X$ | $1$ | $O_1 + O_2$ | $O_1 \times O_2$ | $\mu X.O$} \\[1ex]
\end{tabular}
\[
    \begin{bprooftree}
    \AxiomC{$\mathbf \Theta \vdash $}
    \UnaryInfC{$\mathbf \Theta \vdash \mathbf \Theta_i$}
    \end{bprooftree}
    \quad
    \begin{bprooftree}
    \AxiomC{$\mathbf \Theta \vdash $}
    \UnaryInfC{$\mathbf \Theta \vdash \mathbf I$}
    \end{bprooftree}
    \quad
    \begin{bprooftree}
    \AxiomC{$\mathbf \Theta \vdash $}
    \UnaryInfC{$\mathbf \Theta \vdash \qbit$}
    \end{bprooftree}
    \quad
    \begin{bprooftree}
    \AxiomC{$\mathbf \Theta \vdash\mathbf  A$}
    \AxiomC{$\mathbf \Theta \vdash\mathbf  B$}
    \RightLabel{$\star \in \{\oplus, \otimes \}$}
    \BinaryInfC{$\mathbf \Theta \vdash\mathbf  A \star\mathbf  B$}
    \end{bprooftree}
    \quad
    \begin{bprooftree}
    \AxiomC{$\mathbf \Theta,\mathbf  X \vdash\mathbf  A$}
    \UnaryInfC{$\mathbf \Theta \vdash \mu\mathbf  X. \mathbf A$}
    \end{bprooftree}
\]
\[
    \begin{bprooftree}
    \AxiomC{$\Theta \vdash $}
    \UnaryInfC{$\Theta \vdash \Theta_i$}
    \end{bprooftree}
    \quad
    \begin{bprooftree}
    \AxiomC{$\Theta \vdash $}
    \UnaryInfC{$\Theta \vdash 1$}
    \end{bprooftree}
    \quad
    \begin{bprooftree}
    \AxiomC{$\Theta \vdash P$}
    \AxiomC{$\Theta \vdash R$}
    \RightLabel{$\star \in \{+, \times, \to \}$}
    \BinaryInfC{$\Theta \vdash P \star R$}
    \end{bprooftree}
    \quad
    \begin{bprooftree}
    \AxiomC{$\cdot \vdash \mathbf A$}
    \AxiomC{$\cdot \vdash \mathbf B$}
    \AxiomC{$\Theta \vdash$}
    \TrinaryInfC{$\Theta \vdash Q(\mathbf A, \mathbf B)$}
    \end{bprooftree}
    \quad
    \begin{bprooftree}
    \AxiomC{$\Theta, X \vdash P$}
    \UnaryInfC{$\Theta \vdash \mu X. P$}
    \end{bprooftree}
\]
\[
|\mathbf X| \eqdef X \quad |\mathbf I| \eqdef 1 \quad |\mathbf{O_1 \otimes O_2} | \eqdef |\mathbf{O_1}| \times |\mathbf{O_2}| \quad |\mathbf{O_1 \oplus O_2} | \eqdef |\mathbf{O_1}| + |\mathbf{O_2}|  \quad |\mathbf{\mu X. O} | \eqdef \mu X. |\mathbf{O}|
\]
}
\vspace{-3ex}\caption{Grammars and formation rules for types and translation between observable types.}
\label{fig:syntax-grammars-types}
\end{figure}

We now explain how our types should be understood. On the quantum side:
$\mathbf I$ is the quantum unit type; $\qbit$ is the type of qubits (quantum
bits); $\AAA \oplus \BBB$ represents quantum sum types; $\AAA \otimes \BBB$
represents quantum pair types; $\mathbf{\mu \mathbf X.A}$ is used to form
quantum inductive types. All terms of quantum type obey a linear typing
discipline and so these types should be viewed as being linear. On the
classical side: $1$ is the classical unit type; $P+R$ is for classical sum
types; $P \times R$ is for classical pair types; $P \to R$ is for classical
(higher-order) function types; $\mu X. P$ is used to form classical recursive types; $Q(\AAA,
\BBB)$ is the type of \emph{first-order} quantum lambda abstractions between
quantum types $\AAA$ and $\BBB$. All terms of classical type follow a
non-linear typing discipline (no restrictions on weakening and contraction), so
they should be understood as being non-linear.  Notice that the type
$Q(\AAA,\BBB)$ is \emph{classical} (non-linear). This is because our quantum
lambda abstractions are first-order and therefore they may be used any number
of times (including zero). This type would correspond to $!(\AAA
\multimap \BBB)$ in a call-by-value linear lambda calculus and may be informally thought of in this way.

\begin{example}
\label{ex:basic-types}
Some important (closed) types are defined as follows:
  \emph{Booleans} as $\mathtt{Bool} \eqdef 1 + 1$;
  \emph{Bits} as $\mathbf{Bit} \eqdef \mathbf I \oplus \mathbf I$;
  \emph{Natural numbers} as $\mathtt{Nat} \eqdef \mu X. 1 + X$;
  \emph{Linear/Quantum natural numbers} as $\mathbf{QNat} \eqdef \mathbf \mu \mathbf X. \mathbf I \oplus \mathbf X$;
  \emph{Lists of type} $A$ as $\mathtt{List}(A) \eqdef \mu X. 1 + (A \times X)$;
  \emph{Linear/Quantum lists of type} $\AAA$ as $\mathbf{List}(\AAA) \eqdef \mu \mathbf{X. I \oplus (A \otimes X)}$;
  \emph{Classical Streams of type} $A$ as $\mathtt{Stream}(A) \eqdef \mu X. 1 \to A \times X$.
\end{example}

A \emph{subset} of our classical/quantum types are the \emph{observable}
classical/quantum types, which are defined in Figure
\ref{fig:syntax-grammars-types}. We use $O$ and $\mathbf O$ to range over the
observable classical/quantum types, respectively.  These types play an
important role for some of the mixed quantum-classical rules that we explain
later. The observable quantum types may also be understood from a physical
perspective because values of these types correspond to physically observable
information. An example of a non-observable quantum type is $\qbit$. Indeed,
observing a qubit in the physical sense is done via a quantum measurement, which
destroys the qubit and produces a bit as output (note that $\mathbf{Bit}$ is
observable in our system). The observable classical types are exactly the
\emph{ground types}, i.e., types formed without any use of classical/quantum
function space. The observable quantum types are in a 1-1 correspondence with
the observable classical types. For each observable quantum type $\OO$, we
write $|\OO|$ to indicate its observable classical counterpart. See
Figure \ref{fig:syntax-grammars-types} for a precise definition of $|-|$.

\subsection{The Term Language}

\begin{figure}
\small{
\centering
\begin{tabular}{l l l l l l l l}
  Quantum Variables   & $\mathbf x, \mathbf y$ \qquad     & \multicolumn{4}{l}{Classical Variables \quad $x, y$ } \qquad Quantum Configurations\quad $\mathcal C$ ::= $[\ket \psi, \ell, \mathbf q]$ \\   
  Classical Terms          & $m,n$                  & ::= & \multicolumn{5}{l}{ $x\ |\ ()\ |\ (m,n)\ |\ \pi_1 m\ |\ \pi_2 m\ |\  \fold\ m\ |\ \unfold\ m\ |\ 
  \lambda x. m\ |\ mn\ |\ $}\\
  && & $\emph{in}_1 m\ |\ \emph{in}_2 m\ |\  (\text{case}\ m\ \text{of}\ \emph{in}_1 x \Rightarrow n_1\ |\ \emph{in}_2 y \Rightarrow n_2)\ | $  \\
                           &                        &     & \multicolumn{5}{l}{$ \lambdaq \mathbf{(x_1, \ldots, x_n) . q } \ |\ \text{new}\ |\ \text{meas}\ |\ U\ |\ \text{run } \mathcal C$ } \\
  Quantum Terms            & $\mathbf q, \mathbf r$ & ::= & \multicolumn{5}{l}{ $\mathbf{ x\ |\ *\ |\ q;r \ |\ q \otimes r\ |\ let\ x \otimes y = q\ in\ r\ |\ fold\ q\ |\ unfold\ q\ |\ } m\mathbf q\ |\ \mathbf{init}\ m\ |$ } \\
                           &                        &     & \multicolumn{5}{l}{ $\mathbf{ in_1\ q\ | in_2\ q\ |\ (case\ q\ of\ in_1 x \Rightarrow r_1\ |\ in_2 y \Rightarrow r_2 }) \ |\ \mathbf{let}\  x = \text{lift}\ \mathbf q\ \mathbf{ in\ r}  $ } \\
  Classical Values         & $v,w$                  & ::= & \multicolumn{5}{l}{ $x\ |\ ()\ |\ (v,w)\ |\ \emph{in}_1 v\ |\ \emph{in}_2 v\ |\ \fold\ v\ |\ \lambda x. m\ |\ \lambdaq \mathbf{(x_1, \ldots, x_n) . q } \ |\ $ }\\
  &&& $\text{new}\ |\ \text{meas}\ |\ U $  \\
  Quantum   Values         & $\mathbf v, \mathbf w$ & ::= & \multicolumn{5}{l}{ $\mathbf{ x\ |\ *\ |\ v \otimes w\ |\ in_1\ v\ |\ in_2\ v\ |\ fold\ v  } $ }
\end{tabular}
\vspace{-2ex}\caption{Grammars for terms, values and quantum configurations. }
\label{fig:syntax-grammars-terms}
\[
  \begin{bprooftree}
  \AxiomC{\phantom{$\Phi \vdash $}}
  \UnaryInfC{$\Phi, x:P \vdash x: P$}
  \end{bprooftree}
  \begin{bprooftree}
  \AxiomC{\phantom{$\Phi \vdash $}}
  \UnaryInfC{$\Phi \vdash (): 1$}
  \end{bprooftree}
  \begin{bprooftree}
  \def\ScoreOverhang{0.5pt}
  \AxiomC{$ \Phi \vdash m : P$}
  \AxiomC{$ \Phi \vdash n : R$}
  \BinaryInfC{$ \Phi \vdash (m, n) : P \times R$}
  \end{bprooftree}
  \ 
  \begin{bprooftree}
  \def\ScoreOverhang{0.5pt}
  \AxiomC{$ \Phi \vdash m : P_1 \times P_2$}
  \RightLabel{$i \in \{1,2\}$}
  \UnaryInfC{$ \Phi \vdash \pi_i m : P_i$}
  \end{bprooftree}
\]

\[
  \begin{bprooftree}
  \def\ScoreOverhang{0.5pt}
  \AxiomC{$ \Phi \vdash m : P$}
  \UnaryInfC{$ \Phi \vdash \emph{in}_{1} m : P+R$}
  \end{bprooftree}
  \begin{bprooftree}
  \def\ScoreOverhang{0.5pt}
  \AxiomC{$ \Phi \vdash m : R$}
  \UnaryInfC{$ \Phi \vdash \emph{in}_{2} m : P+R$}
  \end{bprooftree}
  \begin{bprooftree}
  \def\ScoreOverhang{0.5pt}
  \AxiomC{$ \Phi \vdash m : P_1+P_2$}
  \AxiomC{$ \Phi, x : P_1 \vdash n_1 : R$}
  \AxiomC{$ \Phi, y : P_2 \vdash n_2 : R$}
  \TrinaryInfC{$ \Phi \vdash ( \text{case}\ m\ \text{of}\ \emph{in}_1 x \Rightarrow n_1\ |\ \emph{in}_2 y \Rightarrow n_2 ) : R$}
  \end{bprooftree}
\]

\[
  \begin{bprooftree}
  \def\ScoreOverhang{0.5pt}
  \AxiomC{$ \Phi, x: P \vdash m : R$}
  \UnaryInfC{$ \Phi \vdash \lambda x^P . m : P \to R$}
  \end{bprooftree}
  \begin{bprooftree}
  \def\ScoreOverhang{0.5pt}
  \AxiomC{$ \Phi \vdash m : P \to R$}
  \AxiomC{$ \Phi \vdash n : P$}
  \BinaryInfC{$ \Phi \vdash mn : R$}
  \end{bprooftree}
  \begin{bprooftree}
  \def\ScoreOverhang{0.5pt}
  \AxiomC{$ \Phi \vdash m : P[\mu X. P / X]$}
  \UnaryInfC{$ \Phi \vdash \fold{}\ m: \mu X. P$}
  \end{bprooftree}
  \ 
  \begin{bprooftree}
  \def\ScoreOverhang{0.5pt}
  \AxiomC{$ \Phi \vdash m : \mu X. P$}
  \UnaryInfC{$ \Phi \vdash \unfold\ m : P[\mu X. P / X]$}
  \end{bprooftree}
\]
\vspace{-2ex}\caption{Formation rules for terms in the (classical) FPC subsystem.}
\label{fig:classical-term-syntax}


\[
  \begin{bprooftree}
  \AxiomC{\phantom{$\vdash \Phi, x: \mathbf A$}}
  \UnaryInfC{$ \Phi; \mathbf{ x: A \vdash x: A}$}
  \end{bprooftree}
  \quad
  \begin{bprooftree}
  \AxiomC{\phantom{$\vdash \Phi, x: \mathbf A$}}
  \UnaryInfC{$ \Phi; \cdot \vdash *: \mathbf I  $}
  \end{bprooftree}
  \begin{bprooftree}
  \AxiomC{$ \Phi; \mathbf{ \Gamma_1 \vdash q : I} $}
  \AxiomC{$ \Phi; \mathbf{ \Gamma_2 \vdash r : A} $}
  \BinaryInfC{$ \Phi; \mathbf{ \Gamma_1, \Gamma_2 \vdash q ; r : \mathbf A} $}
  \end{bprooftree}
  \begin{bprooftree}
  \def\ScoreOverhang{0.5pt}
  \AxiomC{$ \Phi; \mathbf{ \Gamma_1 \vdash q : A} $}
  \AxiomC{$ \Phi; \mathbf{ \Gamma_2 \vdash r : B} $}
  \BinaryInfC{$ \Phi; \mathbf{ \Gamma_1, \Gamma_2 \vdash q \otimes r : A \otimes B} $}
  \end{bprooftree}
\]

\[
  \begin{bprooftree}
  \def\ScoreOverhang{0.5pt}
  \AxiomC{$ \Phi; \mathbf{ \Gamma_1 \vdash q : A_1 \otimes A_2} $}
  \AxiomC{$ \Phi; \mathbf{ \Gamma_2, x : A_1, y : A_2 \vdash r : B} $}
  \BinaryInfC{$ \Phi; \mathbf{ \Gamma_1, \Gamma_2 \vdash let\ x \otimes y = q\ in\ r : B } $}
  \end{bprooftree}
  \begin{bprooftree}
  \def\ScoreOverhang{0.5pt}
  \AxiomC{$ \Phi; \mathbf{ \Gamma \vdash q : A[\mu X. A / X] } $}
  \UnaryInfC{$ \Phi; \mathbf{ \Gamma \vdash \bfold\ q : \mu X. A } $}
  \end{bprooftree}
  \begin{bprooftree}
  \def\ScoreOverhang{0.5pt}
  \AxiomC{$ \Phi; \mathbf{ \Gamma \vdash q : \mu X. A} $}
  \UnaryInfC{$ \Phi; \mathbf{ \Gamma \vdash \bunfold\ q : A[\mu X. A / X]} $}
  \end{bprooftree}
\]

\[
  \quad
  \begin{bprooftree}
  \def\ScoreOverhang{0.5pt}
  \AxiomC{$ \Phi; \mathbf{ \Gamma \vdash q : A}$ }
  \UnaryInfC{$ \Phi; \mathbf{ \Gamma \vdash in_1\ q : A \oplus B} $}
  \end{bprooftree}
  \begin{bprooftree}
  \def\ScoreOverhang{0.5pt}
  \AxiomC{$ \Phi; \mathbf{ \Gamma \vdash q : B} $}
  \UnaryInfC{$ \Phi; \mathbf{ \Gamma \vdash in_2\ q : A \oplus B} $}
  \end{bprooftree}
  \]
  
  \[
  \begin{bprooftree}
  \def\ScoreOverhang{0.5pt}
  \AxiomC{$ \Phi; \mathbf{ \Gamma_1 \vdash q : A_1 \oplus A_2 } $}
  \AxiomC{$ \Phi; \mathbf{ \Gamma_2, x : A_1 \vdash r_1 : B} $}
  \AxiomC{$ \Phi; \mathbf{ \Gamma_2, y : A_2 \vdash r_2 : B} $}
  \TrinaryInfC{$ \Phi; \mathbf{ \Gamma_1, \Gamma_2 \vdash (case\ q\ of\ in_1 x \Rightarrow r_1\ |\ in_2 y \Rightarrow r_2) : B} $}
  \end{bprooftree}
\]
\vspace{-2ex}\caption{Formation rules for terms in the purely linear first-order subsystem.}
\label{fig:first-order-term-syntax}
\[
  \begin{bprooftree}
  \def\ScoreOverhang{0.5pt}
  \AxiomC{ \phantom{$\Phi; \vdash  $}}
  \UnaryInfC{$ \Phi \vdash \text{new} : Q(\bit, \qbit) $}
  \end{bprooftree}
  \quad
  \begin{bprooftree}
  \def\ScoreOverhang{0.5pt}
  \AxiomC{ \phantom{$\Phi; \vdash  $}}
  \UnaryInfC{$ \Phi \vdash \text{meas} : Q(\qbit, \bit) $}
  \end{bprooftree}
  \quad
  \begin{bprooftree}
  \def\ScoreOverhang{0.5pt}
  \AxiomC{$U$ is a unitary of arity of $n$}
  \UnaryInfC{$ \Phi \vdash U : Q(\qbit^{\otimes n}, \qbit^{\otimes n}) $}
  \end{bprooftree}
  \qquad
  \bit \equiv \mathbf I \oplus \mathbf I
\]
\vspace{-2ex}\caption{Formation rules for term constants that manipulate quantum information.}
\label{fig:quantum-constants-syntax}
\[
|\mathbf *| \eqdef () \quad |\mathbf{v \otimes w}| \eqdef (|\mathbf v|, |\mathbf w|)  \quad |\mathbf{in_1\ v} | \eqdef \emph{in}_1 |\mathbf v| \quad |\mathbf{in_2\ v} | \eqdef \emph{in}_2|\mathbf{v}| \quad |\mathbf{fold\ v}| \eqdef \text{fold}\ |\mathbf v|
\]
\vspace{-2ex}\caption{Translation between \emph{closed} and observable quantum/classical values. }
\label{fig:syntax-translate}

\[
  \begin{bprooftree}
  \def\ScoreOverhang{0.5pt}
  \AxiomC{$ \Phi; \mathbf{ x_1 : A_1, \ldots, x_n : A_n \vdash q : B} $ }
  \UnaryInfC{$ \Phi \vdash \lambdaq \mathbf{(x_1, \ldots, x_n) . q } : Q(\mathbf{A_1 \otimes \cdots \otimes A_n, B}) $ }
  \end{bprooftree}
  \quad
  \begin{bprooftree}
  \def\ScoreOverhang{0.5pt}
  \AxiomC{$ \Phi \vdash m : Q(\mathbf A, \mathbf B)$}
  \AxiomC{$ \Phi; \mathbf{ \Gamma \vdash q : A} $}
  \BinaryInfC{$ \Phi; \mathbf \Gamma \vdash m \mathbf q : \mathbf B $ }
  \end{bprooftree}
  \quad
  \begin{bprooftree}
  \def\ScoreOverhang{0.5pt}
  \AxiomC{$ \Phi \vdash \mathcal C : \mathbf O; \qbit^k$}
  \UnaryInfC{$ \Phi \vdash \text{run } \mathcal C : |\mathbf O|$}
  \end{bprooftree}
\]

\[
  \begin{bprooftree}
  \def\ScoreOverhang{0.5pt}
  \AxiomC{$ \Phi \vdash m : |\mathbf O| $}
  \UnaryInfC{$ \Phi; \cdot \vdash \mathbf{init}\ m : \mathbf O$}
  \end{bprooftree}
  \quad
  \begin{bprooftree}
  \def\ScoreOverhang{0.5pt}
  \AxiomC{$ \Phi; \mathbf{ \Gamma_1 \vdash q : \mathbf O } $}
  \AxiomC{$ \Phi, x : |\mathbf O| ; \mathbf{ \Gamma_2 \vdash r : \mathbf A} $}
  \BinaryInfC{$ \Phi; \mathbf{ \Gamma_1, \Gamma_2} \vdash \mathbf{let}\  x = \text{lift}\ \mathbf q\ \mathbf{ in\ r} : \mathbf A$}
  \end{bprooftree}
\]
\vspace{-2ex}\caption{Formation rules for terms that mediate between the quantum and classical modes of operation.}
\label{fig:syntax-mixed}
}
\end{figure}

For the formation of terms and term contexts, we implicitly assume that all
types within are closed and well-formed.  We use $x,y$ to range over
\emph{classical term variables} and $\xx, \yy$ to range over \emph{quantum term
variables}.  Classical term contexts are ranged over by $\Phi$ and quantum term
contexts by $\BGamma$. The (well-formed) term contexts are simply lists of
(distinct) variables with their types.

The term grammars of \VQPL are specified in Figure \ref{fig:syntax-grammars-terms}.
We write $\Phi \vdash m : P$ to indicate that a classical term $m$ is well-formed and has type $P$ given classical context $\Phi$.
We write $\Phi ; \BGamma \vdash \qq : \AAA$ to indicate that a quantum term $\qq$ is well-formed and has type $\AAA$, given classical context $\Phi$ and quantum context $\BGamma$.
A classical term $m$ of type $A$ is \emph{closed} when $\cdot \vdash m : A$ and in this case we also simply write $m : A.$ Likewise we write $\qq : \AAA$ when $\cdot; \cdot \vdash \qq : \AAA$, and then we also say $\qq$ is closed.
We use $v,w$ to range over classical values and $\vq, \wq$ to range over quantum values (see Figure \ref{fig:syntax-grammars-terms}).

\begin{example}
  Important closed values include:
  the (classical) \emph{false} and \emph{true} values given by $\mathtt{ff} \eqdef \mathtt{in}_1 () : \mathtt{Bool}$ and $\mathtt{tt} \eqdef \mathtt{in}_2 () : \mathtt{Bool};$
  the false and true bits are defined by $\mathbf{ff} \eqdef \mathbf{in}_1 * : \mathbf{Bit}$ and $\mathbf{tt} \eqdef \mathbf{in}_2 * : \mathbf{Bit};$
  the \emph{zero natural number} $\mathtt{zero} \eqdef \mathtt{fold}\ \mathtt{in}_1 () : \mathtt{Nat}$ and the \emph{successor function}
  $\mathtt{succ}\ \eqdef \lambda n^{\mathtt{Nat}}. \mathtt{fold}\ \mathtt{in}_2 n : \mathtt{Nat} \to \mathtt{Nat};$ quantum versions of $\mathtt{zero}$ and $\mathtt{succ}$ may also be defined.
\end{example}

Execution of (quantum) programs is described by the small-step call-by-value operational semantics  in Figures \ref{fig:structural-reduction}--\ref{fig:switching-rules-operational}.
If $m$ and $n$ are classical terms, we write $m \probto{p} n$ to indicate that  $m$ reduces to  $n$ with probability $p \in [0,1]$ in exactly one step.
Note that the probabilistic behaviour of reduction in the classical subsystem is induced by quantum measurements from the quantum subsystem.

\subsubsection{Quantum Configurations}

Reduction for the quantum fragment is described, as usual, in terms of \emph{quantum configurations} $[\ket \psi, \ell, \mathbf q]$, which may be seen as terms with embedded quantum information.
We describe quantum configurations following \cite{qpl-fossacs,quant-semantics}.
Given $n \in \mathbb N,$ we write $\qbit^n \defeq \qbit \otimes \cdots \otimes \qbit$ for the $n$-fold tensor product of $\qbit$.

\begin{definition}[Quantum Configuration]
  A \emph{quantum configuration} is a triple $[\ket \psi, \ell, \qq]$, where:
    $\ket \psi$ is a \emph{pure quantum state}, i.e., a normalised vector in $\mathbb C^{2^n};$ 
    $\qq$ is a quantum term;
    $\ell : \mathrm{QFV}(\qq) \to \{1, \ldots, \dim(\ket \psi)\}$ is a function from the set of free \emph{quantum} variables of $\qq$ into the indicated set, where $\dim(\ket \psi) \defeq n$ is \emph{dimension} of $\ket \psi$.
    We refer to $\ell$ as the \emph{linking function}.
  A quantum configuration $[\ket \psi, \ell, \qq]$ is \emph{well-formed} in classical context $\Phi$ with type $\AAA$ and $k$ auxiliary qubits, which we write as $\Phi \vdash [\ket \psi, \ell, \qq] : \AAA ; \qbit^k,$ whenever the following conditions are satisfied:
  \begin{itemize}
    \item $\Phi; \xx_1 : \qbit, \ldots, \xx_m : \qbit \vdash \qq : \AAA$ is a well-formed quantum term.
    \item $\dim(\ket \psi)= m + k.$
    \item The linking function $\ell : \{ \xx_1, \ldots, \xx_m \} \to \{1, \ldots, m+k\}$ is injective.
  \end{itemize}
  A configuration $[\ket \psi, \ell, \qq]$ is \emph{total} if $\Phi \vdash [\ket \psi, \ell, \qq] : \AAA ; \qbit^0$, which we abbreviate by  $\Phi \vdash [\ket \psi, \ell, \qq] : \AAA$. Thus, in a total configuration $\ell$ defines a 1 - 1 correspondence  between the qubits of $\ket \psi$ and the free quantum variables of $\qq.$
  A configuration is  \emph{closed} if $\cdot \vdash [\ket \psi, \ell, \qq] : \AAA ; \qbit^k$, for some $k \in \mathbb N.$
\end{definition}

We are primarily interested in well-formed configurations that are both total
and closed. Nevertheless, the premises of the structural reduction rules in the operational semantics 
include non-total configurations that have some auxiliary qubits not used by the quantum term, so it is necessary also to consider non-total configurations
(see \cite{quant-semantics,qpl-fossacs} for more details).
Otherwise, the configurations in the premises would not be typable, so it is necessary to allow auxiliary qubits as part of the formation conditions.
Likewise, the denotational semantics includes configurations that are not closed, because the interpretation of
closed terms may be defined using non-closed terms (e.g. lambda abstractions).

The linking function $\ell$ in a configuration $[\ket \psi, \ell, \qq]$ associates the free variables of
$\qq$, each of type $\qbit$, to specific qubits of the quantum state
$\ket \psi$; $\ell$ is needed because some of the qubits in $\ket \psi$ may be entangled, 
in which case $\ket \psi$ cannot be broken down  into smaller quantum states. 

We  use calligraphic letters $\mathcal C, \mathcal D$ to range over
quantum configurations. Given quantum configurations $\mathcal C$ and $\mathcal
D$ we  write $\mathcal C \probto{p} \mathcal D$ to indicate that $\mathcal
C$ reduces to $\mathcal D$ with probability $p$ in exactly one step. This is how we model the execution of quantum programs.
A \emph{value configuration} is a configuration $\mathcal V = [\ket \psi, \ell, \vq]$, where $\vq$ is a quantum value.
Reduction in the quantum subsystem terminates at value configurations, just as reduction in the classical system terminates at values.


\subsubsection{The Subsystem FPC}

We have organised the term formation rules and the associated reduction rules
into several subsystems, which we now describe.
Figure \ref{fig:classical-term-syntax} specifies the formation rules for the
classical terms that make up the subsystem FPC (which is well-known
\cite{fiore-thesis,fpc-syntax,fiore-plotkin}). The notation $P[\mu X. P / X]$
represents type substitution, which is defined as usual. The reduction rules
for these terms are standard and are shown in Figure \ref{fig:structural-reduction}.

\subsubsection{The Subsystem QPL}
Figure \ref{fig:first-order-term-syntax} describes the formation rules for the
quantum terms that make up a first-order linear subsystem with inductive types.
These terms 
and their reduction rules are all standard, but
they are now described on quantum configurations in Figure
\ref{fig:structural-reduction}. For the structural reduction rules involving
quantum evaluation contexts, the notation $\ell \cap \ell_0 = \varnothing$
indicates that the linking functions $\ell$ and $\ell_0$ have disjoint domains
and the notation $\ell \cupdot \ell_0$ indicates the disjoint union of the two
linking functions. The terms from Figure \ref{fig:first-order-term-syntax}
do not directly modify the quantum state $\ket \psi.$

Figure \ref{fig:quantum-constants-syntax} lists the formation rules for the
term constants that we use to manipulate quantum information. Note that these
constants are values of type $Q(\AAA,\BBB)$ and therefore are classical/non-linear
(and may be used any number of times).
The associated reduction rules for quantum function application using these constants are presented in Figure \ref{fig:quantum-information-operational}.
The term $ U(\mathbf{x_1 \otimes \cdots \otimes x_k)} $ applies the unitary operation $U$ to the qubits identified by the variables $\mathbf{x_1, \ldots, x_k}$ and modifies the quantum state in the configuration accordingly.
The term $\mathbf{new\ ff}$ (resp. $\mathbf{new\ tt}$) creates a new qubit in state $\ket 0$ (resp. $\ket 1$), creates a fresh new qubit variable that points to it and modifies the quantum configuration accordingly.
The term $\mathbf{meas\ \mathbf y}$ measures the qubit identified by variable $\mathbf y$ and produces bit $\mathbf{tt}$ or $\mathbf{ff}$ with probability given by the Born rule of quantum mechanics. This operation \emph{irreversibly} modifies the quantum state and causes a \emph{probabilistic computational effect}.


The terms in Figures \ref{fig:first-order-term-syntax} and 
\ref{fig:quantum-constants-syntax} can be thought of as jointly making up a
subsystem that we call \QPL (it is roughly equivalent to \QPL in
\cite{qpl,qpl-fossacs}), which is a first-order language for quantum
programming.

\begin{figure*}
\small{
\begin{minipage}{0.4\textwidth}
\begin{align*}
\pi_1 (v,w) \probto 1 v \qquad  \pi_2 (v,w) \probto 1 w\qquad\qquad   \unfold\ \fold\ v \probto 1 v \qquad 
(\lambda x. m)v \probto 1 m[v/x] \\
(\text{case}\ \emph{in}_1 v\ \text{of}\ \emph{in}_1 x \Rightarrow n_1\ |\ \emph{in}_2 y \Rightarrow n_2) \probto 1 n_1[v/x] \qquad 
(\text{case}\ \emph{in}_2 v\ \text{of}\ \emph{in}_1 x \Rightarrow n_1\ |\ \emph{in}_2 y \Rightarrow n_2) &\probto 1 n_2[v/y] \\
\end{align*}
\begin{align*}
\mathbf{ [\ket{\psi}, \ell, let\ x \otimes y = v \otimes w\ in\ r ] } &\probto 1 \mathbf{ [\ket{\psi}, \ell, r[v/x, w/y] ] } \\ 
\mathbf{ [\ket{\psi}, \ell, case\ in_1 v\ of\ in_1 x \Rightarrow r_1\ |\ in_2 y \Rightarrow r_2] } &\probto 1 \mathbf{ [\ket{\psi}, \ell, r_1[v/x]] } \\
\mathbf{ [\ket{\psi}, \ell, case\ in_2 v\ of\ in_1 x \Rightarrow r_1\ |\ in_2 y \Rightarrow r_2] } &\probto 1 \mathbf{ [\ket{\psi}, \ell, r_2[v/y]] } \\
[\ket{\psi}, \ell, \mathbf{\bunfold\ \bfold\ v}] &\probto 1 [\ket{\psi}, \ell, \mathbf v] \\
\mathbf{ [\ket{\psi}, \ell, *; r ] } \probto 1 \mathbf{ [\ket{\psi}, \ell, r ] } \qquad  [\ket{\psi}, \ell, \mathbf{let}\  x = \text{lift}\ \mathbf v\ \mathbf{ in\ r} ] &\probto 1 \mathbf{ [\ket{\psi}, \ell, r}[|\vq|/x] ]
\end{align*}
\end{minipage}

\begin{align*}
\scalebox{0.98}{$E$}          & ::= \scalebox{0.98}{$ [\cdot]\ |\ (E, m)\ |\ (v, E)\ |\ \pi_i E\ |\ \emph{in}_i E\ |\ (\text{case}\ E\ \text{of}\ \emph{in}_1 x \Rightarrow n_1\ |\ \emph{in}_2 y \Rightarrow n_2)\ |\ Em\ |\ vE\ |\ \text{(un)fold } E\ $} \\ 
\scalebox{0.98}{$\mathbf E$}  & ::= \scalebox{0.98}{$ [\cdot]\ |\ \mathbf E \otimes \mathbf q \ |\ \mathbf v \otimes \mathbf E\ |\ \mathbf{let\ x \otimes y = E\ in\ r} \ |\ \mathbf{in}_i \mathbf E\ |\ \mathbf{ (case\ E\ of\ in_1 x \Rightarrow r_1\ |\ in_2 y \Rightarrow r_2) }\ |\ \mathbf E; \mathbf q
  \ |\ \textbf{(un)fold } \mathbf E \ |\ $}\\
  &\ \quad\  \scalebox{0.98}{$\mathbf{let}\  x = \text{lift}\ \mathbf E\ \mathbf{ in\ r} $}
\end{align*}
\[
  \begin{bprooftree}
  \def\ScoreOverhang{0.5pt}
  \AxiomC{$m \probto{p} m' $}
  \UnaryInfC{$E[m] \probto{p} E[m'] $}
  \end{bprooftree}
  \quad
  \begin{bprooftree}
  \def\ScoreOverhang{0.5pt}
  \AxiomC{$[\ket{\psi}, \ell, \mathbf q] \probto{p} [\ket{\psi'}, \ell', \mathbf q'] $}
  \AxiomC{$\ell \cap \ell_0 = \varnothing = \ell' \cap \ell_0$}
  \BinaryInfC{$\left[ \ket{\psi}, \ell \cupdot \ell_0, \mathbf {E[q]} \right] \probto{p} \left[ \ket{\psi'}, \ell' \cupdot \ell_0, \mathbf{E[q']} \right] $}
  \end{bprooftree}
\]
\vspace{-2ex}\caption{Classical evaluation contexts $(E)$, quantum evaluation contexts $(\mathbf E)$ and associated reduction rules.}
\label{fig:structural-reduction}
\begin{align*}
& [\ket{\psi}, \ell, U(\mathbf{x_1 \otimes \cdots \otimes x_k) } ] \probto 1 [(\sigma \circ (U \otimes \id) \circ \sigma^{-1})\ket \psi, \ell, \mathbf x_1 \otimes \cdots \otimes \mathbf x_k ], \\
& \qquad \text{ for any permutation $\sigma$, s.t. } \sigma(i) = \ell(\mathbf{x}_i) , 1 \leq i \leq k. \\
& [\ket{\psi}, \varnothing, \mathbf{new\ ff}] \probto 1 [\ket{\psi} \otimes \ket 0, \{\mathbf x \mapsto \text{dim}(\ket{\psi})+1\}, \mathbf x] \quad  [\ket{\psi}, \varnothing, \mathbf{new\ tt}] \probto 1 [\ket{\psi} \otimes \ket 1, \{\mathbf x \mapsto \text{dim}(\ket{\psi})+1\}, \mathbf x] , \\
& \qquad \text{where $\mathbf x$ is chosen fresh.} \\
& \left[\alpha \left(\sum_i \alpha_i \ket{b_i} \otimes \ket 0 \otimes \ket{b'_i} \right) + \beta \left( \sum_i \beta_i \ket{b_i} \otimes \ket 1 \otimes \ket{b'_i} \right), \{\mathbf y \mapsto j\} , \mathbf{meas\ \mathbf y} \right] \probto{|\alpha|^2} \left[ \sum_i \alpha_i \ket{b_i} \otimes \ket{b'_i} , \varnothing, \mathbf{ff} \right] \\
& \left[\alpha \left(\sum_i \alpha_i \ket{b_i} \otimes \ket 0 \otimes \ket{b'_i} \right) + \beta \left( \sum_i \beta_i \ket{b_i} \otimes \ket 1 \otimes \ket{b'_i} \right), \{\mathbf y \mapsto j\} , \mathbf{meas\ \mathbf y} \right] \probto{|\beta|^2}  \left[ \sum_i \beta_i \ket{b_i}  \otimes \ket{b'_i}  , \varnothing, \mathbf{tt} \right] , \\
& \qquad \text{ where dim$(\ket{b_i}) = j-1$, so that the $j$-th qubit is measured.}
\end{align*}
\vspace{-3ex}\caption{Rules for manipulating quantum information.
}
\label{fig:quantum-information-operational}
\vspace{-.1in}
\begin{align*}
  [\ket{\psi}, \ell, (\lambdaq \mathbf{(x_1, \ldots, x_n) . q}) (\mathbf{v_1 \otimes \cdots \otimes v_n}) ] &\probto 1 [\ket{\psi}, \ell, \mathbf q [\mathbf{v_1/x_1, \ldots, v_n / x_n}] ] \\
  \text{run } [\ket \psi, \varnothing, \mathbf v] \probto 1 | \mathbf v|  \qquad\qquad 
 & [\ket{\psi}, \varnothing, \mathbf{init }\ |\mathbf v|] \probto 1 [\ket{\psi}, \varnothing, \mathbf v] 
\end{align*}

\[
  \begin{bprooftree}
  \def\ScoreOverhang{0.5pt}
  \AxiomC{ $m \probto p m' \phantom{[} $}
  \UnaryInfC{$\left[ \ket{\psi}, \ell, m \mathbf q \right] \probto{p} \left[ \ket{\psi}, \ell, m' \mathbf q \right] $}
  \end{bprooftree}
  \begin{bprooftree}
  \def\ScoreOverhang{0.5pt}
  \AxiomC{$\left[ \ket{\psi}, \ell, \mathbf q \right] \probto{p} \left[ \ket{\psi'}, \ell', \mathbf q' \right] $}
  \UnaryInfC{$\left[ \ket{\psi}, \ell, v \mathbf q \right] \probto{p} \left[ \ket{\psi'}, \ell', v \mathbf q' \right] $}
  \end{bprooftree}
  \]
  
  \[
  \begin{bprooftree}
  \def\ScoreOverhang{0.5pt}
  \AxiomC{ $ [\ket{\psi}, \ell, \mathbf q] \probto p  [\ket{\psi'}, \ell', \mathbf q']$}
  \UnaryInfC{ $\text{run } [\ket{\psi}, \ell, \mathbf q] \probto p \text{run } [\ket{\psi'}, \ell', \mathbf q']$}
  \end{bprooftree}
  \begin{bprooftree}
  \def\ScoreOverhang{0.5pt}
  \AxiomC{$ m \probto p m'$}
  \UnaryInfC{ $[\ket \psi, \varnothing, \mathbf{init }\ m] \probto p [\ket \psi, \varnothing, \mathbf{init }\ m' ]$}
  \end{bprooftree}
\]
\vspace{-2ex}\caption{Rules for quantum function application and extracting observable (quantum) information.}
\label{fig:switching-rules-operational}
}
\end{figure*}

\subsubsection{Mixed Classical/Quantum Terms}
Both subsystems $\mathrm{(P)FPC}$ (Figure
\ref{fig:classical-term-syntax}) and \QPL have been studied previously (for
very different purposes). The main distinguishing feature of \VQPL is that it demonstrates
how these subsystems can be combined and used simultaneously for 
variational quantum programming. The terms and formation rules that allow us to
achieve this are presented in Figure \ref{fig:syntax-mixed}, and we now describe them in greater detail.

The term $\lambdaq \mathbf{(x_1, \ldots, x_n) . q }$ is a value which represents a \emph{first-order} quantum lambda abstraction. Note that this value is actually classical (non-linear).
The term $m \qq$ represents \emph{quantum function application}. In our view, this is the most interesting rule in $\mathtt{VQPL}$. This is because its subterm $m : Q(\AAA,\BBB)$ represents a \emph{probabilistic classical program} that eventually reduces to a quantum lambda abstraction. Because of this, the term $m \qq$ combines classical probabilistic computation
with quantum computation (represented by the subterm $\qq$), and this is reflected in the associated reduction rules in Figure \ref{fig:switching-rules-operational}.
Providing a semantic interpretation of this term requires considerable effort and the development of novel semantic methods, as we show later.

The \emph{observable} quantum/classical values are simply quantum/classical
values of observable types with observable context. The closed observable quantum values are in 1-1
correspondence with the closed observable classical values, which is made
precise by the assignment $|-|$ from Figure \ref{fig:syntax-translate}.
Therefore $|-|$ may be seen as a translation, not only between the observable types, but also between their closed values, and so we may think of them as carrying the same information.
In the sequel, we see that this view extends to our denotational interpretation as well.
Observable types and values are important, because they play a special role in the terms we introduce next.

Given any configuration $\mathcal C$ of \emph{observable} type, the term
$``\text{run } \mathcal C"$ reduces the configuration $\mathcal C$ to some value
configuration $[\ket \psi, \varnothing, \vq]$, then extracts the observable quantum
value $\vq$ from it and produces its \emph{classical counterpart} $|\vq|$ as
the result of the overall computation.
Note that in this situation, the observable value $\vq$ does not depend on the quantum data, because it necessarily has an empty quantum context and thus empty linking function, and therefore the remaining quantum state $\ket \psi$ may be safely
discarded (this is consistent with affine approaches to quantum programming, see \cite{qpl,qpl-fossacs,quantum-games1}). 
The term $``\text{run } \mathcal C"$ is
classical, and it allows us to execute quantum algorithms on a quantum
computer and then extract the resulting \emph{observable} information into our
classical subsystem for further manipulation. The entire process is
probabilistic. It is convenient to introduce some syntactic sugar. Given
a closed quantum term $\cdot; \cdot \vdash \qq : \OO$ of observable type, we
can define $\text{run } \qq \eqdef \text{run } [1, \varnothing, \qq]$. Users of
the programming language are not expected to write the more general terms
$``\text{run } \mathcal C"$ (which are useful for formalising the operational
and denotational semantics), but only the sugarised terms $``\text{run } \qq"$.
The term $``\mathbf{init}\ m "$ performs the reverse function of that of "run",
i.e., given a classical (probabilistic) process $m$ of observable type, the
term $``\mathbf{init}\ m "$ prepares observable quantum information as
indicated by the observable value that $m$ reduces to in the end.

Finally, the $``\mathbf{let}\  x = \text{lift}\ \mathbf q\ \mathbf{ in\ r}"$
term allows us to execute a quantum term $\qq$ of observable type and then
promote the observable quantum information it produces to the classical world,
so that we may use it any number of times within the continuation $\rr$ (see
Figure \ref{fig:structural-reduction}). This term therefore implements what is
often called "dynamic lifting" in the quantum programming literature. From
a structural perspective, it is the only term that allows us to modify the
non-linear context of quantum terms and as such may be compared to the
corresponding rules of the LNL calculus \cite{benton-small,benton-wadler}. In
practice, this term is useful for describing quantum processes where we measure
a \emph{part of} our quantum state and use the measurement outcome to
influence the subsequent quantum dynamics. It is necessary for protocols like quantum teleportation.

\subsection{Type Safety}
\label{sub:type-safety}

The next two propositions show $\VQPL$ is type-safe. The first proposition shows type assignment is preserved by reduction,
and as a consequence, totality of quantum configurations also is preserved.

\begin{proposition}[Type Preservation]
  If $\Phi \vdash m \colon P$ and $m \probto{p} n,$ then $\Phi \vdash n \colon P.$
  Likewise, if $\Phi \vdash \mathcal C : \AAA;\qbit^k$ and $\mathcal C \probto{p} \mathcal D$, then $\Phi \vdash \mathcal D : \AAA;\qbit^k.$
  In both of these situations, if $p < 1$, then there exists a term $n'$ (resp. configuration $\mathcal D'$), such that $m \probto{1-p} n'$
  (resp. $\mathcal C \probto{1-p} \mathcal D'$).
\end{proposition}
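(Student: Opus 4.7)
The plan is to prove both preservation claims simultaneously by a mutual induction on the derivations of $m \probto p n$ and $\mathcal C \probto p \mathcal D$, following the standard recipe but paying close attention to the bookkeeping for quantum configurations. Before the main induction, I would establish three substitution lemmas that mirror the three available substitution operations: substituting a classical value for a classical variable in a classical (resp.\ quantum) term, and substituting a quantum value for a quantum variable in a quantum term (the latter used when typing a quantum configuration). Each is proved by a routine induction on the typing derivation of the host term, using the usual inversion and weakening properties of the $\Phi \vdash m : P$ and $\Phi; \BGamma \vdash \qq : \AAA$ judgements.

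With those lemmas in hand, I would dispatch the redex cases from Figure~\ref{fig:structural-reduction} and the mixed-rule reductions in Figure~\ref{fig:switching-rules-operational} by inversion on the typing derivation of the left-hand side and then by an appeal to the appropriate substitution lemma. The FPC cases are classical and do not interact with configurations, so they follow the well-known pattern for FPC. For the quantum redex cases (\textbf{let}-destructuring of a $\otimes$, case on $\mathbf{in}_i$, $\mathbf{unfold\ fold}$, sequencing $*;\rr$), the state $\ket\psi$ and the linking function $\ell$ are unchanged and the resulting term is typable by the substitution lemma with the same classical context, quantum context, type, and number of auxiliary qubits, so the typing judgement for the configuration is preserved verbatim.

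The genuinely quantum rules from Figure~\ref{fig:quantum-information-operational} require the delicate accounting. For unitaries, the dimension of $\ket\psi$ does not change and neither does the set of free quantum variables, so $k$ is preserved. For $\mathbf{new\ ff}$ and $\mathbf{new\ tt}$, the dimension of the state grows by one and a fresh quantum variable is added to $\ell$, keeping $k = \dim(\ket\psi) - |\mathrm{dom}(\ell)|$ invariant and yielding a term of type $\qbit$ as required. For measurement, the dimension of $\ket\psi$ decreases by one, one quantum variable is removed from $\ell$, and the resulting term $\mathbf{ff}$ or $\mathbf{tt}$ has type $\bit$; again $k$ is preserved. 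The lift, init, and run cases reduce to the substitution lemma and the observation that closed observable values $\vq$ and $|\vq|$ have matching types via the translation in Figure~\ref{fig:syntax-translate}. The structural rules using the evaluation contexts $E$ and $\mathbf E$ are handled by a secondary induction on $E$ (resp.\ $\mathbf E$); for the quantum contexts, the side condition $\ell \cap \ell_0 = \varnothing = \ell' \cap \ell_0$ together with the inductive hypothesis guarantees that the reassembled linking function $\ell' \cupdot \ell_0$ is well-defined and still typable with the same number of auxiliary qubits.

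For the second half of the proposition, I would argue that among all reduction rules only the two measurement rules have probabilities strictly less than~$1$, and they come in a matched pair with probabilities $|\alpha|^2$ and $|\beta|^2$ summing to~$1$ by normalisation of $\ket\psi$. Hence whenever a measurement step fires with $p < 1$, the complementary outcome supplies a reduction with probability $1 - p$ to some $\mathcal D'$. The structural rules lift this property along evaluation contexts since an ambient context preserves the probability label. The main technical obstacle is precisely the linking-function bookkeeping in the quantum structural cases and the $\mathbf{new}/\mathbf{meas}$ cases: one must check that the decomposition $\ell \cupdot \ell_0$ still makes sense after the state's dimension changes and that the freshness side conditions for $\mathbf{new}$ can always be met by $\alpha$-renaming, so that the inductive hypothesis truly applies.
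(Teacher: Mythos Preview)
The paper does not include a proof of this proposition; it is stated without argument, as type safety for languages built from FPC and QPL fragments is regarded as routine. Your proposal is the standard and correct approach: mutual induction on the reduction derivation, three syntactic substitution lemmas (one per binding form), inversion on typing for each redex, and the auxiliary-qubit bookkeeping for \textbf{new}/\textbf{meas}/unitaries that you spell out. Your identification of the measurement rules as the sole source of $p<1$ reductions, with the complementary outcome providing the $1-p$ branch, is exactly right, and the lifting of this property through evaluation contexts is immediate since contexts preserve the probability label.

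One point worth sharpening, which the paper flags in Remark~\ref{rem:translation} but you only gesture at: the translation $|\vq|$ is defined only for \emph{closed} observable quantum values (there is no clause for quantum variables in Figure~\ref{fig:syntax-translate}), so the three rules that use it---\textbf{lift}, \textbf{init}, and run---only fire when $\vq$ has no free quantum variables. This is what makes Type Preservation go through for open classical contexts $\Phi$: a quantum value $\vq$ never contains classical variables, and in these redexes it cannot contain free quantum variables either, so $|\vq|$ is always defined when the rule applies. You should make this explicit in the lift/init/run cases rather than leaving it to the observation that ``types match via the translation''.
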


\begin{proposition}[Progress]
  If $\cdot \vdash m \colon P$, then either $m$ is a value or there exists a classical term $n,$ such that $m \probto{p} n$ for some $p \in (0,1]$.
  Likewise, if $\cdot \vdash \mathcal C \colon \AAA; \qbit^k$, then either $\mathcal C$ is a value configuration or there exists a quantum configuration
  $\mathcal D,$ such that $\mathcal C \probto{p} \mathcal D$ for some $p \in (0,1]$.
\end{proposition}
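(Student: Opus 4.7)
The plan is to prove both clauses simultaneously by mutual structural induction on the typing derivations $\cdot \vdash m : P$ and $\cdot \vdash \mathcal{C} : \mathbf{A};\qbit^k$, since classical terms can contain configurations (via $\text{run}\ \mathcal{C}$) and configurations contain arbitrary classical subterms (via $m\mathbf{q}$, $\mathbf{init}\ m$, etc.). The base cases are easy: constants like $\text{new}$, $\text{meas}$, $U$ and closed term formers like $()$, $\lambda x.m$, $\lambdaq(\mathbf{x_1,\ldots,x_n}).\mathbf{q}$, are already values, and there is no rule for a closed variable.

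First, I would establish a short \emph{canonical forms} lemma for closed values of each closed type: a value of $P_1 \times P_2$ is a pair $(v_1,v_2)$; of $P_1+P_2$ an injection; of $\mu X.P$ a $\fold\ v$; of $P\to R$ a $\lambda$-abstraction; and crucially, a value of $Q(\mathbf{A},\mathbf{B})$ is one of $\lambdaq(\mathbf{x_1,\ldots,x_n}).\mathbf{q}$, $\text{new}$, $\text{meas}$, or a unitary $U$. The analogous lemma holds for closed quantum values (e.g. a closed value of $\mathbf{A}_1 \otimes \mathbf{A}_2$ splits as $\mathbf{v}_1 \otimes \mathbf{v}_2$, etc.). These are proved by inspecting the value grammar in Figure~\ref{fig:syntax-grammars-terms} against the typing rules.

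Then I walk through each typing rule. The FPC cases (Figure~\ref{fig:classical-term-syntax}) are standard: for a non-value compound term such as $\pi_i m$, $(\text{case}\ m\ \text{of}\ \ldots)$, $mn$, or $\unfold\ m$, either some subterm is not a value and the inductive hypothesis supplies a reduction we lift through the appropriate classical evaluation context $E$, or all subterms are values and canonical forms give us a primitive reduction. The purely linear QPL cases (Figure~\ref{fig:first-order-term-syntax}) are analogous, using the quantum evaluation contexts $\mathbf{E}$ from Figure~\ref{fig:structural-reduction}; one just has to check that after applying the inductive hypothesis to a configuration $[\ket\psi,\ell',\mathbf{q}']$ obtained from some sub-configuration, the disjoint-linking side condition $\ell' \cap \ell_0 = \varnothing$ can be arranged by $\alpha$-renaming fresh qubit variables. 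The constants in Figure~\ref{fig:quantum-constants-syntax} are already values, so there is nothing to do.

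The genuinely interesting cases are the mixed classical/quantum rules of Figure~\ref{fig:syntax-mixed}. For $m\mathbf{q}$, if $m$ is not a value the classical IH applied in the context $[\cdot]\mathbf{q}$ yields a reduction; if $m$ is a value, canonical forms at $Q(\mathbf{A},\mathbf{B})$ reduce to the four constants, and Figure~\ref{fig:quantum-information-operational} gives a primitive rule for each (for $U$ and $\text{new}$ the reduction has probability $1$, while for $\text{meas}$ the two outcomes have probabilities $|\alpha|^2$ and $|\beta|^2$ summing to $1$ by normalization of $\ket\psi$, hence at least one lies in $(0,1]$, which is exactly what we need to pick as $p$; the second clause on existence of a second reduction with probability $1-p$ from the preservation proposition is what accounts for the other branch). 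For $\text{run}\ \mathcal{C}$ the IH on $\mathcal{C}$ either gives a configuration reduction to lift, or $\mathcal{C}$ is a value configuration $[\ket\psi,\varnothing,\mathbf{v}]$ of observable type, and the rule $\text{run}\ [\ket\psi,\varnothing,\mathbf{v}] \probto{1} |\mathbf{v}|$ applies; the empty linking function here follows from the typing of observable closed values, which is where the $\mathbf{init}\ m$ case also uses the IH to either step $m$ or fire $[\ket\psi,\varnothing,\mathbf{init}\ |\mathbf{v}|] \probto{1} [\ket\psi,\varnothing,\mathbf{v}]$. Finally, for $\mathbf{let}\ x = \text{lift}\ \mathbf{q}\ \mathbf{in}\ \mathbf{r}$, either $\mathbf{q}$ is not a value and the quantum evaluation context rule applies, or $\mathbf{q}$ is a closed observable value $\mathbf{v}$ whose free quantum context is empty, and the lift rule from Figure~\ref{fig:structural-reduction} fires. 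The main obstacle I anticipate is bookkeeping for the linking functions and quantum contexts in the mixed cases — in particular, checking that when a subconfiguration of a non-total configuration is reduced, the auxiliary qubits $\qbit^k$ can be absorbed into the surrounding $\ell_0$ without typing failure. This is essentially a technical verification relying on the injectivity of linking functions and the typing of evaluation contexts rather than a conceptual hurdle.
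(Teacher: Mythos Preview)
The paper states Progress without proof, as is customary for routine type-safety results, so there is no paper proof to compare against. Your approach by mutual induction on typing derivations together with a canonical-forms lemma is the standard one and is correct in outline.

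There is one small omission in your case analysis for $m\mathbf{q}$: after establishing that $m$ is a value (one of $\lambdaq(\vec{\mathbf x}).\mathbf q'$, $\text{new}$, $\text{meas}$, $U$), you jump directly to the primitive rules of Figure~\ref{fig:quantum-information-operational} and the $\beta$-rule of Figure~\ref{fig:switching-rules-operational}. But those rules fire only when the argument $\mathbf q$ is itself a value of the appropriate shape (e.g.\ $\text{meas}$ needs a single qubit variable, $U$ needs a tensor of variables, $\lambdaq$ needs a tensor of values). You must first case-split on whether $\mathbf q$ is a value: if not, apply the inductive hypothesis to the sub-configuration $[\ket\psi,\ell,\mathbf q]$ and lift via the rule $[\ket\psi,\ell,v\mathbf q]\probto{p}[\ket{\psi'},\ell',v\mathbf q']$ from Figure~\ref{fig:switching-rules-operational}; if so, invoke quantum canonical forms to obtain the required argument shape. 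This is the same two-level case split you already perform in the classical application case $mn$, and once added the argument goes through.
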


\begin{remark}
  \label{rem:translation}
  As usual, Progress holds for all \emph{closed} terms/configurations, whereas
  Type Preservation holds for all well-formed terms/configurations, including
  the \emph{open} ones. Here we note that the static semantics is independent
  of the translation $| - |$ on values. This translation only matters for three
  rules in our operational semantics and it is
  restricted to closed observable values (in Figure
  \ref{fig:syntax-translate}), because otherwise Type Preservation would fail
  for \emph{open} terms/configurations.
\end{remark}

\subsection{Recursion and Asymptotic Behaviour of Reduction}

It is well-known that type recursion induces term recursion in $\mathtt{FPC}$  
\cite{fiore-thesis,fpc-syntax,harper-book}, and this also is  true for $\VQPL.$
The call-by-value fixpoint operator
\[\cdot \vdash \mathtt{fix}_{P \to R} \colon \left( \left( P \to R \right) \to P \to R \right) \to P \to R \]
may be \emph{derived} at any function type $P \to R$ (see \cite{fpc-syntax} and \cite[\S 8]{fiore-thesis} for more details).
This fixpoint operator allows us to write recursive functions.

The probability that a term $m$ reduces to a value $v$ (in any number of steps) may be determined as in \cite{quant-semantics}.
The \emph{probability weight} of a reduction path $\pi = \left( m_1  \probto{p_1} \cdots \probto{p_n} m_n \right)$ is $P(\pi) \eqdef \prod_{i=1}^n p_i.$
The probability that $m$ reduces to the value $v$ in \emph{at most} $n$ steps is
\[ P(m \probto{}_{\leq n} v) \defeq \sum_{\pi \in \Paths_{\leq n}(m,v) } P(\pi) , \]
where $\Paths_{\leq n}(m,v)$ is the set of all reduction paths from $m$ to $v$ whose length is at most $n$.
The probability that $m$ reduces to the value $v$ in \emph{any number} of steps is $P(m \probto{}_{*} v) \defeq \sup_i P(m \probto{}_{\leq i} v) . $
Similarly, the probability that a quantum configuration $\mathcal C$ reduces to a value configuration $\mathcal V$ (in any number of steps) is denoted
$P(\mathcal C \probto{}_* \mathcal V)$; it is determined in exactly the same way as above by substituting the notion of term with that of configuration.
Finally, the overall probability that a term $m$ or configuration $\mathcal C$ \emph{terminates} is given by:
\begin{equation}
\label{eq:halt}
  \Halt(m) \eqdef \sum_{v \in \Val(m)} P(m \probto{}_* v) \qquad \qquad
  \Halt(\mathcal C) \eqdef \sum_{\mathcal V \in \ValC(\mathcal C)} P(\mathcal C \probto{}_* \mathcal V) ,
\end{equation}
where 
$\ValC(\mathcal C) \defeq \{ \mathcal V \ |\ \mathcal V \text{ is a value configuration and } P(\mathcal C \probto{}_* \mathcal V) > 0 \}$ and
$\Val(m) \defeq \{ v \ |\ v$ is a value and $P(m \probto{}_* v) > 0 \}$.
Note that both sums may be countably infinite.

\subsection{Examples}

We now  illustrate $\VQPL$ with some example programs.

\begin{example}
  A fair coin toss can be defined by using some simple quantum resources: $ \text{coin}_{0.5} \eqdef \text{run}\ \left( \text{meas}(H(\text{new}\ \mathbf{ff})) \right): \mathtt{Bool} , $
  where $H$ represents the Hadamard unitary operation. More generally, by replacing $H$ with a suitable unitary operation $U_p$, a biased coin toss $\text{coin}_p$ may be defined
  for $p \in [0,1]$ with reduction behaviour $P(\text{coin}_p \probto{}_* \mathtt{ff}) = p$ and $P(\text{coin}_p \probto{}_* \mathtt{tt}) = 1-p.$
  Notice that coin$_p : \texttt{Bool}$ is a \emph{classical term}. This shows that classical discrete probabilistic choice is \emph{derivable} and therefore the language $\mathtt{PFPC}$ \cite{m-monad}
  is a subsystem of $\VQPL.$
\end{example}

\begin{example}
  The fixpoint operator $\mathtt{fix}$ allows us to write classical recursive functions (as usual), but it also enables us to write quantum recursive functions. The function $\mathtt{Ts}$
  below applies the unitary $T$ to each qubit in a list. Defining, for brevity, $\mathtt{Fqs} \defeq Q(\mathbf{List}(\qbit), \mathbf{List}(\qbit)) $, let:
  \[
    \mathtt{Ts'} \eqdef \left( \mathtt{fix}_{1 \to \mathtt{Fqs}} \lambda f^{1 \to \mathtt{Fqs}}. \lambda x^1.\ \lambdaq (\mathbf{qs}). \mathbf{case} (\mathbf{qs})\ \mathbf{of}\
    \mathbf{nil} \Rightarrow \mathbf{nil}\ |\ \mathbf{q::qs'} \Rightarrow T\qq :: (fx) \mathbf{qs'} \right) : 1 \to \mathtt{Fqs},
  \]
  where we used some (hopefully obvious) syntactic sugar for pattern matching of (linear) lists. The recursive call is performed by the $fx$ expression, which is of type $\mathtt{Fqs}.$
  Setting $\mathtt{Ts} \eqdef \mathtt{Ts'}() : Q(\mathbf{List}(\qbit), \mathbf{List}(\qbit))$, we get the desired function.
  An example quantum execution is given by
  $
    P\left( [\ket{111}, \ell, \mathtt{Ts}(\xx_1 :: \xx_2 :: \xx_3 :: \mathbf{nil}) ] \probto{}_*
    [(T \otimes T \otimes T)\ket{111}, \ell, \xx_1 :: \xx_2 :: \xx_3 :: \mathbf{nil} ] \right) = 1,
  $
  where $\ell$ is the linking function defined by $\ell(\xx_i) = i$.
  This shows that classical recursion \emph{induces} recursion on the quantum subsystem and therefore a quantum fixpoint operator is obviously derivable.
\end{example}

\section{Probabilistic Effects and (Commutative) Valuations Monads}
\label{sec:probabilistic}

As we already explained, $\PFPC$ is a subsystem of our language. In this
section we recall the construction of the commutative monad $\MM : \dcpo \to
\dcpo$ of \cite{m-monad}, which we also use in our denotational semantics.
Our classical judgements are
interpreted in the Kleisli category $\KL$ of $\MM$, which provides a sound and (strongly) adequate
model of $\PFPC$ \cite{m-monad}. 

\subsection{Domain-theoretic and Topological Preliminaries}
\label{sub:domain-preliminaries}

If $D$ is a partially ordered set (\emph{poset}), a nonempty subset $A$ of
$D$ is called \emph{directed} if each pair of elements in $A$ has an upper bound in $A$. 
Then $D$ is a \emph{directed complete poset} (\emph{dcpo}, for short) if
each of its directed subsets has a supremum.  
For example, the unit interval $[0, 1]$ is a dcpo in the usual ordering. 
A \emph{Scott-continuous map} $f \colon D \to E$ between (posets) dcpo's
is a function that is monotone and preserves (existing) suprema
of directed subsets. \emph{Pointed posets} have least elements, usually 
denoted by $\bot_{D}$ if $D$ is the ambient poset. 
A Scott-continuous function $f\colon D\to E$ between pointed dcpo's
is \emph{strict} if $f$ preserves the least element, that is, $f(\bot_{D}) = \bot_{E}$. 

The category $\dcpo$  of dcpo's and Scott-continuous functions is complete, cocomplete and 
Cartesian closed~\cite{abramskyjung:domaintheory}.  The categorical (co)product of the dcpo's 
$A_1$ and $A_2$ is denoted $A_1 \times A_2$ ($A_1$ + $A_2$), with $\pi_1,
\pi_2$ ($\emph{in}_1, \emph{in}_2$) the associated (co)projections. 
Initial and terminal objects of $\dcpo$ are denoted by $\varnothing$ and $1$, 
which are the empty dcpo and the singleton dcpo, respectively. For dcpo's $A$ and $B$, 
the internal hom of $A$ and $B$ in $\dcpo$ is $[A \to B]$, 
the space of all Scott-continuous functions $f: A \to B$ ordered pointwise.

The category $\dcpobs$ 
 of \emph{pointed} dcpo's and \emph{strict} Scott-continuous functions is symmetric monoidal closed when equipped with the smash product and strict Scott-continuous function space, and it also is  complete and cocomplete \cite{abramskyjung:domaintheory}.

The \emph{Scott topology}  $\sigma D$ on a dcpo $D$ consists of 
the upper subsets $U = \up U \defeq \{x\in D\mid (\exists u\in U)\, u\leq x\}$ 
that are \emph{inaccessible by directed suprema:} i.e., if  $A\subseteq D$ is directed and 
$\sup A\in U$, then $A\cap U\not=\emptyset$. The topological space $(D, \sigma D)$ is also 
written as $\Sigma D$. Scott-continuous functions between dcpo's $D$ and $E$ are exactly the continuous 
functions between $\Sigma D$ and $\Sigma E$~\cite[Proposition II-2.1]{gierzetal:domains}. We always equip $[0,1]$ with the Scott topology unless stated otherwise.

If $B\subseteq D$ and $D$ is a dcpo, then $B$ is a \emph{sub-dcpo}  if
every directed subset $A\subseteq B$ satisfies $\sup_D A\in B$, where 
$\sup_{D} A$ denotes the supremum of $A$ in $D$. In this case, $B$ is 
a dcpo in the induced order from $D$ and $\sup_{D} A =\sup_{B} A$
for each directed subset $A$ of $B$. 

\subsection{The Monad $\MM$}
\label{sub:monad}

If $X$ is a topological space, then the open set lattice $\mathcal O X$ is a 
complete lattice in the inclusion order, hence a dcpo. A \emph{subprobability valuation} on 
$X$ is a Scott-continuous function $\nu \colon \mathcal O X \to [0,1]$ that is strict ($\nu(\emptyset) = 0$)  
and modular $(\nu(U) + \nu(V) = \nu(U\cup V) + \nu(U\cap V) )$.  The set  $\VV X$ of subprobability valuations 
on $X$  is a dcpo in the \emph{stochastic order} 
defined by: $\nu_{1}\leq \nu_{2}$ if and only if  $\nu_{1}(U)\leq \nu_{2}(U)$ for all $U\in \mathcal OX$, 
for $\nu_{1}, \nu_{2} \in \VV X$, and
the supremum of a directed family of valuations $\{\nu_{i}\}_{i\in I}$ is computed pointwise:  
 $(\sup_{i\in I} \nu_{i})(U) \defeq \sup_{i\in I} \nu_{i}(U)$, for all $U\in \mathcal OX$. The least element of $\VV X$  
is the constantly zero valuation~${\mathbf 0}_{X}$. The dcpo 
$\VV X$ also enjoys a convex structure: if $\nu_{i}\in \VV X$ and 
$r_{i}\geq 0$ for $i = 1, \ldots, n$ and $\sum_{i=1}^{n}r_{i}\leq 1$, then the convex sum 
$\sum_{i=1}^{n}r_{i}\nu_{i}$, defined by $(\sum_{i=1}^{n}r_{i}\nu_{i})(U)   
\defeq \sum_{i=1}^{n}r_{i}\nu_{i}(U)$ for $U\in \mathcal OX$, also is in $\VV X$.

The \emph{Dirac valuations} $\delta_{x}$, for $x\in X$, are defined by $\delta_{x}(U) = 1$ 
if $x\in U$ and $\delta_{x} (U) =0$ otherwise. These are canonical examples of subprobability 
valuations, as are their convex sums, 
which we call \emph{simple valuations}: 
they have form $\sum_{i=1}^{n}r_{i}\delta_{x_{i}}$, where $x_{i}\in X$, $r_{i}\geq 0$, and $\sum_{i=1}^{n}r_{i}\leq 1$.
The simple valuations are denoted $\SSS X$, and $\SSS X\subseteq \VV X$, but $\SSS X$ is not a dcpo in general.

If $f\colon X\to [0,1]$ is a continuous function and $\nu\in \VV X$, the \emph{integral of $f$ against $\nu$} is given by the Choquet formula
\[
\int_{x\in X} f(x) d \nu \defeq \int_{0}^{1} \nu(f^{-1}((t, 1]))dt,
\]
where the right side  is a Riemann integral of the bounded antitone function $t \mapsto \nu(f^{-1}((t, 1]))$. Since $f$ is continuous,  and $(t, 1]\subseteq [0, 1]$ is Scott open for each $t\in [0, 1]$,  $f^{-1}((t, 1])$ is open in $X$, so $\nu(f^{-1}((t, 1]))$ is well defined. Thus the integral makes sense. 
If no confusion can occur, we simply write $\int_{x\in X} f(x) d\nu$ as $\int fd\nu$. 
Note that if $f\colon X\to [0, 1]$ is fixed, then the map $\nu \mapsto \int f d\nu \colon \VV X \to [0, 1]$ is Scott-continuous. Other basic properties of this integral can be found in~\cite{jones90}.

If $D$ is a dcpo, then $\VV D \eqdef \VV\Sigma D = \VV(D,\sigma(D))$ is well-defined, and \cite{jones90} proved that $\VV$ extends to a monad on $\DCPO$.

\subsubsection{Monad Structure}

The \emph{unit} of $\VV$ at $D$ is $\eta^{\VV}_{D}\colon D\to \VV D:: x\mapsto \delta_{x}$.
The \emph{Kleisli extension} $f^{\dagger}$ of a Scott-continuous map $f\colon D\to \VV E$ maps $\nu\in \VV D$ to 
$f^{\dagger}(\nu)\in \VV E$, where for $U\in \sigma E$, $f^{\dagger}(\nu)(U)  \defeq  \int_{x\in D}f(x)(U)d\nu.$
The \emph{multiplication} $\mu^{\VV}_{D}\colon \VV\VV D\to \VV D$ is given by $\mu \eqdef \id_{\VV D}^{\dagger}$.
Thus, $\VV$ defines an endofunctor on $\DCPO$ that sends a dcpo $D$ to $\VV D$, and a Scott-continuous map $h\colon D\to E$ to $\VV(h) \defeq(\eta_{E}\circ h)^{\dagger}$, so $\VV(h)(\nu)(U) =   \nu(h^{-1}(U))$ for $\nu\in \VV D$ and  $U\in \sigma E$. The valuation $h_*(\nu) =  \VV(h)(\nu)$ is called the \emph{push forward of $\nu$ by $h$}.

In fact, $\VV$ defines a strong monad on $\dcpo$~\cite{jones90}: the strength at $(D, E)$ is  
\[\tau^{\VV}_{DE}\colon D \times \VV E \to \VV(D\times E) :: (x, \nu) \mapsto \lambda U. \int_{y\in E} \chi_{U}(x, y)d\nu,\]
where $\chi_{U}$ is the characteristic function of $U\in \sigma(D\times E)$. However, it is unknown whether $\VV$ is a commutative monad on $\DCPO$.
This is equivalent to showing the  
Fubini-style equation 
\begin{equation}\label{eqn:Fub}
\int_{x\in D}\int_{y\in E}\chi_{U}(x, y)d\xi d\nu = \int_{y\in E}\int_{x\in D}\chi_{U}(x, y)d\nu d\xi,
\end{equation}
holds for dcpo's $D$ and $E$, where $U\in \sigma(D\times E)$ and $\nu\in \VV D, \xi\in \VV E$~\cite{JonesP89}.
To address this problem, the authors of~\cite{m-monad} define a subclass of
valuations that simultaneously validates \eqref{eqn:Fub}, contains
all simple valuations, and forms a dcpo in the stochastic order. They prove
their construction defines a commutative valuations monad $\MM$ on $\dcpo$. We
devote the rest of this subsection to describing their construction; for more
details, see~\cite{m-monad}.

\begin{definition}
For a dcpo $D$, $\MM D$ is the intersection of all sub-dcpo's of $\VV D$ containing $\SSS D$. 
\end{definition}

We call the valuations in $\MM D$  the \emph{$\MM$-valuations} on $D$. In fact, $\MM D$ is the smallest sub-dcpo of $\VV D$ containing $\SSS D$.
It follows that the $\MM$-valuations on $D$ consist of the simple valuations on $D$, directed suprema of simple valuations on $D$, directed suprema of directed suprema of simple valuations on $D$ and so forth, transfinitely. 
It is straightforward to show that \eqref{eqn:Fub} holds when $\xi$ and $\nu$ are simple valuations, and because the nested integral operations are Scott-continuous in the valuations components, it follows \eqref{eqn:Fub} holds for $\MM$-valuations.
This is the idea behind the proof of the following theorem.

\begin{theorem}[{\cite[Theorem 8]{m-monad}}]
\label{theorem:M is commutative}
  $\MM$ has the structure of a \emph{commutative} monad on $\DCPO$ when equipped with the (co)restricted monad operations of $\VV$.
\end{theorem}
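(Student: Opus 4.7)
The plan is to exploit the defining property of $\MM D$ as the \emph{smallest} sub-dcpo of $\VV D$ containing the simple valuations $\SSS D$. This minimality gives a uniform induction principle: to prove that every $\MM$-valuation satisfies a property $P$, it suffices to show that (i) $\{\nu \in \VV D : P(\nu)\}$ is a sub-dcpo of $\VV D$, and (ii) every simple valuation satisfies $P$. I will apply this principle first to corestrict the monad operations of $\VV$ to $\MM$, and then to establish the Fubini identity \eqref{eqn:Fub} for all $\MM$-valuations.

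First I would verify that the functor, unit, and Kleisli extension of $\VV$ corestrict to $\MM$. For a Scott-continuous $h : D \to E$, the map $\VV(h)$ is Scott-continuous and sends $\SSS D$ into $\SSS E \subseteq \MM E$, so $\{\nu \in \VV D : \VV(h)(\nu) \in \MM E\}$ is a sub-dcpo of $\VV D$ containing $\SSS D$; by minimality it equals $\MM D$. The unit $\eta^\VV_D$ takes values in $\SSS D$ by definition. For the Kleisli extension of $f : D \to \MM E$, observe first that $\MM E$ is closed under finite convex sums, since $\SSS E$ is and the convex-combination operation is Scott-continuous in each argument (again a sub-dcpo argument). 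Then $f^{\dagger}$ sends a simple valuation $\sum r_i \delta_{x_i}$ to $\sum r_i f(x_i) \in \MM E$, and $f^\dagger$ is Scott-continuous, so the sub-dcpo $\{\nu \in \VV D : f^\dagger(\nu) \in \MM E\}$ contains $\SSS D$ and hence all of $\MM D$. The monad laws for $\MM$ are then inherited from those of $\VV$ by restriction.

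The main step is commutativity, i.e.\ verifying \eqref{eqn:Fub} for all $\nu \in \MM D$, $\xi \in \MM E$, and $U \in \sigma(D \times E)$. I would carry out a two-stage sub-dcpo argument. Fix first a simple valuation $\nu \in \SSS D$. The equation \eqref{eqn:Fub} for $(\nu, \xi)$ with $\xi \in \SSS E$ reduces to a finite double sum that trivially agrees in either order. As both iterated integrals are Scott-continuous in $\xi$ (by the continuity of $\xi \mapsto \int g\, d\xi$ noted in \S\ref{sub:monad}), the set $\{\xi \in \VV E : \eqref{eqn:Fub} \text{ holds for } (\nu, \xi)\}$ is a sub-dcpo of $\VV E$ containing $\SSS E$, hence contains $\MM E$. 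Now fix an arbitrary $\xi \in \MM E$; the previous step gives \eqref{eqn:Fub} for every $\nu \in \SSS D$, and the same Scott-continuity in the $\nu$-argument shows that $\{\nu \in \VV D : \eqref{eqn:Fub} \text{ holds for } (\nu, \xi)\}$ is a sub-dcpo of $\VV D$ containing $\SSS D$, hence contains $\MM D$.

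The hard part will be checking that all the ``good'' sets in these arguments are genuinely sub-dcpo's of $\VV D$ or $\VV E$. Concretely, this requires the separate Scott-continuity of the two maps $\nu \mapsto \int \bigl(\int \chi_U(x,y)\,d\xi\bigr)\,d\nu$ and $\xi \mapsto \int \bigl(\int \chi_U(x,y)\,d\nu\bigr)\,d\xi$, together with the fact that the inner integral, viewed as a function of the free variable, is lower-semicontinuous and so admits integration against the outer valuation. Once these continuity and measurability facts are in place, the commutativity theorem collapses onto the trivial simple-on-simple calculation, and the rest of the monad structure on $\MM$ is a direct restriction of $\VV$.
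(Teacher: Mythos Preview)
Your proposal is correct and follows exactly the approach the paper sketches in the paragraph preceding the theorem: verify Fubini on simple valuations, then use Scott-continuity of the iterated integrals together with the minimality of $\MM D$ as a sub-dcpo to extend to all $\MM$-valuations, with the monad operations restricting by the same sub-dcpo induction principle. Since the paper cites this result from \cite{m-monad} rather than proving it in full, your write-up is in fact more detailed than what appears here, but the underlying idea is identical.
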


Since the inclusions $\MM D \subseteq \VV D$ form a strong map of monads, we are justified in viewing $\MM$ as a submonad of $\VV$
and we use the same notation for the monad operations of $\MM$ and $\VV$.



\section{Quantum Effects and Hereditarily Atomic von Neumann Algebras}
\label{sec:operator-algebras}

We now turn our attention to the model for the quantum subsystem of $\VQPL$.
We begin with a short review of operator algebras, which can be used to
study quantum foundations. The standard references are \cite{Blackadar},
\cite{takesaki:oa1} and \cite{kadisonringrose:oa1}. 
For quantum computing, it is sufficient to consider a special class of operator
algebras that is known as the \emph{hereditarily atomic von Neumann algebras}.
This class consists of (possibly infinite) products of finite-dimensional
matrix algebras.  These algebras were studied in \cite{Kornell18}, where it is
shown that the dual category has a concrete description as \emph{quantum sets}.
Our main result in this section is to prove that the category of hereditarily
atomic von Neumann algebras is enriched over continuous domains
(\secref{subsection:enrichment vN}), which is crucial for providing a
semantic interpretation of the "$m\qq$" term.



\subsection{Definition of von Neumann algebras}
\label{sub:w*-def}

If $H$ is a Hilbert space, any linear map $x:H\to H$ is called an \emph{operator}, and $x$ is \emph{bounded} if it is continuous with respect to the norm $\|\cdot\|$ induced by the inner product $\langle\cdot,\cdot\rangle$ on $H$. The space $B(H)$ of all bounded operators on $H$ forms an algebra over $\CN$ with composition as multiplication. Moreover, $B(H)$ has an \emph{involution} $x\mapsto x^*$, where $x^*$ is the unique bounded operator satisfying $\langle x^*k,h\rangle=\langle k,xh\rangle$ for $h,k\in H$. A subalgebra $A\subseteq B(H)$ that is closed under the involution is called a \emph{$*$-subalgebra}. If, in addition, $xy=yx$ for each $x,y\in A$, we call $A$ \emph{commutative}. The  \emph{commutant} of a subset $S\subseteq B(H)$ is $S' = \{ y\in B(H)\mid xy = yx\ (\forall x\in S)\}$. 
\begin{definition}
Let $H$ be a Hilbert space. A \emph{von Neumann algebra} on $H$  is a  $*$-subalgebra $M$ of $B(H)$ such that $M''=M$. If $K$ is another Hilbert space, and $N$ is a von Neumann algebra on $K$, a linear map $\varphi:M\to N$ that preserves the multiplication and the involution is a \emph{$*$-homomorphism}. If, in addition, $\varphi$ is bijective, it is a $*$-isomorphism. 
\end{definition}
The commutant of any non-empty set in $B(H)$ always contains $1_H$, so $1_H\in M$ for any von Neumann algebra $M\subseteq B(H)$. We sometimes write $1_H$ as $1_M$ to emphasize it is the unit of $M$.

\begin{example}
  $B(H)$ itself is a von Neumann algebra, and if $H$ is $n$-dimensional, then $B(H)$ is $*$-isomorphic to $\mathrm{M}_n(\mathbb C)$, the algebra of $n\times n$-complex valued matrices. This example plays an important role in the definition of hereditarily atomic von Neumann algebras below.
\end{example}

\begin{example}\label{ex:ellinfty}
If $X$ is a set, then $\ell^2(X)\defeq \{f:X\to\mathbb C\mid \sum_{x\in X}|f(x)|^2<\infty\}$ is a Hilbert space with inner product $\langle f,g\rangle\defeq\sum_{x\in X}\overline{f(x)}g(x)$. The space $\ell^\infty(X)\defeq\{f:X\to\mathbb C\mid \sup_{x\in X}|f(x)|<\infty\}$ equipped with the norm $\|f\|\defeq\sup_{x\in X}|f(x)|$ can be embedded isometrically into $B(\ell^2(X))$ via the maps $f\mapsto m_f$, where $m_f:\ell^2(X)\to\ell^2(X)$ is the left multiplication $g\mapsto fg$ \cite[Proposition B.73]{landsman}. Thus, identifying $\ell^\infty(X)$ with its image in $B(\ell^2(X))$ shows it is a commutative von Neumann algebra on $\ell^2(X)$ \cite[Proposition B.108]{landsman}.	
\end{example}

Given Hilbert spaces $(H_\alpha)_{\alpha\in\Alpha}$,  the sum $\bigoplus_{\alpha\in\Alpha}H_\alpha \defeq \{ (h_\alpha)_{\alpha\in\Alpha}\in \prod_{\alpha\in\Alpha}H_\alpha\mid \sum_{\alpha\in\Alpha}\|h_\alpha\|^2<\infty\}$ is a Hilbert space with inner product $\langle k,h\rangle \defeq\sum_{\alpha\in\Alpha}\langle k_\alpha,h_\alpha\rangle$ for $k=(k_\alpha)_{\alpha\in\Alpha}$ and $h=(h_\alpha)_{\alpha\in\Alpha}$.

\begin{proposition}\cite[Proposition II.3.3]{takesaki:oa1}\label{prop:product of von Neumann algebras}
	Let $M_\alpha$ be a von Neumann algebra on a Hilbert space $H_\alpha$ for each $\alpha\in A$. Then $\prod_{\alpha\in\Alpha}M_\alpha \defeq \{ (x_\alpha)_{\alpha\in\Alpha}\mid \sup_{\alpha\in\Alpha}\|x_\alpha\|<\infty\}$
	is a von Neumann algebra on $H\defeq\bigoplus_{\alpha\in\Alpha}H_\alpha$, where $xh\defeq (x_\alpha h_\alpha)_{\alpha\in\Alpha}\in H$ for $x=(x_\alpha)_{\alpha\in\Alpha}\in \prod_{\alpha\in\Alpha}M_\alpha$ and $h=(h_\alpha)_{\alpha\in\Alpha}\in H$.
\end{proposition}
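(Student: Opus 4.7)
The plan is to realize $M \defeq \prod_{\alpha \in \Alpha} M_\alpha$ as a $*$-subalgebra of $B(H)$ and then verify the double commutant condition $M'' = M$. The key technical device will be the family of orthogonal projections $P_\alpha : H \to H$ onto the subspaces $H_\alpha$; each $P_\alpha$ lies in $M$, being the element with $1_{M_\alpha}$ in coordinate $\alpha$ and zero elsewhere, and these projections allow us to ``localize'' operators in the commutant.

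First I would verify that each uniformly bounded family $x = (x_\alpha)_\alpha$ acts on $H$ via $xh \defeq (x_\alpha h_\alpha)_\alpha$ as a bounded operator of norm $\sup_\alpha \|x_\alpha\|$; this follows from the estimate $\sum_\alpha \|x_\alpha h_\alpha\|^2 \leq (\sup_\alpha \|x_\alpha\|)^2 \sum_\alpha \|h_\alpha\|^2$, with the reverse inequality coming from restricting to vectors supported in a single $H_\alpha$. Componentwise addition, multiplication, and the involution $(x_\alpha)^* \defeq (x_\alpha^*)$ then make $M$ a unital $*$-subalgebra of $B(H)$.

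The heart of the argument is the identity $M' = \prod_{\alpha \in \Alpha} M_\alpha'$, from which the double commutant condition follows by applying it twice: $M'' = \prod_\alpha M_\alpha'' = \prod_\alpha M_\alpha = M$, using that each $M_\alpha$ is itself a von Neumann algebra. For the nontrivial inclusion $M' \subseteq \prod_\alpha M_\alpha'$, any $y \in M'$ commutes with every $P_\alpha$ and therefore preserves each $H_\alpha$, decomposing as $y = (y_\alpha)_\alpha$ with $y_\alpha \in B(H_\alpha)$ and $\|y_\alpha\| \leq \|y\|$; this uniform bound is the main care-point, since it is needed for $(y_\alpha)_\alpha$ to define a genuine element of the product. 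For each $x_\alpha \in M_\alpha$, the ``extension by zero'' into coordinate $\alpha$ lies in $M$ (having supremum norm $\|x_\alpha\| < \infty$), and commuting with $y$ forces $y_\alpha x_\alpha = x_\alpha y_\alpha$, so $y_\alpha \in M_\alpha'$. The reverse inclusion $\prod_\alpha M_\alpha' \subseteq M'$ is immediate from componentwise computation.
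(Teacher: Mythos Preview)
The paper does not give its own proof of this proposition; it simply cites Takesaki \cite[Proposition II.3.3]{takesaki:oa1} and uses the result as a black box. Your argument is correct and is exactly the standard proof one finds in the literature: realize $\prod_\alpha M_\alpha$ inside $B(\bigoplus_\alpha H_\alpha)$, use the coordinate projections $P_\alpha \in M$ to show that any $y \in M'$ is block-diagonal with uniformly bounded blocks $y_\alpha \in M_\alpha'$, conclude $M' = \prod_\alpha M_\alpha'$, and iterate to get $M'' = \prod_\alpha M_\alpha'' = \prod_\alpha M_\alpha = M$.
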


\begin{definition}
  We call  a von Neumann algebra  $M$ \emph{hereditarily atomic}, or simply an \emph{HA-algebra}, if $M$  is isomorphic to $\prod_{\alpha\in\Alpha}M_\alpha$, where each $M_\alpha$ is $*$-isomorphic to some matrix algebra.
\end{definition}

In order to make the correspondence between HA-algebras and the types of our language clearer, we overload notation and often write $\bigoplus_{\alpha \in A} M_\alpha \eqdef \prod_{\alpha\in\Alpha}M_\alpha.$

\begin{example}
  All of the following are HA-algebras and we indicate to which type they correspond.
  The complex numbers $\mathbb C$ correspond to type $\mathbf I$; the algebra $\mathrm M_2(\mathbb C)$ corresponds to type $\qbit$; the algebra $\bigoplus_{n \in \mathbb N} \mathbb C$ corresponds to type $\mathbf{QNat} \equiv \mu \mathbf X. \mathbf I \oplus \mathbf X$;
  the algebra $\bigoplus_{n \in \mathbb N} \mathrm{M}_{2^n}(\mathbb C)$ corresponds to type $\mathbf{List}(\qbit) \equiv \mathbf{\mu X. I \oplus (\qbit \otimes X)}.$
  Moreover, if $X$ is a set, then $\ell^\infty(X) \cong \bigoplus_{x\in X}\mathbb C$ is an HA-algebra and we will see it corresponds to \emph{observable} quantum types, in general.
\end{example}

\subsection{Quantum Computation with Hereditarily Atomic von Neumann Algebras}
\label{sub:w*-category-definition}

In this subsection, we define the appropriate notion of morphism that is computationally relevant.

If $M$ is a von Neumann algebra on a Hilbert space $H$, we say $x\in M$ is \emph{self adjoint} if $x^*=x$, and \emph{positive} if $x=y^*y$ for some $y\in M$; equivalently $\langle h,xh\rangle\geq 0$ for each $h\in H$ \cite{kadisonringrose:oa1}. 
Given self-adjoint elements $x$ and $y$ in $M$, we write $x\leq y$ iff $y-x$ is positive. The relation $\leq$ is a partial order on the set $M_{\mathrm{sa}}$ of self-adjoint elements in $M$ under which $M_{\mathrm{sa}}$ is \emph{bounded directed complete:} if $A$ is directed and $\alpha\mapsto x_\alpha\in M_{\mathrm{sa}}$ is a monotone ascending net that is bounded (i.e., $x_\alpha\leq  y\in M_{\mathrm{sa}}\forall \alpha\in \Alpha$), then  
$(x_\alpha)_{\alpha\in A}$ has a supremum $\sup_{\alpha\in \Alpha} x_\alpha= x\in M_{\mathrm{sa}}$. The partial order $\leq$ is often called the \emph{L\"owner order}. 

In fact, $x$  is the limit of $(x_\alpha)$ in the \emph{strong operator topology} on $M$: i.e., $\lim_{\alpha\in\Alpha} x_\alpha h= xh$ for each $h\in H$; this also implies convergence with respect to the \emph{weak operator topology} on $M$, i.e., $\lim_{\alpha\in\Alpha}\langle k,x_\alpha h\rangle=\langle k,xh\rangle$ for each $h,k\in H$ \cite[Proposition I.3.2.5 \& Corollary I.3.2.6]{Blackadar}. As a consequence, the \emph{unit interval of $M$}, $[0,1]_M = \{ x\in M_{\mathrm{sa}}\mid 0\leq x\leq 1_M\}$ is a dcpo. 

A linear function $\varphi:M\to N$   between von Neumann algebras is  \emph{unital} if $\varphi(1_M)=1_N$, \emph{subunital} if $\varphi(1_M)\leq 1_N$; $\varphi$ is \emph{positive} if it preserves positive elements, equivalently, if $\varphi$ is monotone with respect to $\leq$. A  positive and subunital $\varphi\colon M\to N$ restricts to a monotone map $[0,1]_M\to[0,1]_N$, which by linearity completely determines $\varphi$. We call $\varphi$ \emph{normal} if it preserves the suprema of bounded increasing nets, i.e., if it is Scott continuous with respect to $\leq$. 

We denote by $\mathrm{M}_n(M)$ the von Neumann algebra  of all $n\times n$-matrices with entries in $M$. Any linear map $\varphi:M\to N$ between von Neumann algebras induces a linear map $\varphi^{(n)}:\mathrm{M}_n(M)\to\mathrm{M}_n(N)$ obtained by applying $\varphi$ entrywise. We say that $\varphi$ is \emph{completely positive} if $\varphi^{(n)}$ is positive for each $n\in\mathbb N$. In particular, any completely positive map is positive.
Finally, we say that a linear map $\varphi : M \to N$ is an \emph{NCPSU} map, if $\varphi$ is normal completely positive and subunital.
We note that every normal unital $*$-homomorphism is an NCPSU map, but the converse is not true, in general.

\begin{definition}
	We denote the category of von Neumann algebras and NCPSU maps by $\vN$.
	Its full-on-objects subcategory having normal unital $*$-homomorphisms as morphisms is denoted by $\vNs$.
  The category of HA-algebras and NCPSU maps is denoted by $\HA$ and we denote the full-on-objects subcategory of $\HA$ with normal unital $*$-homomorphisms by $\HAs$.
  The categories relevant for our semantics are their formal duals given by $\QQ \eqdef (\HA)^\op$ and $\QQs \eqdef (\HAs)^\op$.
\end{definition}

\begin{remark}
  When working with von Neumann algebras, it is customary to adopt the
  Heisenberg picture of quantum mechanics, rather than the Schr{\"o}dinger one.
  This corresponds to working in the category $\QQ$ which is the formal dual of
  $\HA.$ In fact, the program of non-commutative geometry is based
  on dualities between categories of operator algebras and ``formal dual''
  categories. Furthermore, this approach is also established in
  quantum programming semantics \cite{qpl-fossacs,quantum-von-neumann}
  and necessary for the appropriate categorical structure (\secref{sub:quantum-submodel}).
\end{remark}

From now on, all $*$-homomorphisms we work with are implicitly assumed to be unital and normal.
We interpret quantum values in $\QQs$ and quantum terms in $\QQ$. Next, we describe maps between HA-algebras that are crucial for quantum computation and that we use in our semantics.

\begin{figure}
\small{
\centering
$
\begin{array}{l|l|l|l}
    \mathrm{tr} : \mat n(\mathbb C) \to \mathbb C \ & \ \mathrm{state}_\rho : \mathbb C \to \mat{2^n}(\mathbb C) \ & \ \mathrm{meas} : \mat 2(\mathbb C) \to \mathbb C \oplus \mathbb C  \ & \ \mathrm{unitary}_U : \mat{2^n}(\mathbb C) \to \mat{2^n}(\mathbb C) \\
    \mathrm{tr} :: x \mapsto \sum_i x_{i,i} \ &
\ \mathrm{state}_\rho :: \alpha \mapsto \alpha \rho \ & \ \mathrm{meas} :: \begin{pmatrix} \alpha & \beta \\ \gamma & \delta \end{pmatrix} \mapsto \begin{pmatrix} \alpha & \delta \end{pmatrix} \ & \ \mathrm{unitary}_U :: x \mapsto UxU^* \\
    \mathrm{tr}^* : \mathbb C \to \mat n(\mathbb C) \ &
\ \mathrm {state}_\rho^* : \mat{2^n}(\mathbb C) \to \mathbb C \ & \ \mathrm{meas}^* : \mathbb C \oplus \mathbb C  \to \mat 2(\mathbb C) \ & \ \mathrm{unitary}^*_U : \mat{2^n}(\mathbb C) \to \mat{2^n}(\mathbb C) \\
    \mathrm{tr}^* :: \alpha \mapsto \alpha 1_{\matrixAlg n} \ &
\ \mathrm{state}_\rho^* :: x \mapsto \trace{x \rho} \ & \ \mathrm{meas}^* :: \begin{pmatrix} \alpha & \delta \end{pmatrix} \mapsto \begin{pmatrix} \alpha & 0 \\ 0 & \delta \end{pmatrix} & \ \mathrm{unitary}_U^* :: x \mapsto U^* xU
\end{array}
$
}
\caption{Maps in the Schr{\"o}dinger picture $(\varphi \colon M \to N)$ and their Hermitian adjoints $(\varphi^* \colon N \to M)$.}
\label{fig:quantum-pictures}
\end{figure}

The maps in the upper half of Figure \ref{fig:quantum-pictures} describe NCPSU
maps between HA-algebras that are well-known in the quantum computing
literature.  The map "$\mathrm{tr}$" computes the trace of a matrix; the map
"$\mathrm{state}_\rho$" prepares a new (mixed) quantum state that is described by
the density matrix $\rho$; the map "$\mathrm{meas}$" performs a destructive
quantum measurement on a qubit and returns a bit as outcome; the map
"$\mathrm{unitary}_U$" applies the unitary matrix $U$ of arity $n$ to an
$n$-dimensional quantum state.  These are the appropriate maps to take in the
Schr{\"o}dinger picture of quantum mechanics, but as explained above, the
Heisenberg picture is more appropriate for our denotational semantics, so it is
the Hermitian adjoints of these maps (bottom half of Figure
\ref{fig:quantum-pictures}) that are relevant to us. By writing $\varphi^\ddagger
\eqdef (\varphi^*)^\op$ for $\varphi \in \{\mathrm{tr}, \mathrm{state}_\rho,
\mathrm{meas}, \mathrm{unitary}_U \}$, these maps are then morphisms of $\QQ.$
In particular, $\mathrm{meas}^\ddagger : \mathrm{M}_2(\mathbb C) \to \mathbb C
\oplus \mathbb C$ is the morphism of $\QQ$ which represents quantum
measurement.

We also define a morphism $\mathrm{new}^\ddagger \defeq \mathrm{meas}^\op \in \QQ(\mathbb C \oplus \mathbb C, \mathrm{M}_2(\mathbb C))$.
The way to think of this map (in $\QQ$) is that given a bit $i \in \{0,1\}$, the map would prepare the density matrix $\ket i \bra i$.
Indeed, notice that (in $\QQ$)
we have $\mathrm{meas}^\ddagger \circ \mathrm{new}^\ddagger = (\mathrm{meas} \circ \mathrm{meas}^*)^\op = \id_{\mathbb C \oplus \mathbb C}$, as expected.


\subsection{Continuous Domain Enrichment of $\QQ$}
\label{subsection:enrichment vN}
The category $\vN$ is enriched over $\DCPO_{\perp!}$~\cite{cho:semantics}, where $\varphi\leq \psi$  iff $\psi-\varphi$ is completely positive,   for $\varphi,\psi\in \vN(M,N)$.
The $\DCPO_{\perp!}$-enrichment of $\vN$ immediately implies that of $\HA$ and $\QQ$. 

While $\DCPO_{\perp!}$-enrichment is important, it is insufficient for our purposes. In particular,
the crucial connection between the semantics of the quantum and classical 
probabilistic effects in our language requires $\QQ(A,B)$ to be an $\MM$-algebra for $A, B$
objects of $\QQ$ (Theorem~\ref{thm:M-alg}). We can show this is the case when $\QQ$ is
enriched in a much stronger sense, namely, when $\QQ$ is enriched over continuous dcpo's (see Theorem \ref{thm:emcategory}).
Explaining all this is our next goal.

We begin with continuous dcpo's. For $x, y$ in a dcpo 
$D$, $x$ is \emph{way-below} $y$ (in symbols, $x\ll y$)
if and only if for every directed set $A$ with $y\leq \sup A$, there is some
$a\in A$ such that $x\leq a$. A dcpo $D$ is \emph{continuous}, or simply a
\emph{domain}, if every element $x\in D$ is the supremum of a directed set of
elements that are way-below $x$. We use $\DOM$ to denote the category of domains 
and Scott-continuous maps. 

It was noted in \cite[Example 2.7]{selinger:higher-order} that the unit interval 
of $\mathrm{M}({\mathbb C}^n)$ is a continuous dcpo, from which it is easy to 
show $[0,1]_A$ is a domain for every HA-algebra $A$; in fact, this is an if-and-only-if~\cite{furber-continuousdcpos}. 
We conclude this section with a much stronger result.

\begin{theorem}\label{thm:continuous enrichment}
  The category $\QQ$ is enriched over $\DOM.$
\end{theorem}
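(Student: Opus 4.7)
The plan is to verify two things: (a) each hom-dcpo $\QQ(A,B) = \HA(B,A)$, ordered by $\varphi\leq\psi$ iff $\psi-\varphi$ is completely positive, is a continuous domain, and (b) composition $\QQ(A,B)\times\QQ(B,C)\to\QQ(A,C)$ is Scott-continuous on the product of continuous domains. Ingredient (b) comes essentially for free from the $\dcpobs$-enrichment of $\vN$ already established in \cite{cho:semantics}: composition is bilinear (hence monotone) and separately Scott-continuous, and on a product of continuous domains any monotone bimap that is separately Scott-continuous is jointly Scott-continuous, because any directed supremum in the product is dominated by one of product form whose image factors as iterated suprema. The theorem's content therefore lies in (a).

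I would prove (a) by a two-step reduction. Writing $N=\bigoplus_\alpha N_\alpha$ as a product of matrix algebras, componentwise projection yields a dcpo isomorphism $\HA(M,N)\cong\prod_\alpha\HA(M,N_\alpha)$: the subunitality constraint $\varphi(1_M)\leq 1_N$ decouples into $(\pi_\alpha\circ\varphi)(1_M)\leq 1_{N_\alpha}$ coordinatewise, and a tuple of NCPSU maps reassembles into a single NCPSU map into the product because each component is automatically bounded in norm by $1$. Since arbitrary products of pointed continuous dcpos are again continuous domains (with $(y_\alpha)\ll(x_\alpha)$ iff all $y_\alpha\ll x_\alpha$ and $y_\alpha=\bot$ for all but finitely many $\alpha$), this reduces (a) to the case $N=M_n(\CN)$. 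For this case, with $M=\bigoplus_\beta M_\beta$, the central projection $p_F\defeq\sum_{\beta\in F}1_{M_\beta}$ (for finite $F$) induces an NCPSU compression $c_F\colon M\to M$, $x\mapsto p_F x$, factoring through the finite-dimensional HA-algebra $M_F\defeq\bigoplus_{\beta\in F}M_\beta$. For $\varphi\in\HA(M,M_n(\CN))$, the truncations $\varphi\circ c_F$ form a directed family with supremum $\varphi$, and correspond to elements of the finite-dimensional compact convex dcpo $\HA(M_F,M_n(\CN))$, which is a continuous domain in the L\"owner order by a Choi--Jamio\l kowski argument and a finite-dimensional specialization of the cited fact that $[0,1]_A$ is a domain for every HA-algebra $A$. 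Granting the finite-dimensional lemma $r\psi\ll\psi$ in $\HA(M_F,M_n(\CN))$ for all rational $r\in(0,1)$, post-composition with the Scott-continuous $c_F$ lifts this to $r(\varphi\circ c_F)\ll\varphi\circ c_F\leq\varphi$, hence $r(\varphi\circ c_F)\ll\varphi$ in $\HA(M,M_n(\CN))$, exhibiting $\varphi$ as a directed supremum of way-below elements.

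The main obstacle is the finite-dimensional lemma $r\psi\ll\psi$. When the Choi matrix of $\psi$ is not full-rank, the difference $(1-r)\psi$ lies on the boundary of the positive cone rather than its topological interior, so the standard "open neighbourhood around $\psi$" argument does not apply. The plan is to combine two finite-dimensional features: a monotone bounded net $(d_i)$ converges in norm to its supremum $d_\infty\geq\psi$, and the upper bound $d_i\leq d_\infty$ forces a controlled block decomposition of $d_i$ relative to the support projection of $\psi$. Using a spectral decomposition of $\psi$ together with the Schur complement criterion for positivity, one extracts that $d_i\geq r\psi$ holds for all sufficiently large $i$, despite $(1-r)\psi$ being on the boundary of the cone. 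Once this lemma is established, the two reductions above propagate it to yield the $\DOM$-enrichment of $\QQ$; the remaining verifications about how composition respects the continuous-dcpo structure are then routine.
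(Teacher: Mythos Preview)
Your overall strategy is sound and genuinely different from the paper's. Both you and the paper reduce on the codomain via the order isomorphism $\HA(M,\prod_\alpha N_\alpha)\cong\prod_\alpha\HA(M,N_\alpha)$ and then treat the case $N=M_n(\CN)$. From there the paper takes a representation-theoretic route: it invokes the minimal Stinespring decomposition $(\pi,v,K)$ of an arbitrary $\varphi\in\HA(M,M_n(\CN))$ and proves an order isomorphism $\downarrow\varphi\cong[0,1]_{\pi[M]'}$; when $M$ is hereditarily atomic and the target is a matrix algebra, the commutant $\pi[M]'$ is shown to embed into a product of finite matrix algebras, so each principal downset is a continuous domain by the cited Selinger fact, and hence so is the whole hom-dcpo. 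Your approach instead truncates $M$ to finite subproducts $M_F$ and relies on the elementary lemma $r\psi\ll\psi$ in the finite-dimensional hom-set; this avoids Stinespring theory entirely and is more self-contained, at the cost of a direct matrix argument.

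There is, however, a real gap in your lifting step. You claim that ``post-composition with the Scott-continuous $c_F$ lifts'' the way-below relation from $\HA(M_F,M_n(\CN))$ to $\HA(M,M_n(\CN))$, but Scott-continuous maps do \emph{not} preserve $\ll$ in general (any constant map to a non-compact element is a counterexample). What you actually need is that pre-composition with the projection $\pi_F\colon M\to M_F$ is the \emph{lower adjoint} of a Galois connection whose upper adjoint is pre-composition with the inclusion $\iota_F\colon M_F\hookrightarrow M$: one checks $\psi\circ\pi_F\leq\varphi$ iff $\psi\leq\varphi\circ\iota_F$ using $\pi_F\circ\iota_F=\id$ and $\iota_F\circ\pi_F\leq\id$ (centrality of $p_F$ is essential here). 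Since the upper adjoint is Scott-continuous, the lower adjoint does preserve $\ll$, and then your argument goes through. A smaller point: in your sketch of the key lemma the relevant support projection is that of the supremum $d_\infty$, not of $\psi$. Restricting to the range of $d_\infty$ (which contains that of $\psi$ since $\psi\leq d_\infty$) makes $d_\infty$ invertible there, and $d_\infty^{-1/2}d_i\,d_\infty^{-1/2}\to 1$ immediately gives $d_i\geq r\,d_\infty\geq r\psi$ for large $i$, with no Schur-complement case analysis needed.
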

\begin{proof} The proof (Appendix \ref{app:proofs-vN-algebras}) starts with the fact that $[0,1]_A$ is a domain for each HA-algebra $A$ and then makes extensive use of the representation theory of von Neumann algebras.
\end{proof}

\section{Probabilistic Effects, Quantum Effects and Kegelspitzen}
\label{sec:relationship}

Our language shows that quantum effects induce probabilistic effects on the
classical side (via the "run" term) and, vice-versa, probabilistic effects on
the classical side can also influence the quantum dynamics (via the
"$\textbf{init}$" and "$m\textbf{q}$" terms). In this section, we describe the
mathematical structure we use to interpret this correspondence.

In particular, we show there is a strong relationship between the Kleisli
category $\KL$ of $\MM$ (where we interpret classical programs) and our
category $\QQ$ of hereditarily atomic von Neumann algebras  (where we interpret
quantum programs). The link between the two categories is provided by the
theory of (continuous) Kegelspitzen \cite{keimelplotkin17}. The relationship we
identify is crucial for the interpretation of the mixed classical/quantum
judgements of Figure \ref{fig:syntax-mixed},  and it links the classical theory of
valuations monads to the quantum theory of von Neumann algebras.

\subsection{(Continuous) Kegelspitzen}
\label{sub:kegelspitzen}

We begin by recalling the definition of Kegelspitzen \cite{keimelplotkin17}.

\begin{definition}
  A Kegelspitze is a dcpo equipped with a convex structure. More precisely:\\[1ex]
$\bullet$\quad  
A \emph{barycentric algebra} is a set $A$ endowed with binary operations $(a,b)\mapsto a+_{r}b\colon A\times A\to A$ indexed by $r\in [0, 1]$ such that for all $a, b, c \in A$ and $r, p\in [0, 1]$, the following equations hold:
\[ a+_{1} b = a;\quad a+_{r}b = b+_{1-r}a;\quad a+_{r} a = a; \quad (a+_{p}b)+_{r}c = a+_{pr}(b+_{\frac{r-pr}{1-pr}} c)~~\text{provided}~r, p<1. \]
$\bullet$\quad  A \emph{pointed barycentric algebra} is a barycentric algebra~$A$ with a distinguished element~$\bot$. 
For $a\in A$ and $r\in [0, 1]$, we define scalar multiplication $r\cdot a \defeq a+_{r} \bot$. 
A map $f\colon A\to B$ between pointed barycentric algebras is \emph{linear} if $f(\bot_{A}) = \bot_{B}$ and $f(a+_{r}b) = f(a)+_{r}f(b)$ for all $a, b\in A, r\in [0, 1]$. \\[1ex]
$\bullet$\quad  A \emph{Kegelspitze} is a pointed barycentric algebra $K$ equipped with a directed-complete partial 
order such that scalar multiplication $(r, a)\mapsto r\cdot a \colon [0,1]\times K\to K$ and the binary operation 
$(a, b)\mapsto a+_{r}b\colon K\times K\to K$, for $r\in [0, 1]$, 
are Scott-continuous (in both arguments). A \emph{continuous Kegelspitze} is a Kegelspitze 
that is a domain in the equipped order. 
\end{definition}

\begin{example}
\label{exa:MDiskegel}
For each dcpo $D$, $\MM D$ is Kegelspitze: for $\nu_{i}\in \MM D$ and $r_{i}\geq 0, i=1,\ldots, n$ with $\sum_{i=1}^{n}r_{i}\leq 1$, the convex sum $\sum_{i=1}^{n}r_{i}\nu_{i}$ is again in $\MM D$. 
Then, if $\nu_{1}, \nu_{2} \in \MM D$ and $r\in [0, 1]$, we define $\nu_{1} +_{r} \nu_{2}\defeq r \nu_{1} + (1-r ) \nu_{2}$. The zero valuation $\mathbf 0_{D}$ is the distinguished least element. 
If, in addition, $D$ is a domain, then $\MM D = \VV D$ is a continuous Kegelspitze \cite{m-monad}.
For each Scott-continuous map $f\colon D\to E$, the map $\MM(f) \colon \MM D\to \MM E $ is Scott-continuous and linear. 
\end{example}

\begin{example}
\label{exa:unit-interval}
The real unit interval $[0,1]$ is obviously a continuous Kegelspitze. More generally, the unit interval $[0,1]_A$ of any von Neumann algebra $A$ is a Kegelspitze.
If $A$ also is hereditarily atomic, then $[0,1]_A$ is a continuous Kegelspitze by \cite[Example 2.7]{selinger:higher-order}.
Moreover, any NCPSU map $f : A \to B$ between von Neumann algebras $A$ and $B$, is Scott-continuous and linear when (co)restricted to the unit intervals of $A$ and $B$.
\end{example}

\subsection{Correspondence between Observable Quantum/Probabilistic Effects}

Our next result describes a bijective correspondence between \emph{observable} quantum/probabilistic effects 
that allows us to interpret the terms dealing with observable primitives.
A semantic observation (which we make precise later) shows that:
any quantum observable type $\OO$ is interpreted as a commutative HA-algebra that is $*$-isomorphic to $\ell^\infty(X)$ for some set $X$;
its classical observable counterpart $|\OO|$ is interpreted as the \emph{discrete domain} with underlying set $X$.
Moreover, quantum values correspond to $*$-homomorphisms and classical values to Dirac valuations.

\begin{theorem}
\label{thm:r-iso-categorical}
  Let $X$ be an arbitrary set. Then, there exists an isomorphism of (continuous) Kegelspitzen
  $r_{X} \colon \QQ(\mathbb C, \ell^\infty(X)) \cong \MM (X, \sqsubseteq) \colon r_{X}^{-1}$, where $\sqsubseteq$ is the discrete order on $X$.
  Furthermore, this isomorphism restricts to a 1-1 correspondence between the $*$-homomorphisms of $\QQ(\mathbb C, \ell^\infty(X))$ and the Dirac valuations of $\MM (X, \sqsubseteq)$.
\end{theorem}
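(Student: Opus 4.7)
The plan is to exhibit the bijection explicitly and then verify preservation of the structural data in a routine sequence.

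First I would unfold the definitions: since $\QQ = \HA^{\op}$, we have $\QQ(\mathbb C, \ell^\infty(X)) = \HA(\ell^\infty(X), \mathbb C)$, the set of NCPSU maps $\varphi : \ell^\infty(X) \to \mathbb C$. The Scott topology on the discrete dcpo $(X, \sqsubseteq)$ is the discrete topology, so $\mathcal O(X, \sqsubseteq)$ is the full powerset. I would then define
\[
r_X(\varphi)(U) \defeq \varphi(\chi_U), \qquad U \subseteq X,
\]
and, symmetrically,
\[
r_X^{-1}(\nu)(f) \defeq \int_{x \in X} f(x) \, d\nu, \qquad f \in \ell^\infty(X).
\]
Well-definedness of $r_X(\varphi)$ as an element of $\MM(X, \sqsubseteq)$ follows by checking strictness, modularity, subunitality and Scott-continuity of $U \mapsto \varphi(\chi_U)$ from the corresponding properties of $\varphi$, noting that $(X, \sqsubseteq)$ is trivially a domain and hence $\MM(X, \sqsubseteq) = \VV(X, \sqsubseteq)$ by Example \ref{exa:MDiskegel}. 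For $r_X^{-1}$, the Choquet integral against $\nu$ yields a normal positive linear functional on $\ell^\infty(X)$; subunitality comes from $\nu(X) \leq 1$, and complete positivity is automatic for positive functionals into $\mathbb C$.

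Bijectivity would then follow from the key observation that every NCPSU map $\varphi : \ell^\infty(X) \to \mathbb C$ is completely determined by the weights $a_x \defeq \varphi(\chi_{\{x\}})$: any $f \in [0,1]_{\ell^\infty(X)}$ is the directed supremum of the finite truncations $\sum_{x \in F} f(x) \chi_{\{x\}}$ over finite $F \subseteq X$, so normality and linearity give $\varphi(f) = \sum_{x \in X} f(x) a_x = \int f \, d(r_X(\varphi))$. Symmetrically, each $\nu \in \MM(X, \sqsubseteq)$ is determined by its values on singletons, since every subset of $X$ is the directed union of its finite subsets. The two round-trip identities are then immediate.

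Next I would verify preservation of the continuous Kegelspitze structure. The convex structure is preserved pointwise, since $r_X(r\varphi +_{} (1-r)\psi)(U) = r \cdot r_X(\varphi)(U) + (1-r)\cdot r_X(\psi)(U)$, and the zero map is sent to the zero valuation $\mathbf 0_{(X,\sqsubseteq)}$. The L\"owner order on $\HA(\ell^\infty(X), \mathbb C)$ corresponds pointwise-on-weights to the stochastic order on $\MM(X, \sqsubseteq)$, so the bijection is order-preserving in both directions; Scott-continuity of $r_X$ and $r_X^{-1}$ then reduces to the fact that directed suprema on both sides are computed pointwise on singletons. For the $*$-homomorphism clause, a unital $*$-homomorphism $\varphi$ satisfies $\varphi(\chi_{\{x\}} \chi_{\{y\}}) = \varphi(\chi_{\{x\}}) \varphi(\chi_{\{y\}})$, and since $\chi_{\{x\}} \chi_{\{y\}} = \delta_{xy} \chi_{\{x\}}$, the weights obey $a_x a_y = \delta_{xy} a_x$ and $\sum_x a_x = 1$; this forces exactly one $a_{x_0} = 1$ with all others zero, which corresponds precisely to the Dirac valuation $\delta_{x_0}$.

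The only mildly subtle ingredient is the appeal to normality of $\varphi$, rather than mere positivity, to justify representing $\varphi$ through sums over all of $X$ when $X$ is infinite; this is what lets the entire argument be reduced to values on singletons, and everything else is a bookkeeping check against the relevant definitions.
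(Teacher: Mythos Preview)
Your argument is correct. The approach differs from the paper's in presentation more than in substance, but the differences are worth noting.

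The paper factors the isomorphism through the predual: it first invokes the Banach-space duality $\ell^\infty(X)_* \cong \ell^1(X)$, then identifies $\vN(\ell^\infty(X),\mathbb C)$ with the sub-Kegelspitze $\mathcal D(X) \subseteq \ell^1(X)$ of nonnegative summable functions with $\|g\|_1 \leq 1$, and finally composes with the evident identifications $\mathcal D(X) \cong \MM(X,\sqsubseteq)$ and $\mathbb C \cong \ell^\infty(1)$. You instead write the map directly as $\varphi \mapsto (U \mapsto \varphi(\chi_U))$ and verify everything by hand using normality; this is the same bijection, just assembled without passing through $\ell^1$. Your route is more self-contained and avoids citing the duality result, at the cost of redoing in miniature what that duality already packages.

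For the restriction to $*$-homomorphisms, the paper argues categorically: it shows that the square involving the fully faithful functors $\ell^\infty : \Set \to \QQ_*$ and $\JJ : \Set \to \TD$ commutes with $r_X$, whence $r_X$ restricts to a bijection between the images of these functors on $\Set(1,X)$, which are exactly the $*$-homomorphisms and the Dirac valuations. Your idempotency argument ($a_x^2 = a_x$ forces $a_x \in \{0,1\}$, unitality forces exactly one to be $1$) is the elementary unpacking of that same fact, and is perfectly adequate here; the categorical version has the advantage of fitting into the paper's later use of the $\ell^\infty$ functor in the semantics of observable types.
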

\begin{proof}
See Appendix \ref{app:r-isomorphism}.
\end{proof}

Combined with the above semantic observation, this theorem shows there is a 1-1
correspondence between the quantum and classical probabilistic states of
\emph{observable} types, and also a 1-1 correspondence between the
interpretations of quantum and classical observable values.  This isomorphism
is used for the interpretations of the "run" and "\textbf{init}" terms.

Next, we construct an isomorphism that we use for the interpretation of dynamic lifting (the "\textbf{lift}" term).
This is similar to a construction first reported in \cite{ewire}.

\begin{proposition}
\label{prop:lift-categorical}
Given a dcpo $X$, HA-algebras $A,B$, and a \emph{discrete} dcpo $Y$, there exists a Scott-continuous and linear bijection
$ \widehat{(-)} \colon \dcpo(X \times Y, \QQ(A, B)) \cong  \dcpo(X, \QQ(\ell^\infty(Y) \otimes A, B) ) , $ natural in all components. 
\end{proposition}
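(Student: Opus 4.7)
The plan is to obtain the bijection as the composite of three standard isomorphisms: an algebraic identification of $\ell^\infty(Y) \otimes A$, the universal property of products in $\vN$, and Cartesian closure of $\DCPO$.

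First, since $Y$ is a discrete dcpo (i.e.\ just a set), $\ell^\infty(Y) \cong \prod_{y \in Y} \mathbb{C}$ as HA-algebras. Using the representation of any HA-algebra $A$ as a product of matrix algebras, together with the fact that the spatial tensor product distributes over such products and satisfies $\mathbb{C} \otimes M \cong M$, one obtains $\ell^\infty(Y) \otimes A \cong \prod_{y \in Y} A$ as HA-algebras.

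Next, the universal property of products in $\vN$ gives a natural isomorphism of Kegelspitzen $\vN\!\left(B, \prod_{y \in Y} A\right) \cong \prod_{y \in Y} \vN(B, A)$: an NCPSU map $\varphi \colon B \to \prod_y A$ is uniquely determined by its components $\varphi_y \colon B \to A$, and complete positivity, normality, and subunitality all transfer componentwise (the last because $1_{\prod_y A}$ is the constant family $(1_A)_{y \in Y}$). The bijection preserves the pointwise L\"owner order and pointwise convex combinations. Dualising to $\QQ$ and using that $Y$ is discrete, the product $\prod_{y \in Y} \QQ(A, B)$ coincides with the Scott-continuous function space $[Y \to \QQ(A, B)]$, yielding an isomorphism $\QQ(\ell^\infty(Y) \otimes A, B) \cong [Y \to \QQ(A, B)]$ of Kegelspitzen in $\DCPO$.

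Finally, currying in the Cartesian-closed category $\DCPO$ supplies the natural bijection $\dcpo(X \times Y, \QQ(A, B)) \cong \dcpo(X, [Y \to \QQ(A, B)])$. Composing these three isomorphisms yields $\widehat{(-)}$, which explicitly sends $f \colon X \times Y \to \QQ(A, B)$ to the map $\hat f \colon X \to \QQ(\ell^\infty(Y) \otimes A, B)$ whose value at $x$ is the NCPSU map $B \to \prod_{y \in Y} A$ with $y$-th component $f(x, y)$. Scott-continuity, linearity with respect to the pointwise Kegelspitze structure, and naturality in $X, Y, A, B$ are all inherited from the constituent isomorphisms. The main obstacle will be the first step: checking that the spatial tensor product $\ell^\infty(Y) \otimes A$ really agrees with $\prod_{y \in Y} A$ for arbitrary (possibly infinite) $Y$, which requires either a direct computation using the standard Hilbert space representation of HA-algebras or an appeal to general properties of von Neumann tensor products with commutative atomic algebras.
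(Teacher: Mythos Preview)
Your proposal is correct and follows essentially the same route as the paper: the paper also composes currying in $\DCPO$, the identification $[Y \to \QQ(A,B)] \cong \prod_{|Y|}\QQ(A,B)$ for discrete $Y$, the $\DCPO$-enriched coproduct property of $\QQ$, and the isomorphism $\coprod_{|Y|} A \cong \ell^\infty(Y)\otimes A$. For the step you flag as the main obstacle, the paper dispatches it cleanly by noting that $\QQ_*$ is symmetric monoidal closed, so $\otimes$ preserves coproducts, i.e.\ in $\HA_*$ the spatial tensor product distributes over arbitrary products; this avoids the direct Hilbert-space computation you anticipate.
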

\begin{proof}
See Appendix \ref{app:lift-isomorphism}.
\end{proof}


\subsection{Combining Probabilistic and Quantum Effects}
\label{sub:quantum-probabilistic-combination}

In the previous subsection we considered \emph{observable} effects. In the
present subsection, we show how to combine \emph{arbitrary} quantum and
probabilistic effects \emph{into} quantum ones.

We begin by noting that on any Kegelspitze, the binary operations $a +_r b$ generalize to convex sums. We then use these convex sums in order to define \emph{barycentre} maps.
\begin{definition}
\label{def:convex-sums}
In each pointed barycentric algebra~$K$, given $a_{i}\in K, r_{i}\in [0, 1], i=1,\ldots, n$ with $\sum_{i=1}^{n}r_{i}\leq 1$, we inductively define the convex sum by
\[\sum_{i=1}^{n}r_{i}a_{i} \defeq \begin{cases}
a_{1} & \text{, if } r_{1}=1,\\
a_{1}+_{r_{1}}(\sum_{i=2}^{n} \frac{r_{i}}{1-r_{1}}a_{i}) &  \text{, if }r_{1}< 1.
						\end{cases}
\]
This sum  is invariant under index-permutation: for $\pi$ a permutation of $\{1, \ldots, n\}$, $\sum_{i=1}^{n}r_{i}a_{i} = \sum_{i=1}^{n}r_{\pi(i)}a_{\pi(i)}$ \cite[Lemma 5.6]{jones90}. If $K$ is a Kegelspitze, then the expression $\sum_{i=1}^{n}r_{i}a_{i}$ is Scott-continuous in each $r_{i}$ and $a_{i}$. 
A \emph{countable} convex sum also can be defined: if $a_i \in K$ and $r_i \in [0,1]$, for $i \in \Na$, with $\sum_{i \in\Na} r_i \leq 1$, define
$ \sum_{i \in \Na} r_i a_i \defeq \sup \{ \sum_{j \in J} r_j a_j \ |\ J \subseteq \Na \text{ and } J \text{ is finite} \} . $
\end{definition}

\begin{definition}
\label{def:barycentre}
Let $K$ be a Kegelspitze and $s=\sum_{i=1}^{n}r_{i}\delta_{x_{i}}$ be a simple valuation on~$K$. The \emph{barycentre} of $s$ is defined as $\beta_*(s) \defeq \sum_{i=1}^{n}r_{i}x_{i}$. Furthermore, if $K$ is a continuous Kegelspitze and $\nu \in \MM K$, the \emph{barycentre} of $\nu$ is defined as $\beta(\nu) \defeq \sup\{\beta_{*}(s)\mid s \in \SSS K \text{ and }s\ll \nu\}$.
\end{definition}

When $K$ is a \emph{continuous} Kegelspitze, the barycentre map $\beta \colon \MM K\to K :: \nu \mapsto \beta(\nu)$ is well-defined, unique, Scott-continuous and linear \cite{m-monad}. We emphasise that \emph{continuity} is crucial for establishing this
and it is unclear if this holds otherwise. Moreover, the following also is true.

\begin{theorem}[{\cite{m-monad}}]
\label{thm:emcategory}
The Eilenberg-Moore category $\DOM^{\MM}$ of $\MM$ over $\DOM$ is isomorphic to the category of continuous Kegelspitzen and Scott-continuous linear maps. In particular:
\begin{enumerate}
\item Each \emph{continuous} Kegelspitze $K$ admits a linear barycentre map $\beta\colon \MM K\to K$ (as in Definition~\ref{def:barycentre}) for which the pair $(K, \beta)$ is an Eilenberg-Moore algebra of $\MM$ over $\DOM$.
\item Conversely, on each $\MM$-algebra $(K, \beta)$ on $\DOM$, define $a+_{r}b \defeq \beta (\delta_{a} +_{r} \delta_{b})$ for $a, b\in K$ and $r\in[0,1]$.  Then with the operations~$+_{r}$,  $K$ is a continuous Kegelspitze and $\beta\colon \MM K\to K$ is linear. 
\end{enumerate}
\end{theorem}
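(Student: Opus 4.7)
The plan is to establish a category equivalence in both directions on objects, and then verify that morphisms match, so that the resulting bijection is in fact an isomorphism of categories. For direction (1), given a continuous Kegelspitze $K$, I would first check that $\beta$ as in Definition~\ref{def:barycentre} is well-defined and Scott-continuous: on simple valuations, the idempotence and permutation invariance of the Kegelspitze operations ensure the value $\beta_*(s)$ is independent of representation, and the extension to an arbitrary $\MM$-valuation by a supremum of way-below simple valuations requires the continuity of the underlying dcpo $K$ (which is why one restricts to $\DOM$ rather than $\dcpo$). The unit law $\beta \circ \eta_K = \id_K$ reduces to $\beta(\delta_a) = a$, which is immediate from Definition~\ref{def:convex-sums}. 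The multiplication law $\beta \circ \mu_K = \beta \circ \MM\beta$ is the content of the theorem: I would first prove it on simple valuations of simple valuations using the Kegelspitze associativity axiom iteratively, then extend to all of $\MM\MM K$ by Scott-continuity of both sides together with the fact (from Definition~\ref{def:barycentre}) that $\MM$-valuations are obtained from simple valuations by iterated directed suprema.

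For direction (2), given an Eilenberg-Moore algebra $(K,\beta)$ in $\DOM$, I would define $a +_r b \defeq \beta(\delta_a +_r \delta_b)$ and take the distinguished $\bot$ to be $\beta(\mathbf 0_K)$. Each Kegelspitze axiom then transcribes to an identity among simple valuations on $K$: the $+_1$ and idempotence axioms follow from the unit law $\beta \circ \eta = \id$ (so that $\beta(\delta_a) = a$); symmetry is inherited from the symmetry of convex sums in $\MM K$; and the parametric associativity axiom follows from the multiplication law $\beta \circ \mu = \beta \circ \MM\beta$ applied to appropriate nested simple valuations, together with commutativity of $\MM$ (Theorem~\ref{theorem:M is commutative}), which is what permits the Fubini-style reindexing. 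Scott-continuity of $+_r$ and of scalar multiplication in both arguments follows by precomposing $\beta$ with the Scott-continuous strength and Dirac-embedding maps of $\MM$. Since $K$ is already an object of $\DOM$, this exhibits it as a continuous Kegelspitze.

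For the correspondence on morphisms, I would show that a Scott-continuous map $f : K \to L$ between continuous Kegelspitzen is linear iff $f \circ \beta_K = \beta_L \circ \MM f$. One direction is immediate by restricting to two-point simple valuations $\delta_a +_r \delta_b$; for the other direction, linearity of $f$ first extends to all finite convex sums by induction using associativity of $+_r$, so $f$ commutes with $\beta_*$ on simple valuations, and then commutation with $\beta$ on arbitrary $\MM$-valuations follows by Scott-continuity of $f$ and the construction of $\beta$ as a supremum over way-below simple valuations. Finally, the two object-level constructions are mutually inverse: starting from a Kegelspitze, recovering $a +_r b = \beta(\delta_a +_r \delta_b)$ reproduces the original operation by definition of $\beta_*$; starting from an algebra, the barycentre built from the recovered Kegelspitze structure agrees with the original $\beta$ on simple valuations (by induction using the algebra unit and multiplication laws) and hence everywhere by Scott-continuity, using uniqueness of the barycentre map on continuous Kegelspitzen as noted after Definition~\ref{def:barycentre}.

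The principal obstacle is the multiplication law in direction (1), namely $\beta \circ \mu_K = \beta \circ \MM\beta$. The calculation is routine for simple valuations of simple valuations, where it reduces to a finite reindexing permitted by Kegelspitze associativity and the index-permutation invariance of Definition~\ref{def:convex-sums}; however, lifting it to arbitrary $\MM$-valuations requires that both $\MM\beta$ and the evaluation of $\beta$ through the supremum-of-way-below-simple presentation commute with the iterated directed suprema used to build $\MM K$ out of $\SSS K$. Continuity of $K$ is essential here so that ``way-below'' gives a directed set of approximating simple valuations, and this is precisely why the equivalence is stated over $\DOM$ rather than $\dcpo$.
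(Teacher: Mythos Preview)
The paper does not prove this theorem; it is stated with a citation to \cite{m-monad} and no argument is supplied either in the body or in the appendices. There is therefore no proof in this paper against which to compare your proposal.

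Your sketch is the standard argument and is essentially correct, with one misattribution worth flagging. In direction (2) you say that the parametric associativity axiom follows from ``the multiplication law $\beta \circ \mu = \beta \circ \MM\beta$ \ldots\ together with commutativity of $\MM$ (Theorem~\ref{theorem:M is commutative}), which is what permits the Fubini-style reindexing.'' Commutativity of $\MM$ is not what is at work here: it governs the interaction of $\MM$ with binary products and is irrelevant to the single-object barycentric-algebra laws. The associativity $(a +_p b) +_r c = a +_{pr}(b +_{\frac{r-pr}{1-pr}} c)$ follows from the multiplication law alone. Concretely, take $\xi = r\,\delta_{\,p\delta_a + (1-p)\delta_b} + (1-r)\,\delta_{\delta_c} \in \MM\MM K$; then $\beta(\MM\beta(\xi)) = (a+_p b)+_r c$ while $\beta(\mu_K(\xi)) = \beta(rp\,\delta_a + r(1-p)\,\delta_b + (1-r)\,\delta_c)$, and the algebra law equates them. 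Doing the symmetric computation with $\xi' = rp\,\delta_{\delta_a} + (1-rp)\,\delta_{\,\frac{r-pr}{1-pr}\delta_b + \frac{1-r}{1-pr}\delta_c}$ yields the right-hand side equal to the same three-term barycentre, so the two sides agree. No Fubini interchange is involved.
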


In \secref{sec:operator-algebras} we saw that $\QQ$ is enriched over $\DOM$. We now further strengthen that result.

\begin{theorem}
\label{thm:q-kegelspitze}
The category $\QQ$ is enriched over \emph{continuous} Kegelspitzen in the following sense: for all objects $A$, $B$ in $\QQ$, the homset $\QQ(A,B)$ is a \emph{continuous} Kegelspitze,  and for any morphism $\varphi:A\to B$ in $\QQ$ and any object $C$ in $\QQ$, the following maps are Scott-continuous and linear: 
\[\QQ(C,\varphi):\QQ(C,A)\to \QQ(C,B) :: \psi\mapsto \varphi\circ\psi\quad \text{and}\quad \QQ(\varphi,C):\QQ(B,C)\to \QQ(A,C) :: \psi\mapsto\psi\circ\varphi . \] 
\end{theorem}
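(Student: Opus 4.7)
The plan is to exhibit the continuous Kegelspitze structure on each homset $\QQ(A,B)$ directly and then to argue linearity and Scott-continuity of pre- and post-composition pointwise. Recall that a morphism in $\QQ(A,B)$ is an NCPSU map $\widetilde{\varphi}\colon B\to A$ in $\HA$. For $\widetilde{\varphi},\widetilde{\psi}\in\QQ(A,B)$ and $r\in[0,1]$ I will define
\[
  (\widetilde{\varphi}+_{r}\widetilde{\psi})(b) \defeq r\cdot\widetilde{\varphi}(b) + (1-r)\cdot\widetilde{\psi}(b),
\]
and take the zero map as the distinguished element $\bot$. The scalar multiplication is then $r\cdot\widetilde{\varphi} = \widetilde{\varphi}+_{r}\bot$, i.e.\ the pointwise scalar multiple.

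First I would check that $\widetilde{\varphi}+_{r}\widetilde{\psi}$ is again NCPSU: complete positivity of the convex combination follows because the completely positive maps form a convex cone (applying the defining inequality entrywise after matrix amplification $(-)^{(n)}$); subunitality follows from $r\widetilde{\varphi}(1_B) + (1-r)\widetilde{\psi}(1_B) \leq r\cdot 1_A + (1-r)\cdot 1_A = 1_A$; and normality is preserved since pointwise sums of maps that preserve suprema of bounded monotone nets still do so. The barycentric algebra axioms ($a+_1 b = a$, symmetry, idempotence, the associativity identity) all reduce to the corresponding real-number identities computed pointwise in $A_{\mathrm{sa}}$, and linearity of $\bot$ is immediate.

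Next I would establish Scott-continuity of $+_{r}$ and of scalar multiplication. The $\DCPO_{\perp!}$-enrichment of $\vN$ (recalled at the start of \secref{subsection:enrichment vN}) tells us that directed suprema of NCPSU maps are computed pointwise in the L\"owner order of $A$; the same holds in $\HA$ and thus in $\QQ(A,B)$. Since, for fixed $b\in B$, the operation $(x,y)\mapsto rx+(1-r)y$ is Scott-continuous on $[0,1]_A\times[0,1]_A$ (the convex operations on $A_{\mathrm{sa}}$ preserve pointwise suprema of bounded monotone nets, by strong-operator continuity), continuity of $+_{r}$ and of $(r,\widetilde{\varphi})\mapsto r\cdot\widetilde{\varphi}$ in each argument follows at once. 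Combined with Theorem~\ref{thm:continuous enrichment}, which already gives that $\QQ(A,B)$ is a domain, this makes $\QQ(A,B)$ a continuous Kegelspitze.

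Finally I would verify the enrichment conditions. Scott-continuity of $\QQ(C,\varphi)$ and $\QQ(\varphi,C)$ is precisely the $\DCPO_{\perp!}$-enrichment of $\QQ$ inherited from $\vN$. For linearity, unpacking the convention $\QQ=\HA^{\op}$, both maps amount to pre- or post-composition with a fixed NCPSU map in $\HA$. Since NCPSU maps are themselves $\CN$-linear, for $\widetilde{\psi}_1,\widetilde{\psi}_2\in\QQ(C,A)$ and $r\in[0,1]$ one obtains pointwise
\[
  \widetilde{\varphi}\circ(r\widetilde{\psi}_1+(1-r)\widetilde{\psi}_2) = r(\widetilde{\varphi}\circ\widetilde{\psi}_1) + (1-r)(\widetilde{\varphi}\circ\widetilde{\psi}_2),\qquad \widetilde{\varphi}\circ 0 = 0,
\]
and symmetrically for post-composition, so both maps preserve $+_r$ and $\bot$. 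No step is genuinely hard; the only care required is in tracking the op-direction between $\QQ$ and $\HA$ so that ``pointwise'' suprema and convex combinations are taken in the correct codomain, and in invoking continuity of $\QQ(A,B)$ from Theorem~\ref{thm:continuous enrichment} rather than attempting to reprove it here.
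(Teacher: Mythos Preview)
Your proposal is correct and follows essentially the same route as the paper: defining the barycentric operations pointwise on NCPSU maps, verifying closure under convex combinations and the Kegelspitze axioms via pointwise computations and strong-operator continuity, invoking the already-established domain enrichment (Theorem~\ref{thm:continuous enrichment}) for continuity, and deducing linearity and Scott-continuity of the hom-functors from $\CN$-linearity of NCPSU maps together with the known $\dcpobs$-enrichment of $\vN$. The paper organises these steps into separate lemmas on $\vN$ (barycentric algebra structure, Kegelspitze structure, linearity of pre/post-composition) before passing to $\HA$ and $\QQ$, but the content is the same; just be careful when unwinding the op-direction that $\QQ(C,\varphi)$ corresponds to \emph{pre}-composition by $\widetilde{\varphi}$ in $\HA$, so your displayed identity should read $(r\widetilde{\psi}_1+(1-r)\widetilde{\psi}_2)\circ\widetilde{\varphi}$ rather than $\widetilde{\varphi}\circ(\cdots)$.
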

\begin{proof}
  See Appendix \ref{app:proofs-vN-algebras}.
\end{proof}

Combining these two theorems gives the main result of this section.

\begin{theorem}\label{thm:M-alg}
  For any HA-algebras $A$ and $B$, there exists a (unique) Scott-continuous and linear barycentre map $\beta \colon \MM \QQ(A,B) \to \QQ(A,B)$ that is also an Eilenberg-Moore algebra of $\MM$.
\end{theorem}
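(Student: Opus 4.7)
The plan is to combine the two immediately preceding theorems. By Theorem~\ref{thm:q-kegelspitze}, for any HA-algebras $A$ and $B$ the homset $\QQ(A,B)$ is a \emph{continuous} Kegelspitze. By Theorem~\ref{thm:emcategory}(1), every continuous Kegelspitze carries a canonical Scott-continuous linear barycentre map making it an Eilenberg--Moore algebra of $\MM$ over $\DOM$. Applying this to $K = \QQ(A,B)$ yields the desired map $\beta \colon \MM\QQ(A,B)\to \QQ(A,B)$.

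In more detail, the steps are as follows. First, invoke Theorem~\ref{thm:q-kegelspitze} to assert that $\QQ(A,B)$ is a continuous Kegelspitze; in particular it is an object of $\DOM$ equipped with the convex structure induced by the operations $\varphi +_r \psi$. Second, apply Definition~\ref{def:barycentre} to $K := \QQ(A,B)$: for each $\nu \in \MM K$, set $\beta(\nu) \defeq \sup\{\beta_*(s) \mid s \in \SSS K,\ s \ll \nu\}$, where $\beta_*(\sum_i r_i \delta_{\varphi_i}) = \sum_i r_i \varphi_i$ is the barycentre of a simple valuation computed via the convex-sum operation on the Kegelspitze $\QQ(A,B)$ (well-defined by Definition~\ref{def:convex-sums}, since continuity of the Kegelspitze is precisely what guarantees $\beta$ is well-defined, Scott-continuous and linear, as recalled just after Definition~\ref{def:barycentre}). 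Third, conclude by Theorem~\ref{thm:emcategory}(1) that the pair $(\QQ(A,B),\beta)$ is an Eilenberg--Moore algebra of $\MM$ over $\DOM$, and that uniqueness of $\beta$ follows from the equivalence in Theorem~\ref{thm:emcategory}, since any Scott-continuous linear $\MM$-algebra structure on a continuous Kegelspitze must agree with the barycentre map (being determined on the dense subset $\SSS K$ of simple valuations by linearity, and then on all of $\MM K$ by Scott-continuity).

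There is essentially no obstacle left once the two ingredients are in place: the real work has been done in Theorem~\ref{thm:q-kegelspitze} (showing that the homsets of $\QQ$ are continuous Kegelspitzen, which rests on the $\DOM$-enrichment of $\QQ$ from Theorem~\ref{thm:continuous enrichment} together with the pointwise Kegelspitze structure inherited from the unit intervals $[0,1]_A$ as in Example~\ref{exa:unit-interval}) and in Theorem~\ref{thm:emcategory} (identifying $\DOM^{\MM}$ with continuous Kegelspitzen). The present theorem is a direct corollary; the only thing worth spelling out is that the convex structure used in the Kegelspitze $\QQ(A,B)$ is precisely the one for which $\beta$ will turn out to be the Eilenberg--Moore structure map, which is immediate from part (2) of Theorem~\ref{thm:emcategory} applied in the reverse direction.
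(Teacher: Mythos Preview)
Your proposal is correct and matches the paper's own argument exactly: the paper simply states ``Combining these two theorems gives the main result of this section,'' referring to Theorem~\ref{thm:emcategory} and Theorem~\ref{thm:q-kegelspitze}, which is precisely what you do. Your additional remarks on uniqueness are also in line with the paper, which records (just after Definition~\ref{def:barycentre}) that the barycentre map on a continuous Kegelspitze is unique.
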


The above properties of $\beta$ are exactly what is needed to interpret the "$m\qq$" term from Figure \ref{fig:syntax-mixed}, which allows us to combine
classical probabilistic computation with quantum computation.

\section{Categorical Model}
\label{sec:categorical-model}

In this section we organise the relevant mathematical data into several
categories that we later use to describe our denotational semantics.
A diagrammatic summary is provided in Figure \ref{fig:categorical-model} ($\secref{sec:semantics}$).

\subsection{The Kleisli Category of $\MM$}
\label{sub:model}

This subsection provides a summary of the development in~\cite{m-monad} of the Kleisli category of the monad 
$\MM : \dcpo \to \dcpo$, which we denote  $\KL$.
In order to distinguish between the categorical primitives of $\DCPO$ and $\KL$,
we adopt the notation of~\cite{m-monad}, indicating  the morphisms of $\KL$ by $f: A \kto B$, and
using $f \kcirc g \eqdef \mu \circ \MM(f) \circ g$ to denote the Kleisli composition of morphisms in $\KL$
(where $\mu$ is the multiplication of $\MM$).
We write $\kid_A : A \kto A$ with $\kid_A = \eta_A  : A \to \MM A$ for the identity morphisms in $\KL.$
The adjunction $\JJ \dashv \mathcal U : \KL \to \dcpo$ that factorises $\MM$ is determined by the assignments:
\begin{align*}
\JJ A \defeq A , \quad  \JJ f \defeq \eta \circ f  , \quad
\UU A \defeq \MM A , \quad \UU f \defeq \mu \circ \MM f .
\end{align*}
\subsubsection{Coproducts} $\KL$ inherits (small) coproducts from $\dcpo$ in the standard way
\cite[pp. 264]{jacobs-coalgebra} and we write $A_1 \kplus A_2 \eqdef A_1 + A_2$ for the induced
(binary) coproduct. The induced coprojections are given by $\JJ(\emph{in}_1) \colon A_1 \kto A_1 \kplus A_2$
and $\JJ(\emph{in}_2) \colon A_2 \kto A_1 \kplus A_2.$
Then for $f\colon A\kto C$ and $g\colon B\kto D$, $f\kplus g = [\MM(\emph{in}_{C}) \circ f, \MM(\emph{in}_{D}) \circ g]$
and the functor $\JJ$ strictly preserves coproducts.

\subsubsection{Symmetric monoidal structure}
Because $\MM$ is \emph{commutative}, it induces a canonical
symmetric monoidal structure on $\KL$ making $\JJ$ a strict monoidal functor \cite{premonoidal}.
The induced tensor product is $A \ktimes B \eqdef A \times B$ with
Kleisli projections 
$\JJ(\pi_A) : A \ktimes B \kto A$ and $\JJ(\pi_B) : A \ktimes B \kto B$; but these projections do \emph{not}
satisfy the universal property of a product.
The tensor product of $f\colon A\kto C$ and $g\colon B\kto D$ is denoted by
$f\ktimes g$ and it is defined as usual.
It follows that Kleisli products distribute over Kleisli coproducts and we write $d_{A,B,C} : A \ktimes (B \kplus C) \cong (A \ktimes B) \kplus (A \ktimes C)$ for this natural isomorphism.

\subsubsection{Kleisli Exponential}
The adjunction $\JJ \dashv \mathcal U$ also contains the structure of a \emph{Kleisli-exponential}.
Following \cite{moggi-monads}, we use this to interpret higher-order function types.

For each dcpo $B$, we use $[B \kto -] \defeq [B \to \UU(-)] : \KL \to \DCPO$ to  denote the right adjoint of the 
functor $J(-) \ktimes B : \DCPO \to \KL$. Therefore, on objects, $[B \kto C] = [B \to \MM C].$ This determines a
family of Scott-continuous bijections
$
\lambda : \KL(\JJ A \ktimes B, C) \cong \DCPO(A, [B \kto C]) ,
$
natural in $A$ and $C$, often called \emph{currying}. We also denote the counit of these adjunctions
by $\epsilon : \JJ [B \kto -] \ktimes B \naturalto \Id$, which is often called \emph{evaluation}.
Since this family of adjunctions is parameterised by objects $B$ of $\KL$, standard categorical results \cite[\S IV.7]{maclane} 
imply the assignment $[B \kto -] : \KL \to \DCPO$ can be extended uniquely to a bifunctor
$ [- \kto -] : \KL^\op \times \KL \to \DCPO , $
such that $\lambda$ is natural in all three components.

\subsubsection{Enrichment Structure}
The Kleisli category $\DCPO_{\MM}$ is enriched over $\dcpobs:$ for dcpo's~$A, B$ and $C$, the Kleisli
exponential $[A\kto B] = [A\to \MM B] = \KL(A,B)$ is a pointed dcpo and the Kleisli composition 
$
\kcirc \colon [A\kto B]\times [B\kto C] \to [A\kto C]
$
is strict and Scott continuous. Furthermore, the adjunction $\JJ \dashv \UU$ also is $\DCPO$-enriched, as are the bifunctors $(- \ktimes -),  (- \kplus -) $ and $[- \kto - ] .$

The category $\KL$ also has a convex structure:
for each dcpo~$B$, $\MM B$ is a Kegelspitze in the stochastic order by Example~\ref{exa:MDiskegel}, from which 
it follows that $[A\kto B] = \KL(A,B)$ also is a Kegelspitze in the pointwise order.
This convex structure is preserved by Kleisli composition~$\kcirc$, Kleisli coproduct $\kplus$ and Kleisli product $\ktimes$ \cite[Lemma 38]{m-monad}.

\subsubsection{The Subcategories $\TD$ and $\PD$}
\label{subsub:subcategories}
We identify two important subcategories of $\KL$: one for the interpretation of classical values ($\TD$) and one for solving recursive domain equations ($\PD$).

\begin{definition}
\label{def:T}
The subcategory $\TD$ of \emph{deterministic total maps} is the full-on-objects subcategory of $\KL$ whose morphisms $f \colon X \kto Y$  admit a factorisation $f = \JJ(f') = \eta_y \circ f',$ for some $f'$.
\end{definition}
Each map $f: X \kto Y$ in $\TD$ satisfies
$f(x) = \delta_y$ for some $y \in Y$, by definition. We call such maps \emph{deterministic} because 
they carry no interesting convex structure, and they are \emph{total} in that they map all inputs 
$x \in X$ to non-zero valuations. $\TD$ is important because all \emph{classical values} of
our language are interpreted in $\TD$. In fact, $\DCPO \cong \TD$ \cite[Proposition 40]{m-monad}.

The canonical copy map at an object $A$ in our model is given by the map $\JJ
\langle \id_A, \id_A \rangle \colon A  \kto A \ktimes A$; likewise, the canonical
discarding map at $A$ is the map $\JJ(1_A) \colon A \kto 1,$ where $1_A \colon
A \to 1$ is the terminal map of $\dcpo$.  Because maps in $\TD$ are in the
image of $\JJ$, they are compatible with the copy and discard maps, and hence 
also with weakening and contraction \cite{benton-small}.


\begin{definition}
\label{def:D}
The subcategory of \emph{deterministic partial maps}, denoted $\PD$, is the full-on-objects subcategory of $\KL$ each of whose morphisms $f \colon X \kto Y$ admits a factorisation $f = \left( X \xrightarrow{f'} Y_\perp \xrightarrow{\phi_Y} \MM Y \right),$
where $Y_\perp$ is the dcpo obtained from $Y$ by freely adding a least element $\perp$, and where $\phi_Y$ is the map
$
\phi_Y \colon Y_\perp \to \MM Y :: y \mapsto
  \begin{cases}
    {\mathbf 0}_{Y}        & \text{, if } y = \perp \\
    \delta_y & \text{, if } y \neq \perp 
  \end{cases} .
$
\end{definition}
These maps are  \emph{partial} because they map some inputs to $\mathbf 0$; they also are deterministic, because the convex structure is trivial in both cases.
This is justified by the fact that  $\PD \cong \dcpo_{\mathcal T} \cong \dcpobs,$ where $\dcpo_{\mathcal T}$ is the Kleisli category of the lift monad $\mathcal T : \DCPO \to \DCPO$ \cite{m-monad}.

\subsubsection{Solving Recursive Domain Equations}
\label{subsub:domain-equations}
The standard method for interpreting recursive types is 
to construct \emph{parameterised initial algebras} \cite{fiore-thesis,fiore-plotkin}.
We employ this approach in $\PD$ using the limit-colimit coincidence theorem \cite{smyth-plotkin:domain-equations}.

\begin{definition}[see {\cite[\S 6.1]{fiore-thesis}}]
\label{def:initial-algebra}
  Given a category $\CC$ and a functor $\TTT \colon \CC^{n+1} \to \CC,$ a \emph{parameterised initial algebra}
  for $\TTT$ is a pair $(\TTT^\sharp, \iota^\TTT),$ such that:
  \begin{itemize}
    \item $\TTT^\sharp \colon \CC^n \to \CC$ is a functor;
    \item $\iota^\TTT \colon \TTT \circ \langle \Id, \TTT^\sharp \rangle \naturalto \TTT^\sharp : \CC^n \to \CC$ is a natural transformation;
    \item For every $\vec C \in \Ob(\CC^n)$, the pair $(\TTT^\sharp \vec C, \iota^\TTT_{\vec C})$ is an initial $\TTT(\vec C, -)$-algebra.
  \end{itemize}
\end{definition}

The usual notion of an initial algebra arises in the case that $n=1$.

\begin{proposition}[see {\cite[\S 4.3]{lnl-fpc-lmcs}}]
\label{prop:par-initial-algebra}
Let $\CC$ be a category with an initial object and all $\omega$-colimits, and let $\TTT \colon \CC^{n+1} \to \CC$ be an $\omega$-cocontinuous functor. Then $\TTT$ has a  parameterised initial algebra $(\TTT^\sharp, \iota^\TTT)$ and the functor $\TTT^{\sharp} \colon \CC^n \to \CC $ is also $\omega$-cocontinuous.
\end{proposition}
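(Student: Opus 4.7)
The plan is to build $\TTT^{\sharp}$ via the standard Adámek-style initial-algebra construction, fibrewise in the parameter, and then upgrade pointwise structure to full functoriality and $\omega$-cocontinuity by exploiting universal properties of colimits.

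First, I would fix $\vec C \in \Ob(\CC^n)$ and consider the $\omega$-chain
\[ 0 \xrightarrow{!} \TTT(\vec C, 0) \xrightarrow{\TTT(\vec C, !)} \TTT(\vec C, \TTT(\vec C, 0)) \to \cdots \]
whose colimit exists by hypothesis; call it $\TTT^{\sharp}\vec C$ with colimit cocone $(\kappa^{\vec C}_k)_{k \in \omega}$. Since $\TTT(\vec C,-)$ is $\omega$-cocontinuous (it is the composite $\TTT \circ \langle \mathrm{const}_{\vec C}, \Id\rangle$ of $\omega$-cocontinuous functors), applying it to the chain yields a colimit $\TTT(\vec C, \TTT^{\sharp}\vec C)$, and Adámek's argument produces a canonical iso $\iota^{\TTT}_{\vec C} \colon \TTT(\vec C, \TTT^{\sharp}\vec C) \to \TTT^{\sharp}\vec C$ that is an initial $\TTT(\vec C,-)$-algebra.

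Next, I would extend $\TTT^{\sharp}$ to morphisms. Given $\vec f \colon \vec C \to \vec D$ in $\CC^n$, the object $\TTT^{\sharp}\vec D$ carries a $\TTT(\vec C, -)$-algebra structure
\[ \TTT(\vec C, \TTT^{\sharp}\vec D) \xrightarrow{\TTT(\vec f, \id)} \TTT(\vec D, \TTT^{\sharp}\vec D) \xrightarrow{\iota^{\TTT}_{\vec D}} \TTT^{\sharp}\vec D , \]
and I would define $\TTT^{\sharp}(\vec f)$ to be the unique initial-algebra homomorphism from $(\TTT^{\sharp}\vec C, \iota^{\TTT}_{\vec C})$ into this algebra. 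Functoriality of $\TTT^{\sharp}$ (on identities and composites) and naturality of $\iota^{\TTT}$ then follow from uniqueness of mediating maps out of initial algebras, together with the evident compatibility with $\TTT(\vec f, -)$.

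Finally, the main obstacle, which I would tackle last, is $\omega$-cocontinuity of $\TTT^{\sharp}$. The idea is to recognise $\TTT^{\sharp}\vec C$ as a double colimit and apply the commutation of colimits with colimits. For each $k \in \omega$ let $\TTT_k \colon \CC^n \to \CC$ be the functor sending $\vec C$ to the $k$-th stage of the Adámek chain above; a straightforward induction on $k$ shows $\TTT_k$ is $\omega$-cocontinuous, since $\TTT$ itself is $\omega$-cocontinuous in all $n+1$ arguments jointly and the construction of $\TTT_{k+1}$ from $\TTT_k$ is by composition with $\TTT$ along $\langle \Id, \TTT_k \rangle$. Then, given an $\omega$-chain $(\vec C_i)_{i \in \omega}$ with colimit $\vec C$, one has
\[ \colim_{i} \TTT^{\sharp}\vec C_i \;=\; \colim_i \colim_k \TTT_k(\vec C_i) \;\cong\; \colim_k \colim_i \TTT_k(\vec C_i) \;\cong\; \colim_k \TTT_k(\vec C) \;=\; \TTT^{\sharp}\vec C , \]
where the middle iso is the standard Fubini for colimits and the right-hand iso uses $\omega$-cocontinuity of each $\TTT_k$. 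The delicate step is checking that the cocones constructed via this Fubini reshuffling agree, up to the canonical iso, with the cocone induced by the universal property defining $\TTT^{\sharp}$ on morphisms; this is routine but fiddly, and is where I expect the bulk of bookkeeping.
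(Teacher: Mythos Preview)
Your proposal is correct and is precisely the standard Ad\'amek-style argument one would expect here. The paper does not actually give its own proof of this proposition; it simply cites \cite[\S 4.3]{lnl-fpc-lmcs}, so there is no paper-side argument to compare against---your sketch is the proof that citation is pointing to.
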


In fact, the subcategory $\PD$ has sufficient structure to solve recursive domain equations, because it is \emph{$\DCPO$-algebraically compact} \cite{m-monad}.
Therefore, every $\dcpo$-enriched \emph{covariant} functor on $\KL$ that
restricts to $\PD$ has a parameterised initial algebra (whose inverse is a parameterised final coalgebra).
Solving equations 
involving \emph{mixed-variance} functors (induced by function types)
can be done using the limit-colimit coincidence theorem
\cite{smyth-plotkin:domain-equations}. An important observation
made in \cite{smyth-plotkin:domain-equations} is that
all type expressions (including function spaces) can be interpreted 
as covariant functors on \emph{subcategories of embeddings}.
For more details on this, see \cite{icfp19,lnl-fpc-lmcs}; here we
also follow this approach.

\begin{definition}
If  $\CC$ is a $\DCPO$-enriched category, a morphism $e \colon X \to Y$ is an \emph{embedding} if there exists a
(necessarily unique)  \emph{projection} $e^p \colon Y \to X$, i.e., a morphism satisfying $e^p \circ e = \id_X$ and $e \circ e^p \leq \id_Y.$ 
$\CC_e$ denotes the full-on-objects subcategory of $\CC$ whose morphisms are the embeddings.
\end{definition}

\begin{proposition}[{\cite[Proposition 47]{m-monad}}]
\label{prop:omega-functors}
The category $\PD_e$ has an initial object and all $\omega$-colimits, and the assignments:
\begin{align*}
  & \ktimes_e \colon \PD_e \times \PD_e \to \PD_e \text{ defined by } X \ktimes_e Y \defeq X \ktimes Y\ \text{and}\ e_1 \ktimes_e e_2 \defeq e_1 \ktimes e_2  \\
  & \kplus_e \colon \PD_e \times \PD_e \to \PD_e \text{ defined by } X \kplus_e Y \defeq X \kplus Y\ \text{and}\ e_1 \kplus_e e_2 \defeq e_1 \kplus e_2  \\
  & [\kto]_e^\JJ \colon \PD_e \times \PD_e \to \PD_e \text{ defined by } [X \kto Y]_e^\JJ \defeq \JJ [ X \kto Y]\ \text{and}\ [e_1 \kto e_2]_e^\JJ \defeq \JJ [e_1^p \kto e_2] 
\end{align*}
are \emph{covariant} $\omega$-cocontinuous bifunctors on $\PD_e$.
\end{proposition}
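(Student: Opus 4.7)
The strategy is to reduce everything to classical facts about $\dcpobs$, using the equivalence $\PD \cong \dcpobs$ recalled in Definition~\ref{def:D}. Under this equivalence, $\PD_e$ corresponds to the category of pointed dcpos and strict embedding morphisms, where the projections are also strict. I will argue in three stages: existence of an initial object, existence of $\omega$-colimits, and cocontinuity of the three bifunctors.

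\emph{Stage 1 (initial object).} The one-point dcpo $1 = \{\bot\}$, which is the terminal object of $\DCPO$ and hence of $\KL$, is also initial in $\PD$: for each pointed dcpo $X$, the unique strict map $!_X \colon 1 \to X$ sending $\bot$ to $\bot_X$ is the unique $\PD$-morphism. This map is an embedding whose projection is the unique map $X \to 1$, since $!_X^p \circ !_X = \id_1$ and $!_X \circ !_X^p \leq \id_X$ by strictness.

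\emph{Stage 2 ($\omega$-colimits).} Given an $\omega$-chain of embeddings $(X_n, e_n \colon X_n \to X_{n+1})_{n \in \Na}$ in $\PD_e$, I would construct its colimit via the classical bilimit construction of \cite{smyth-plotkin:domain-equations}: take $X_\infty$ to be the pointed sub-dcpo of $\prod_n X_n$ consisting of coherent sequences $(x_n)$ satisfying $e_n^p(x_{n+1}) = x_n$, with coordinatewise order. The canonical cocone $\iota_n \colon X_n \to X_\infty$ is defined by the evident embedding-projection prolongations, and $\id_{X_\infty} = \sup_n \iota_n \circ \iota_n^p$. The limit-colimit coincidence, which holds in $\dcpobs$ and hence in $\PD_e$ by $\DCPO$-enrichment, then shows $(\iota_n)$ is a colimit cocone for the chain in $\PD_e$ and simultaneously a limit cone for the associated chain of projections.

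\emph{Stage 3 (bifunctoriality and $\omega$-cocontinuity).} For each bifunctor, I first check that it preserves embeddings and exhibit its projection. For $\ktimes_e$, if $e_i \colon X_i \to Y_i$ have projections $e_i^p$, then $e_1 \ktimes e_2$ has projection $e_1^p \ktimes e_2^p$, which I verify from $\DCPO$-enrichment and functoriality of $\ktimes$. The same routine works for $\kplus_e$. For $[\kto]_e^\JJ$, I use the twist $[e_1^p \kto e_2]$ to make the bifunctor covariant in both arguments; its projection is $\JJ[e_1 \kto e_2^p]$, and both of these are embeddings because $\DCPO$-enrichment of $[\kto]$ on $\KL$ guarantees the required inequalities. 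Now for $\omega$-cocontinuity: using the limit-colimit coincidence, a cocone $(F(\iota_n, \iota'_n))_n$ is colimiting in $\PD_e$ precisely when $\sup_n F(\iota_n, \iota'_n) \circ F(\iota_n, \iota'_n)^p = \id$ on the candidate colimit. Because the bifunctors $\ktimes, \kplus, [\kto]$ are $\DCPO$-enriched on $\KL$ (hence locally Scott-continuous), they preserve these directed suprema of embedding-projection pairs, which yields the desired $\omega$-cocontinuity.

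\emph{Main obstacle.} The subtle case is $[\kto]_e^\JJ$, since $[-\kto-]$ is contravariant in the first argument on $\PD$ and one might worry that passing to $e_1^p$ destroys $\omega$-cocontinuity. The resolution is that under the limit-colimit coincidence for $\omega$-chains of embeddings, the projections form an $\omega^{\mathrm{op}}$-chain whose limit coincides with the colimit of embeddings; the $\DCPO$-enriched local Scott-continuity of $[\kto]$ then transforms the supremum identity $\sup_n \iota_n \circ \iota_n^p = \id$ on both coordinates into the analogous identity for the function space, which is precisely the $\omega$-cocontinuity statement. This step is where the choice of working with $\DCPO$-enriched embedding calculus, rather than an ad-hoc description, pays off.
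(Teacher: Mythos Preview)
The paper does not provide its own proof of this proposition; it is cited verbatim from \cite[Proposition 47]{m-monad}. Your proposal therefore cannot be compared against an in-paper argument, but your outline follows the standard Smyth--Plotkin bilimit approach and is structurally sound.

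There is one concrete slip in Stage~1: the initial object of $\PD_e$ is the empty dcpo $\varnothing$, not the one-point dcpo $1$. Recall that $\PD$ is full-on-objects in $\KL$, so its objects are arbitrary dcpos, and under the equivalence $\PD \cong \dcpobs$ a dcpo $X$ is sent to $X_\perp$. The initial object $\{\bot\}$ of $(\dcpobs)_e$ therefore corresponds to $\varnothing$ in $\PD$, not to $1$ (whose image under the equivalence is the two-element chain $1_\perp$). Your sentence ``the unique strict map $!_X \colon 1 \to X$ sending $\bot$ to $\bot_X$ is the unique $\PD$-morphism'' conflates morphisms in $\PD$ (which, by Definition~\ref{def:D}, are Scott-continuous maps $X \to Y_\perp$) with strict maps between pointed dcpos; there are in general many $\PD$-morphisms $1 \to X$, one for each element of $X_\perp$. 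Stages~2 and~3 are correct, and your treatment of the mixed-variance exponential via the limit-colimit coincidence and $\DCPO$-enrichment is exactly the right mechanism.
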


Thus Propositions \ref{prop:par-initial-algebra} and \ref{prop:omega-functors} show
we can solve recursive domain equations induced by all well-formed type expressions
within $\PD_e$, notably with no restrictions on the admissible logical polarities of the types.
However, our classical judgements support weakening and contraction, so we have
an extra proof obligation: proving each isomorphism that is a solution
to a recursive domain equation can be copied and discarded.
This is true, because every isomorphism of $\PD$ (and $\PD_e$) also are isomorphisms of $\TD$ \cite[Propoistion 48]{m-monad}.

\subsection{The Quantum Category $\QQ$}
\label{sub:quantum-submodel}

We now describe the categorical structure of $\QQ$ and its subcategory $\QQs$. We interpret quantum terms in $\QQ$ and quantum values in $\QQs$.

\subsubsection{Coproducts}
Proposition \ref{prop:product of von Neumann algebras} describes the categorical product on $\vNs$, which restricts to a categorical product on $\HAs$ since the product of hereditarily atomic von Neumann algebras clearly is hereditarily atomic.
Moreover, the product on $\HAs$ extends to a product on $\HA$. As a consequence, $\QQ$ and $\QQs$ have small coproducts and we write $A \oplus B$ to denote the coproduct in both categories.
We write the coprojections as $\mathbf{in}_1 \colon A \to A \oplus B$ and $\mathbf{in}_2 \colon B \to A \oplus B.$ Note that the initial object $\mathbf 0$ of $\QQs$, given by the 1 element HA-algebra, is a zero object in $\QQ$ (but not in $\QQs$).

\subsubsection{Symmetric Monoidal Structure}
Given two von Neumann algebras $M$ and $N$ on Hilbert spaces $H$ and $K$, respectively, the algebraic tensor product $M\odot N$ acts in a natural way on the Hilbert space tensor product $H\otimes K$. The weak operator closure of $M\odot N$ in $B(H\otimes K)$ is a von Neumann algebra, usually denoted $M\bar{\otimes} N$, and called the \emph{spatial tensor product} of $M$ and $N$. The construction in \cite[III.2.2.5]{Blackadar} shows the spatial tensor product of von Neumann algebras induces a symmetric monoidal product on both $\HAs$ and $\HA$, hence on $\QQs$ and $\QQ$. We write $A \otimes B$ for the tensor product in both $\QQ$ and $\QQs$.
Moreover, $\QQs$ is symmetric monoidal closed \cite[Theorem 9.1]{Kornell18} and therefore there exists a natural isomorphism $\mathbf d_{A,B,C} : A \otimes (B \oplus C) \cong (A \otimes B) \oplus (A \otimes C)$.

\subsubsection{Adjunctions}
The subcategory inclusion $\II:\QQs\to\QQ$ corresponds to an embedding $\HAs\to\HA$ that is shown to have a left adjoint in \cite[Section 4.3.4]{Westerbaan-thesis}. Therefore $\II$ has a right adjoint.
Moreover, the adjunction between $\QQs$ and $\QQ$ is Kleislian \cite{Westerbaan-thesis} and the subcategory inclusion $\II:\QQs\to\QQ$ is a strict monoidal functor that strictly preserves coproducts.

The assignment $\ell^\infty(-)$ extends to a functor $\ell^\infty:\Set\to\HA^\op_*$ whose action on functions $f:X\to Y$ between sets is a normal $*$-homomorphism $\ell^\infty(f):\ell^\infty(Y)\to\ell^\infty(X) :: k\mapsto k\circ f$.
Hence we obtain a functor $\ell^\infty:\Set\to\QQ_*$, which is fully faithful; its essential image is the full subcategory of $\QQ_*$ consisting of all commutative hereditarily atomic von Neumann algebras.

\subsubsection{Affine Structure}

The monoidal unit $\mathbb C$ is \emph{initial} in $\HA_*$ and therefore it is \emph{terminal} in $\QQ_*$, but the same is not true for $\HA$ and $\QQ$.
The terminal map of $\QQs$ at $\mathrm{M}_n(\mathbb C)$ is actually $\mathrm{tr}^\ddagger$ (see \secref{sub:w*-category-definition}).
This allows us to define suitable discarding maps. The map $\mathrm{drop_k}^\ddagger \eqdef (x \mapsto x \otimes 1)^\op \in \QQ( A \otimes \matrixAlg{2^k},  A)$
should be thought of (in $\QQ$) as discarding $k$ auxiliary qubits; this map is used for the interpretation of the "run" term when we execute a non-total quantum configuration
that has $k$ auxiliary qubits (which may be safely discarded at the end of the computation). Indeed, notice that in the category $\QQ$, we have $\mathrm{drop}_k^\ddagger \circ (\id_A \otimes \mathrm{state}_\rho^\ddagger) = \id_A,$
for any density matrix $\rho$ in $\matrixAlg{2^k}$, as one would expect (for brevity, we implicitly suppress the isomorphism $A \otimes \mathbb C \cong A$).

\subsubsection{Solving Recursive Domain Equations}

We now show that the category $\QQs$ has sufficient structure to construct parameterised initial algebras for polynomial functors. On the quantum side, this covers all recursive domain equations that have to be solved.

\begin{proposition}
\label{prop:quantum-omega-functors}
The category $\QQs$ is cocomplete and the functors $\otimes \colon \QQs \times \QQs \to \QQs$ and $\oplus \colon \QQs \times \QQs \to \QQs$ are cocontinuous.
\end{proposition}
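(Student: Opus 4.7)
Since $\QQs = \HAs^{\op}$, the entire statement can be dualized: cocompleteness of $\QQs$ is completeness of $\HAs$, and cocontinuity of a bifunctor on $\QQs$ becomes continuity of the dual bifunctor on $\HAs$. My plan is to work in $\HAs$ throughout and to exploit the fact that the ambient category $\vNs$ of all von Neumann algebras with normal unital $*$-homomorphisms is known to be complete.

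For cocompleteness of $\QQs$, I would exhibit small coproducts and coequalizers. Small coproducts in $\QQs$ correspond to small products in $\HAs$, which are given by Proposition~\ref{prop:product of von Neumann algebras}; any product of HA-algebras is manifestly again a product of matrix algebras and therefore HA. Coequalizers in $\QQs$ correspond to equalizers of a parallel pair $\varphi,\psi\colon B \to A$ in $\HAs$, and in $\vNs$ these are carried by the sub-von Neumann algebra $E = \{\,x \in B : \varphi(x) = \psi(x)\,\}$. The crucial step, and the main obstacle, is verifying that $E$ is itself hereditarily atomic. I would appeal to the structure theory of atomic type~I von Neumann algebras (see also \cite{Kornell18}): every von Neumann subalgebra of an HA-algebra decomposes via the atoms of its center into a product of finite-dimensional matrix blocks, and is therefore HA. Combined with the universal property of the equalizer in $\vNs$, this shows that $E$ equipped with the inclusion is an equalizer in $\HAs$.

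For cocontinuity of $\otimes$, I would invoke the symmetric monoidal closed structure of $\QQs$: for each $A$, the functor $A \otimes (-)$ has the internal hom $[A,-]$ as a right adjoint and so preserves all small colimits; by symmetry of $\otimes$ the same holds in the other variable. For $\oplus$, observe that the coproduct bifunctor $\oplus \colon \QQs \times \QQs \to \QQs$ is left adjoint to the diagonal $\Delta$ and hence preserves all colimits of the product category. Separate cocontinuity of $A \oplus (-)$ in each variable fails for the empty colimit but does hold for all \emph{connected} colimits; dually, $A \times (-)\colon \HAs \to \HAs$ preserves connected limits, because the limit of a connected diagram $(A, M_i)$ in $\HAs \times \HAs$ is $(A, \lim_i M_i)$, and the product functor $\times \colon \HAs \times \HAs \to \HAs$, being right adjoint to the diagonal, preserves this limit. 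In particular $\oplus$ is $\omega$-cocontinuous in each variable, which is exactly what subsequent applications to parameterised initial algebras via Proposition~\ref{prop:par-initial-algebra} require.
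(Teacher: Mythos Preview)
Your argument is correct and, for $\otimes$, coincides with the paper's: both invoke the symmetric monoidal closed structure of $\QQs$ established in \cite{Kornell18}. The paper's proof is otherwise much terser than yours---it simply cites \cite[Proposition~8.6]{Kornell18} for cocompleteness and calls $\oplus$ ``obviously cocontinuous''---so the differences are ones of explicitness rather than strategy.

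Your unpacking of cocompleteness via products and equalizers in $\HAs$, using that W*-subalgebras of HA-algebras are again HA, is essentially the content behind that citation; the name ``hereditarily atomic'' records precisely this closure property, and it is what guarantees that the equalizer computed in $\vNs$ stays inside $\HAs$. Your treatment of $\oplus$ is actually more careful than the paper's one-word justification: you correctly note that $A \oplus (-)$ does not preserve the empty colimit, so separate \emph{full} cocontinuity fails, whereas joint cocontinuity (as left adjoint to the diagonal) and separate preservation of connected colimits both hold. Since the only downstream use is to build parameterised initial algebras via Proposition~\ref{prop:par-initial-algebra}, $\omega$-cocontinuity is all that is required, and your remark makes explicit why that suffices.
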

\begin{proof}
	Cocompleteness of $\QQs$ is shown in \cite[Proposition 8.6]{Kornell18}; the coproduct bifunctor $\oplus$ is obviously cocontinuous; the functor $\otimes$ is cocontinuous because $\QQs$ is monoidal closed. 
\end{proof}

\section{Denotational Semantics}
\label{sec:semantics}

\begin{figure}
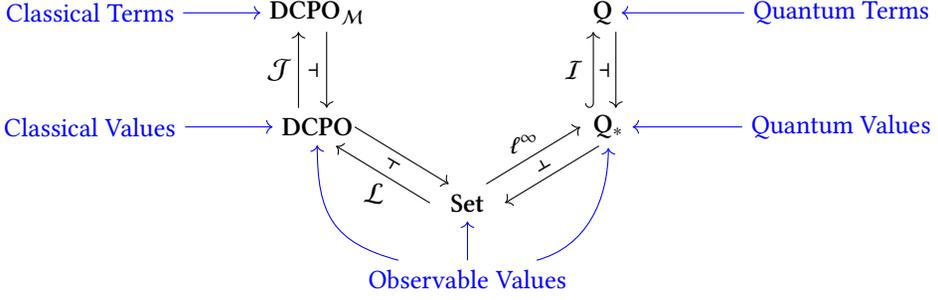

  \centering
  \stikz{model.tikz}
  \caption{Overview of the categorical model.}
  \label{fig:categorical-model}
\end{figure}

We now give the denotational semantics of our language. In Figure
\ref{fig:categorical-model}, we summarise the overall structure of the
interpretation. The blue arrows show where the indicated programming
primitives can be interpreted. Every such primitive may also be interpreted in
a category above it by following the corresponding left adjoint. All depicted
categories are symmetric monoidal and so are the adjunctions between them. The
left adjoints $\mathcal L$ and $\ell^\infty$ are both fully faithful and this
allows us to interpret the \emph{observable} values in the bottom three
categories and also to coherently relate these interpretations.

\subsection{Interpretation of Types}
\label{sub:type-interpretation}

We begin with the interpretation of (open) types, which is described in Figure \ref{fig:type-interpretation}.
Open quantum types are interpreted as functors $\qlrb{\Theta \vdash A} \colon\ \QQ_*^{|\mathbf \Theta|} \to \QQ_*$ and
open classical types are interpreted as functors $\lrb{\Theta \vdash P} \colon\ \PDe^{|\Theta|} \to \PDe$.
Given closed types $\cdot \vdash \mathbf A$ and $\cdot \vdash P$, we write $\qlrb A \defeq \qlrb{\cdot \vdash A}(*) \in \Ob(\QQ_*) = \Ob(\QQ)$ and $\lrb P \defeq \lrb{\cdot \vdash P}(*) \in \Ob(\PDe) = \Ob(\DCPO).$

\begin{figure*}
\small{
\centering
\begin{align*}
& \qlrb{\Theta \vdash A} \colon\ \QQ_*^{|\Theta|} \to \QQ_* \quad
\lrb{\mathbf \Theta \vdash \mathbf \Theta_i} \defeq \Pi_i \quad
\qlrb{\Theta \vdash I} \defeq\ K_{\mathbb C}  \quad
\qlrb{\Theta \vdash \qbit} \defeq\ K_{\mathrm{M}_2(\mathbb C)} \quad
\qlrb{\Theta \vdash \mu X.A} \defeq \qlrb{\Theta, X \vdash A}^\sharp \\
& \qlrb{\Theta \vdash A \oplus B } \defeq \oplus \circ \langle \qlrb{\Theta \vdash A}, \qlrb{\Theta \vdash B} \rangle \qquad
\qlrb{\Theta \vdash A \otimes B } \defeq \otimes \circ \langle \qlrb{\Theta \vdash A}, \qlrb{\Theta \vdash B} \rangle\\[1ex]
& \lrb{\Theta \vdash P} \colon\ \PDe^{|\Theta|} \to \PDe \qquad
\lrb{\Theta \vdash \Theta_i} \defeq \Pi_i \qquad
\lrb{\Theta \vdash 1} \defeq\ K_1  \qquad
\lrb{\Theta \vdash \mu X.P} \defeq \lrb{\Theta, X \vdash P}^\sharp \\
& \lrb{\Theta \vdash P + R } \defeq\ \kplus_e \circ\ \langle \lrb{\Theta \vdash P}, \lrb{\Theta \vdash R} \rangle \qquad
\lrb{\Theta \vdash P \times R } \defeq\ \ktimes_e \circ\ \langle \lrb{\Theta \vdash P}, \lrb{\Theta \vdash R} \rangle \\
& \lrb{\Theta \vdash P \to R } \defeq [\kto]_e^\JJ \circ \langle \lrb{\Theta \vdash P}, \lrb{\Theta \vdash R} \rangle \qquad
\lrb{\Theta \vdash Q(\mathbf A, \mathbf B) } \defeq K_{\QQ( \qlrb{A}, \qlrb{B})}
\end{align*}
\vspace{-4ex}\caption{Interpretation of types. $K_X$ is the constant-$X$-functor.}
\label{fig:type-interpretation}
\[
\begin{array}{l l l l l}
 \qlrb{I} = \mathbb C                   & \qlrb{A \otimes B} = \qlrb A \otimes \qlrb B          & \qlrb{A \oplus B} = \qlrb A \oplus \qlrb B              & \lrb{\mu \mathbf X. \AAA} \cong  \lrb{\AAA[\mu \mathbf X. \AAA/ \mathbf X]}  \\[1ex]
 \lrb{1} = 1                            & \lrb{P \times R} = \lrb P \times \lrb R               & \lrb{P+R} = \lrb P + \lrb R                             & \lrb{\mu X. P} \cong  \lrb{P[\mu X. P/ X]}                                   \\[1ex]
 \qlrb{\qbit} = \mathrm{M}_2(\mathbb C) & \lrb{P \to R} = \left[ \lrb P \to \MM \lrb R \right]  & \lrb{Q(\mathbf A, \mathbf B)} = \QQ(\qlrb A, \qlrb B)   & 
\end{array}
\]
\vspace{-2ex}\caption{Derived equations for closed types.}
\label{fig:derived-type-semantics}\smallbreak
\hspace{-.35in}\begin{minipage}{.4\textwidth}
\[
\begin{array}{l}
\lrb{\Phi \vdash m : P} : \lrb \Phi \kto \lrb P \text{ in $\KL$} \\
\lrb{\Phi, x:P \vdash x: P} \eqdef \JJ \pi_2 \\
\lrb{\Phi \vdash (): 1} \eqdef \JJ 1 \\
\lrb{\Phi \vdash (m, n) : P \times R} \eqdef (\lrb m \ktimes \lrb n) \kcirc \JJ \langle \id, \id \rangle \\
\lrb{\Phi \vdash \pi_i m : P_i} \eqdef \JJ \pi_i \kcirc \lrb m   , \text{ for } i \in \{ 1, 2\} \\
\lrb{\Phi \vdash \emph{in}_{i} m : P_1+P_2} \eqdef \JJ \emph{in}_i \kcirc \lrb{m}, \text{ for } i \in \{ 1, 2\} \\
\lrb{\Phi \vdash ( \text{case}\ m\ \text{of}\ \emph{in}_1 x \Rightarrow n_1\ |\ \emph{in}_2 y \Rightarrow n_2 ) : R}\\
\quad \eqdef [\lrb{n_1}, \lrb{n_2}] \kcirc d \kcirc (\id \ktimes \lrb m) \kcirc \JJ \langle \id, \id \rangle \\
\lrb{\Phi \vdash \lambda x^P . m : P \to R} \eqdef \JJ \lambda(\lrb m) \\
\lrb{\Phi \vdash mn : R} \eqdef \epsilon \kcirc (\lrb m \ktimes \lrb n) \kcirc \JJ \langle \id, \id \rangle  \\
\lrb{\Phi \vdash  \lambdaq (\xx_1, \ldots, \xx_n) . \qq : Q( \AAA_1 \otimes \cdots \otimes \AAA_n, \BBB) }\\
\quad \eqdef \JJ \qlrb q  \\
\lrb{\Phi \vdash \text{new} : Q(\bit, \qbit)} \eqdef \JJ(x \mapsto \text{new}^\ddagger) \\
\lrb{\Phi \vdash U : Q(\qbit^{\otimes n}, \qbit^{\otimes n})} \eqdef \JJ(x \mapsto \text{unitary}_U^\ddagger) \\
\lrb{\Phi \vdash \text{meas} : Q(\qbit, \bit)} \eqdef \JJ(x \mapsto \text{meas}^\ddagger)  \\
\lrb{\Phi \vdash \fold{}\ m: \mu X. P} \eqdef \sfold \kcirc \lrb m  \\
\lrb{\Phi \vdash \unfold\ m : P[\mu X. P / X]} \eqdef \sunfold \kcirc \lrb m  \\
\lrb{\Phi \vdash \text{run } \mathcal C : |\mathbf O|} \eqdef x \mapsto r ( \mathrm{drop}_k^\ddagger \circ \lrb{\mathcal C}_x)
\end{array}
\]
\vspace{-3ex}\caption{Interpretation of classical term judgements.}
\label{fig:classical-term-semantics}
\end{minipage}
\qquad\qquad\begin{minipage}{.4\textwidth}
\[
\begin{array}{l}
\lrb{\Phi; \mathbf{ \Gamma \vdash q : A}} \colon \lrb \Phi \to \QQ(\lrb{\mathbf \Gamma}, \lrb{\mathbf A}) \text{ in $\dcpo$} \\[.5ex]
\lrb{\Phi; \mathbf{ y: A \vdash y: A}} \eqdef x \mapsto \id_{\qlrb A} \\
\lrb{\Phi; \cdot \vdash *: \mathbf I } \eqdef x \mapsto \id_{\mathbb C} \\[1ex] 
\lrb{\Phi; \mathbf{ \Gamma_1, \Gamma_2 \vdash q ; r : \mathbf A}} \eqdef x \mapsto (\cong \circ (\qlrb q_x \otimes \qlrb r_x)) \\
\lrb{\Phi; \mathbf{ \Gamma_1, \Gamma_2 \vdash q \otimes r : A \otimes B} } \eqdef x \mapsto \qlrb{q}_x \otimes \qlrb{r}_x \\[1ex]
\lrb{\Phi; \mathbf{ \Gamma_1, \Gamma_2 \vdash let\ x \otimes y = q\ in\ r : B }}\\
\quad \eqdef x \mapsto \qlrb{r}_x \circ ( \id_{\qlrb{\Gamma_2}} \otimes \qlrb{q}_x ) \circ \text{swap}  \\[.5ex]
\lrb{\Phi; \BGamma \vdash \mathbf{in}_{i}\ \qq : \AAA \oplus \BBB}  \eqdef x \mapsto \mathbf{in}_i \circ \qlrb{q}_x \\[1ex]
\lrb{\Phi; \mathbf{ \Gamma_1, \Gamma_2 \vdash (case\ q\ of\ in_1 x \Rightarrow r_1\ |\ in_2 y \Rightarrow r_2) : B}}\\
\quad \eqdef x \mapsto [\qlrb{r_1}_x, \qlrb{r_2}_x] \circ \mathbf d \circ (\id \otimes \qlrb{q}_x) \circ \text{swap} \\[.5ex]
\lrb{ \Phi; \mathbf \Gamma \vdash m \mathbf q : \mathbf B } \eqdef x \mapsto \beta(\lrb m_x) \circ  \qlrb q_x \\
\lrb{\Phi; \mathbf{ \Gamma \vdash \bfold\ q : \mu X. A } } \eqdef x \mapsto \qfold \circ \qlrb{q}_x \\
\lrb{\Phi; \mathbf{ \Gamma \vdash \bunfold\ q : A[\mu X. A / X]}}\\
\quad \eqdef x \mapsto \qunfold \circ \qlrb q_x \\[.5ex]
\lrb{ \Phi; \cdot \vdash \mathbf{init}\ m : \mathbf O } \eqdef x \mapsto r^{-1} ( \lrb{m}_x ) \\
\lrb{ \Phi; \mathbf{ \Gamma_1, \Gamma_2} \vdash \mathbf{let}\  x = \text{lift}\ \mathbf q\ \mathbf{ in\ r} : \mathbf A }\\
\quad \eqdef x \mapsto \widehat{\qlrb r}_x \circ (\qlrb q_x \otimes \id_{\qlrb{\Gamma_2}}) \\
\end{array}
\]
\vspace{-2ex}\caption{Interpretation of quantum term judgements.}
\label{fig:quantum-term-semantics}
\end{minipage}\smallbreak
\[
  \lrb{\Phi \vdash [\ket \psi, \ell, \qq] : \mathbf A ; \qbit^k} \colon \lrb{\Phi} \to \QQ(\mathbb C, \lrb \AAA \otimes \lrb{\qbit^k}) :: x \mapsto (\qlrb q_x \otimes \id_{\lrb{\qbit^{k}}}) \circ \sigma_\ell \circ \text{state}_{\ket \psi \bra \psi }^\ddagger
\]
\vspace{-4ex}\caption{Interpretation of quantum configurations.}
\label{fig:quantum-configuration}
\begin{align*}
\qfold_{\mathbf{\mu X.A}} &: \qlrb{A[\mu X. A/ X]}  = \qlrb{X \vdash A} \qlrb{\mu X. A}   \cong \qlrb{\mu X. A} : \qunfold_{\mathbf{\mu X.A}} \\
\sfold_{\mu X. P} &: \lrb{P[\mu X. P/ X]}  = \lrb{X \vdash P} \lrb{\mu X. P}   \cong \lrb{\mu X. P} : \sunfold_{\mu X. P}
\end{align*}
\vspace{-3ex}\caption{Definition of the folding/unfolding isomorphisms.}
\label{fig:folding}
}
\end{figure*}

\begin{proposition}
The assignments $\qlrb{\Theta \vdash A} \colon\ \QQ_*^{|\mathbf \Theta|} \to \QQ_*$ and $\lrb{\Theta \vdash P} \colon\ \PDe^{|\Theta|} \to \PDe$ are well-defined $\omega$-cocontinuous functors.
\end{proposition}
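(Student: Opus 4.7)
The plan is to proceed by mutual induction on the derivation of the type well-formedness judgements $\mathbf\Theta \vdash \mathbf A$ and $\Theta \vdash P$, showing simultaneously that each assignment is a well-defined functor and that it is $\omega$-cocontinuous. Well-definedness and $\omega$-cocontinuity are handled together because the recursive-type case of the induction consumes the $\omega$-cocontinuity hypothesis in order even to construct the functor via a parameterised initial algebra.

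For the base cases, type variables $\mathbf\Theta_i$ and $\Theta_i$ are interpreted as projections $\Pi_i$, which preserve all colimits and are therefore $\omega$-cocontinuous. The constant functors $K_{\mathbb C}$, $K_{\mathrm M_2(\mathbb C)}$, $K_1$ and $K_{\QQ(\qlrb{\AAA},\qlrb{\BBB})}$ (used for the unit types, $\qbit$, and $Q(\AAA,\BBB)$, noting the latter requires $\AAA,\BBB$ to be closed so that the earlier inductive step has produced an object of $\QQ$) are $\omega$-cocontinuous because any constant functor preserves all colimits in the trivial sense. For the compound (non-recursive) cases, observe that pairing $\langle F,G\rangle$ preserves $\omega$-colimits whenever $F$ and $G$ do, because $\omega$-colimits in a product category are computed componentwise; hence composition with the $\omega$-cocontinuous bifunctors $\otimes,\oplus$ on $\QQ_*$ (Proposition~\ref{prop:quantum-omega-functors}) handles the quantum cases $\AAA\otimes\BBB$ and $\AAA\oplus\BBB$, while composition with the $\omega$-cocontinuous bifunctors $\ktimes_e,\kplus_e,[\kto]_e^\JJ$ on $\PDe$ (Proposition~\ref{prop:omega-functors}) handles the classical cases $P\times R$, $P+R$ and $P\to R$. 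In each case, $\omega$-cocontinuity of a composite of $\omega$-cocontinuous functors is again $\omega$-cocontinuous.

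The main work, and the expected obstacle, lies in the recursive type cases $\mu\mathbf X.\mathbf A$ and $\mu X.P$. Here we invoke Proposition~\ref{prop:par-initial-algebra}: by the induction hypothesis, the functors $\qlrb{\mathbf\Theta,\mathbf X\vdash \mathbf A}\colon \QQ_*^{|\mathbf\Theta|+1}\to \QQ_*$ and $\lrb{\Theta,X\vdash P}\colon \PDe^{|\Theta|+1}\to \PDe$ are $\omega$-cocontinuous. Since $\QQ_*$ is cocomplete (Proposition~\ref{prop:quantum-omega-functors}) and $\PDe$ has an initial object and $\omega$-colimits (Proposition~\ref{prop:omega-functors}), Proposition~\ref{prop:par-initial-algebra} produces parameterised initial algebras $\qlrb{\mathbf\Theta,\mathbf X\vdash\mathbf A}^\sharp$ and $\lrb{\Theta,X\vdash P}^\sharp$, together with the natural isomorphisms $\iota^{\TTT}$, and guarantees that these $(-)^\sharp$ functors are themselves $\omega$-cocontinuous. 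This is precisely what is needed to close the induction, because the subsequent cases of the induction (an outer type $\mu Y.\,Q$ whose body contains $\mu X.P$ as a subformula, or a product containing it) only require $\omega$-cocontinuity of what we have just constructed.

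The subtle point, which is the main obstacle, is that on the classical side we are working inside $\PDe$ rather than on all of $\KL$, so one must check that each type constructor actually restricts to $\PDe$, and in particular that the function space constructor $[\kto]_e^\JJ$ yields an $\omega$-cocontinuous \emph{covariant} functor on the subcategory of embeddings, despite the mixed variance of $\to$. This is exactly the content of Proposition~\ref{prop:omega-functors}, which converts the usual embedding-projection trick on mixed-variance bifunctors into honest covariant $\omega$-cocontinuous bifunctors on $\PDe$; with this in hand, the inductive step for $P\to R$ reduces to functor composition, and the recursive-type case proceeds uniformly with the polynomial cases. An analogous but simpler situation obtains on the quantum side, where the absence of a quantum function space removes any variance issue and cocompleteness of $\QQ_*$ suffices.
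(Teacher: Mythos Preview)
Your proposal is correct and follows exactly the approach the paper takes: induction on type derivations, invoking Propositions~\ref{prop:par-initial-algebra}, \ref{prop:omega-functors} and \ref{prop:quantum-omega-functors} for the recursive, classical-connective and quantum-connective cases respectively. One small imprecision: constant functors do not preserve \emph{all} colimits (e.g.\ coproducts), only colimits of connected diagrams; since $\omega$-chains are connected this is harmless for the argument.
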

\begin{proof}
By induction using Propositions \ref{prop:par-initial-algebra}, \ref{prop:omega-functors} and \ref{prop:quantum-omega-functors}.
\end{proof}

\begin{lemma}[Substitution]
\label{lem:type-substitution}
Given quantum types $\mathbf{\Theta, X \vdash A}$ and $\mathbf{\Theta \vdash B}$ and classical types $\Theta, X \vdash P$ and $\Theta \vdash R$, then:
\begin{align*}
\qlrb{\Theta \vdash A[B/X]} = \qlrb{\Theta, X \vdash A} \circ \langle \Id, \qlrb{\Theta \vdash B} \rangle \quad \text{and} \quad
\lrb{\Theta \vdash P[R/X]} = \lrb{\Theta, X \vdash P} \circ \langle \Id, \lrb{\Theta \vdash R} \rangle.
\end{align*}
\end{lemma}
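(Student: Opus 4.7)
The plan is to prove both equalities simultaneously by structural induction on the types $P$ and $\mathbf A$. The classical and quantum cases are formally analogous, so I would present the classical case in detail and note that the quantum case proceeds identically, using the coproduct/tensor bifunctors on $\QQs$ in place of the bifunctors on $\PDe$.

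For the base cases: if $P = \Theta_i$ with $\Theta_i \neq X$, both sides equal the projection $\Pi_i$; if $P = X$, then $P[R/X] = R$, so the left-hand side is $\lrb{\Theta \vdash R}$, and the right-hand side is $\Pi_{|\Theta|+1} \circ \langle \Id, \lrb{\Theta \vdash R}\rangle = \lrb{\Theta \vdash R}$. The unit case $P = 1$ and the constant case $P = Q(\AAA,\BBB)$ (where $\AAA,\BBB$ are closed, so unaffected by substitution) are immediate from the use of constant functors. The inductive cases for $P_1 + P_2$, $P_1 \times P_2$, and $P_1 \to P_2$ follow by a direct calculation: I push the substitution inside, apply the induction hypothesis to each subterm, and then pull out the common post-composition with $\langle \Id, \lrb{\Theta \vdash R}\rangle$ using that $\kplus_e$, $\ktimes_e$, and $[\kto]_e^\JJ$ are bifunctors (so pairing commutes with post-composition in the evident way).

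The main obstacle is the recursive type case $P = \mu Y. P'$. Here the left-hand side is $\lrb{\Theta, Y \vdash P'[R/X]}^\sharp$, where the superscript $(-)^\sharp$ denotes the parameterised initial algebra construction from Proposition~\ref{prop:par-initial-algebra}. By the inductive hypothesis applied to $P'$ in context $\Theta, Y, X$ (after a harmless weakening of $R$ to add the parameter $Y$, which leaves its interpretation unchanged since $Y \notin \mathrm{FV}(R)$), this becomes $\left(\lrb{\Theta, Y, X \vdash P'} \circ \langle \Id, \lrb{\Theta \vdash R} \circ \pi_{\Theta}\rangle\right)^\sharp$. I would then invoke a general lemma on parameterised initial algebras: for any $\omega$-cocontinuous $\mathcal T : \CC^{n+1} \to \CC$ and $\omega$-cocontinuous $\mathcal S : \CC^m \to \CC^n$, the composite $\mathcal T \circ (\mathcal S \times \Id_{\CC})$ satisfies $(\mathcal T \circ (\mathcal S \times \Id))^\sharp \cong \mathcal T^\sharp \circ \mathcal S$ canonically. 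This follows from uniqueness of parameterised initial algebras: both functors send each parameter $\vec C$ to an initial algebra of $\mathcal T(\mathcal S(\vec C), -)$, and naturality in the parameter is forced by the universal property. Applying this with $\mathcal T = \lrb{\Theta, Y, X \vdash P'}$ (suitably reordered) and $\mathcal S = \langle \Id, \lrb{\Theta \vdash R}\rangle$ yields $\lrb{\Theta, Y \vdash P'}^\sharp \circ \langle \Id, \lrb{\Theta \vdash R}\rangle = \lrb{\Theta, X \vdash \mu Y. P'} \circ \langle \Id, \lrb{\Theta \vdash R}\rangle$, which is the right-hand side.

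The delicate point is really bookkeeping: the order of type variables in the context must be permuted when moving the recursion variable $Y$ past the substituted variable $X$, and one must verify that the parameterised initial algebra construction is stable under such permutations (again by initiality). Modulo this bookkeeping, the recursive case reduces to the commutation lemma for $(-)^\sharp$ and functor composition, and the quantum case is handled identically using Proposition~\ref{prop:quantum-omega-functors} in place of Proposition~\ref{prop:omega-functors}.
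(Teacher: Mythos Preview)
Your proposal is correct and follows the standard structural-induction argument; the paper itself does not spell out a proof of this lemma, and for the analogous augmented version (Lemma~\ref{lem:substitution-augmented}) it simply defers to \cite[Lemma~7.30]{lnl-fpc-lmcs}, whose proof is exactly the induction you sketch. One minor remark: your commutation principle $(\mathcal T \circ (\mathcal S \times \Id))^\sharp \cong \mathcal T^\sharp \circ \mathcal S$ is in fact an \emph{equality} of functors, not merely an isomorphism, because the $(-)^\sharp$ construction of Proposition~\ref{prop:par-initial-algebra} is a fixed colimit recipe---at each parameter $\vec D$ both sides compute literally the same $\omega$-chain colimit---so the strict equality claimed in the lemma is indeed obtained.
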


For closed recursive types, the folding/unfolding isomorphisms are defined in Figure \ref{fig:folding},
where the equalities are from Lemma~\ref{lem:type-substitution} and the unnamed isomorphisms are the initial algebras.
Note that $\sfold_{\mu X. P}$ and $\qfold_{\mathbf{\mu X.A}}$ are isomorphisms in $\TD$ and $\QQ_*$, respectively.
Now, the derived equations in Figure \ref{fig:derived-type-semantics} follow immediately.

\subsubsection{Relationship Between Observable Types}
Quantum/classical observable types play a special role in our language and they also satisfy a special denotational relationship that we now describe.

\begin{proposition}
  \label{prop:observable-types}
  Let $\OO$ be a closed quantum observable type with $|\OO|$ its classical counterpart. Then there exists a canonical \emph{set} $\flrb{\OO}$, defined by induction on the derivation of $\cdot \vdash \OO$, such that: (1) $\lrb{\OO} \cong \ell^\infty \flrb{\OO}$ in $\QQ_*$; and (2) $\lrb{|\OO|} = \mathcal L \flrb{\OO},$
  where $\mathcal L : \Set \to \DCPO$ is the obvious left adjoint functor that equips a set $X$ with the discrete order.
\end{proposition}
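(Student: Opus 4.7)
The plan is to extend $\flrb{-}$ to a family of $\omega$-cocontinuous functors $\flrb{\mathbf \Theta \vdash \OO} \colon \Set^{|\mathbf \Theta|} \to \Set$, defined by induction on the derivation of $\mathbf \Theta \vdash \OO$ in exact analogy with the clauses of Figure \ref{fig:type-interpretation}, but performed entirely inside $\Set$: $\mathbf I$ is sent to a singleton, $\oplus$ to disjoint union, $\otimes$ to Cartesian product, and $\mu \mathbf X. \OO$ to the parameterised initial algebra of $\flrb{\mathbf \Theta, \mathbf X \vdash \OO}$, which exists by Proposition \ref{prop:par-initial-algebra} since polynomial set-endofunctors are $\omega$-cocontinuous. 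I then strengthen claims (1) and (2) to the open case: natural isomorphisms $\ell^\infty \circ \flrb{\OO} \cong \qlrb{\OO} \circ (\ell^\infty)^{|\mathbf \Theta|}$ and equalities $\mathcal L \circ \flrb{\OO} = \lrb{|\OO|} \circ \mathcal L^{|\mathbf \Theta|}$, with the closed cases of the proposition being immediate specialisations.

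Properties (1) and (2) will then be established by simultaneous induction on the derivation of $\mathbf \Theta \vdash \OO$. The base cases (type variable and $\mathbf I$) are immediate. The case $\oplus$ uses that $\ell^\infty$ sends set coproducts to products in $\HAs$, i.e.\ to coproducts in $\QQs$, while $\mathcal L$ sends set coproducts to dcpo-coproducts of discrete dcpos (which remain discrete). The case $\otimes$ uses the identification $\ell^\infty(X \times Y) \cong \ell^\infty(X) \bar\otimes \ell^\infty(Y)$ for the spatial tensor product of commutative HA-algebras, together with the fact that a Cartesian product of discrete dcpos is again discrete, so $\mathcal L$ preserves this too.

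The main obstacle is the recursive case $\mu \mathbf X. \OO$. The plan is to exploit that both $\ell^\infty$ and $\mathcal L$ are left adjoints (as noted in the overview of \secref{sec:semantics}) and therefore preserve all colimits; in particular, they preserve the $\omega$-colimits of initial chains from which Proposition \ref{prop:par-initial-algebra} builds the parameterised initial algebras in $\QQs$ and $\PDe$. Combined with the inductive hypothesis, which ensures that the endofunctors transport coherently along $\ell^\infty$ and $\mathcal L$, this yields that each initial chain in $\Set$ is mapped to the corresponding initial chain in the target category, and the resulting colimits are identified with the required parameterised initial algebras. A technical subtlety on the classical side is that the initial algebras live in $\PDe$ rather than $\DCPO$; this is benign because the maps in the initial chain are injections (polynomial set-functors preserve monomorphisms), hence embeddings in $\PDe$, so $\mathcal L$'s preservation of $\DCPO$-colimits of this shape coincides with preservation of the corresponding $\PDe$-colimit via the limit-colimit coincidence. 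This yields the recursive step for both (1) and (2) and completes the induction.
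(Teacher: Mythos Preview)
Your proposal is correct and is precisely the concrete unfolding of what the paper's proof delegates to the cited abstract framework in \cite{lnl-fpc-lmcs,icfp19}: define a parallel interpretation of observable types in $\Set$, then transport it via the fully faithful left adjoints $\ell^\infty$ and $\mathcal L$, using colimit-preservation for the recursive case. The paper simply invokes that machinery as a black box, whereas you spell out the induction and the $\PDe$-versus-$\DCPO$ subtlety by hand; the mathematical content is the same.
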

\begin{proof}
  Both of these statements follow as special cases of the abstract categorical semantics in \cite[Section 6]{lnl-fpc-lmcs,icfp19}, where the source category is $\Set$ and the target one is $\QQ_*$ and $\TD$, respectively (note that $\TD \cong \DCPO,$ so $\DCPO$ can also be taken).
\end{proof}

\begin{remark}
\label{rem:identification}
The coherence conditions outlined in \cite{lnl-fpc-lmcs,icfp19} are very strong and the functor $\ell^\infty : \Set \to \QQ_*$ is a fully faithful strong symmetric monoidal left adjoint (just like $\mathcal L)$.
To avoid notational overhead, we  treat the  $*$-isomorphism in (1) above as an equality.
\end{remark}

Proposition \ref{prop:observable-types} shows that the interpretation of any classical observable type is a \emph{discrete domain} and the set $\flrb{\OO}$ is simply its underlying set.
We may safely extend the action of the functor $\ell^\infty$ to discrete dcpo's and then by the above remark it follows that observable types are related by the following strong relationship:
\begin{equation}
  \label{eq:observable-types}
  \lrb{\OO} = \ell^\infty(\lrb{|\OO|} ) .
\end{equation}

This shows the interpretation $\lrb \OO$ of a quantum observable type $\OO$ is a \emph{commutative} HA-algebra.

\subsection{Interpretation of Terms and Quantum Configurations}
\label{sub:term-interpretation}
A classical context $\Phi = x_1 \colon P_1, \ldots, x_n \colon P_n$ is interpreted as the dcpo $\lrb \Phi \eqdef \lrb{P_1} \times \cdots \times \lrb{P_n},$
and a quantum context $\mathbf{ \Gamma = x_1 \colon A_1, \ldots, x_n \colon A_n}$ is interpreted as the HA algebra $\mathbf{ \lrb{\Gamma} \defeq \lrb{A_1} \otimes \cdots \otimes \lrb{A_n} }$.
The interpretation of classical/quantum term judgements and quantum configurations is defined by mutual induction in Figures \ref{fig:classical-term-semantics}, \ref{fig:quantum-term-semantics} and \ref{fig:quantum-configuration}. Next, we explain some of the notation used therein.

The interpretation of a classical term judgement $\Phi \vdash m \colon P$ is a morphism $\lrb{\Phi \vdash m \colon P} \colon \lrb \Phi \kto \lrb P$ in $\KL$ 
that we often abbreviate by writing $\lrb m$.
Likewise, a quantum term judgement $\Phi; \mathbf{ \Gamma \vdash q : A}$ is interpreted as
a morphism $\lrb{\Phi; \mathbf{ \Gamma \vdash q : A}} \colon \lrb \Phi \to \QQ(\lrb{\mathbf \Gamma}, \lrb{\mathbf A})$ in $\dcpo$
that we often abbreviate by writing $\qlrb{q}$.
For an element $x \in \lrb \Phi$, we also write $\qlrb q_x$ and $\lrb m_x$ as a shorthand for $\qlrb q(x)$ and $\lrb m(x)$, respectively.
In the special case that $\Phi = \cdot$, we can also regard $\qlrb q$ as a morphism $\qlrb q \colon \qlrb \Gamma \to \qlrb A$ in $\QQ.$
The unnamed isomorphism in Figure \ref{fig:quantum-term-semantics} is the left monoidal unitor $\mathbb C \otimes A \cong A$ and "swap" is the monoidal symmetry in $\QQ$.

The interpretation of a configuration $\Phi \vdash [\ket \psi, \ell, \qq] : \mathbf A ; \qbit^k$ is given by a morphism in $\DCPO$ of type $\lrb{\Phi \vdash [\ket \psi, \ell, \qq] : \mathbf A; \qbit^k} \colon \lrb \Phi \to \QQ(\mathbb C, \qlrb A \otimes \lrb{\qbit^{k}})$ defined in Figure \ref{fig:quantum-configuration}.
The notation $\sigma_\ell$ used there denotes any permutation $\sigma_\ell : \lrb{\qbit^n} \to \lrb{\qbit^n}$ that maps the $i$-th component to $\ell(\xx_i)$, defined in full analogy to $\sigma$ from Figure \ref{fig:quantum-information-operational}, and where $n = \dim(\ket \psi).$
In the special case that $\mathcal C = [\ket \psi, \ell, \qq]$ is total, its interpretation can be seen as a morphism $ \lrb {\mathcal C} \colon \lrb \Phi \to \QQ(\mathbb C, \qlrb A)$ in $\DCPO$, given by
$\lrb{\mathcal C} = \left( x \mapsto \qlrb q_x \circ \sigma_\ell  \circ \text{state}_{\ket \psi \bra \psi}^\ddagger \right).$ If the linking function $\ell$ is given by $\ell(\xx_i) = i$ (which we may usually assume), then its interpretation is equivalently given by
$\lrb{\mathcal C} = \left( x \mapsto \qlrb q_x \circ \text{state}_{\ket \psi \bra \psi}^\ddagger \right).$
If, in addition, $\mathcal C$ is closed, i.e., $\Phi = \cdot,$ then we can regard it as a map $\lrb{\mathcal C} \colon \mathbb C \to \qlrb A$ in $\QQ$, which is just a state in $\QQ$ (and also a state in the operator-algebraic sense).

We now comment on the terms that are of primary interest to us. The interpretation of the "new", "meas" and "U" terms is determined by the constant function on the appropriate $\QQ$-morphism from \secref{sub:w*-category-definition} which is then injected via $\JJ$ into $\TD$. For quantum lambda abstractions, $\lrb{\qq}$ is, by construction,
a Scott-continuous function $\lrb \qq \colon \lrb \Phi \to \QQ(\lrb{\mathbf \Gamma}, \lrb{\mathbf A})$, so we may see it as a morphism of $\TD$ via the $\JJ: \DCPO \cong \TD$ isomorphism. For the interpretation of the "run" term, let us consider the special case when $\mathcal C$ is total. Then the semantics is equivalently given by the Kleisli morphism
$\lrb{\mathrm{run}\ \mathcal C} = \left(x \mapsto r(\lrb{\mathcal C}_x) \right)$, where $r$ is the isomorphism from Theorem \ref{thm:r-iso-categorical}. When $\mathcal C$ is not total, the $\mathrm{drop}_k^\ddagger$ morphism is used to get rid of the remaining auxiliary qubits in accordance with affine principles. The interpretation of the "$\mathbf{init}$"
term is done by simply taking the inverse isomorphism $r^{-1}$. Dynamic lifting is interpreted using the natural bijection $\widehat{(-)}$ from Proposition \ref{prop:lift-categorical}.

Finally, the interpretation of the "$m\qq$" term, representing quantum function application, makes use of the barycentre maps from Theorem \ref{thm:M-alg}.
In our view, this is the term of highest interest discussed here and which required the most effort to interpret. Notice that its interpretation is unique in that it \emph{combines} two different
Kegelspitzen structures living in two different categories.

\subsection{Interpretation of (Observable) Values}
\label{sub:values-interpretation}

The interpretation of values in our language enjoys additional structural properties, as usual.

\begin{proposition}
For any classical value $\Phi \vdash v : P$ and quantum value $\Phi ; \mathbf{\Gamma \vdash v : A}$, we have:
\begin{enumerate}
  \item $\lrb v \colon \lrb \Phi \kto \lrb P$  also is a morphism of $\TD$. Equivalently, it is in the image of $\JJ$.
  \item $\qlrb v \colon \lrb \Phi \to \QQ(\qlrb \Gamma, \qlrb A)$ corestricts to $\QQ_*(\qlrb \Gamma, \qlrb A)$. That is, $\forall x \in \lrb \Phi. \qlrb v_x \in \QQ_*(\qlrb \Gamma, \qlrb A)$.
\end{enumerate}
\end{proposition}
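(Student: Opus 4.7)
The plan is to prove the two parts by separate structural inductions on the derivation of the value judgement. No mutual induction is needed: in Figure \ref{fig:quantum-term-semantics} the interpretation of a quantum value invokes only the semantics of its quantum-value subterms, while in Figure \ref{fig:classical-term-semantics} the only cross-reference between the two parts is in the $\lambdaq(\xx_1,\ldots,\xx_n).\qq$ clause, where it suffices that $\qlrb \qq$ is a $\DCPO$-morphism, which is part of the well-definedness of the interpretation and is independent of the present claim.

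For part (1), I would first record that $\TD$ is closed under all operations used to interpret classical values. Since $\JJ \colon \DCPO \to \KL$ is a functor that strictly preserves the symmetric monoidal structure and coproducts (see \secref{sub:model}), we have $\JJ f \kcirc \JJ g = \JJ(f \circ g)$, $\JJ f \ktimes \JJ g = \JJ(f \times g)$, and similarly for $\kplus$. Hence the image of $\JJ$ (which is $\TD$ by definition) is closed under Kleisli composition, $\ktimes$, and $\kplus$. The induction is then immediate: the base cases $x$, $()$, $\lambda x.m$, $\lambdaq(\xx_1,\ldots,\xx_n).\qq$, $\text{new}$, $\text{meas}$, $U$ have interpretations of the form $\JJ(-)$ by construction; the compound cases $(v,w)$, $\text{in}_i\, v$, and $\fold\, v$ apply the inductive hypothesis together with the closure properties above and the fact that $\sfold$ is an isomorphism in $\TD$ (established in \secref{subsub:domain-equations}).

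For part (2), the preliminary observation is that $\QQs$ is a subcategory of $\QQ$ closed under identities and composition, and moreover the bifunctors $\otimes$ and $\oplus$ on $\QQ$ restrict to bifunctors on $\QQs$ whose coproduct injections $\mathbf{in}_i$ are themselves in $\QQs$, and the isomorphisms $\qfold$ live in $\QQs$ as well (all of this is established in \secref{sub:quantum-submodel}). Working pointwise in $x \in \lrb \Phi$, each quantum value interpretation is assembled from identities (variable and $*$ cases), tensor products of $*$-homomorphisms ($\vq \otimes \wq$ case), and compositions with the coprojections $\mathbf{in}_i$ or with $\qfold$. A straightforward structural induction then gives $\qlrb \vq_x \in \QQs(\qlrb{\mathbf \Gamma}, \qlrb \AAA)$ for every $x \in \lrb \Phi$.

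The argument is essentially routine; the only point that deserves a careful check is that the relevant subcategories really are closed under the operations appearing in the value clauses of Figures \ref{fig:classical-term-semantics} and \ref{fig:quantum-term-semantics}. For $\TD$ this reduces to the strict-preservation properties of $\JJ$, and for $\QQs$ it reduces to the restrictability of the monoidal and coproduct structure from $\QQ$; both are already in hand from \secref{sub:model} and \secref{sub:quantum-submodel}, so I do not anticipate any genuine obstacle.
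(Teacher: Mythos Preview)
Your proposal is correct and follows exactly the routine structural induction the paper has in mind; indeed, the paper states this proposition without proof, relying on precisely the closure facts you invoke (that $\JJ$ is a strict monoidal, coproduct-preserving functor so $\TD$ is closed under $\kcirc$, $\ktimes$, $\kplus$, and that $\sfold$ lies in $\TD$; and that $\otimes$, $\oplus$, $\mathbf{in}_i$, $\qfold$ all restrict to $\QQs$). Your observation that the two inductions decouple---because the $\lambdaq$ clause only needs $\qlrb\qq$ to be a $\DCPO$-morphism, not that it lands in $\QQs$---is exactly right and worth making explicit.
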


This means that $\lrb{v}_x$ is a Dirac valuation and $\qlrb{v}_x$ is a $*$-homomorphism, for any $x \in \lrb \Phi$. Note that the functor $\JJ$ restricts to an isomorphism of categories $\JJ \colon \DCPO \cong \TD$ \cite{m-monad}, so for a classical value $v$, we can define $\flrb v \eqdef \JJ^{-1} \lrb v : \lrb \Phi \to \lrb P$ in $\DCPO.$

The interpretation of classical/quantum \emph{observable} values enjoys even stronger structural properties and they also are strongly related to each other, as we show next.
If $\Phi \vdash v \colon P$ is an \emph{observable} value, then $\lrb \Phi$ and $\lrb P$ are discrete, so we can safely regard $\flrb v : \lrb \Phi \to \lrb P$ as a morphism in $\Set.$ 
For the next proposition, we identify $\mathbb C$ with $\ell^\infty(1)$ (see Remark \ref{rem:identification}).

\begin{proposition}
  \label{prop:observable-value-interpretation}
  Let $\cdot; \cdot \vdash \vq : \OO$ be an observable quantum value and let $\cdot \vdash |\vq| : |\OO|$ be its classical counterpart.
  Then $\lrb{\vq} = \ell^\infty\flrb{|\vq|}$ and $\lrb{|\vq|} = \JJ \flrb{|\vq|}.$
  Furthermore, $r (\lrb{\vq}) = \lrb{|\vq|}_*.$
\end{proposition}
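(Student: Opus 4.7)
My plan is to proceed by structural induction on the derivation of $\cdot;\cdot \vdash \vq : \OO$. The closed observable quantum values produced by the grammar of Figure~\ref{fig:syntax-grammars-terms}, restricted to observable types, are exactly $\mathbf *$, $\vq_1 \otimes \vq_2$, $\mathbf{in_i\ \vq'}$ for $i \in \{1,2\}$, and $\mathbf{fold\ \vq'}$, with the classical counterpart $|\vq|$ described recursively in Figure~\ref{fig:syntax-translate}. The second equation $\lrb{|\vq|} = \JJ\flrb{|\vq|}$ will follow immediately from the definition $\flrb{v} \defeq \JJ^{-1}\lrb{v}$ combined with the preceding proposition (which guarantees $\lrb{|\vq|}$ lies in the image of $\JJ$); so the real content is in Equations 1 and 3, which I intend to prove simultaneously by induction.

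For Equation 1 the key structural input I will exploit is Proposition~\ref{prop:observable-types} together with Remark~\ref{rem:identification}, which let me identify $\lrb{\OO}$ with $\ell^\infty(\flrb{\OO})$ on the nose and $\lrb{|\OO|}$ with $\mathcal L\flrb{\OO}$. Since both $\ell^\infty : \Set \to \QQs$ and $\mathcal L : \Set \to \DCPO \cong \TD$ are fully faithful strong symmetric monoidal left adjoints that preserve coproducts and the relevant initial-algebra structure, the value interpretations $\lrb{\vq}$ and $\lrb{|\vq|}$ will turn out to be transported images of a common set-level map $\flrb{|\vq|} : 1 \to \flrb{\OO}$ picking out a single element. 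Concretely I will verify clause by clause: the base case $\mathbf *$ reduces to $\id_{\mathbb C} = \ell^\infty(\id_1)$; the tensor and injection cases follow from strong monoidality and coproduct preservation combined with the inductive hypothesis; and the $\mathbf{fold}$ case reduces to commutativity of the fold isomorphisms $\qfold$ and $\sfold$ with $\ell^\infty$ and $\mathcal L$ respectively.

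For Equation 3, I will invoke Theorem~\ref{thm:r-iso-categorical}, which restricts $r$ to a bijection between $*$-homomorphisms $\mathbb C \to \ell^\infty(X)$ and Dirac valuations on $X$. By the preceding proposition, $\lrb{\vq}$ is a $*$-homomorphism, so $r(\lrb{\vq})$ must be a Dirac valuation; a short unpacking of the definition of $r$ will give $r(\ell^\infty(f)) = \delta_{f(*)}$ for any $f : 1 \to X$. Combining this identity with Equation~1 and the easy computation $\lrb{|\vq|}_* = (\JJ\flrb{|\vq|})(*) = \eta_{\lrb{|\OO|}}(\flrb{|\vq|}(*)) = \delta_{\flrb{|\vq|}(*)}$ will close the argument. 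The main obstacle I anticipate is the $\mathbf{fold}$ case in both equations, where compatibility must be propagated through the parameterised initial-algebra machinery; my expectation is that this will reduce to the coherent relationship between the two interpretations via $\ell^\infty$ and $\mathcal L$, grounded in Propositions~\ref{prop:par-initial-algebra}, \ref{prop:omega-functors}, and \ref{prop:quantum-omega-functors}, and formalised by the abstract framework cited in Remark~\ref{rem:identification}.
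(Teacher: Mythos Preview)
Your proposal is correct and follows essentially the same approach as the paper: the paper invokes Proposition~\ref{prop:observable-types} together with an external abstract result (\cite[Proposition 6.15]{lnl-fpc-lmcs}) in lieu of writing out the structural induction you describe, and then derives the final statement from Theorem~\ref{thm:r-iso-categorical} and the explicit construction of $r$ (which yields precisely your identity $r_X \circ \ell^\infty(f) = \JJ(f)$, proved as a commuting-square argument in the appendix). Your version is simply a spelled-out instance of that abstract machinery, so the two are the same in content.
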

\begin{proof}
  By combining Proposition \ref{prop:observable-types} and \cite[Proposition 6.15]{lnl-fpc-lmcs}. The final statement follows by Theorem \ref{thm:r-iso-categorical} and by construction of the isomorphism $r$, which relies on the fact that the functors $\ell^\infty$ and $\JJ$ (and implicitly $\mathcal L$) are fully faithful.
\end{proof}

The above proposition is used for establishing soundness for the "run" and
$"\mathbf{init}"$ terms.  The natural isomorphism $\widehat{(-)}$ from
Proposition \ref{prop:lift-categorical} satisfies an additional coherence
condition (see Appendix \ref{app:lift-isomorphism}) w.r.t $\ell^\infty$, which, 
combined with Proposition \ref{prop:observable-value-interpretation},
is used for the soundness proof of the $``\mathbf{lift}"$ term.
Also, by Remark \ref{rem:translation}, it suffices to establish
Proposition \ref{prop:observable-value-interpretation} for \emph{closed}
observable values, as we have done.

\subsection{Soundness and Computational Adequacy}
\label{sub:soundness-adequacy}

Our final contirubtion is to show our semantic interpretation is sound and (strongly) adequate.

\begin{lemma}[Substitution]
\label{lem:term-substitution}
Let $\Phi \vdash v : P$ be a classical value and $\Phi; \mathbf{ \Sigma \vdash v : A}$ a quantum value.
\begin{enumerate}
\item If $\Phi, x : P \vdash m : R$, then $\lrb{ m[v/x] } = \lrb m \kcirc (\id_{\lrb \Phi} \ktimes \lrb v) \kcirc \JJ \langle \id_{\lrb \Phi}, \id_{\lrb \Phi} \rangle .$
\item If $\Phi; \mathbf{\Gamma,  y : A \vdash q : B}$ and $x \in \lrb \Phi$, then $\qlrb{ q[v/y] }_x = \qlrb q_x \circ (\id_{\qlrb \Gamma} \otimes \qlrb v_x) .$
\item If $\Phi, z : P; \mathbf{\Gamma \vdash q : B}$ and $x \in \lrb \Phi$, then $\lrb{ \mathbf q[v/z] }(x) = \qlrb q(x, \flrb v(x)).$
\end{enumerate}
\end{lemma}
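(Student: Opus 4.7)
The plan is to prove all three statements simultaneously by mutual structural induction on the term being substituted into. This mutual induction is unavoidable because classical terms may contain quantum subterms (through quantum configurations and quantum lambda abstractions) and quantum terms may contain classical subterms (through the mixed rules $m \mathbf q$, $\mathbf{init}\ m$ and the dynamic-lift term). A suitable well-founded measure is the total syntactic size of the body term, counted uniformly across the three statements; the substituted value $v$ is fixed throughout.

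First I would dispatch the variable cases. For (1), if $m = x$ then $m[v/x] = v$ and the right-hand side collapses using the facts that $\lrb v$ lies in $\TD$, i.e.\ $\lrb v = \JJ \flrb v$, that $\JJ$ is strict monoidal, and that the second Kleisli projection picks out $\flrb v$; if $m$ is a different variable the substitution is vacuous and the claim follows from the fact that $\lrb{\Phi \vdash x_i : P_i}$ is projection post-composed with $\eta$. The base cases of (2) and (3) are handled analogously, where in (3) we use that $\qlrb{z : P; \cdot \vdash *: \mathbf I}$ etc.\ do not depend on $z$, and in (2) the base case $q = y$ is immediate because $\qlrb y = \id$. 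The unit constants $()$ and $*$ are trivial.

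For the inductive cases involving the standard term formers (pairing, projection, injections, classical case, application, $\mathtt{fold}/\mathtt{unfold}$ on both sides, $\otimes$, $\mathbf{let}\ \mathbf{x} \otimes \mathbf{y} = \mathbf q\ \mathbf{in}\ \mathbf r$, $\mathbf{in_i}$, quantum case, and sequencing $\mathbf q; \mathbf r$) the proof proceeds by unfolding the denotation from Figures \ref{fig:classical-term-semantics}--\ref{fig:quantum-term-semantics}, applying the inductive hypothesis to each subterm, and reassembling using naturality of the diagonals, symmetries, the distributor $d$, the folding isomorphisms, and the fact that $\JJ$ strictly preserves products and coproducts so it interacts cleanly with the Kleisli tensor. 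For the binding constructs (classical $\lambda$-abstraction, case branches, and dynamic $\mathbf{let}$), the standard fresh-variable convention allows us to apply the inductive hypothesis under the binder; the curry/evaluation adjunction $\lambda \dashv \epsilon$ then closes the lambda case.

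The interesting cases are the mixed quantum/classical rules. For the quantum abstraction $\lambdaq (\mathbf{x_1, \ldots, x_n}) . \mathbf q$, statement (1) reduces directly to statement (3) applied to the body $\mathbf q$ after unfolding $\lrb{\lambdaq \ldots} = \JJ \qlrb{\mathbf q}$. Quantum application $m \mathbf q$ combines (1) on $m$ and (3) on $\mathbf q$, together with the fact that the barycentre map $\beta \colon \MM \QQ(\qlrb{\mathbf A}, \qlrb{\mathbf B}) \to \QQ(\qlrb{\mathbf A}, \qlrb{\mathbf B})$ of Theorem \ref{thm:M-alg} is natural in the sense that pushing a Dirac valuation through $\MM(\lrb m)$ and then through $\beta$ agrees with the value interpretation. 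For $\mathbf{init}\ m$ one uses (1) and the definition together with the fact that $r^{-1}$ is a fixed map. The dynamic-lifting term is handled by the naturality of the bijection $\widehat{(-)}$ from Proposition \ref{prop:lift-categorical}: substituting a classical value before currying is the same as currying before substituting. The term $\text{run}\ \mathcal C$ requires an auxiliary substitution statement for configurations, namely that $\lrb{\mathcal C[v/z]}_x = \lrb{\mathcal C}_{(x, \flrb v(x))}$ for $\Phi, z : P \vdash \mathcal C : \mathbf A; \qbit^k$, which is proved in the same mutual induction using (3) on the quantum term component of $\mathcal C$.

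The main obstacle I anticipate is bookkeeping rather than any deep categorical argument: one has to keep track of which of the three statements is invoked on each subterm, manage the various linear context splittings $\mathbf{\Gamma_1, \Gamma_2}$ appearing in the quantum rules (so that the swap/associator in the target diagrams matches the splitting in the source), and verify that the auxiliary configuration substitution step interacts correctly with the linking function $\ell$ and the state preparation $\text{state}_{\ket\psi\bra\psi}^\ddagger$. Once this bookkeeping is fixed, every individual step reduces to the naturality of a structural map or to a direct application of the inductive hypothesis.
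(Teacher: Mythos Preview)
The paper states Lemma~\ref{lem:term-substitution} without proof, so there is no detailed argument to compare against; your mutual structural induction on the derivation of the body term is the standard approach and is what the authors implicitly rely on. Your identification of the cross-cases (quantum abstractions invoking (3), $m\mathbf q$ invoking (1) and (3), $\run\ \mathcal C$ requiring an auxiliary configuration statement built on (3)) is correct and complete.

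One small remark: in the $m\mathbf q$ case you appeal to a ``naturality'' of the barycentre map $\beta$, but in fact no special property of $\beta$ is needed there. Once the induction hypothesis (1) gives $\lrb{m[v/z]} = \lrb m \kcirc (\id \ktimes \lrb v) \kcirc \JJ\langle \id,\id\rangle$ and you evaluate this Kleisli composite at $x$ using that $\lrb v = \JJ\flrb v$ is Dirac-valued, you obtain $\lrb{m[v/z]}_x = \lrb m_{(x,\flrb v(x))}$ as an equality of elements of $\MM\QQ(\qlrb A,\qlrb B)$; applying $\beta$ to both sides is then trivial. The work is entirely in simplifying the Kleisli composite against a value, not in any property of $\beta$.
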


Soundness is the statement that our interpretation is invariant under single-step reduction in a probabilistic sense. In both equations sums of morphisms are defined pointwise using the convex structure of the codomain (which is a Kegelspitze).

\begin{theorem}[Soundness]
\label{thm:soundness}
For any classical term $\Phi \vdash m : P$ and configuration $\Phi \vdash \mathcal C: \mathbf A; \qbit^k$:
\[
\lrb m = \sum_{m \probto{p} m'} p \lrb{m'} \qquad \qquad \qquad \lrb{\mathcal C} = \sum_{\mathcal C \probto{r} \mathcal C'} r \lrb{\mathcal C'} 
\]
assuming $m \probto{p} m'$ and $\mathcal C \probto{r} \mathcal C'$ for some rules from the operational semantics (\secref{sec:syntax}) and where the convex sums range over all such rules.
\end{theorem}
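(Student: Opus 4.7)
The plan is to prove both equations simultaneously by induction on the derivation of $m \probto{p} m'$ and $\mathcal C \probto{r} \mathcal C'$, organised into three case families. First, I would dispose of the deterministic ``administrative'' reductions---projections, case-on-injections, classical $\beta$, fold/unfold on value redexes, the $\mathbf{let\ x \otimes y}$ rule on tensors of values, and the quantum $\lambda$-application $(\lambdaq(\xx_1,\ldots,\xx_n).\qq)(\vq_1 \otimes \cdots \otimes \vq_n) \probto{1} \qq[\vq_i/\xx_i]$---by unfolding the semantic clauses in Figures~\ref{fig:classical-term-semantics}--\ref{fig:quantum-term-semantics} and applying the corresponding clause of Lemma~\ref{lem:term-substitution}. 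The deterministic unitary and allocation rules fall here too: each reduces to a concrete identity in $\QQ$, such as $\mathrm{unitary}_U^\ddagger \circ \mathrm{state}_{\ket\psi\bra\psi}^\ddagger = \mathrm{state}_{U\ket\psi\bra\psi U^*}^\ddagger$, verified directly.

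Second, I would handle the structural (evaluation-context) rules uniformly, exploiting the fact that \emph{every} semantic constructor is Scott-continuous and linear in the Kegelspitze structure of each of its argument slots. On the classical side this is the enrichment of $\KL$ over Kegelspitzen and the preservation of convex combinations by $\kcirc$, $\ktimes$, $\kplus$, $\sfold$, $\sunfold$, $\lambda$, and $\epsilon$ (cf.~\cite[Lemma~38]{m-monad}); on the quantum side it is Theorem~\ref{thm:q-kegelspitze} for pre- and post-composition in $\QQ$, bilinearity of $\otimes$ on homsets, linearity of coproduct pairing, and Scott-continuity and linearity of $\beta$ from Theorem~\ref{thm:M-alg}---the last being essential for the ``$m\qq$'' context. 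Given the inductive hypothesis $\lrb{m} = \sum_i p_i \lrb{m_i'}$, one then pushes the convex sum through the semantics of $E$ (resp.\ $\mathbf E$) to obtain $\lrb{E[m]} = \sum_i p_i \lrb{E[m_i']}$. The disjoint-union rule for linking functions amounts to tensoring with a fixed $\QQ$-morphism $\mathrm{state}_{\ket{\psi_0}\bra{\psi_0}}^\ddagger$ on the auxiliary factor and thus preserves the decomposition by bilinearity of $\otimes$.

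The third family, and the expected obstacle, is the measurement rule of Figure~\ref{fig:quantum-information-operational}. I would fix the linking function so that the measured qubit is the $j$-th coordinate, write $\ket\psi = \alpha(\sum_i \alpha_i \ket{b_i}\otimes\ket 0 \otimes\ket{b'_i}) + \beta(\sum_i \beta_i \ket{b_i}\otimes\ket 1 \otimes\ket{b'_i})$, and compute $\mathrm{meas}^\ddagger$ applied to the $j$-th tensor factor of $\mathrm{state}_{\ket\psi\bra\psi}^\ddagger$. Because $\mathrm{meas}^*$ (Figure~\ref{fig:quantum-pictures}) extracts only the two diagonal entries of a $2\times 2$ matrix, the off-diagonal coherence contributions $\alpha\bar\beta$ and $\bar\alpha\beta$ are annihilated; and the coprojections $\mathbf{in}_1, \mathbf{in}_2\colon \mathbb C \to \mathbb C \oplus \mathbb C$ of the coproduct in $\QQ$ expose the two diagonal blocks, yielding $\lrb{\mathcal C} = |\alpha|^2 \lrb{\mathcal C_{\mathbf{ff}}} + |\beta|^2 \lrb{\mathcal C_{\mathbf{tt}}}$, matching the right-hand side. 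The fiddly part is bookkeeping the permutation $\sigma_\ell$ and the auxiliary qubits. Finally, the remaining switching rules are treated via the coherences already isolated in Section~\ref{sec:semantics}: $\mathrm{run}\,[\ket\psi,\varnothing,\vq]\probto{1}|\vq|$ and $\mathbf{init}\,|\vq|\probto{1}[\ket\psi,\varnothing,\vq]$ reduce to the identity $r(\lrb{\vq}) = \lrb{|\vq|}_*$ of Proposition~\ref{prop:observable-value-interpretation} (with the auxiliary state absorbed by $\mathrm{drop}_k^\ddagger$), while $\mathbf{lift}\,\vq$ reduces to the naturality/coherence of $\widehat{(-)}$ from Proposition~\ref{prop:lift-categorical}, ensuring that $\widehat{\qlrb{\rr}}$ precomposed at $\lrb{|\vq|}$ agrees with $\qlrb{\rr[|\vq|/x]}$ via part~(3) of Lemma~\ref{lem:term-substitution}.
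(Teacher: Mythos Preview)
The paper does not actually provide a proof of Theorem~\ref{thm:soundness}: the statement is given in \S\ref{sub:soundness-adequacy} and the text then moves directly to Strong Adequacy, treating soundness as a routine verification left to the reader. Your proposal is a correct and carefully organised version of exactly that routine verification---induction on the reduction derivation, with the Substitution Lemma~\ref{lem:term-substitution} handling the $\beta$-style redexes, Kegelspitze linearity of the semantic constructors (Theorem~\ref{thm:q-kegelspitze}, Theorem~\ref{thm:M-alg}, \cite[Lemma~38]{m-monad}) handling the evaluation-context rules, a direct computation in $\QQ$ for measurement, and the observable-value coherences (Proposition~\ref{prop:observable-value-interpretation}, Proposition~\ref{prop:lift-categorical}) for the switching rules. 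This is precisely the expected argument and there is nothing to correct.

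One minor remark: your description of the quantum structural rule as ``tensoring with a fixed $\mathrm{state}^\ddagger$ on the auxiliary factor'' is slightly off. The auxiliary qubits tracked by $\ell_0$ are not a separate tensor factor of the state $\ket\psi$---they may be entangled with the qubits used by $\qq$. What actually happens is that the semantics of $[\ket\psi,\ell\cupdot\ell_0,\mathbf E[\qq]]$ factors as a fixed $\QQ$-morphism (built from $\mathbf E$, $\sigma_{\ell_0}$, and the identity on the auxiliary qubit slots) post-composed with $(\lrb\qq_x\otimes\id)$; linearity of post-composition in $\QQ$ (Theorem~\ref{thm:q-kegelspitze}) then lets you push the convex sum through. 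The bookkeeping is exactly the ``fiddly part'' you already flagged for the measurement case, and the same permutation-and-tensor reasoning handles it.
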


In the above theorem, both sums have at most two summands. Our next theorem is much stronger, because it involves reductions spanning an arbitrary number of steps, and the convex sums can be countably infinite (these can be defined in any Kegelspitze by Definition \ref{def:convex-sums}).

\begin{theorem}[Strong Adequacy]
\label{thm:strong-adequacy}
  Let $\cdot \vdash m : P$ be a closed classical term and $\cdot \vdash \mathcal C: \mathbf A ; \qbit^k$ a closed quantum configuration. Then:
  \begin{align*}
  \lrb m = \sum_{v \in \text{Val}(m)} P(m \probto{}_* v) \lrb{v} \qquad \qquad
  \lrb{\mathcal C} = \sum_{\mathcal V \in \text{ValC}(\mathcal C)} P(\mathcal C \probto{}_* \mathcal V) \lrb{\mathcal V} .
  \end{align*}
\end{theorem}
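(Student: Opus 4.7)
The plan is to prove the two inequalities separately, obtaining $\geq$ directly from Soundness by iteration and obtaining $\leq$ via a logical relations argument in the style of \cite{quant-semantics, m-monad, icfp19}. I describe both directions simultaneously for terms and configurations, since the structural induction is mutual.

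For the $\geq$ direction, I would iterate Soundness (Theorem \ref{thm:soundness}). Applying it $n$ times and sorting summands by whether the reduction terminated, one obtains
\[ \lrb m = \sum_{v \in \Val(m)} P(m \probto{}_{\leq n} v) \lrb{v} \ +\ \sum_{\pi \in \TPaths_{\leq n}(m)} P(\pi)\, \lrb{\mathrm{end}(\pi)} , \]
where $\TPaths_{\leq n}(m)$ collects the length-$n$ paths that have not yet reached a value. Working inside the continuous Kegelspitze $\KL(1,\lrb P) = [1 \kto \lrb P]$, the second sum lies above $\bot$, so $\lrb m \geq \sum_v P(m \probto{}_{\leq n} v) \lrb v$. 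Since finite and countable convex sums are Scott-continuous in each argument (Definition \ref{def:convex-sums}) and the RHS is monotone in $n$, taking the supremum in $n$ gives $\lrb m \geq \sum_v P(m \probto{}_* v) \lrb v$. The configuration case is identical, now working inside the continuous Kegelspitze $\QQ(\mathbb C, \qlrb A \otimes \lrb{\qbit^k})$ provided by Theorem \ref{thm:q-kegelspitze}.

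For the $\leq$ direction, I would build a family of formal approximation / computability relations $\mathcal R_P \subseteq \lrb P \times \{\text{closed terms of type } P\}$ and $\mathcal R_{\mathbf A} \subseteq \QQ(\mathbb C,\qlrb A) \times \{\text{closed quantum configurations of type } \mathbf A\}$, with the defining clauses: $(a,m) \in \mathcal R_P$ iff $a \leq \sum_{v \in \Val(m)} P(m \probto{}_* v) \lrb v$ in $\KL(1,\lrb P)$, and likewise for configurations, with a value clause imposing that values are related to themselves componentwise by structural recursion on types. The fundamental lemma would state that for every well-typed term $x_1{:}P_1,\ldots,x_n{:}P_n \vdash m : R$ and related substituents $(\flrb{v_i}, v_i) \in \mathcal R_{P_i}$ (and similarly for the quantum side), the interpretation of $m[\vec v/\vec x]$ is related to $m[\vec v/\vec x]$. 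Specialising to closed terms gives the desired inequality, which combined with the $\geq$ direction yields equality.

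Three points deserve emphasis. \emph{Recursive types} are handled by the minimal-invariant / syntactic approximation technique: $\lrb{\mu X.P}$ is approached through finite unfoldings, and the $\omega$-cocontinuity and algebraic compactness furnished by Proposition \ref{prop:omega-functors} and the initial-algebra construction allow the relations at $\mu X.P$ to be defined as the ``limit'' of the relations at its finite approximants. The \emph{observable boundary terms} $\mathrm{run}\ \mathcal C$, $\mathbf{init}\ m$, and $\mathbf{let}\ x = \mathrm{lift}\ \qq\ \mathbf{in}\ \rr$ are handled using the isomorphisms $r$, $r^{-1}$ (Theorem \ref{thm:r-iso-categorical}) and $\widehat{(-)}$ (Proposition \ref{prop:lift-categorical}), together with the coherence statement in Proposition \ref{prop:observable-value-interpretation} relating $\lrb{\vq}$ to $\lrb{|\vq|}$. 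The main obstacle, and the genuinely novel ingredient, is the fundamental-lemma clause for \emph{quantum function application} $m\qq$, whose semantics $\beta(\lrb m) \circ \qlrb \qq$ combines classical probabilistic and quantum effects via the barycentre $\beta : \MM\QQ(A,B) \to \QQ(A,B)$ from Theorem \ref{thm:M-alg}. Here one must use the linearity and Scott-continuity of $\beta$ to pass from the probabilistic decomposition of $m$ into quantum-lambda values $\lambdaq(\vec\xx).\qq_i$ with asymptotic weights $p_i$ to the inequality $\beta(\lrb m) \circ \qlrb \qq \leq \sum_i p_i \, \qlrb{\qq_i[\vec v/\vec \xx]}$, and then invoke commutativity of $\MM$ to merge the resulting nested countable convex sums with the probabilistic reductions produced by the quantum side.
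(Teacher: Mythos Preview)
Your overall strategy is correct --- the $\geq$ direction does follow by iterating Soundness (this is exactly the paper's corollary to Theorem~\ref{thm:soundness}), and the $\leq$ direction is indeed a logical-relations argument culminating in a Fundamental Lemma --- but your proposed definition of the relations has a genuine gap. You define the term relation directly as the target inequality $a \leq \sum_v P(m \probto{}_* v)\lrb v$, supplemented by a separate ``structural value clause''. These two are in tension at higher types: for a value $\lambda x.M$ your term clause gives only $a \leq \lrb{\lambda x.M}$, which carries no behavioural information about application, so the closure-under-application step of the Fundamental Lemma does not go through. If instead you take the structural value clause as primary (so that $f$ is related to $\lambda x.M$ only when $f[v]$ is term-related to $M[V/x]$ for all value-related $(v,V)$), the definition becomes genuinely circular in the presence of mixed-variance recursive types, and this cannot be dissolved by the informal ``limit of finite approximants'' you sketch.

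The paper's fix is twofold. First, the term relation $\ol{\tleq_P}$ is \emph{not} the target inequality but the Scott-closure in $\KL(1,\lrb P)$ of the set of \emph{finite} convex sums $\sum_{\pi\in F} P(\pi)\,v_\pi$ over terminal reduction paths, where each $v_\pi$ is value-related (not equal) to the endpoint $V_\pi$; this closure inherits the behavioural content of the structural value relation while still implying the desired bound. Second, existence of the value relations under mixed-variance recursion is established by building, for each type $P$, a category $\RR(P)$ whose objects are candidate relations equipped with an approximating embedding into $\lrb P$, proving each $\RR(P)$ has an initial object and $\omega$-colimits, lifting the type formers $+,\times,\to$ to $\omega$-cocontinuous functors between these categories, and extracting the relation at $\mu X.P$ as a parameterised initial algebra --- this is substantially more work than your description suggests. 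A minor point: the closure lemma for quantum application $m\qq$ does not invoke commutativity of $\MM$; it uses Scott-continuity and linearity of the barycentre $\beta$ together with the fact that the value relation at $Q(\AAA,\BBB)$ is already phrased via the configuration relation $\btleq_\BBB^k$, so no ``merging of nested countable sums'' is needed.
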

\begin{proof}
See Appendix \ref{app:adequacy}.
\end{proof}

\begin{remark}
  As mentioned previously, the \emph{progress} property holds for quantum configurations that are closed (and not necessarily total). The above theorem indeed holds for precisely this class of configurations. Of course, it also holds for closed total configurations as a special case.
\end{remark}

\begin{corollary}[Adequacy]
\label{cor:adequacy}
Let $\cdot \vdash m : 1$ be a term and $\cdot \vdash \mathcal C: \mathbf I$ a quantum configuration. Then:
  \[ \lrb m_*(\{*\}) = \mathrm{Halt}(m) \quad \text{ and } \quad \lrb{\mathcal C}(1) = \mathrm{Halt}(\mathcal C) . \]
\end{corollary}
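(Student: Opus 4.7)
The plan is to reduce Corollary \ref{cor:adequacy} to two pointwise instantiations of Theorem \ref{thm:strong-adequacy} at the unit types $1$ and $\mathbf I$, where the interpretation of every value and value configuration collapses to a canonical ``total mass one'' morphism. With that observation in hand, evaluating both sides of the Strong Adequacy identity at the appropriate point converts the abstract convex sum of interpretations into the numerical sum defining $\Halt$.

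For the classical identity, I would first note that the only closed value of type $1$ is $()$, so $\Val(m) \subseteq \{()\}$. Unwinding the clause for $\lrb{\Phi \vdash () \colon 1}$ in Figure \ref{fig:classical-term-semantics} gives $\lrb{()} = \JJ(1_1) \colon 1 \kto 1$, which as a map in $\DCPO$ is $\eta_1 \colon 1 \to \MM 1$; thus $\lrb{()}_* = \delta_*$ and $\lrb{()}_*(\{*\}) = 1$. The evaluation functional $\mathrm{ev}_{\{*\}} \colon \MM 1 \to [0,1]$, $\nu \mapsto \nu(\{*\})$, is Scott-continuous (directly from the definition of the stochastic order) and linear with respect to the Kegelspitze structure of $\MM 1$ (Example \ref{exa:MDiskegel}, by the pointwise definition of convex sums of valuations). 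Since a countable convex sum is, by Definition \ref{def:convex-sums}, the directed supremum of its finite partial convex sums, $\mathrm{ev}_{\{*\}}$ preserves countable convex sums. Applying it to the classical equation of Theorem \ref{thm:strong-adequacy} immediately yields $\lrb{m}_*(\{*\}) = \sum_{v \in \Val(m)} P(m \probto{}_* v) \cdot 1 = \Halt(m)$.

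For the quantum identity, an analogous argument works once the value configurations are identified. Any total value configuration $\cdot \vdash \mathcal V \colon \mathbf I$ must have the form $\mathcal V = [\ket\psi, \varnothing, *]$ with $\ket\psi$ a normalized vector in $\mathbb C^{2^0} = \mathbb C$, so the density matrix $\ket\psi\bra\psi \in \mathrm{M}_1(\mathbb C) = \mathbb C$ equals $1$. Unwinding Figure \ref{fig:quantum-configuration} (noting $\sigma_\varnothing = \id$ and $\qlrb{*} = \id_{\mathbb C}$) gives $\lrb{\mathcal V} = \mathrm{state}_1^\ddagger = \id_{\mathbb C}$ in $\QQ$, so $\lrb{\mathcal V}(1) = 1$. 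The evaluation map $\QQ(\mathbb C, \mathbb C) \to [0,1]$, $\varphi \mapsto \varphi(1)$, is Scott-continuous and linear with respect to the continuous Kegelspitze structure supplied by Theorem \ref{thm:q-kegelspitze}, so it also preserves countable convex sums. The quantum half of Theorem \ref{thm:strong-adequacy} then delivers $\lrb{\mathcal C}(1) = \sum_{\mathcal V \in \ValC(\mathcal C)} P(\mathcal C \probto{}_* \mathcal V) = \Halt(\mathcal C)$.

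I do not anticipate a substantive obstacle. The only step deserving explicit justification is the interchange of the evaluation functionals with the potentially infinite convex sums arising in Strong Adequacy, and this follows immediately from the combination of Scott-continuity, finite linearity, and Definition \ref{def:convex-sums}; everything else is routine unfolding of the semantics of values of the unit types.
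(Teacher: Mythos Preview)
Your proposal is correct and follows essentially the same approach as the paper: identify the unique value (respectively, value configuration) at the unit type, observe that its interpretation evaluates to $1$ at the relevant point, and then read off the result from Theorem~\ref{thm:strong-adequacy} and the definition of $\Halt$. Your version is somewhat more explicit than the paper's in justifying why pointwise evaluation commutes with the countable convex sums (via Scott-continuity and linearity of the evaluation functionals), and in noting that value configurations of type $\mathbf I$ may carry an arbitrary phase in $\ket\psi$ which nonetheless yields the same density matrix; the paper simply asserts these steps follow ``immediately.''
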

\begin{proof}
  If $v : 1$ is a value, then $v = ()$ and we have $\lrb{()}_*(\{*\}) = 1 \in \mathbb R$.
  Similarly, if $\cdot \vdash \mathcal V : \mathbf I$ is a value configuration, then $\mathcal V = [1, \varnothing, *]$ and it follows $\lrb{\mathcal V}(1) = 1 \in \mathbb R$. The proof then follows immediately by Theorem \ref{thm:strong-adequacy} and by definition of Halt in \eqref{eq:halt}.
\end{proof}

\section{Conclusion and Related Work}
\label{sec:related}

The work closest to ours is QWIRE \cite{qwire} and EWire \cite{ewire}. They are
related languages that have a classical and non-linear host language together
with a separate small quantum circuit language. However, neither language is
suitable for variational quantum programming, because quantum function
application is restricted to pure and deterministic (i.e. non-probabilistic)
classical programs.  Furthermore, both languages have only limited support for
recursion and repeat-until-success patterns are not expressible within the host
languages.  $\VQPL$ does not have such restrictions. Furthermore, neither of
these papers proves an adequacy result and the only soundness statements that
are proven are with respect to strongly-normalising fragments of the languages.
The focus in QWIRE and EWire is on quantum \emph{circuits} and not on effectful
quantum/probabilistic programming, which is the focus of our paper.

Other related work includes adequate and even fully-abstract semantics for the
quantum lambda calculus \cite{quant-semantics,quantum-games1,quantum-games2}.
In this version of the quantum lambda calculus, classical information is
represented by types of the form $!(A \multimap B)$, but arbitrary types $!A$
are not allowed. This makes it difficult to understand the connections to
classical probabilistic programming.  In another version of the quantum lambda
calculus \cite{quantum-von-neumann}, the authors allow types of the form $!A$,
but their model does not support recursion. In all of these papers, there is no
denotational analysis that relates the models of the quantum lambda calculus to
models of classical probabilistic programming. Indeed, this is the main focus
of our paper and we are the first to present such an analysis.

To conclude, we described a mixed linear/non-linear quantum/probabilistic
programming language which is suitable for programming hybrid quantum-classical
algorithms. Our language, $\VQPL$, is equipped with a type system and a
type-safe operational semantics that makes it suitable for variational quantum
programming (\secref{sec:syntax}). We showed how to interpret the (induced)
classical probabilistic effects using a commutative probabilistic monad on
dcpo's (\secref{sec:probabilistic}). We then showed how to interpret quantum
resources and effects in the category of hereditarily atomic von Neumann
algebras, which we proved is enriched over continuous domains
(\secref{sec:operator-algebras}). The relationship between quantum and
classical probabilistic effects is modelled via novel semantic methods
(\secref{sec:relationship}). Most notably, we use the barycentre maps of our
model that are well-behaved under the strong sense of enrichment we established
to combine probabilistic and quantum effects (\secref{sub:quantum-probabilistic-combination}).
Finally, we organised all the relevant mathematical structure in a
categorical model (\secref{sec:categorical-model}), and we described a sound
and strongly adequate denotational semantics for $\VQPL$ within that model
(\secref{sec:semantics}).

A natural question to ask is how to extend $\VQPL$ with higher-order quantum
lambda abstractions and to find a suitable model.  We have shown that our model
supports first-order quantum lambda abstractions, but we do not believe it
supports higher-order quantum ones. The problem is that there is no known order
on the category $\QQ_*$ that makes the adjunction
$\stikz{quantum-adjunction.tikz}$ $\dcpo$-enriched. Even though $\QQs$ is
monoidal closed, the induced Kleisli exponential is not $\dcpo$-enriched, and
so it does not behave well under (type) recursion. On the other hand, the
category $\QQ$ is enriched in a very strong sense (over domains), which is
crucial for establishing the strong correspondence between classical
probabilistic and quantum effects that we presented here. Resolving this
conundrum is left for future work.



\bibliography{refs}


\begin{thebibliography}{42}


\ifx \showCODEN    \undefined \def \showCODEN     #1{\unskip}     \fi
\ifx \showDOI      \undefined \def \showDOI       #1{#1}\fi
\ifx \showISBNx    \undefined \def \showISBNx     #1{\unskip}     \fi
\ifx \showISBNxiii \undefined \def \showISBNxiii  #1{\unskip}     \fi
\ifx \showISSN     \undefined \def \showISSN      #1{\unskip}     \fi
\ifx \showLCCN     \undefined \def \showLCCN      #1{\unskip}     \fi
\ifx \shownote     \undefined \def \shownote      #1{#1}          \fi
\ifx \showarticletitle \undefined \def \showarticletitle #1{#1}   \fi
\ifx \showURL      \undefined \def \showURL       {\relax}        \fi
\providecommand\bibfield[2]{#2}
\providecommand\bibinfo[2]{#2}
\providecommand\natexlab[1]{#1}
\providecommand\showeprint[2][]{arXiv:#2}

\bibitem[\protect\citeauthoryear{Abadi and Fiore}{Abadi and Fiore}{1996}]%
        {fpc-syntax}
\bibfield{author}{\bibinfo{person}{M. Abadi} {and} \bibinfo{person}{M.~P.
  Fiore}.} \bibinfo{year}{1996}\natexlab{}.
\newblock \showarticletitle{Syntactic Considerations on Recursive Types}. In
  \bibinfo{booktitle}{\emph{Proceedings 11th Annual IEEE Symposium on Logic in
  Computer Science}}. \bibinfo{pages}{242--252}.
\newblock
\showISSN{1043-6871}
\urldef\tempurl%
\url{https://doi.org/10.1109/LICS.1996.561324}
\showDOI{\tempurl}


\bibitem[\protect\citeauthoryear{Abramsky and Jung}{Abramsky and Jung}{1994}]%
        {abramskyjung:domaintheory}
\bibfield{author}{\bibinfo{person}{S. Abramsky} {and} \bibinfo{person}{A.
  Jung}.} \bibinfo{year}{1994}\natexlab{}.
\newblock \showarticletitle{{Domain Theory}}.
\newblock In \bibinfo{booktitle}{\emph{{Handbook of Logic in Computer Science
  (Vol. 3)}}}. \bibinfo{publisher}{Oxford University Press},
  \bibinfo{address}{Oxford, UK}, \bibinfo{pages}{1--168}.
\newblock
\showISBNx{0-19-853762-X}
\urldef\tempurl%
\url{http://dl.acm.org/citation.cfm?id=218742.218744}
\showURL{%
\tempurl}


\bibitem[\protect\citeauthoryear{Benton}{Benton}{1995}]%
        {benton-small}
\bibfield{author}{\bibinfo{person}{P.N. Benton}.}
  \bibinfo{year}{1995}\natexlab{}.
\newblock \showarticletitle{A mixed linear and non-linear logic: Proofs, terms
  and models}. In \bibinfo{booktitle}{\emph{Computer Science Logic: 8th
  Workshop, CSL '94, Selected Papaers}}.
\newblock
\urldef\tempurl%
\url{https://doi.org/10.1007/BFb0022251}
\showDOI{\tempurl}


\bibitem[\protect\citeauthoryear{Benton and Wadler}{Benton and Wadler}{1996}]%
        {benton-wadler}
\bibfield{author}{\bibinfo{person}{P.~N. Benton} {and} \bibinfo{person}{P.
  Wadler}.} \bibinfo{year}{1996}\natexlab{}.
\newblock \showarticletitle{Linear Logic, Monads and the Lambda Calculus}. In
  \bibinfo{booktitle}{\emph{{LICS} 1996}}.
\newblock


\bibitem[\protect\citeauthoryear{Blackadar}{Blackadar}{2006}]%
        {Blackadar}
\bibfield{author}{\bibinfo{person}{B. Blackadar}.}
  \bibinfo{year}{2006}\natexlab{}.
\newblock \bibinfo{booktitle}{\emph{{Operator Algebras: Theory of C*-algebras
  and von Neumann algebras}}}.
\newblock \bibinfo{publisher}{Springer-Verlag}.
\newblock


\bibitem[\protect\citeauthoryear{Cho}{Cho}{2016}]%
        {cho:semantics}
\bibfield{author}{\bibinfo{person}{K. Cho}.} \bibinfo{year}{2016}\natexlab{}.
\newblock \showarticletitle{Semantics for a Quantum Programming Language by
  Operator Algebras}.
\newblock \bibinfo{journal}{\emph{New Generation Comput.}}
  \bibinfo{volume}{34}, \bibinfo{number}{1-2} (\bibinfo{year}{2016}),
  \bibinfo{pages}{25--68}.
\newblock
\urldef\tempurl%
\url{https://doi.org/10.1007/s00354-016-0204-3}
\showDOI{\tempurl}


\bibitem[\protect\citeauthoryear{Cho and Westerbaan}{Cho and
  Westerbaan}{2016}]%
        {quantum-von-neumann}
\bibfield{author}{\bibinfo{person}{Kenta Cho} {and} \bibinfo{person}{Abraham
  Westerbaan}.} \bibinfo{year}{2016}\natexlab{}.
\newblock \showarticletitle{Von Neumann Algebras form a Model for the Quantum
  Lambda Calculus}.
\newblock \bibinfo{journal}{\emph{CoRR}}  \bibinfo{volume}{abs/1603.02133}
  (\bibinfo{year}{2016}).
\newblock
\showeprint[arxiv]{1603.02133}
\urldef\tempurl%
\url{http://arxiv.org/abs/1603.02133}
\showURL{%
\tempurl}


\bibitem[\protect\citeauthoryear{Clairambault and de~Visme}{Clairambault and
  de~Visme}{2020}]%
        {quantum-games2}
\bibfield{author}{\bibinfo{person}{Pierre Clairambault} {and}
  \bibinfo{person}{Marc de Visme}.} \bibinfo{year}{2020}\natexlab{}.
\newblock \showarticletitle{Full abstraction for the quantum lambda-calculus}.
\newblock \bibinfo{journal}{\emph{Proc. {ACM} Program. Lang.}}
  \bibinfo{volume}{4}, \bibinfo{number}{{POPL}} (\bibinfo{year}{2020}),
  \bibinfo{pages}{63:1--63:28}.
\newblock
\urldef\tempurl%
\url{https://doi.org/10.1145/3371131}
\showDOI{\tempurl}


\bibitem[\protect\citeauthoryear{Clairambault, de~Visme, and
  Winskel}{Clairambault et~al\mbox{.}}{2019}]%
        {quantum-games1}
\bibfield{author}{\bibinfo{person}{Pierre Clairambault}, \bibinfo{person}{Marc
  de Visme}, {and} \bibinfo{person}{Glynn Winskel}.}
  \bibinfo{year}{2019}\natexlab{}.
\newblock \showarticletitle{Game semantics for quantum programming}.
\newblock \bibinfo{journal}{\emph{Proc. {ACM} Program. Lang.}}
  \bibinfo{volume}{3}, \bibinfo{number}{{POPL}} (\bibinfo{year}{2019}),
  \bibinfo{pages}{32:1--32:29}.
\newblock
\urldef\tempurl%
\url{https://doi.org/10.1145/3290345}
\showDOI{\tempurl}


\bibitem[\protect\citeauthoryear{Fiore and Plotkin}{Fiore and Plotkin}{1994}]%
        {fiore-plotkin}
\bibfield{author}{\bibinfo{person}{Marcelo Fiore} {and} \bibinfo{person}{Gordon
  Plotkin}.} \bibinfo{year}{1994}\natexlab{}.
\newblock \showarticletitle{An Axiomatization of Computationally Adequate
  Domain Theoretic Models of {FPC}}. In \bibinfo{booktitle}{\emph{{LICS}}}.
  \bibinfo{publisher}{{IEEE} Computer Society}, \bibinfo{pages}{92--102}.
\newblock


\bibitem[\protect\citeauthoryear{Fiore}{Fiore}{1994}]%
        {fiore-thesis}
\bibfield{author}{\bibinfo{person}{M.~P. Fiore}.}
  \bibinfo{year}{1994}\natexlab{}.
\newblock \emph{\bibinfo{title}{Axiomatic domain theory in categories of
  partial maps}}.
\newblock \bibinfo{thesistype}{Ph.D. Dissertation}. \bibinfo{school}{University
  of Edinburgh, {UK}}.
\newblock


\bibitem[\protect\citeauthoryear{Furber}{Furber}{2019}]%
        {furber-continuousdcpos}
\bibfield{author}{\bibinfo{person}{R. Furber}.}
  \bibinfo{year}{2019}\natexlab{}.
\newblock \showarticletitle{Continuous {D}cpos in {Q}uantum {C}omputing}.
\newblock \bibinfo{journal}{\emph{preprint}} (\bibinfo{year}{2019}).
\newblock
\urldef\tempurl%
\url{http://people.cs.aau.dk/~furber/papers/contawconf.pdf}
\showURL{%
\tempurl}


\bibitem[\protect\citeauthoryear{Gierz, Hofmann, Keimel, Lawson, Mislove, and
  Scott}{Gierz et~al\mbox{.}}{2003}]%
        {gierzetal:domains}
\bibfield{author}{\bibinfo{person}{G. Gierz}, \bibinfo{person}{K.~H. Hofmann},
  \bibinfo{person}{K. Keimel}, \bibinfo{person}{J.~D. Lawson},
  \bibinfo{person}{M.~W. Mislove}, {and} \bibinfo{person}{D.~S. Scott}.}
  \bibinfo{year}{2003}\natexlab{}.
\newblock \bibinfo{booktitle}{\emph{{C}ontinuous {L}attices and {D}omains}}.
\newblock \bibinfo{publisher}{Cambridge University Press}.
\newblock


\bibitem[\protect\citeauthoryear{Girard}{Girard}{1987}]%
        {girard}
\bibfield{author}{\bibinfo{person}{J.-Y. Girard}.}
  \bibinfo{year}{1987}\natexlab{}.
\newblock \showarticletitle{Linear Logic}.
\newblock \bibinfo{journal}{\emph{Theoretical Computer Science}}
  \bibinfo{volume}{50} (\bibinfo{year}{1987}), \bibinfo{pages}{1 -- 101}.
\newblock


\bibitem[\protect\citeauthoryear{Harper}{Harper}{2016}]%
        {harper-book}
\bibfield{author}{\bibinfo{person}{Robert Harper}.}
  \bibinfo{year}{2016}\natexlab{}.
\newblock \bibinfo{booktitle}{\emph{{Practical Foundations for Programming
  Languages (2nd. Ed.)}}}.
\newblock \bibinfo{publisher}{Cambridge University Press}.
\newblock
\showISBNx{9781107150300}
\urldef\tempurl%
\url{https://www.cs.cmu.edu/\%7Erwh/pfpl/index.html}
\showURL{%
\tempurl}


\bibitem[\protect\citeauthoryear{Jacobs}{Jacobs}{2016}]%
        {jacobs-coalgebra}
\bibfield{author}{\bibinfo{person}{Bart Jacobs}.}
  \bibinfo{year}{2016}\natexlab{}.
\newblock \bibinfo{booktitle}{\emph{{Introduction to Coalgebra: Towards
  Mathematics of States and Observation}}}. \bibinfo{series}{Cambridge Tracts
  in Theoretical Computer Science}, Vol.~\bibinfo{volume}{59}.
\newblock \bibinfo{publisher}{Cambridge University Press}.
\newblock
\showISBNx{9781316823187}
\urldef\tempurl%
\url{https://doi.org/10.1017/CBO9781316823187}
\showDOI{\tempurl}


\bibitem[\protect\citeauthoryear{Jia, Lindenhovius, Mislove, and Zamdzhiev}{Jia
  et~al\mbox{.}}{2021}]%
        {m-monad}
\bibfield{author}{\bibinfo{person}{Xiaodong Jia}, \bibinfo{person}{Bert
  Lindenhovius}, \bibinfo{person}{Michael Mislove}, {and}
  \bibinfo{person}{Vladimir Zamdzhiev}.} \bibinfo{year}{2021}\natexlab{}.
\newblock \showarticletitle{Commutative Monads for Probabilistic Programming
  Languages}. In \bibinfo{booktitle}{\emph{Logic in Computer Science ({LICS
  2021})}}.
\newblock
\showeprint[arxiv]{2102.00510}~[cs.LO]


\bibitem[\protect\citeauthoryear{Jones}{Jones}{1990}]%
        {jones90}
\bibfield{author}{\bibinfo{person}{Claire Jones}.}
  \bibinfo{year}{1990}\natexlab{}.
\newblock \emph{\bibinfo{title}{{Probabilistic Non-determinism}}}.
\newblock \bibinfo{thesistype}{Ph.D. Dissertation}. \bibinfo{school}{University
  of Edinburgh, {UK}}.
\newblock
\urldef\tempurl%
\url{http://hdl.handle.net/1842/413}
\showURL{%
\tempurl}


\bibitem[\protect\citeauthoryear{Jones and Plotkin}{Jones and Plotkin}{1989}]%
        {JonesP89}
\bibfield{author}{\bibinfo{person}{C. Jones} {and} \bibinfo{person}{Gordon~D.
  Plotkin}.} \bibinfo{year}{1989}\natexlab{}.
\newblock \showarticletitle{A Probabilistic Powerdomain of Evaluations}. In
  \bibinfo{booktitle}{\emph{Proceedings of the Fourth Annual Symposium on Logic
  in Computer Science {(LICS} '89), Pacific Grove, California, USA, June 5-8,
  1989}}. \bibinfo{publisher}{{IEEE} Computer Society},
  \bibinfo{pages}{186--195}.
\newblock
\urldef\tempurl%
\url{https://doi.org/10.1109/LICS.1989.39173}
\showDOI{\tempurl}


\bibitem[\protect\citeauthoryear{Kadison and Ringrose}{Kadison and
  Ringrose}{1997}]%
        {kadisonringrose:oa1}
\bibfield{author}{\bibinfo{person}{R.V. Kadison} {and} \bibinfo{person}{J.R.
  Ringrose}.} \bibinfo{year}{1997}\natexlab{}.
\newblock \bibinfo{booktitle}{\emph{Fundamentals of the Theory of Operator
  Algebra, Volume {I}: Elementary Theory}}.
\newblock \bibinfo{publisher}{American Mathematical Society}.
\newblock


\bibitem[\protect\citeauthoryear{Keimel and Plotkin}{Keimel and
  Plotkin}{2017}]%
        {keimelplotkin17}
\bibfield{author}{\bibinfo{person}{Klaus Keimel} {and}
  \bibinfo{person}{Gordon~D. Plotkin}.} \bibinfo{year}{2017}\natexlab{}.
\newblock \showarticletitle{{Mixed powerdomains for probability and
  nondeterminism}}.
\newblock \bibinfo{journal}{\emph{{Logical Methods in Computer Science}}}
  \bibinfo{volume}{{13, Issue 1}} (\bibinfo{date}{Jan.} \bibinfo{year}{2017}).
\newblock
\urldef\tempurl%
\url{https://doi.org/10.23638/LMCS-13(1:2)2017}
\showDOI{\tempurl}


\bibitem[\protect\citeauthoryear{Kornell}{Kornell}{2017}]%
        {Kornell17}
\bibfield{author}{\bibinfo{person}{A. Kornell}.}
  \bibinfo{year}{2017}\natexlab{}.
\newblock \showarticletitle{{Quantum Collections}}.
\newblock \bibinfo{journal}{\emph{Int. J. Math.}} \bibinfo{volume}{28},
  \bibinfo{number}{12} (\bibinfo{year}{2017}), \bibinfo{pages}{1750085}.
\newblock
\urldef\tempurl%
\url{https://doi.org/10.1142/S0129167X17500859}
\showURL{%
\tempurl}


\bibitem[\protect\citeauthoryear{Kornell}{Kornell}{2020}]%
        {Kornell18}
\bibfield{author}{\bibinfo{person}{A. Kornell}.}
  \bibinfo{year}{2020}\natexlab{}.
\newblock \showarticletitle{{Quantum Sets}}.
\newblock \bibinfo{journal}{\emph{J. Math. Phys.}}  \bibinfo{volume}{61}
  (\bibinfo{year}{2020}), \bibinfo{pages}{102202}.
\newblock
\urldef\tempurl%
\url{https://doi.org/10.1063/1.5054128}
\showURL{%
\tempurl}


\bibitem[\protect\citeauthoryear{Landsman}{Landsman}{2017}]%
        {landsman}
\bibfield{author}{\bibinfo{person}{Klaas Landsman}.}
  \bibinfo{year}{2017}\natexlab{}.
\newblock \bibinfo{booktitle}{\emph{{Foundations of Quantum Theory - From
  Classical Concepts to Operator Algebras}}}.
\newblock \bibinfo{publisher}{Springer Open}.
\newblock


\bibitem[\protect\citeauthoryear{Lindenhovius, Mislove, and
  Zamdzhiev}{Lindenhovius et~al\mbox{.}}{2019}]%
        {icfp19}
\bibfield{author}{\bibinfo{person}{Bert Lindenhovius}, \bibinfo{person}{Michael
  Mislove}, {and} \bibinfo{person}{Vladimir Zamdzhiev}.}
  \bibinfo{year}{2019}\natexlab{}.
\newblock \showarticletitle{{Mixed Linear and Non-linear Recursive Types}}.
\newblock \bibinfo{journal}{\emph{Proc. ACM Program. Lang.}}
  \bibinfo{volume}{3}, \bibinfo{number}{ICFP}, Article \bibinfo{articleno}{111}
  (\bibinfo{date}{Aug.} \bibinfo{year}{2019}), \bibinfo{numpages}{29}~pages.
\newblock
\showISSN{2475-1421}
\urldef\tempurl%
\url{https://doi.org/10.1145/3341715}
\showDOI{\tempurl}


\bibitem[\protect\citeauthoryear{Lindenhovius, Mislove, and
  Zamdzhiev}{Lindenhovius et~al\mbox{.}}{2021}]%
        {lnl-fpc-lmcs}
\bibfield{author}{\bibinfo{person}{Bert Lindenhovius}, \bibinfo{person}{Michael
  Mislove}, {and} \bibinfo{person}{Vladimir Zamdzhiev}.}
  \bibinfo{year}{2021}\natexlab{}.
\newblock \showarticletitle{{LNL-FPC: The Linear/Non-linear Fixpoint
  Calculus}}.
\newblock \bibinfo{journal}{\emph{{Logical Methods in Computer Science}}}
  \bibinfo{volume}{{Volume 17, Issue 2}} (\bibinfo{date}{April}
  \bibinfo{year}{2021}).
\newblock
\urldef\tempurl%
\url{https://lmcs.episciences.org/7390}
\showURL{%
\tempurl}


\bibitem[\protect\citeauthoryear{Mac~Lane}{Mac~Lane}{1998}]%
        {maclane}
\bibfield{author}{\bibinfo{person}{Saunders Mac~Lane}.}
  \bibinfo{year}{1998}\natexlab{}.
\newblock \bibinfo{booktitle}{\emph{{Categories for the Working Mathematician
  (2nd ed.)}}}.
\newblock \bibinfo{publisher}{Springer}.
\newblock


\bibitem[\protect\citeauthoryear{McClean, Romero, Babbush, and
  Aspuru-Guzik}{McClean et~al\mbox{.}}{2016}]%
        {vqa2}
\bibfield{author}{\bibinfo{person}{Jarrod~R McClean}, \bibinfo{person}{Jonathan
  Romero}, \bibinfo{person}{Ryan Babbush}, {and} \bibinfo{person}{Al{\'a}n
  Aspuru-Guzik}.} \bibinfo{year}{2016}\natexlab{}.
\newblock \showarticletitle{The theory of variational hybrid quantum-classical
  algorithms}.
\newblock \bibinfo{journal}{\emph{New Journal of Physics}}
  \bibinfo{volume}{18}, \bibinfo{number}{2} (\bibinfo{year}{2016}),
  \bibinfo{pages}{023023}.
\newblock


\bibitem[\protect\citeauthoryear{Moggi}{Moggi}{1991}]%
        {moggi-monads}
\bibfield{author}{\bibinfo{person}{Eugenio Moggi}.}
  \bibinfo{year}{1991}\natexlab{}.
\newblock \showarticletitle{{Notions of Computation and Monads}}.
\newblock \bibinfo{journal}{\emph{Inf. Comput.}} \bibinfo{volume}{93},
  \bibinfo{number}{1} (\bibinfo{year}{1991}), \bibinfo{pages}{55--92}.
\newblock
\urldef\tempurl%
\url{https://doi.org/10.1016/0890-5401(91)90052-4}
\showDOI{\tempurl}


\bibitem[\protect\citeauthoryear{Pagani, Selinger, and Valiron}{Pagani
  et~al\mbox{.}}{2014}]%
        {quant-semantics}
\bibfield{author}{\bibinfo{person}{Michele Pagani}, \bibinfo{person}{Peter
  Selinger}, {and} \bibinfo{person}{Beno{\^{\i}}t Valiron}.}
  \bibinfo{year}{2014}\natexlab{}.
\newblock \showarticletitle{Applying quantitative semantics to higher-order
  quantum computing}. In \bibinfo{booktitle}{\emph{The 41st Annual {ACM}
  {SIGPLAN-SIGACT} Symposium on Principles of Programming Languages, {POPL}
  '14, San Diego, CA, USA, January 20-21, 2014}},
  \bibfield{editor}{\bibinfo{person}{Suresh Jagannathan} {and}
  \bibinfo{person}{Peter Sewell}} (Eds.). \bibinfo{publisher}{{ACM}},
  \bibinfo{pages}{647--658}.
\newblock
\urldef\tempurl%
\url{https://doi.org/10.1145/2535838.2535879}
\showDOI{\tempurl}


\bibitem[\protect\citeauthoryear{Paykin, Rand, and Zdancewic}{Paykin
  et~al\mbox{.}}{2017}]%
        {qwire}
\bibfield{author}{\bibinfo{person}{J. Paykin}, \bibinfo{person}{R. Rand}, {and}
  \bibinfo{person}{S. Zdancewic}.} \bibinfo{year}{2017}\natexlab{}.
\newblock \showarticletitle{{QWIRE:} a core language for quantum circuits}. In
  \bibinfo{booktitle}{\emph{{POPL}}}. \bibinfo{publisher}{{ACM}},
  \bibinfo{pages}{846--858}.
\newblock


\bibitem[\protect\citeauthoryear{P{\'{e}}choux, Perdrix, Rennela, and
  Zamdzhiev}{P{\'{e}}choux et~al\mbox{.}}{2020}]%
        {qpl-fossacs}
\bibfield{author}{\bibinfo{person}{Romain P{\'{e}}choux},
  \bibinfo{person}{Simon Perdrix}, \bibinfo{person}{Mathys Rennela}, {and}
  \bibinfo{person}{Vladimir Zamdzhiev}.} \bibinfo{year}{2020}\natexlab{}.
\newblock \showarticletitle{Quantum Programming with Inductive Datatypes:
  Causality and Affine Type Theory}. In \bibinfo{booktitle}{\emph{Foundations
  of Software Science and Computation Structures - 23rd International
  Conference, {FOSSACS} 2020}} \emph{(\bibinfo{series}{Lecture Notes in
  Computer Science}, Vol.~\bibinfo{volume}{12077})}.
  \bibinfo{publisher}{Springer}, \bibinfo{pages}{562--581}.
\newblock
\urldef\tempurl%
\url{https://doi.org/10.1007/978-3-030-45231-5\_29}
\showDOI{\tempurl}


\bibitem[\protect\citeauthoryear{Peruzzo, McClean, Shadbolt, Yung, Zhou, Love,
  Aspuru-Guzik, and O’brien}{Peruzzo et~al\mbox{.}}{2014}]%
        {vqa1}
\bibfield{author}{\bibinfo{person}{Alberto Peruzzo}, \bibinfo{person}{Jarrod
  McClean}, \bibinfo{person}{Peter Shadbolt}, \bibinfo{person}{Man-Hong Yung},
  \bibinfo{person}{Xiao-Qi Zhou}, \bibinfo{person}{Peter~J Love},
  \bibinfo{person}{Al{\'a}n Aspuru-Guzik}, {and} \bibinfo{person}{Jeremy~L
  O’brien}.} \bibinfo{year}{2014}\natexlab{}.
\newblock \showarticletitle{A variational eigenvalue solver on a photonic
  quantum processor}.
\newblock \bibinfo{journal}{\emph{Nature communications}} \bibinfo{volume}{5},
  \bibinfo{number}{1} (\bibinfo{year}{2014}), \bibinfo{pages}{1--7}.
\newblock


\bibitem[\protect\citeauthoryear{Power and Robinson}{Power and
  Robinson}{1997}]%
        {premonoidal}
\bibfield{author}{\bibinfo{person}{John Power} {and} \bibinfo{person}{Edmund
  Robinson}.} \bibinfo{year}{1997}\natexlab{}.
\newblock \showarticletitle{{Premonoidal Categories and Notions of
  Computation}}.
\newblock \bibinfo{journal}{\emph{Math. Struct. Comput. Sci.}}
  \bibinfo{volume}{7}, \bibinfo{number}{5} (\bibinfo{year}{1997}),
  \bibinfo{pages}{453--468}.
\newblock
\urldef\tempurl%
\url{https://doi.org/10.1017/S0960129597002375}
\showDOI{\tempurl}


\bibitem[\protect\citeauthoryear{Rennela and Staton}{Rennela and
  Staton}{2020}]%
        {ewire}
\bibfield{author}{\bibinfo{person}{Mathys Rennela} {and} \bibinfo{person}{Sam
  Staton}.} \bibinfo{year}{2020}\natexlab{}.
\newblock \showarticletitle{Classical Control, Quantum Circuits and Linear
  Logic in Enriched Category Theory}.
\newblock \bibinfo{journal}{\emph{Log. Methods Comput. Sci.}}
  \bibinfo{volume}{16}, \bibinfo{number}{1} (\bibinfo{year}{2020}).
\newblock
\urldef\tempurl%
\url{https://doi.org/10.23638/LMCS-16(1:30)2020}
\showDOI{\tempurl}


\bibitem[\protect\citeauthoryear{Selinger}{Selinger}{2004a}]%
        {qpl}
\bibfield{author}{\bibinfo{person}{P. Selinger}.}
  \bibinfo{year}{2004}\natexlab{a}.
\newblock \showarticletitle{Towards a quantum programming language}.
\newblock \bibinfo{journal}{\emph{Mathematical Structures in Computer Science}}
  \bibinfo{volume}{14}, \bibinfo{number}{4} (\bibinfo{year}{2004}),
  \bibinfo{pages}{527--586}.
\newblock


\bibitem[\protect\citeauthoryear{Selinger}{Selinger}{2004b}]%
        {selinger:higher-order}
\bibfield{author}{\bibinfo{person}{Peter Selinger}.}
  \bibinfo{year}{2004}\natexlab{b}.
\newblock \showarticletitle{Towards a semantics for higher-order quantum
  computation}.
\newblock \bibinfo{journal}{\emph{Proceedings of the 2nd International Workshop
  on Quantum Programming Languages}}, \bibinfo{pages}{127--143}.
\newblock


\bibitem[\protect\citeauthoryear{Smyth and Plotkin}{Smyth and Plotkin}{1982}]%
        {smyth-plotkin:domain-equations}
\bibfield{author}{\bibinfo{person}{M.B. Smyth} {and} \bibinfo{person}{G.D.
  Plotkin}.} \bibinfo{year}{1982}\natexlab{}.
\newblock \showarticletitle{{The Category-theoretic Solution of Recursive
  Domain Equations}}.
\newblock \bibinfo{journal}{\emph{Siam J. Comput.}} (\bibinfo{year}{1982}).
\newblock


\bibitem[\protect\citeauthoryear{St{\o}rmer}{St{\o}rmer}{2013}]%
        {stormer}
\bibfield{author}{\bibinfo{person}{E. St{\o}rmer}.}
  \bibinfo{year}{2013}\natexlab{}.
\newblock \bibinfo{booktitle}{\emph{Positive Linear Maps of Operator
  Algebras}}.
\newblock \bibinfo{publisher}{Springer}.
\newblock


\bibitem[\protect\citeauthoryear{Takesaki}{Takesaki}{2000}]%
        {takesaki:oa1}
\bibfield{author}{\bibinfo{person}{M. Takesaki}.}
  \bibinfo{year}{2000}\natexlab{}.
\newblock \bibinfo{booktitle}{\emph{Theory of Operator Algebra I}}.
\newblock \bibinfo{publisher}{Springer}.
\newblock


\bibitem[\protect\citeauthoryear{Westerbaan}{Westerbaan}{2019}]%
        {Westerbaan-thesis}
\bibfield{author}{\bibinfo{person}{Abraham Westerbaan}.}
  \bibinfo{year}{2019}\natexlab{}.
\newblock \bibinfo{booktitle}{\emph{{The Category of von Neumann algebras, PhD
  Thesis}}}.
\newblock
\showeprint{arxiv:1804.02203}


\bibitem[\protect\citeauthoryear{Wootters and Zurek}{Wootters and
  Zurek}{1982}]%
        {no-cloning}
\bibfield{author}{\bibinfo{person}{William~K Wootters} {and}
  \bibinfo{person}{Wojciech~H Zurek}.} \bibinfo{year}{1982}\natexlab{}.
\newblock \showarticletitle{A single quantum cannot be cloned}.
\newblock \bibinfo{journal}{\emph{Nature}} \bibinfo{volume}{299},
  \bibinfo{number}{5886} (\bibinfo{year}{1982}), \bibinfo{pages}{802--803}.
\newblock


\end{thebibliography}


\appendix

\newpage
\section{Domain Enrichment of Hereditarily Atomic von Neumann Algebras}
\label{app:proofs-vN-algebras}

In this appendix we prove that the category $\HA$ (and thus also $\QQ$) is enriched over
continuous dcpo's. Thus we have to show that $\HA(M,N)$ is a continuous dcpo
for each $M,N\in\HA$. We will rely heavily on the following lemma, which
follows from \cite[Example 2.7]{selinger:higher-order}:
\begin{lemma}\label{lem:HAimpliesContinuous}
	Let $M$ be a von Neumann algebra. If $M$ is hereditarily atomic, then $[0,1]_M$ is a continuous dcpo.
\end{lemma}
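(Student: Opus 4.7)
The plan is to reduce to Selinger's Example~2.7 via the structural decomposition of hereditarily atomic algebras together with the folklore fact that products of pointed continuous dcpos are themselves continuous dcpos. First, I would unpack the hypothesis: by definition fix a $*$-isomorphism $M \cong \prod_{\alpha \in \Alpha} M_\alpha$ with each $M_\alpha$ $*$-isomorphic to some matrix algebra $\matrixAlg{n_\alpha}$. Because $*$-isomorphisms preserve multiplication, involution, and the unit, they preserve positivity and the L\"owner order, and hence restrict to dcpo-isomorphisms on unit intervals. Combining this with the description of the product von Neumann algebra from Proposition~4.4 (in which the positive cone and the unit are computed componentwise, and the boundedness condition is automatic on $[0,1]$-valued entries), I obtain a dcpo-isomorphism
\[
[0,1]_M \;\cong\; \prod_{\alpha \in \Alpha}[0,1]_{M_\alpha},
\]
with the right-hand side carrying the pointwise order.

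Second, Selinger's Example~2.7, already cited in the paragraph above the lemma, supplies that each $[0,1]_{M_\alpha} = [0,1]_{\matrixAlg{n_\alpha}}$ is a continuous dcpo; moreover it has a least element $0$ (the zero operator) under the L\"owner order. The remaining step is to show that an arbitrary product of pointed continuous dcpos is continuous. I would do this by first establishing the way-below characterization $(x_\alpha) \ll (y_\alpha)$ in $\prod_\alpha [0,1]_{M_\alpha}$ iff $x_\alpha \ll y_\alpha$ in each factor and $x_\alpha = 0$ for all but finitely many $\alpha$. Granted this, any $(y_\alpha)$ is the supremum of the directed family of \emph{finite-support approximations} indexed by pairs $(F,\sigma)$ where $F \subseteq \Alpha$ is finite and $\sigma(\alpha) \ll y_\alpha$ for each $\alpha \in F$, with the corresponding element equal to $\sigma(\alpha)$ at $\alpha \in F$ and $0$ elsewhere. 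Directedness and the fact that the supremum recovers $(y_\alpha)$ both reduce to componentwise calculations using continuity of each factor.

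The main obstacle is the infinite-product step. The standard pitfall is to drop the ``eventually zero'' condition from the way-below characterization; however the directed family of truncations $z^F$ having $y_\alpha$ on a finite $F$ and $0$ outside $F$ already supremes to $(y_\alpha)$ while failing to dominate any candidate with infinitely many nonzero coordinates, so the finiteness requirement is forced. Once the characterization is correctly stated, the verification that each $(y_\alpha)$ is a directed sup of elements way-below it is routine, and the conclusion that $[0,1]_M$ is a continuous dcpo follows by transporting this along the order-isomorphism displayed above.
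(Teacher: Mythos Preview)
Your proposal is correct and follows exactly the approach the paper intends: the paper does not spell out a proof but simply says the lemma ``follows from \cite[Example 2.7]{selinger:higher-order}'', and elsewhere in the appendix (proof of Theorem~\ref{thm:continuous enrichment}) invokes the fact that products of pointed continuous dcpos are continuous via \cite[Exercise I-2.18]{gierzetal:domains}. Your reduction via $[0,1]_M \cong \prod_\alpha [0,1]_{M_\alpha}$, Selinger's result on each factor, and the finite-support way-below characterization in the product is precisely the standard route and fills in the details the paper leaves implicit.
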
 

We note that the converse of this lemma is shown in \cite{furber-continuousdcpos}. Basically, our proof strategy is to show that all principal downsets in $\HA(M,N)$ are order isomorphic to $[0,1]_R$ for some hereditarily atomic von Neumann algebra. It then follows from the above lemma that $\HA(M,N)$ is continuous as well.

We will rely on some topologies on von Neumann algebras. We already mentioned the strong and the weak operator topologies on a von Neumann algebra $M$ on $H$ in Section \ref{sub:w*-category-definition}, where the former is the locally convex topology induced by the seminorms $a\mapsto\|ah\|$ for $h\in H$, and the latter is the locally convex topology induced by the seminorms $a\mapsto|\langle k,ah\rangle|$ for $h,k\in H$. We further note that $M$ has an intrinsic topology, which is known under several names: the \emph{$\sigma$-weak operator topology}, the \emph{ultraweak operator topology} or the \emph{weak*-topology}. It is the locally convex topology on $M$ induced by the seminorms $a\mapsto\left|\sum_{n=1}^\infty\langle k_n,ah_n\rangle\right|$, where $(h_n)_n$ and $(k_n)_n$ are sequences in $H$ such that $\sum_{n=1}^\infty\|h_n\|^2<\infty$ and $\sum_{n=1}^\infty\|k_n\|^2<\infty$. Any completely positive map between von Neumann algebras is normal if and only if it is continuous with respect to the $\sigma$-weak operator topology \cite[Proposition III.2.2.2]{Blackadar}. The $\sigma$-weak operator topology is stronger than the weak operator topology, but these topologies coincide on norm-bounded subsets \cite[I.3.1.4]{Blackadar}. We note that the bicommutant theorem of von Neumann states that a unital $*$-subalgebra of $B(H)$ is a von Neumann algebra if and only if it is closed with respect to either the weak operator topology, the strong operator topology or the $\sigma$-weak operator topology (and hence w.r.t. all of them).

 Given a von Neumann algebra $M$ and a Hilbert space $H$, any completely positive map $\varphi:M\to B(H)$ can be decomposed as $\varphi(x)=v^*\pi(x)v$, where $\pi:M\to B(K)$ is a \emph{representation} of $M$ on some Hilbert space $K$, i.e., a unital $*$-homomorphism, and $v:H\to K$ is a linear map such that $\|v\|^2\leq\|\varphi(1)\|$. We say that $(\pi,v,K)$ is a \emph{Stinespring representation} of $\varphi$. Moreover, $K$ can be chosen to be \emph{minimal}, i.e.,  $\pi[M]vH$ is norm dense in $K$; up to unitary equivalence, the minimal Stinespring decomposition of $\varphi$ is unique. A proof of the existence of the minimal Stinespring representation is given in  in \cite[Theorem II.6.9.7]{Blackadar} and in \cite[Theorem 1.2.7]{stormer}. In the proof of \cite[Theorem III.2.2.4]{Blackadar} it is asserted that $\pi$ is normal when $\varphi$ is normal.

\begin{proposition}\label{prop:Stinespring order iso}
	Let $M$ be a von Neumann algebra, let $H$ be a Hilbert space, and let $\varphi:M\to B(H)$ be a normal completely positive map with minimal Stinespring representation $(\pi,v,K)$. Then we have an order isomorphism $[0,1]_{\pi[M]'}\to\downarrow\varphi$, $t\mapsto \varphi_t$, where $\varphi_t(x)=v^*t\pi(x)v$ for each $x\in M$ and where the downset is taken in $\vN(M,B(H))$. 
\end{proposition}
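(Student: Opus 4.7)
The plan is to establish that $t \mapsto \varphi_t$ is a bijection, monotone in both directions, with surjectivity handled by Arveson's Radon--Nikodym theorem for completely positive maps. First, I would verify well-definedness: for $t \in [0,1]_{\pi[M]'}$, the map $\varphi_t$ is normal completely positive and satisfies $\varphi_t \leq \varphi$. Since $0 \leq t \leq 1$ in the von Neumann algebra $\pi[M]'$, we may write $t = s^2$ for a positive $s \in \pi[M]'$, and because $s$ commutes with $\pi(M)$,
\[
\varphi_t(x) = v^* s \pi(x) s v = (sv)^* \pi(x) (sv),
\]
exhibiting $\varphi_t$ as a pullback of the normal representation $\pi$ along $sv$; hence it is normal and completely positive. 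Applying the same argument to $1 - t \in [0,1]_{\pi[M]'}$ shows that $\varphi - \varphi_t$ is completely positive, so $\varphi_t \leq \varphi$. Additivity and order-preservation of $t \mapsto \varphi_t$ are then immediate.

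For injectivity, suppose $\varphi_{t_1} = \varphi_{t_2}$ and set $s := t_1 - t_2 \in \pi[M]'$. Since $s$ commutes with every $\pi(y)$, for all $x, y \in M$ and $h_1, h_2 \in H$ I would compute
\[
\langle s \pi(x) v h_1,\ \pi(y) v h_2 \rangle = \langle v^* s \pi(y^* x) v h_1, h_2 \rangle = 0,
\]
using $v^* s \pi(\cdot) v = 0$. Minimality of the Stinespring representation asserts that $\pi[M] v H$ is norm-dense in $K$, so $s = 0$. Exactly the same computation applies to arbitrary $s \in \pi[M]'$, so the linearly extended assignment $\pi[M]' \to \mathbf{vN}(M, B(H))$ is injective as well.

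For surjectivity, I would invoke Arveson's Radon--Nikodym theorem (see, e.g., \cite{stormer}): every completely positive $\psi : M \to B(H)$ with $\psi \leq \varphi$ is of the form $\psi(x) = v^* t \pi(x) v$ for a unique $t \in [0,1]_{\pi[M]'}$, where $(\pi, v, K)$ is the minimal Stinespring representation of $\varphi$. This directly supplies a preimage for every $\psi \in \downarrow\varphi$, and normality of $\psi = \varphi_t$ is automatic by the first step. Finally, to see that the inverse is monotone, suppose $\varphi_{t_1} \leq \varphi_{t_2}$; then $\varphi_{t_2} - \varphi_{t_1} = \varphi_{t_2 - t_1}$ is completely positive and dominated by $\varphi$, so by surjectivity it equals $\varphi_s$ for some $s \in [0,1]_{\pi[M]'}$, and the extended injectivity forces $s = t_2 - t_1 \geq 0$, i.e., $t_1 \leq t_2$. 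The main obstacle is surjectivity: it is precisely Arveson's Radon--Nikodym theorem, a non-elementary classical result; every other step becomes routine once minimality of $(\pi, v, K)$ is exploited.
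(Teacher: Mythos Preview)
Your proposal is correct and hinges on the same classical input as the paper: Arveson's Radon--Nikodym theorem for CP maps (\cite[Theorem 3.5.3]{stormer}) supplies the bijection between $[0,1]_{\pi[M]'}$ and the CP maps dominated by $\varphi$, and both you and the paper then check that every such map is automatically normal and subunital, so that the range is exactly $\downarrow\varphi$ in $\vN(M,B(H))$.

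The one genuine difference is in how you establish that the inverse is monotone. The paper takes the minimal Stinespring representations $(\pi_i,v_i,K_i)$ of $\varphi_{t_1}$, $\varphi_{t_2}$, and $\varphi$, invokes \cite[Lemma 3.5.2]{stormer} to obtain contractive intertwiners $s_{ij}:K_j\to K_i$, and then computes explicitly that $t_2=s_{23}^*s_{23}$ and $t_1=s_{23}^*s_{12}^*s_{12}s_{23}$, so that $t_2-t_1=s_{23}^*(1-s_{12}^*s_{12})s_{23}\geq 0$. Your argument is shorter: you observe that $\varphi_{t_2}-\varphi_{t_1}=\varphi_{t_2-t_1}$ is CP and dominated by $\varphi$ (since $\varphi-\varphi_{t_2-t_1}=\varphi_{1-t_2}+\varphi_{t_1}$ is a sum of CP maps), apply Arveson once more to get $\varphi_{t_2-t_1}=\varphi_s$ with $s\in[0,1]_{\pi[M]'}$, and then use the injectivity you already proved (valid for arbitrary self-adjoint $s\in\pi[M]'$, not just positive ones) to conclude $t_2-t_1=s\geq 0$. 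This avoids the separate appeal to the intertwiner lemma and is a tidy use of the linearity of $t\mapsto\varphi_t$; the paper's route, on the other hand, makes the order structure more visibly geometric via the contractions between the three Stinespring spaces.
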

\begin{proof} 
	It follows from \cite[Theorem 3.5.3]{stormer} and the paragraph preceding it that the assignment $t\mapsto\varphi_t$ is a bijection from $[0,1]_{\pi[M]'}$ to the set $S$ of all completely positive maps $\psi:M\to B(H)$ such that $\varphi-\psi$ is completely positive. We show that $S=\downarrow\varphi$. Using the bijection, any $\psi\in S$ is of the form $v^*t\pi v$ for some $t\in[0,1]_{\pi[M]'}$. 
	Since $\varphi$ is normal, so is $\pi$, whence also $\psi=v^*t\pi v$ is normal. Moreover, $\varphi$ is subunital, and $\varphi-\psi$ is completely positive, so positive, hence $\psi(1_M)\leq\varphi(1_M)\leq 1_H$, so also $\psi$ is subunital. Thus $\psi\in\vN(M,B(H))$. Now by definition of the order $\leq$ on $\vN(M,B(H))$ (cf. Section \ref{subsection:enrichment vN}), $\varphi-\psi$ is completely positive expresses that $\psi\leq\varphi$, so the assignment $t\mapsto\varphi_t$ is indeed a bijection between $[0,1]_{\pi[M]'}$ and $\downarrow\varphi$. We have to show that it is an order isomorphism. Let $t_1\leq t_2$ in $[0,1]_{\pi[M]'}$. Let $x\in M$ be positive, so $x=y^*y$ for some $y\in M$.
	Then for each $i=1,2$, we have that $\pi(y^*)$ commutes with $t_i$ (since the latter is an element of $\pi[M]'$, the commutant of $\pi[M]$ in $B(K)$), whence
	$\varphi_{t_i}(x)=v^*t_i\pi(y^*)\pi(y)v=v^*\pi(y)^*t_i\pi(y)v$. Since for any operator $b$ the assignment $t\mapsto b^*tb$ preserves the order (cf. \cite[Corollary 4.2.7]{kadisonringrose:oa1}), it now follows that $\varphi_{t_1}(x)\leq\varphi_{t_2}(x)$ in $B(H)$, and since $x$ is an arbitrary positive element of $M$, we conclude that hence $\varphi_{t_1}\leq\varphi_{t_2}$.
	
	Next assume that $\varphi_{t_1}\leq\varphi_{t_2}$. We have to show that $t_1\leq t_2$. 
	Write $\varphi_i:=\varphi_{t_i}$ for $i=1,2$, and $\varphi_3:=\varphi$.
	For each $i=1,2,3$, let $(\pi_i,v_i,K_i)$ be the minimal Stinespring representation of $\varphi_i$.
	For $i\leq j$ in $\{1,2,3\}$, \cite[Lemma 3.5.2]{stormer} assures the existence of operators $s_{ij}:K_j\to K_i$ with $\|s_{ij}\|\leq 1$ such that
	\begin{itemize}
		\item[(i)] $s_{ij}v_j=v_i$;
		\item[(ii)] $s_{ij}\pi_j(x)=\pi_i(x)s_{ij}$ for each $x\in M$.
	\end{itemize}
Note that $\pi_3=\pi$ for $\varphi_3=\varphi$. As a consequence, for $i,j=1,2,3$ with $i\leq j$ and for each $x\in M$ we have
	\begin{equation}\label{eq:representations1}
	v_j^*s_{ij}^*s_{ij}\pi_j(x)v_j=v_i^*s_{ij}\pi_j(x)v_j=v_i^*\pi_i(x)s_{ij}v_j=v_i^*\pi_i(x)v_i=\varphi_i(x)
	\end{equation}
and for each $x\in M$ we have 
\begin{equation}\label{eq:representations2}
	s_{ij}^*s_{ij}\pi_j(x)=s_{ij}^*\pi_i(x)s_{ij}=s_{ij}^*\pi_i(x^*)^*s_{ij}=(\pi_i(x^*)s_{ij})^*s_{ij}=(s_{ij}\pi_j(x^*))^*s_{ij}=\pi_j(x)s_{ij}^*s_{ij}
\end{equation} so $s_{ij}^*s_{ij}$ is an element of $\pi_j[M]'$, the commutant of $\pi_j[M]$. In particular, we find that $s_{i3}^*s_{i3}\in\pi_3[M]'=\pi[M]'$ such that $v^*s_{i3}^*s_{i3}\pi v=v_3^*s_{i3}^*s_{i3}\pi_3v_3=\varphi_i$.
Since by the bijection $t_i$ is the unique element in $\pi[M]'$ such that $\varphi_i(x)=v^*t_i\pi(x)v$, it follows that $t_i=s_{i3}^*s_{i3}$.
Moreover, for each $x\in M$ we have
\begin{equation*} v^*s_{23}^*s_{12}^*s_{12}s_{23}\pi(x)v= v_3^*s_{23}^*s_{12}^*s_{12}s_{23}\pi_3(x)v_3=v_3^*s_{23}^*s_{12}^*s_{12}\pi_2(x)s_{23}v_3=v_2^*s_{12}^*s_{12}\pi_2(x)v_2=\varphi_1(x),
	\end{equation*}
where we used (\ref{eq:representations1}) in the last equality, and for each $x\in M$, we have
\begin{align*}   s_{23}^*s_{12}^* s_{12}s_{23}\pi_3(x) & =s_{23}^*s_{12}^* s_{12}\pi_2(x)s_{23}=s_{23}^*\pi_2(x)s_{12}^* s_{12}s_{23}=s_{23}^*\pi_2(x^*)^*s_{12}^* s_{12}s_{23} =(\pi_2(x^*)s_{23})^*s_{12}^* s_{12}s_{23}\\
	& =(s_{23}\pi_3(x^*))^*s_{12}^* s_{12}s_{23}=\pi_3(x^*)^*s_{23}^*s_{12}^* s_{12}s_{23}=\pi_3(x)s_{23}^*s_{12}^* s_{12}s_{23},
	\end{align*}
where we used (\ref{eq:representations2}) in the second equality. Thus $ s_{23}^*s_{12}^* s_{12}s_{23}$ is an element of $\pi_3[M]'=\pi[M]'$ and $ \varphi_1=v^*s_{23}^*s_{12}^* s_{12}s_{23}\pi v$, which allows us to conclude that $t_1=s_{23}^*s_{12}^*s_{12}s_{23}$. Since $\|s_{ij}\|\leq 1$, we have $\|s_{ij}^*s_{ij}\|=\|s_{ij}\|^2\leq 1$ using the C*-identity. It now follows from \cite[Proposition 4.2.3(ii)]{kadisonringrose:oa1} that $s_{ij}^*s_{ij}\leq 1$, so $1-s_{ij}^*s_{ij}$ is positive, whence $s_{23}^*(1-s_{12}^*s_{12})s_{23}$ is positive. But \[s_{23}^*(1-s_{12}^*s_{12})s_{23}=s_{23}^*s_{23}-s_{23}^*s_{12}^*s_{12}s_{23}=t_2-t_1,\] so also $t_2-t_1$ is positive, i.e., $t_1\leq t_2$. 
\end{proof}

It follows that $\vN(M,B(H))$ is a continuous dcpo if $\pi[M]'$ is hereditarily atomic. Thus we have to understand the minimal Stinespring representation $\pi$ of $M$ better. In case $M$ is a product of von Neumann algebras (as in the case of hereditarily atomic von Neumann algebras), it turns out we can describe $\pi$ in terms of the minimal Stinespring representations of the product factors of $M$ as is shown in the following two lemmas.

\begin{lemma}\label{lem:product of representations}
	Let $M=\prod_{\alpha\in \Alpha}M_\alpha$ be the product of a collection $(M_\alpha:\alpha\in \Alpha)$ of von Neumann algebras, for each $\alpha\in\Alpha$ let $\pi_\alpha:M_\alpha\to B(K_\alpha)$ be a representation on some Hilbert space $K_\alpha$, and let $K\defeq\bigoplus_{\alpha\in \Alpha}K_\alpha$ be the sum of the Hilbert spaces $K_\alpha$. Then $\pi:M\to B(K)$ defined by 
	$\pi(x)k=(\pi_\alpha(x_\alpha)k_\alpha)_{\alpha\in \Alpha}$ for each $x=(x_\alpha)_{\alpha\in \Alpha}$ in $M$ and each $k=(k_\alpha)_{\alpha\in \Alpha}$ in $K$ is a representation of $M$ on $K$ such there is an injective $*$-homomorphism $\rho:\pi[M]'\to\prod_{\alpha\in\Alpha}B(K_\alpha)$.
\end{lemma}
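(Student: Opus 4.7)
The plan is to build $\pi$ componentwise, identify $\pi[M]'$ with the block-diagonal operators on $K$, and then read off $\rho$ from the block decomposition.

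First I would verify that $\pi$ is well-defined. For $x=(x_\alpha)_{\alpha\in\Alpha}\in M$ one has $\|x\|=\sup_{\alpha\in\Alpha}\|x_\alpha\|<\infty$ by Proposition \ref{prop:product of von Neumann algebras}; since every unital $*$-homomorphism of C*-algebras is norm-contractive, $\|\pi_\alpha(x_\alpha)\|\le\|x_\alpha\|\le\|x\|$. Hence for $k=(k_\alpha)_{\alpha\in\Alpha}\in K$, $\sum_{\alpha\in\Alpha}\|\pi_\alpha(x_\alpha)k_\alpha\|^2\le\|x\|^2\sum_{\alpha\in\Alpha}\|k_\alpha\|^2<\infty$, so $\pi(x)k\in K$ and $\|\pi(x)\|\le\|x\|$. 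The unital, multiplicative and involution-preserving properties all transfer coordinatewise from the $\pi_\alpha$, so $\pi$ is a representation of $M$ on $K$.

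Next I would isolate the block structure of $\pi[M]'$. For each $\beta\in\Alpha$ let $p_\beta\in M$ denote the element whose $\beta$-th coordinate is $1_{M_\beta}$ and whose remaining coordinates are $0$; this is a projection in $M$, and $\pi(p_\beta)=P_\beta$ is the orthogonal projection of $K$ onto the closed subspace $K_\beta\subseteq K$. Any $T\in\pi[M]'$ commutes with every $P_\beta$, so $T$ leaves each $K_\beta$ invariant; let $T_\beta\defeq T|_{K_\beta}\colon K_\beta\to K_\beta$, which satisfies $\|T_\beta\|\le\|T\|$. Because $T$ commutes with all $P_\beta$, $T$ is block-diagonal in the sense that $Tk=(T_\beta k_\beta)_{\beta\in\Alpha}$ for every $k=(k_\beta)_\beta\in K$.

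Finally I would set $\rho\colon\pi[M]'\to\prod_{\alpha\in\Alpha}B(K_\alpha)$ by $\rho(T)\defeq(T_\alpha)_{\alpha\in\Alpha}$. The family lies in the product because $\sup_\alpha\|T_\alpha\|\le\|T\|<\infty$. Linearity is immediate from the blockwise definition; for $T,S\in\pi[M]'$ one checks on each $K_\alpha$ that $(TS)_\alpha=T_\alpha S_\alpha$ and $(T^*)_\alpha=(T_\alpha)^*$, so $\rho$ is a $*$-homomorphism. Injectivity follows because if $\rho(T)=0$, then $T$ vanishes on each $K_\alpha$, hence on the algebraic direct sum $\bigoplus_{\alpha}^{\mathrm{alg}}K_\alpha$, which is dense in $K$, giving $T=0$.

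The only subtlety I anticipate is bookkeeping: making sure the projections $p_\beta$ really belong to the product von Neumann algebra (they do, with norm $1$), and confirming that a block-diagonal $T\in B(K)$ yields an $\ell^\infty$-bounded family rather than merely a sequence of compressions. Normality of $\pi$ is not required by the statement and so need not be addressed here; that refinement is presumably handled in a subsequent lemma.
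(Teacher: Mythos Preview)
Your proof is correct and follows essentially the same approach as the paper: both arguments use the central projections $p_\beta\in M$ (with $1_{M_\beta}$ in the $\beta$-coordinate and $0$ elsewhere) to force any $T\in\pi[M]'$ to be block-diagonal, and then define $\rho$ by sending $T$ to its family of diagonal blocks. The paper carries this out via explicit matrix-entry calculations $y_{\alpha\beta}=p_\beta y e_\alpha$ and shows the off-diagonal entries vanish, whereas you phrase it more conceptually as ``commutes with $P_\beta$ implies leaves $K_\beta$ invariant''; the content is the same.
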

\begin{proof}
	Clearly $\pi$ is a $*$-homomorphism, hence a representation of $M$ onto $K$. For $\alpha\in\Alpha$, let $e_\alpha:K_\alpha\to K$ be the embedding and $p_\alpha:K\to K_\alpha$ be the projection, which are easily seen to be bounded with norm at most $1$, and to be each other's adjoints: for each $k\in K$ and each $h_\alpha\in K_\alpha$, we have $\langle p_\alpha k,h_\alpha\rangle=\langle k,e_\alpha h_\alpha\rangle$. A straigtforward application of \cite[2.5.8]{kadisonringrose:oa1} gives us that $\sum_{\alpha\in\Alpha}e_\alpha p_{\alpha}$ converges strongly to $1_K$.
		For each $y\in B(K)$, and each $\alpha,\beta\in\Alpha$ let $y_{\alpha\beta}=p_\beta y e_\alpha:K_\alpha\to K_\beta$, so $y_{\alpha\beta}\in B(K_\alpha,K_\beta)$. Then $y$ is completely determined by the $y_{\alpha\beta}$. Indeed, if $z\in B(K)$ and for each $\alpha,\beta\in\Alpha$, we have $y_{\alpha\beta}=z_{\alpha\beta}$, then we have $e_{\beta}p_{\beta}ye_{\alpha}p_{\alpha}=e_{\beta}p_{\beta}ze_{\alpha}p_{\alpha}$, and since multiplication with a fixed operator is strongly continuous (cf. \cite[Remark 2.5.10]{kadisonringrose:oa1}), and $\sum_{\alpha\in\Alpha}e_{\alpha}p_{\alpha}$ converges strongly to $1$, we obtain $y=z$.

	For $y,z\in B(K)$ and $\beta,\gamma\in\Alpha$, multiplication can be described in a way similar to matrix multiplication: $(yz)_{\beta\gamma}=\sum_{\alpha\in\Alpha}y_{\beta\alpha}z_{\alpha\gamma}$.
	Moreover, for $x=(x_\alpha)_{\alpha\in \Alpha}$ in $M$, $\beta,\gamma\in\Alpha$ and each $a\in K_\gamma$ we have 
	\[(\pi(x))_{\beta\gamma}a=p_\gamma\pi(x)e_\beta(a)=p_\gamma\pi(x)(\delta_{\alpha\beta}a)_{\alpha\in\Alpha}=p_\gamma (\pi_\alpha(x_\alpha)\delta_{\alpha\beta}a)_{\alpha\in\Alpha}=\pi_\gamma(x_\gamma)(\delta_{\gamma\beta}a)=\delta_{\beta\gamma}\pi_\gamma(x_\gamma)a,\]
	hence 
	\begin{equation}\label{eq:pirepresentation}
	\pi(x)_{\beta\gamma}=\delta_{\beta\gamma}\pi_\gamma(x_\gamma).
	\end{equation}
	Let $D=\{y\in B(K):y_{\alpha\beta}=0\text{ for each }\alpha\neq\beta\text{ in }\Alpha\}.$
	We define a map $\rho:D\to\prod_{\alpha\in\Alpha}B(K_\alpha)$ by $y\mapsto (y_{\alpha\alpha})_{\alpha\in\Alpha}$. In order to show that $\rho$ is well defined, let $y\in D$. Since for each $\alpha\in\Alpha$, we have
	$\|y_{\alpha\alpha}\|=\|p_\alpha y e_\alpha\|\leq\|p_\alpha\|\|y\|\|e_\alpha\|\leq \|y\|,$
	because $\|p_\alpha\|,\|e_\alpha\|\leq 1$, it follows that $\sup_{\alpha\in\Alpha}\|y_{\alpha\alpha}\|\leq\|y\|<\infty,$
	which shows that $\rho(y)$ is a well-defined element of $\prod_{\alpha\in\Alpha}B(K_\alpha)$.
	Clearly $\rho$ is a $*$-homomorphism (where preservation of the involution $(-)^*$ follows since $p_\alpha$ and $e_\alpha$ are each other's adjoints). It is injective, because each $y\in B(K)$ is determined by $y_{\beta\gamma}$ for $\beta,\gamma\in\Alpha$, and by definition of $D$, we have $y_{\beta\gamma}=0$ for each $\beta\neq\gamma$.
	
	Let $y\in \pi[M]'$, i.e., $y\in B(K)$ such that $\pi(x)y=y\pi(x)$ for each $x=(x_\alpha)_{\alpha\in \Alpha}$ in $M$. Let $\beta\neq\gamma$ in $\Alpha$, so in particular $x_\gamma=0$. 
	Let $x_\beta=1_{M_\beta}$ and let $x_\alpha=0$ for each $\alpha\neq\beta$. Then $x=(x_\alpha)_{\alpha\in\Alpha}$ is an element of $M$ and since $y\in\pi[M]'$, we find:
	\begin{align*}y_{\beta\gamma}=\pi_\beta(1_{M_\beta})y_{\beta\gamma} & =\pi_\beta(x_\beta)y_{\beta\gamma}=\sum_{\alpha\in\Alpha}\delta_{\beta\alpha}\pi_\alpha(x_\alpha)y_{\alpha\gamma}=\sum_{\alpha\in\Alpha}\pi(x)_{\beta\alpha}y_{\alpha\gamma}=(\pi(x)y)_{\beta\gamma}\\
	& = (y\pi(x))_{\beta\gamma}=\sum_{\alpha\in\Alpha}y_{\beta\alpha}\pi(x)_{\alpha\gamma}=\sum_{\alpha\in\Alpha}y_{\beta\alpha}\delta_{\alpha\gamma}\pi_\gamma(x_\gamma)=y_{\beta\gamma}\pi_\gamma(x_\gamma)=0,
	\end{align*}
where we used (\ref{eq:pirepresentation}) in the fourth and eighth equality.
	Hence $y\in D$, from which follows that $\rho$ restricts to an injective $*$-homomorphism  $\pi[M]'\to \prod_{\alpha\in\Alpha}B(K_\alpha)$.
\end{proof}

\begin{lemma}\label{lem:Stinespring representation decomposition}
	Let $(M_\alpha)_{\alpha\in\Alpha}$ be a collection of von Neumann algebras, and let $M=\prod_{\alpha\in\Alpha}M_\alpha$, and for $\beta\in\Alpha$, denote the embedding $M_\beta\to M$ by $\iota_\beta$. Let $H$ be a Hilbert space, and let $\varphi:M\to B(H)$ be a normal completely positive subunital map. Then for each $\alpha\in\Alpha$, the map $\varphi_\alpha:=\varphi\circ\iota_\alpha$ is a completely positive map $M_\alpha\to B(H)$, and if $(\pi_\alpha,v_\alpha,K_\alpha)$ denotes the minimal Stinespring representation corresponding to $\varphi_\alpha$, then the representation $(\pi,v,K)$ constructed in Lemma \ref{lem:product of representations} from the representations $(\pi_\alpha,v_\alpha,K_\alpha)$ is the minimal Stinespring representation corresponding to $\varphi$.
\end{lemma}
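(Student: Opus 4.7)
The plan is to show that the constructed $(\pi,v,K)$, together with a canonical map $v\colon H\to K$, is a normal Stinespring representation of $\varphi$ and that it is minimal; uniqueness up to unitary equivalence of the minimal Stinespring representation will then complete the proof. Define $v\colon H\to K$ by $vh\defeq(v_\alpha h)_{\alpha\in\Alpha}$. To see that $v$ is well-defined and bounded, observe that since each $\pi_\alpha$ is unital one has $v_\alpha^*v_\alpha=\varphi_\alpha(1_{M_\alpha})=\varphi(\iota_\alpha(1_{M_\alpha}))$. The finite partial sums of $\iota_\alpha(1_{M_\alpha})$ form a bounded monotone increasing net in $M$ with supremum $1_M$, so normality of $\varphi$ yields $\sum_\alpha\|v_\alpha h\|^2=\langle h,\varphi(1_M)h\rangle\le\|h\|^2$; hence $vh\in K$ and $\|v\|\le 1$. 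A short computation then identifies the adjoint as $v^*(k_\alpha)_\alpha=\sum_\alpha v_\alpha^*k_\alpha$.

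Next I would verify the Stinespring equation $v^*\pi(x)v=\varphi(x)$ for every $x=(x_\alpha)_\alpha\in M$. Using the matrix identity $\pi(x)_{\beta\gamma}=\delta_{\beta\gamma}\pi_\gamma(x_\gamma)$ from Lemma~\ref{lem:product of representations} one obtains $\pi(x)vh=(\pi_\alpha(x_\alpha)v_\alpha h)_\alpha$, whence $v^*\pi(x)vh=\sum_\alpha \varphi_\alpha(x_\alpha)h$ with norm convergence in $H$ (this sum is just $v^*$ applied to an element of $K$). To identify the limit with $\varphi(x)h$, note that the finite partial sums $\sum_{\alpha\in F}\iota_\alpha(x_\alpha)$ are uniformly bounded by $\|x\|$ and converge $\sigma$-weakly to $x$; normality of $\varphi$ then gives $\sum_{\alpha\in F}\varphi_\alpha(x_\alpha)\to\varphi(x)$ in the weak operator topology, which pins down the limit. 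Normality of $\pi$ itself is straightforward: for a bounded monotone net $x^{(\lambda)}\nearrow x$ in $M$, each coordinate converges monotonically, and splitting off a finite $F\subseteq\Alpha$ and using normality of the individual $\pi_\alpha$ yields strong operator convergence $\pi(x^{(\lambda)})\to\pi(x)$ in $B(K)$.

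For minimality, fix $\beta\in\Alpha$ and $x_\beta\in M_\beta$; then $\pi(\iota_\beta(x_\beta))vh$ vanishes in every coordinate except the $\beta$-th, where it equals $\pi_\beta(x_\beta)v_\beta h$. By minimality of $(\pi_\beta,v_\beta,K_\beta)$, such vectors span a dense subspace of the $\beta$-th summand $e_\beta[K_\beta]\subseteq K$. Taking finite linear combinations across coordinates then shows that the closed linear span of $\pi[M]vH$ contains every finitely supported element of $K$, hence equals $K$. With the Stinespring identity, normality of $\pi$, and minimality all established, uniqueness of the minimal Stinespring representation identifies $(\pi,v,K)$ as the one associated to $\varphi$. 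The main obstacle is carefully reconciling the two modes of convergence: the norm-convergent sum $\sum_\alpha \varphi_\alpha(x_\alpha)h$ in $H$ and the $\sigma$-weak convergence of $\sum_\alpha \iota_\alpha(x_\alpha)$ to $x$ in $M$, since for uncountable $\Alpha$ the latter sum makes no sense in the norm topology on $M$ and it is precisely normality of $\varphi$ that bridges the gap.
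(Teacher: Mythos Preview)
Your proposal is correct and follows essentially the same approach as the paper: define $v$ coordinate-wise, use subunitality of $\varphi$ and normality to show $vh\in K$, verify $v^*\pi(x)v=\varphi(x)$ by reducing to the $\sigma$-weak convergence $\sum_\alpha\iota_\alpha(x_\alpha)\to x$ and normality of $\varphi$, and establish minimality from $\pi(\iota_\beta(x_\beta))vh=e_\beta\pi_\beta(x_\beta)v_\beta h$ together with minimality of each factor. The paper computes the Stinespring identity via inner products rather than by first identifying $v^*$ explicitly, and it does not separately re-verify normality of $\pi$, but these are cosmetic differences.
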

\begin{proof}
	Let $\alpha\in\Alpha$. Then  $\iota_\alpha$ is a (non-unital) $*$-homomorphism, hence completely positive, whence $\varphi_\alpha$ is completely positive. 
	The identity element in $M_\alpha$ corresponds to a projection $r_\alpha$ in $M$, and $\sum_{\alpha\in\Alpha}r_\alpha=1_M$, where the convergence is with respect to the strong operator topology. Since multiplication with a fixed element is continuous with respect to the strong operator topology, it follows that $x=\sum_{\alpha\in\Alpha}xr_\alpha$ for each $x=(x_\alpha)_{\alpha\in\Alpha}$ in $M$. It is easy to see that $xr_\alpha=\iota_{\alpha}(x_\alpha)$: if $\bar x_\alpha$ denotes the element in $M$ whose $\alpha$-component is $x_\alpha$ and with all other components vanishing, then clearly $\bar x_\alpha=\iota_{\alpha}(x_\alpha)$ and $xr_\alpha=\bar x_\alpha r_\alpha$, hence $xr_\alpha=\iota_\alpha(x_\alpha)\iota_\alpha (1_{M_\alpha})=\iota_{\alpha}(x_\alpha 1_{M_\alpha})=\iota_{\alpha}(x_\alpha)$. Since convergence with respect to the strong operator topology implies convergence with respect to the weak operator topology, and the latter topology coincides with the $\sigma$-weak operator topology on the unit ball of an operator algebra \cite[Lemma II.2.5]{takesaki:oa1}, it follows that $x=\sum_{\alpha\in\Alpha}\iota_\alpha(x_\alpha)$, where the sum converges with respect to the $\sigma$-weak operator topology.
	
	Since $\varphi$ is normal, so continuous with respect to the $\sigma$-weak operator topology, and since we previously found that $x=\sum_{\alpha\in\Alpha}\iota_{\alpha}(x_\alpha)$ for each $x=(x_\alpha)_{\alpha\in\Alpha}$ in $M$, where the sum converges with respect to the $\sigma$-weak operator topology, we obtain \[\varphi(x)=\varphi\left(\sum_{\alpha\in\Alpha}\iota_{\alpha}(x_\alpha)\right)=\sum_{\alpha\in\Alpha}\varphi\circ\iota_\alpha(x_\alpha)=\sum_{\alpha\in\Alpha}\varphi_\alpha(x_\alpha).\]
	
	For each $\alpha\in\Alpha$, let $(\pi_\alpha,v_\alpha,K_\alpha)$ be the minimal Stinespring representation of $\varphi_\alpha$, so $v_\alpha:H\to K_\alpha$ satisfies $\|v_\alpha\|^2\leq\|\varphi_\alpha(1_{M_\alpha})\|$ and  $\varphi_\alpha(x)=v_\alpha^*\pi_\alpha(x)v_\alpha$ for each $x\in M_\alpha$, and $\overline{\pi_\alpha[M_\alpha] v_\alpha H}=K_\alpha$. Let $K=\bigoplus_{\alpha\in \Alpha}K_\alpha$. We want to define  $v:H\to K$ by $vh=(v_\alpha h)_{\alpha\in\Alpha}$, but we have to show that each $vh$ is indeed an element of $K$, i.e., we have to show that $\sum_{\alpha\in\Alpha}\|v_\alpha h\|^2<\infty$. Since each Stinespring representation $\pi_\alpha$ is unital, and since $\varphi$ is subunital, we obtain
	\begin{align*}
	1_H & \geq \varphi(1_M) =\sum_{\alpha\in\Alpha}\varphi_\alpha(1_{M_\alpha})=\sum_{\alpha\in\Alpha} v_\alpha^*\pi_\alpha(1_{M_\alpha})v_\alpha=\sum_{\alpha\in\Alpha}v_\alpha^*v_\alpha.
	\end{align*}
	Hence for each $h\in H$, we obtain
	\[\sum_{\alpha\in\Alpha}\|v_\alpha h\|^2=\sum_{\alpha\in\Alpha}\langle v_\alpha h,v_\alpha h\rangle=\sum_{\alpha\in\Alpha}\langle h,v_\alpha^*v_\alpha h\rangle=\left\langle h,\left(\sum_{\alpha\in\Alpha}v_\alpha^* v_\alpha\right) h\right\rangle \leq \langle h,h\rangle=\|h\|^2,\]
	where the third equality is due to the fact convergence with respect to the $\sigma$-weak operator topology implies convergence with respect to the weak operator topology. Thus $v:H\to K$ given by $vh=(v_\alpha h)_{\alpha\in\Alpha}$ is indeed a well defined operator.
	
	Then for $h,k\in H$, and $x=(x_\alpha)_{\alpha\in\Alpha}$ in $M$, we have
	\begin{align*}\langle k, v^*\pi(x)vh\rangle_H & =\langle vk,\pi(x)vh\rangle_K=\langle (v_\alpha k)_{\alpha\in\Alpha},(\pi_\alpha(x_\alpha)v_\alpha  h)_{\alpha\in\Alpha}\rangle_K\\
	&  =\sum_{\alpha\in\Alpha}\langle v_\alpha k,\pi_\alpha(x_\alpha)v_\alpha h\rangle_{K_{\alpha}}=\sum_{\alpha\in\Alpha}\langle h,v_\alpha^*\pi_\alpha(x_\alpha)v_{\alpha}h\rangle_H  \\
	& = \sum_{\alpha\in\Alpha}\langle k,\varphi_\alpha(x_\alpha)h\rangle_H= \left\langle k,\sum_{\alpha\in\Alpha}\varphi_\alpha(x_\alpha)h\right\rangle \\
	& = \langle k,\varphi(x)h\rangle,
	\end{align*}
	where the penultimate equality is because $\sum_{\alpha}\varphi_\alpha(x_\alpha)$ converges to $\varphi(x)$ in the $\sigma$-weak operator topology, hence also in the weak operator topology. Since we can take $h$ and $k$ in $H$ arbitrary, it follows that $v^*\pi(x)v=\varphi(x)$.

	Finally, we show that $K=\overline{\pi[M]vH}$. Let $e_\alpha:K_\alpha\to K$ denote the embedding and denote by $G_\alpha$ its image in $K$. Let $k=(k_\alpha)_{\alpha\in\Alpha}$ be an element of $K$. Then for each $\alpha\in\Alpha$ it follows that $k_\alpha\in K_\alpha=\overline{\pi_\alpha[M_\alpha]v_\alpha H}$.
	
	Let $\alpha\in\Alpha$, $x\in M_\alpha$ and $h\in H$. Then $\iota_{\alpha}(x)_\beta=\delta_{\alpha\beta} x$ for each $\beta\in\Alpha$, hence \[(\pi(\iota_{\alpha}(x))v(h))_\beta=\pi_\beta(\iota_\alpha(x)_\beta)(v(h))_\beta=\pi_\beta(\delta_{\alpha\beta}x)v_\beta h=\delta_{\alpha\beta}\pi_\alpha(x)v_\alpha h=(e_\alpha\pi_\alpha(x)v_\alpha h)_\beta,\] so $\pi(\iota_\alpha(x))v(h)=e_\alpha \pi_\alpha(x)v_\alpha(h)$, where we recall that $e_\alpha:K_\alpha\to K$ is the embedding, whose image is $G_\alpha\subseteq K$.
	As a consequence, we have for each $\alpha\in\Alpha$:
	\[G_\alpha=e_\alpha[K_\alpha]=e_\alpha[\overline{\pi_\alpha[M_\alpha]v_\alpha H}]\subseteq \overline{e_\alpha [\pi_\alpha[M_\alpha]v_\alpha H}] ]=\overline{\pi[\iota_\alpha[M_\alpha]]vH}\subseteq\overline{\pi[M]vH},\]
	where we used that $e_\alpha$ is bounded, so continuous in the first inclusion. Since clearly $\bigvee_{\alpha\in\Alpha}G_\alpha=K$ in the lattice of closed subspaces of $K$, it follows that $K\subseteq\overline{\pi[M]vH}$, which implies $K=\overline{\pi[M]vH}$. We conclude that $\pi$ is the minimal Stinespring representation corresponding to $\varphi$.
\end{proof}

\begin{lemma}\label{lem:product of homsets}
	Let $M$ be a von Neumann algebra, and let $(N_\alpha)_{\alpha\in\Alpha}$ be a collection of von Neumann algebras. Let $N=\prod_{\alpha\in\Alpha}N_\alpha$, and let $\pi_\alpha:N\to N_\alpha$ be the projection on the $\alpha$-th coordinate. Then $\iota:\vN(M,N)\to\prod_{\alpha\in\Alpha}\vN(M,N_\alpha)$, $\varphi\mapsto(\pi_\alpha\circ\varphi)_{\alpha\in\Alpha}$ is an order isomorphism.
\end{lemma}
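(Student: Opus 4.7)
The plan is to exhibit an explicit inverse to $\iota$ and to show that both $\iota$ and its inverse are monotone. Well-definedness of $\iota$ is immediate: each projection $\pi_\alpha\colon N\to N_\alpha$ is a normal unital $*$-homomorphism, so it is NCPSU, and hence so is $\pi_\alpha\circ\varphi$ for every $\varphi\in\vN(M,N)$. Moreover, the product on the right-hand side carries the componentwise order, which matches the Löwner order on each factor.

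For the inverse, given a family $(\varphi_\alpha)_{\alpha\in\Alpha}\in\prod_{\alpha\in\Alpha}\vN(M,N_\alpha)$, I would define $\varphi\colon M\to N$ by $\varphi(x)\defeq(\varphi_\alpha(x))_{\alpha\in\Alpha}$. The first check is that $\varphi(x)\in N$, i.e., $\sup_\alpha\|\varphi_\alpha(x)\|<\infty$. This follows from the standard fact that any positive subunital map $\psi$ between unital C*-algebras satisfies $\|\psi\|=\|\psi(1)\|\leq 1$ (see e.g.~\cite[Corollary 1]{russodye}, or argue directly from positivity), so that $\|\varphi_\alpha(x)\|\leq\|x\|$ uniformly in $\alpha$. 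Linearity of $\varphi$ is immediate from linearity componentwise. The remaining properties are then checked componentwise, exploiting that the order, the positive cone, and the suprema of bounded monotone nets in $N=\prod_\alpha N_\alpha$ are all defined componentwise (cf.~Proposition~\ref{prop:product of von Neumann algebras}). Explicitly: subunitality of $\varphi$ reduces to $\varphi_\alpha(1_M)\leq 1_{N_\alpha}$ for each $\alpha$; for each $n\in\Na$ the amplification $\varphi^{(n)}$ maps into $\mat n(N)\cong\prod_\alpha\mat n(N_\alpha)$ componentwise, so complete positivity of $\varphi$ reduces to complete positivity of each $\varphi_\alpha$; and normality follows because a bounded monotone net $(x_i)$ in $M_{\mathrm{sa}}$ has supremum $x$ sent by $\varphi$ to the element whose $\alpha$-component is $\sup_i\varphi_\alpha(x_i)$, which equals $\sup_i\varphi(x_i)$ since suprema in $N_{\mathrm{sa}}$ are computed coordinatewise. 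By construction $\pi_\alpha\circ\varphi=\varphi_\alpha$, and conversely for any $\varphi\in\vN(M,N)$ the element $(\pi_\alpha\circ\varphi(x))_{\alpha\in\Alpha}$ equals $\varphi(x)$ pointwise, so $\iota$ and this assignment are mutually inverse bijections.

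Finally, I need monotonicity of $\iota$ and $\iota^{-1}$. For $\iota$, if $\varphi\leq\psi$ in $\vN(M,N)$, then $\psi-\varphi$ is completely positive, so for each $\alpha$ the composition $\pi_\alpha\circ(\psi-\varphi)=\pi_\alpha\circ\psi-\pi_\alpha\circ\varphi$ is completely positive (the composition of completely positive maps is completely positive), yielding $\pi_\alpha\circ\varphi\leq\pi_\alpha\circ\psi$ for every $\alpha$, which is the componentwise order on the product. Conversely, if $(\varphi_\alpha)_\alpha\leq(\psi_\alpha)_\alpha$ componentwise, then each $\psi_\alpha-\varphi_\alpha$ is completely positive. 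For each $n$, the amplification $(\psi-\varphi)^{(n)}$ lands in $\mat n(N)\cong\prod_\alpha\mat n(N_\alpha)$ and its $\alpha$-component is $(\psi_\alpha-\varphi_\alpha)^{(n)}$, which is positive; since positivity in the product is componentwise, $(\psi-\varphi)^{(n)}$ is positive, so $\psi-\varphi$ is completely positive and $\varphi\leq\psi$ in $\vN(M,N)$.

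The only nontrivial point I anticipate is the boundedness step needed to see that $(\varphi_\alpha(x))_\alpha$ really lies in the product von Neumann algebra $N$ and that $\varphi$ is normal; both rely on the uniform norm bound $\|\varphi_\alpha\|\leq 1$ inherited from subunitality together with the coordinatewise description of order-theoretic and operator-algebraic structure on $N$. Everything else is bookkeeping against the definitions.
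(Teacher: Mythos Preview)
Your proposal is correct and follows the same overall strategy as the paper: establish that $\iota$ is a bijection and that both $\iota$ and its inverse are monotone, with the key technical point being that complete positivity into $N=\prod_\alpha N_\alpha$ can be checked componentwise via $\mathrm{M}_n(N)\cong\prod_\alpha\mathrm{M}_n(N_\alpha)$.

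The main difference is one of packaging. You construct the inverse explicitly and verify each NCPSU property by hand, invoking the uniform norm bound $\|\varphi_\alpha\|\leq 1$ (from positivity and subunitality) to see that $(\varphi_\alpha(x))_\alpha$ lands in $N$, and treating the isomorphism $\mathrm{M}_n(N)\cong\prod_\alpha\mathrm{M}_n(N_\alpha)$ as essentially evident from the coordinatewise description. The paper instead argues more categorically: it first shows that positivity in $N$ is componentwise (by a direct inner-product computation), extends $\prod$ to a product on the category $\vN_+$ of normal positive maps, and then obtains $\mathrm{M}_n(N)\cong\prod_\alpha\mathrm{M}_n(N_\alpha)$ by appealing to the natural isomorphism $\mathrm{M}_n(-)\cong\mathrm{M}_n(\mathbb C)\bar\otimes(-)$ together with monoidal closedness of $(\vNs)^\op$ (so that tensor preserves products); surjectivity then comes from the universal property of the product. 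Your route is more elementary and self-contained; the paper's route makes explicit why the matrix-amplification step interacts well with infinite products, which you are taking for granted but which is indeed routine to verify directly along the lines you sketch.
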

\begin{proof}
	Firstly, $\pi_\alpha$ is a unital $*$-homomorphism, so certainly completely positive and subunital. Let $H_\alpha$ be the Hilbert space such that $N_\alpha$ is a von Neumann algebra on $H_\alpha$, hence $N$ is a von Neumann algebra on $H\defeq\bigoplus H_\alpha$. 
	Let $x=(x_\alpha)_{\alpha\in\Alpha}$ be an element of $N$. Then $x$ is positive if and only if for each $h=(h_\alpha)_{\alpha\in\Alpha}$ in $H$ we have $0\leq \langle h,xh\rangle$, i.e., if and only if $0\leq \sum_{\alpha\in\Alpha}\langle h_\alpha,x_\alpha h_\alpha\rangle$, from which it is clear that the positivity of each $x_\alpha$ is sufficient for $x$ to be positive. Since for fixed $\beta\in\Alpha$, we can choose $h$ in such a way that $h_\alpha=0$ for each $\alpha\neq\beta$, it follows that $x$ positive implies that $\langle h_\beta,x_\beta h_\beta\rangle\geq 0$, so it is also necessary for $x$ to be positive that each $x_\beta$ is positive. Thus $x$ is positive if and only if each $x_\alpha$ is positive. As a consequence, if $(x^d)_{d\in D}$ is a monotonically ascending net in $N_{\mathrm{sa}}$ with supremum $x$, and if we write $x^d=(x_\alpha^d)_{\alpha\in\Alpha}$ and $x=(x_\alpha)_{\alpha\in\Alpha}$, we have $x_\alpha=\sup_{d\in D}x_\alpha^d$, whence $\pi_\alpha(x)=\sup_{d\in D}\pi_\alpha(x^d)$, so $\pi_\alpha$ is normal.

As a consequence, $\iota$ is well defined, and if $\theta:M\to N$ is a normal completely positive subunital map, then for each $\alpha\in\Alpha$, the map $\pi_\alpha\circ\theta$ is normal, completely positive, and subunital. Moreover, since we found that $x=(x_\alpha)_{\alpha\in\Alpha}$ is positive if and only if each $x_\alpha$ is positive, it follows that the product $\prod$ on $\vNs$ extends to a product on $\vN_+$, the category of von Neumann algebras and normal positive maps. 

Now assume that $\theta:M\to N$ is a map such that $\pi_\alpha\circ \theta$ is a normal completely positive subunital map for each $\alpha\in\Alpha$. Clearly it follows that $\theta$ is a normal positive subunital map. We assert it is also completely positive, so we need to show that for fixed $n\in\mathbb N$, $\theta^{(n)}$ is positive. In order to see this, we first mention that from \cite[Proposition IV.1.6]{takesaki:oa1} it follows $\mathrm{M}_n(\mathbb C)\bar\otimes(-):\vNs\to\vNs$ is natural isomorphic to the functor $\mathrm{M}_n(-):\vNs\to\vNs$, which acts on morphisms via $\rho\mapsto\rho^{(n)}$. Moreover, since $(\vNs)^\op$ is monoidal closed \cite[Theorem 9.5]{Kornell17}, it follows that its monoidal product preserves coproducts, hence $\mathrm{M}_n(\mathbb C)\bar\otimes(-)$ preserves products in $\vNs$ (see also \cite[Corollary 9.6]{Kornell17}). As a consequence, we have a $*$-isomorphism $\kappa:\mathrm{M}_n(N)\to\prod_{\alpha\in\Alpha}\mathrm{M}_n(N_\alpha)$. Since $\pi_\alpha\circ\theta$ is completely positive, it follows that $(\pi_\alpha\circ\theta)^{(n)}:\mathrm{M}_n(M)\to\mathrm{M}_n(N_\alpha)$ is positive, and since $\prod$ is the product in $\vN_+$, it follows that $((\pi_\alpha\circ\theta)^{(n)} )_{\alpha\in\Alpha}:\mathrm{M}_n(M)\to\prod_{\alpha\in\Alpha}\mathrm{M}_n(N_\alpha)$ is positive. Now, $\vNs$ and $\vN_+$ share the same isomorphisms, so $\kappa$ is also an isomorphism in $\vN_+$. Since both $\mathrm{M}_n(N)$ and $\prod_{\alpha}\mathrm{M}_n(N_\alpha)$ are the product of the $\mathrm{M}_n(N_\alpha)$ in $\vN_+$, it follows that $\kappa\circ\theta^{(n)}=( (\pi_\alpha\circ\theta)^{(n)})_{\alpha\in\Alpha}$, so $\theta^{(n)}=\kappa^{-1}\circ ( (\pi_\alpha\circ\theta)^{(n)})_{\alpha\in\Alpha}$ is the composition of two positive functions, hence positive. We conclude that $\theta$ is completely positive if and only if $\pi_\alpha\circ\theta$ is completely positive for each $\alpha\in\Alpha$. 

It now follows that $\varphi\leq\psi$ in $\vN(M,N)$ if and only if $\theta:=\psi-\varphi$ is completely positive,

 if and only if $\pi_\alpha\circ\theta=\pi_\alpha\circ\psi-\pi_\alpha\circ\varphi$ is completely positive for each $\alpha\in\Alpha$,
 
  if and only if $\pi_\alpha\circ\varphi\leq\pi_\alpha\circ\psi$ in $\vN(M,N_\alpha)$ for each $\alpha\in\Alpha$, 
  
  if and only if $\iota(\varphi)=(\pi_\alpha\circ\varphi)_{\alpha\in\Alpha}\leq (\pi_\alpha\circ\psi)_{\alpha\in\Alpha}=\iota(\psi)$ in $\prod_{\alpha\in\Alpha}\vN(M,N_\alpha)$.\\ Thus $\iota$ is an order embedding. The map $\iota$ also is surjective: If $\psi\in\prod_{\alpha\in\Alpha}\vN(M,N_\alpha)$, then $\psi=(\psi_\alpha)_{\alpha\in\Alpha}$ for some completely positive maps $\psi_\alpha:M\to N_\alpha$. Then the properties of the categorical product, $N = \prod_{\alpha\in A} N_\alpha$, imply there is a map $\varphi:M\to N$ with $\pi_\alpha\circ \varphi=\psi_\alpha$ for each $\alpha\in\Alpha$, i.e., $\iota(\varphi)=
  (\pi_\alpha\circ \varphi)_{\alpha\in A} = (\psi_\alpha)_{\alpha\in A} = \psi$.
\end{proof}

\begin{proof}[Proof of Theorem \ref{thm:continuous enrichment}]
	Let $M$ and $N$ be hereditarily atomic von Neumann algebras. We show that the pointed dcpo $\vN(M,N)$ is continuous. We first assume that $N=B(H)$ for some finite-dimensional Hilbert space $H$. 	
	Since $M$ is hereditarily atomic, we can write $M=\prod_{\alpha\in\Alpha}M_\alpha$, where $M_\alpha$ is a matrix algebra. Let $\varphi\in\vN(M,N)$. It follows from combining Lemmas \ref{lem:product of representations} and \ref{lem:Stinespring representation decomposition} that the minimal Stinespring representation $(\pi,v,K)$ of $\varphi$ can be obtained from the minimal Stinespring representations $(\pi_\alpha,v_\alpha,K_\alpha)$ of $\varphi_\alpha:=\varphi\circ\iota_\alpha$, and that $\pi[M]'$ embeds into $\prod_{\alpha\in\Alpha}B(K_\alpha)$. Since $(\pi_\alpha,v_\alpha,K_\alpha)$ is minimal, $\pi_\alpha[M_\alpha]v_\alpha H$ is dense in $K_\alpha$. Since both $M_\alpha$ and $H$ are finite-dimensional, it follows that $K_\alpha$ is finite-dimensional, too, so $\prod_{\alpha\in\Alpha}B(K_\alpha)$ is hereditarily atomic. Since $\pi[M]'$ embeds in this algebra, it is a hereditarily atomic von Neumann algebra, too, hence its unit interval $[0,1]_{\pi[M]'}$ is a continuous dcpo by Lemma \ref{lem:HAimpliesContinuous}. It now follows  from Proposition \ref{prop:Stinespring order iso} that $\downarrow\varphi$ is a continuous dcpo. Thus all principal downsets in $\vN(M,N)$ are continuous, so $\vN(M,N)$ is continuous by \cite[Proposition 2.2.17]{abramskyjung:domaintheory}. 
	
	If $N$ is an arbitrary hereditarily atomic von Neumann algebra, then $N=\prod_{\beta\in\mathrm{B}}B(H_\beta)$ for some finite-dimensional Hilbert spaces $H_\beta$. By Lemma \ref{lem:product of homsets} we have an order isomorphism $\vN(M,N)\cong\prod_{\beta\in\mathrm{B}}\vN(M,B(H_\beta))$, and since the product of pointed continuous dcpos is continuous \cite[Exercise I-2.18]{gierzetal:domains}, it follows that $\vN(M,N)$ is continuous. Hence $\vN$ is enriched over continuous dcpos, so  its dual $\QQ$ also is enriched over continuous dcpos.
\end{proof}

Recall the definition of a Kegelspitze in Section \ref{sub:kegelspitzen}. We denote the category of Kegelspitzen and Scott continuous linear maps by $\mathbf{KS}$. We proceed by showing that $\vN$ is enriched over $\mathbf{KS}$, whence $\HA$ and $\QQ$ are also enriched over $\mathbf{KS}$.


\begin{lemma}
	Let $M$ and $N$ be von Neumann algebras. Then $\vN(M,N)$ is a barycentric algebra if we define $\varphi+_r\psi:=r\varphi+(1-r)\psi$ for each $\varphi,\psi\in\vN(M,N)$ and each $r\in[0,1]$.
\end{lemma}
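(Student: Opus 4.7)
The plan is to first check that the proposed operation $\varphi +_r \psi \defeq r\varphi + (1-r)\psi$ actually lands in $\vN(M,N)$, and then to verify the four barycentric axioms by direct computation.

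For the closure under the operation, I would check each of the three properties defining an NCPSU map. Normality is preserved by arbitrary pointwise suprema of directed nets of positive combinations, and in particular by the pointwise convex sum $r\varphi + (1-r)\psi$, since each of $\varphi$ and $\psi$ is Scott-continuous with respect to the L\"owner order and scalar multiples and sums of Scott-continuous maps (valued in the bounded-directed-complete poset $N_{\mathrm{sa}}$) are Scott-continuous. Complete positivity is preserved because $(r\varphi + (1-r)\psi)^{(n)} = r\varphi^{(n)} + (1-r)\psi^{(n)}$ and convex combinations of positive maps are positive. Subunitality follows from
\[ (r\varphi + (1-r)\psi)(1_M) = r\varphi(1_M) + (1-r)\psi(1_M) \leq r \cdot 1_N + (1-r) \cdot 1_N = 1_N, \]
using that $r,(1-r) \geq 0$ and the L\"owner order is preserved by multiplication with nonnegative scalars.

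Next I would verify the four barycentric identities pointwise in $N$. For any $x \in M$:
\begin{itemize}
\item $(\varphi +_1 \psi)(x) = 1 \cdot \varphi(x) + 0 \cdot \psi(x) = \varphi(x)$.
\item $(\varphi +_r \psi)(x) = r\varphi(x) + (1-r)\psi(x) = (1-r)\psi(x) + (1-(1-r))\varphi(x) = (\psi +_{1-r} \varphi)(x)$.
\item $(\varphi +_r \varphi)(x) = r\varphi(x) + (1-r)\varphi(x) = \varphi(x)$.
\item The associativity axiom, for $p,r < 1$, reduces to verifying
\[ r(p\varphi + (1-p)\psi) + (1-r)\chi = pr\,\varphi + (1 - pr)\!\left(\tfrac{r-pr}{1-pr}\psi + \tfrac{1-r}{1-pr}\chi\right), \]
which holds because the coefficients of $\varphi$, $\psi$, and $\chi$ agree on both sides (namely $pr$, $r(1-p) = (1-pr)\tfrac{r-pr}{1-pr}$, and $1-r = (1-pr)\tfrac{1-r}{1-pr}$, respectively).
\end{itemize}

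None of these steps presents a real obstacle: the axioms are a routine algebraic verification once one observes that they all hold for affine combinations in any real vector space, and $\vN(M,N)$ inherits such combinations (within the subunital, completely positive, normal region) from the ambient vector space of linear maps $M \to N$. The only thing requiring mild attention is the closure step, where one must be careful to note that $r$ and $1-r$ are \emph{nonnegative}, so that preservation of positivity (hence of complete positivity and of the L\"owner order needed for normality and subunitality) goes through.
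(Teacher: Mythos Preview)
Your proposal is correct and follows essentially the same approach as the paper: first verify that $r\varphi + (1-r)\psi$ is again NCPSU (the paper checks complete positivity, subunitality, then normality via $\sigma$-weak continuity, whereas you argue normality directly via Scott continuity of sums and nonnegative scalar multiples), and then verify the four barycentric axioms by the same pointwise arithmetic. The only cosmetic difference is the order of the closure checks and the specific justification for normality; both are valid.
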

\begin{proof}
	Clearly $\varphi+_r\psi$ is normal. We show that it is completely positive. First assume that $R$ is a von Neumann algebra and $x,y\in R$ are positive. By \cite[Theorem 4.2.2]{kadisonringrose:oa1} $x+y$ is positive, and $rx$ is positive for each $r\in[0,\infty)$. Hence if $r\in[0,1]$ it follows that $rx+(1-r)y$ is a positive element of $R$. Now let $\varphi,\psi:M\to N$ be positive maps. So $\varphi(x)$ and $\psi(x)$ are positive for each positive $x\in M$, hence for $r\in[0,1]$, we also have that $(r\varphi+(1-r)\psi)(x)=r\varphi(x)+(1-r)\psi(x)$ is positive, so $r\varphi+(1-r)\psi$ is a positive map. 
	
	Recall that if $\omega:M\to N$ is a map and $n\in\mathbb{N}$, then $\omega^{(n)}:\mathrm{M}_n(M)\to\mathrm{M}_n(N)$ is the map $[x_{ij}]_{i,j=1}^n\mapsto[\varphi(x_{ij})]_{i,j=1}^n$, and that $\omega$ is completely positive if and only if $\omega{(n)}$ is positive for each $n\in\mathbb{N}$. Let $n\in\mathbb{N}$. Then for each $r,s\in\mathbb C$ and each $[x_{ij}]_{i,j=1}^n$ in $\mathrm{M}_n(M)$, we have
	\begin{align*}
	(r\varphi+s\psi)^{(n)}([x_{ij}]_{i,j=1}^n) & = [ (r\varphi+s\psi)(x_{ij})]_{i,j=1}^n=[r\varphi(x_{ij})+s\psi(x_{ij})]_{i,j=1}^n\\
	& =r[\varphi(x_{ij})]_{i,j=1}^n+s[\psi(x_{ij})]_{i,j=1}^n=r\varphi^{(n)}([x_{ij}]_{i,j=1}^n)+s\psi^{(n)}([x_{ij}]_{i,j=1}^n)\\
	& =(r\varphi^{(n)}+s\psi^{(n)})([x_{ij}]_{i,j=1}^n),
	\end{align*}
	hence $(r\varphi+s\psi)^{(n)}=r\varphi^{(n)}+s\psi^{(n)}$. Assume that $\varphi,\psi:M\to N$ are normal completely positive subunital maps. Then $\varphi^{(n)}$ and $\psi^{(n)}$ are positive, hence $r\varphi^{(n)}+(1-r)\psi^{(n)}$ is positive, which equals $(r\varphi+(1-r)\psi)^{(n)}$, and since $n\in\mathbb N$ is arbitrary, we conclude that $r\varphi+(1-r)\psi$ is completely positive. Moreover, since both $\varphi$ and $\psi$ are subunital, we have
	\[ (r\varphi+(1-r)\psi)(1_M)=r\varphi(1_M)+(1-r)\psi(1_M)\leq r 1_N+(1-r) 1_N=1_N,\]
	so $r\varphi+(1-r)\psi$ is subunital. Finally, $\varphi$ and $\psi$ are normal, so continuous with respect to the $\sigma$-weak operator topology, hence so is their convex combination $r\varphi+(1-r)\psi$. Thus $+_r$ defined by $\varphi+_r\psi:=r\varphi+(1-r)\psi$ for each $r\in[0,1]$ is a well-defined binary operation on $\vN(M,N)$.
	We now have $\varphi+_1\psi=1\varphi+(1-1)\psi=\varphi$,  $\varphi+_r\varphi=r\varphi+(1-r)\varphi=\varphi$, and
	$\varphi+_r\psi=r\varphi+(1-r)\psi=(1-r)\psi+(1-(1-r))\varphi=\psi+_{1-r}\varphi$. Let $\omega\in\vN(M,N)$ and $p,r\in[0,1)$. Then
	\begin{align*}(\varphi+_p\psi)+_r\omega & =r(\varphi+_p\psi)+(1-r)\omega=r\big(p\varphi+(1-p)\psi)\big)+(1-r)\omega\\
	& =rp\varphi+r(1-p)\psi+(1-r)\omega =rp\varphi+(1-rp)\left(\frac{r-rp}{1-rp}\psi+\frac{1-r}{1-rp}\omega\right)\\
	& = \varphi+_{rp}\left(\frac{r-rp}{1-rp}\psi+\frac{1-r}{1-rp}\omega\right)=\varphi+_{rp}\left(\frac{r-rp}{1-rp}\psi+\frac{1-rp-(r-rp)}{1-rp}\omega\right)\\
	& = \varphi+_{rp}\left(\frac{r-rp}{1-rp}\psi+\left(1-\frac{r-rp}{1-rp}\right)\omega\right)=\varphi_{rp}+\left(\psi+_{\frac{r-rp}{1-rp}}\omega\right).
	\end{align*}
	We conclude that $\vN(M,N)$ is indeed a barycentric algebra.
\end{proof}

Given von Neumann algebras $M$ and $N$, the map $0:M\to N$ defined $x\mapsto 0$ is normal, completely positive and subunital, hence we can take it as the distinguished element of  $\vN(M,N)$.

\begin{lemma}\label{lem:homsets in WStar are linear}
	Let $M$, $N$ and $R$ be von Neumann algebras and let $\varphi:M\to N$ be a normal completely positive subunital map. Then
	\[\vN(R,\varphi):\vN(R,M)\to \vN(R,N),\quad \psi\mapsto \varphi\circ\psi\] and 
	\[\vN(\varphi,R):\vN(N,R)\to \vN(M,R),\quad  \psi\mapsto\psi\circ\varphi\] are linear maps between barycentric algebras.
\end{lemma}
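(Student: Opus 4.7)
The plan is a short, direct verification, since linearity of maps of pointed barycentric algebras amounts to preservation of the zero element and the binary operations $+_r$, and both follow from the $\mathbb C$-linearity of the underlying maps and the pointwise definition of the barycentric structure on hom-sets.

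First I would handle $\vN(R,\varphi)$. Preservation of the distinguished element is immediate: the zero map $0_{R,M}\colon R\to M$ is sent to $\varphi\circ 0_{R,M}$, and since $\varphi$ is in particular $\mathbb C$-linear we have $\varphi(0_{M})=0_{N}$, so $\varphi\circ 0_{R,M}=0_{R,N}$. For the barycentric operation, let $\psi_1,\psi_2\in\vN(R,M)$ and $r\in[0,1]$. Then for every $x\in R$,
\[
\varphi\circ(\psi_1+_r\psi_2)(x)=\varphi\bigl(r\psi_1(x)+(1-r)\psi_2(x)\bigr)=r\varphi(\psi_1(x))+(1-r)\varphi(\psi_2(x)),
\]
by $\mathbb C$-linearity of $\varphi$, and the right hand side is exactly $(\varphi\circ\psi_1)+_r(\varphi\circ\psi_2)$ evaluated at $x$. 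Hence $\vN(R,\varphi)(\psi_1+_r\psi_2)=\vN(R,\varphi)(\psi_1)+_r\vN(R,\varphi)(\psi_2)$.

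Next I would handle $\vN(\varphi,R)$. Again $0_{N,R}\circ\varphi=0_{M,R}$, so the distinguished element is preserved. For $\psi_1,\psi_2\in\vN(N,R)$ and $r\in[0,1]$, precomposition with $\varphi$ gives, for every $x\in M$,
\[
(\psi_1+_r\psi_2)\circ\varphi(x)=r\psi_1(\varphi(x))+(1-r)\psi_2(\varphi(x))=\bigl((\psi_1\circ\varphi)+_r(\psi_2\circ\varphi)\bigr)(x),
\]
directly from the pointwise definition of $+_r$ on $\vN(M,R)$. Thus $\vN(\varphi,R)$ is linear as well.

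There is no real obstacle: both computations use only $\mathbb C$-linearity of the maps involved and the fact that the barycentric structure on hom-sets was defined pointwise in the preceding lemma, so well-definedness of the two functions on hom-sets (which is just functoriality of $\vN$, already available) combined with the two one-line calculations above suffices.
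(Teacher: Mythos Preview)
Your proof is correct and follows essentially the same approach as the paper's: both verify linearity directly via the $\mathbb C$-linearity of the maps and the pointwise definition of the barycentric operations $+_r$ on hom-sets. The paper is terser for $\vN(R,\varphi)$ (it simply observes that $\mathbb C$-linearity of $\varphi$ and $\varphi(0)=0$ suffice), while you spell out the pointwise calculation, but the content is the same.
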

\begin{proof}
	Since $\varphi$ is linear in the sense of linear algebra and $\varphi(0)=0$, it follows that $\vN(R,\varphi)$ is linear. 
	
	Write $f=\vN(\varphi,R)$. Then for $\psi,\omega\in \vN(N,R)$, we have
	\[ f(\psi+_r\omega)=(\psi+_r\omega)\circ\varphi,\]
	hence for each $x\in M$ we have
	\begin{align*} f(\psi+_r\omega)(x) & =((\psi+_r\omega)\circ\varphi)(x)=((r\psi+(1-r)\omega)\circ\varphi)(x)=(r\psi+(1-r)\omega)(\varphi(x))\\
	& =r\psi(\varphi(x))+(1-r)\omega(\varphi(x))=(r\psi\circ\varphi)(x)+((1-r)\omega\circ\varphi)(x)\\
	& =(r\psi\circ\varphi+(1-r)\omega\circ\varphi)(x) =(\psi\circ\varphi+_r\omega\circ\varphi)(x)
	=( f(\psi)+_rf(\omega))(x)
	\end{align*}
	hence $f(\psi+_r\omega)=f(\psi)+_rf(\omega)$, so $f$ is linear.
	Furthermore, we have for each $x\in M$:
	$f(0)(x)=(0\circ\omega)(x)=0(\omega(x))=0,$
	so $f(0)=0$, expressing that $f$ is linear.
\end{proof}



\begin{proposition}\label{prop:WStar homsets are KS}
	Let $M$ and $N$ be von Neumann algebras. Then $\vN(M,N)$ is a Kegelspitze.
\end{proposition}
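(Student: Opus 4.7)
The plan is to verify the three outstanding conditions of a Kegelspitze on $\vN(M,N)$: namely, that it carries a dcpo structure, and that the scalar multiplication $[0,1]\times\vN(M,N)\to\vN(M,N)$ and the binary operations $+_r$ are Scott-continuous separately in each argument. The previous lemmas already provide the pointed barycentric algebra structure, with $0\colon M\to N$ as the distinguished element, so these three conditions exhaust what remains.

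First I would recall that $\vN$ is $\DCPO_{\perp!}$-enriched (cited at the beginning of \secref{subsection:enrichment vN}), so $\vN(M,N)$ is automatically a pointed dcpo whose order is $\varphi\leq\psi$ iff $\psi-\varphi$ is completely positive. The key technical fact to emphasise is that directed suprema in $\vN(M,N)$ are computed \emph{pointwise}: if $(\varphi_d)_{d\in D}$ is a directed family with supremum $\varphi$, then for every positive $x\in M$, the net $(\varphi_d(x))_{d\in D}$ is monotone and norm-bounded (by $\varphi(1_M)\leq 1_N$, using subunitality), hence has a supremum in $N_{\mathrm{sa}}$ given by its strong operator limit, and this supremum equals $\varphi(x)$. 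By the usual decomposition of arbitrary elements of $M$ into linear combinations of positive elements, $\varphi_d\to\varphi$ pointwise on all of $M$.

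Next I would verify Scott continuity of $+_r$ in each argument separately. Fix $r\in[0,1]$ and $\psi\in\vN(M,N)$. For a directed family $(\varphi_d)_d$ with supremum $\varphi$, I compute at each positive $x\in M$ that
\[
(\varphi+_r\psi)(x)=r\varphi(x)+(1-r)\psi(x)=r\bigl(\sup_d\varphi_d(x)\bigr)+(1-r)\psi(x)=\sup_d\bigl(r\varphi_d(x)+(1-r)\psi(x)\bigr),
\]
using that on the bounded-directed-complete poset $N_{\mathrm{sa}}$ both scalar multiplication by a nonnegative real and addition of a fixed self-adjoint element are Scott-continuous (being monotone and preserving bounded monotone suprema in the strong operator topology). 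The same argument applies in the other argument. Since suprema in $\vN(M,N)$ are computed pointwise (previous paragraph), this gives the desired equality $\sup_d(\varphi_d+_r\psi)=\varphi+_r\psi$ in $\vN(M,N)$.

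Finally, for scalar multiplication $(r,\varphi)\mapsto r\cdot\varphi=\varphi+_r 0=r\varphi$: continuity in $\varphi$ is a special case of the above. For continuity in $r$, given a directed family $(r_d)_d\subseteq[0,1]$ with supremum $r$, I need $\sup_d(r_d\cdot\varphi)=r\cdot\varphi$. Pointwise at positive $x\in M$ the net $(r_d\varphi(x))_d$ is monotone (since $r_d\leq r_{d'}$ forces $(r_{d'}-r_d)\varphi(x)\geq 0$), bounded above by $r\varphi(x)$, and has strong operator limit $r\varphi(x)$; again by pointwise computation of suprema in $\vN(M,N)$, the required equality follows.

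The main obstacle I anticipate is verifying cleanly that suprema and the algebraic operations interact pointwise in $N_{\mathrm{sa}}$ while respecting complete positivity (not just positivity) of the resulting maps, but complete positivity is preserved under nonnegative scalar multiplication, pointwise sums and pointwise directed suprema in the strong operator topology, so this is routine once the pointwise computation of suprema in $\vN(M,N)$ is in hand.
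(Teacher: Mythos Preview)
Your proposal is correct and follows essentially the same approach as the paper's proof: both invoke the $\DCPO_{\perp!}$-enrichment of $\vN$ (from \cite{cho:semantics}) to get the pointed dcpo structure with pointwise-computed suprema given as strong operator limits, and then verify Scott continuity of $+_r$ and of scalar multiplication by direct pointwise calculation using these limits. The only cosmetic difference is that the paper isolates the monotonicity of $(-)+_r\psi$ (via complete positivity of $r(\omega-\varphi)$) as a separate preliminary step, whereas you fold it into the continuity argument and into your closing remark about complete positivity being preserved under the relevant operations.
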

\begin{proof}
	
	Upon inspecting the proof of \cite[Proposition 5.2]{cho:semantics} that shows that $\vN(M,N)$ is a pointed dcpo, the supremum $\varphi$ of any directed set $(\varphi_{\alpha})_{\alpha\in\Alpha}$ in $\vN(M,N)$ is calculated pointwise:
	$\varphi:M\to N$ is the normal completely positive subunital map such that
	$\varphi(x)=\bigvee_{\alpha\in\Alpha}\varphi_\alpha(x)$ for each $x\in M$, where the supremum of the $\varphi_{\alpha}(x)$ is calculated in $N_{\mathrm{sa}}$, the $\mathbb{R}$-vector space of all self-adjoint elements in $N$ and where this supremum is the limit of the $\varphi_\alpha(x)$ with respect to the strong operator topology on $N$ \cite[Lemma 5.1.4]{kadisonringrose:oa1}. This means that if $H$ is a Hilbert space such that $N$ is a von Neumann algebra on $B(H)$, we have that $\|\varphi_\alpha(x)h-\varphi(x)h\|$ converges to $0$ for each $h\in H$.

	We proceed by showing that $+_r$ is Scott continuous for each $r\in[0,1]$. Fix $\psi\in\vN(M,N)$. We start by showing that $(-)+_r\psi$ is monotone for each $\psi\in\vN(M,N)$, i.e., if $\varphi\leq\omega$ in $\vN(M,N)$, then $\varphi+_r\psi\leq\omega+_r\psi$. Fix $n\in\mathbb N$. Then $\varphi\leq\omega$ implies that $\omega-\varphi$ is completely positive, so $(\omega-\varphi)^{(n)}$ is positive. Then also $((\omega+_r\psi)-(\varphi+_r\psi))^{(n)}=(r\omega+(1-r)\psi-(r\varphi+(1-r)\psi))^{(n)}=(r\omega-r\varphi)^{(n)}=r(\omega-\varphi)^{(n)}$ is positive, so $(\omega+_r\psi)-(\varphi+_r\psi)$ is completely positive since $n$ is arbitrary. Hence indeed $\varphi+_r\psi\leq \omega+_r\psi$.
	
	Let $(\varphi_\alpha)_{\alpha\in\Alpha}$ be a directed set in $\vN(M,N)$ with supremum $\varphi$. For each $h\in H$, we have
	\[ \| (r\varphi_\alpha+(1-r)\psi)(x)h-(r\varphi+(1-r)\psi)(x)h\|=\|r\varphi_\alpha(x)h-r\varphi(x)h\|=|r|\|\varphi_\alpha(x)h-\varphi(x)h\|,\]
	which clearly converges to $0$, hence $\bigvee_{\alpha\in\Alpha}(r\varphi_\alpha(x)+(1-r)\psi(x)=r\varphi(x)+(1-r)\psi(x)$, whence $\bigvee_{\alpha\in\Alpha}(\varphi_\alpha+_r\psi)=\varphi+_r\psi$. We conclude that $+_r:\vN(M,N)\times \vN(M,N)\to \vN(M,N)$ is Scott continuous in the first variable, and since $\varphi+_r\psi=\psi+_{1-r}\varphi$, it also follows that $+_r$ is Scott continuous in the second variable, hence Scott continuous overall.
	
	Since $r\cdot \varphi=\varphi+_r0$, it follows that scalar multiplication is Scott continuous in $\varphi$, so we only have to check that it is Scott continuous in $r$. So fix $\varphi\in\vN(M,N)$, and let $D\subseteq[0,1]$ be a directed set with supremum $s$. 
	We need to show that $\bigvee_{d\in D}d\varphi=s\varphi$, hence for each $x\in M$, we need to show that $(d\varphi(x))_{d\in D}$ converges to $s\varphi(x)$ in the strong operator topology. Thus we need to show for each $x\in M$ and each $h\in H$ that
	$\|d\varphi(x)h-s\varphi(x)h\|$ converges to zero. But $\|d\varphi(x)-s\varphi(x)\|=|d-s|\|\varphi(x)\|$ and obviously $|d-s|$ converges to $0$ since $s$ is the supremum of $D$ in $[0,1]$, which show that scalar multiplication is also Scott continuous in both variables, hence Scott continuous overall.
\end{proof}

\begin{theorem}
	The category $\vN$ is enriched over $\mathbf{KS}$.
\end{theorem}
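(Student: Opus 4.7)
The plan is to assemble the theorem from the lemmas and propositions just established, adding the one missing ingredient: Scott continuity of pre- and post-composition by a fixed morphism. By Proposition \ref{prop:WStar homsets are KS}, each homset $\vN(M,N)$ is already a Kegelspitze, and by Lemma \ref{lem:homsets in WStar are linear} the maps $\vN(R,\varphi)$ and $\vN(\varphi,R)$ induced by a fixed $\varphi \in \vN(M,N)$ are linear between barycentric algebras. What remains is to verify that these two maps are Scott continuous, since together with linearity this is exactly what it means for them to be morphisms of $\mathbf{KS}$; the enrichment axioms (unit and associativity of composition) then follow automatically from the ordinary category structure of $\vN$.

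For the Scott continuity verification, I would use the explicit description of directed suprema in homsets recalled in the proof of Proposition \ref{prop:WStar homsets are KS}: if $(\psi_\alpha)_{\alpha \in \Alpha}$ is directed in $\vN(R,M)$ with supremum $\psi$, then $\psi(x) = \sup_\alpha \psi_\alpha(x)$ in $M_{\mathrm{sa}}$ for every $x \in R$, and the supremum is realised as the limit in the strong operator topology. For post-composition $\vN(R,\varphi)$, fix $x \in R$. The net $(\psi_\alpha(x))_\alpha$ is monotone and bounded above by $\psi(x)$, so by normality of $\varphi$ we get $\varphi(\psi_\alpha(x)) \nearrow \varphi(\psi(x))$ in $N_{\mathrm{sa}}$. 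Since this holds pointwise and suprema in $\vN(R,N)$ are computed pointwise, this gives $\sup_\alpha (\varphi \circ \psi_\alpha) = \varphi \circ \psi$, which is Scott continuity of $\vN(R,\varphi)$.

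For pre-composition $\vN(\varphi,R)$, consider a directed $(\psi_\alpha)_\alpha$ in $\vN(N,R)$ with supremum $\psi$. For each $x \in M$, evaluating at the fixed element $\varphi(x) \in N$ gives $\psi_\alpha(\varphi(x)) \nearrow \psi(\varphi(x))$ directly, again because the supremum in $\vN(N,R)$ is pointwise. Hence $(\psi_\alpha \circ \varphi)(x) \nearrow (\psi \circ \varphi)(x)$ for every $x \in M$, so $\sup_\alpha (\psi_\alpha \circ \varphi) = \psi \circ \varphi$, which is what was needed.

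There is no real obstacle here: the two previous results plus the pointwise description of suprema do all the work, and the only nontrivial ingredient is that normality of $\varphi$ is exactly what carries the monotone convergence through the post-composition. I would close with a brief remark that the same argument immediately transfers the $\mathbf{KS}$-enrichment to the full subcategories $\HA$ and $\vN_*, \HA_*$, and hence (by taking opposites of homsets, which preserves the Kegelspitze structure since the operations $+_r$ do not involve the order direction in their definition) to $\QQ$ and $\QQ_*$, consistent with Theorem \ref{thm:q-kegelspitze}.
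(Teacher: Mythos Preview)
Your proof is correct and follows essentially the same approach as the paper: both assemble the result from Proposition \ref{prop:WStar homsets are KS} (homsets are Kegelspitzen) and Lemma \ref{lem:homsets in WStar are linear} (pre- and post-composition are linear), and then supply Scott continuity of composition. The only difference is that the paper dispatches Scott continuity by citing \cite[Theorem 5.3]{cho:semantics} (which gives $\dcpobs$-enrichment of $\vN$), whereas you verify it directly from the pointwise description of suprema and the normality of $\varphi$; your direct argument is correct (strictly speaking it applies to positive, or self-adjoint, $x$, with the general case following by linearity), and it has the virtue of being self-contained.
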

\begin{proof}
	By Proposition \ref{prop:WStar homsets are KS}, any homset in $\vN$ is a Kegelspitze. We have to verify that for any von Neumann algebras $M$, $N$, and $R$, and any normal completely positive subunital map $\varphi:M\to N$, the maps
	\[\vN(R,\varphi):\vN(R,M)\to \vN(R,N),\quad \psi\mapsto \varphi\circ\psi\] and 
	\[\vN(\varphi,R):\vN(N,R)\to \vN(M,R),\quad  \psi\mapsto\psi\circ\varphi\] 
	are morphisms in $\mathbf{KS}$. It follows from \cite[Theorem 5.3]{cho:semantics} that $\vN$ is enriched over $\mathbf{DCPO}_{\perp!}$, hence the morphisms above are Scott continuous. By Lemma \ref{lem:homsets in WStar are linear}, the maps are linear, hence indeed morphisms in $\mathbf{KS}$. \end{proof}

\newpage
\section{The isomorphism $r_{X}$}
\label{app:r-isomorphism}

\begin{definition}
Let $\mathcal L: \Set \to \dcpo$ be the functor defined by
\begin{align*}
\mathcal L X &\eqdef (X, \sqsubseteq), \text{ where } \sqsubseteq \text{ is the discrete order on $X$} \\
\mathcal L f &\eqdef f
\end{align*}
\end{definition}

\begin{definition}
	A complex Banach algebra $A$ is called a \emph{$*$-algebra} if it equipped with an idempotent conjugate-linear map $*:A\to A$ such that $(ab)^*=b^*a^*$ for each $a,b\in A$. If, in addition, $\|a^*a\|=\|a\|^2$ for each $a\in A$, we call $A$ a \emph{C*-algebra}.  
\end{definition}
In particular any von Neumann algebra is a C*-algebra. More generally, for each Hilbert space $H$, every norm-closed $*$-subalgebra of $B(H)$ is a C*-algebra. The converse holds as well: if $A$ is a C*-algebra, any $*$-homomorphism $\pi:A\to B(H)$ for some Hilbert space $H$ is called a \emph{representation}. The Gelfand-Naimark Representation Theorem \cite[Corollary II.6.4.10]{Blackadar} states that any C*-algebra has a faithful representation. 
\begin{definition}
	Any C*-algebra that is $*$-isomorphic to a von Neumann algebra is called a \emph{W*-algebra}.
\end{definition}
 Let $V$ be a Banach space. The space of all continuous linear maps $V\to\mathbb C$ is denoted by $V^*$.
 

The following theorem gives an alternative characterization of W*-algebras:
\begin{theorem}\cite[III.2.1.8, Theorems III.2.4.1 \& III.2.4.2]{Blackadar}
	Let $A$ be a C*-algebra. Then $A$ is a W*-algebra if and only if there is a Banach space $F$ such that $A$ is isometrically isomorphic to the dual $F^*$ of $F$. Under the isometric embedding of $F$ into $F^{**}$, and by the isometric isomorphism $A^*\cong F^{**}$, $F$ is isometrically isomorphic to $A_*$, the subspace of $A^*$ consisting of all normal functionals on $A$. 
\end{theorem}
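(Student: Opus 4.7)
The plan is to prove the two directions separately and then establish uniqueness of the predual as the normal-functional subspace.

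For the forward direction (W*-algebra implies dual space), I would work with a concrete realization $A \cong M \subseteq B(H)$ as a von Neumann algebra on a Hilbert space $H$. The natural candidate for $F$ is built from the trace class operators $T(H)$. Specifically, I would define
\[ M_* \;\eqdef\; T(H)/M_\perp, \qquad M_\perp \;\eqdef\; \{\, t \in T(H) : \mathrm{tr}(tm)=0 \text{ for all } m \in M \,\}, \]
equipped with the quotient norm. The duality pairing $\langle m, [t] \rangle \eqdef \mathrm{tr}(tm)$ induces a canonical map $\Phi \colon M \to (M_*)^*$. The first step is to check that $\Phi$ is isometric: the inequality $\|\Phi(m)\| \le \|m\|$ is immediate from the trace inequality $|\mathrm{tr}(tm)| \le \|t\|_1\|m\|$, while the reverse inequality uses polar decomposition to exhibit, for each $m$, a rank-one $t$ nearly saturating the pairing. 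Surjectivity of $\Phi$ comes from a Hahn--Banach / double-polar argument: the image is weak-$*$ closed and separates points of $M_*$, hence is all of $(M_*)^*$. Finally, by construction, a functional on $M$ lies in $M_*$ iff it is weakly continuous on bounded sets, which coincides with normality.

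For the backward direction (dual space implies W*-algebra) -- essentially Sakai's theorem -- I would proceed in three steps. First, transport the weak-$*$ topology from $F^*$ to $A$; by Banach--Alaoglu the unit ball of $A$ is weak-$*$ compact. Second, show that multiplication $A \times A \to A$ is separately weak-$*$ continuous: fixing one factor, left/right multiplication is bounded and linear, and one checks weak-$*$ continuity on the unit ball via a Krein--Smulian-type argument together with the identity $\|ab\| \le \|a\|\|b\|$. Third, perform a GNS-style construction using the \emph{normal} states (those that are weak-$*$ continuous functionals of norm one achieving the value $1$ on $1_A$): the direct sum of the resulting cyclic representations yields a faithful $*$-representation $\pi \colon A \to B(H)$. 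The key technical point is that $\pi$ maps the weak-$*$ compact unit ball of $A$ onto a weak-operator closed set (using separate weak-$*$ continuity of multiplication and the bicommutant theorem), so $\pi(A)$ is a von Neumann algebra on $H$.

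For the uniqueness clause, suppose $A = F^*$ isometrically. Each $f \in F$ defines a weak-$*$ continuous functional on $A$, and the weak-$*$ continuous functionals on $A$ are precisely the normal ones by the characterization established in the forward direction (applied to the von Neumann algebra $\pi(A)$ from the backward step). The embedding $F \hookrightarrow F^{**} \cong A^*$ therefore lands exactly in $A_* \subseteq A^*$, and is isometric by the canonical embedding property; surjectivity onto $A_*$ follows because the weak-$*$ topology on $A = F^*$ and the normal topology on $A$ (viewed as a von Neumann algebra) coincide, forcing any normal functional to be of the form $\widehat f$ for some $f \in F$.

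The main obstacle will be Step 2 of the backward direction: establishing separate weak-$*$ continuity of multiplication. This is what makes Sakai's theorem delicate -- one cannot verify it directly on all of $A$, and must bootstrap from the unit ball using Krein--Smulian, combined with the fact that any norm-bounded weak-$*$ convergent net must converge weak-$*$ in the product structure. Everything else (GNS, bicommutant, Hahn--Banach) is fairly standard once separate weak-$*$ continuity of multiplication is in hand.
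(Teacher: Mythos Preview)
The paper does not prove this theorem: it is stated with a citation to Blackadar \cite[III.2.1.8, Theorems III.2.4.1 \& III.2.4.2]{Blackadar} and used as a black box to identify the predual of $\ell^\infty(X)$ with $\ell^1(X)$. There is therefore no proof in the paper to compare against.

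That said, your outline is the standard one and is broadly correct. The forward direction via $T(H)/M_\perp$ is routine. For the backward direction (Sakai's theorem) your sketch is reasonable but understates what is needed beyond separate weak-$*$ continuity of multiplication: one also needs weak-$*$ continuity of the involution and that the positive cone is weak-$*$ closed, and these feed into showing the GNS representation built from normal states is faithful and has weak-operator closed image. Your identification of the Krein--Smulian step as the crux is accurate; once separate weak-$*$ continuity on bounded sets is in hand, the remaining pieces follow from standard arguments, but a full proof would need those additional ingredients made explicit.
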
	
Hence if $M$ is a W*-algebra, then up to isometric isomorphism $M_*$ is the unique Banach space whose dual is isometrically isomorphic to $M$. We call $M_*$ the \emph{predual} of $M$.


Let $X$ be a set. From Example \ref{ex:ellinfty} we already know that $\ell^\infty(X)$ is a von Neumann algebra. Its predual is the the Banach space of all functions $f:X\to\mathbb C$ such that $\sum_{x\in X}|f(x)|<\infty$, which we denote by $\ell^1(X)$, see for instance Table B.1, p. 547 of \cite{landsman}. By the same table, we have a \emph{separating} pairing $\langle -,-\rangle:\ell^1(X)\times\ell^\infty(X)\to\mathbb C$, $(f,g)\mapsto\sum_{x\in X}f(x)g(x)$, i.e., for each different $f,f'\in\ell^1(X)$ there exists a $g\in \ell^\infty(X)$ such that $\langle f,g\rangle\neq\langle f',g\rangle$, and for each different $g,g'\in\ell^\infty(X)$ there is an $f\in\ell^1(X)$ such that $\langle f,g\rangle\neq\langle f,g'\rangle$. Using \cite[Theorem B.47]{landsman}, this pairing induces an isomorphism of vector spaces $\ell^1(X)\to\ell^\infty(X)_*$, which is isometric as one can see as follows. Let $g\in\ell^1(X)$, so $\|g\|=\sum_{x\in X}|g(x)|<\infty$. Let $f\in\ell^\infty(X)$ such that $\|f\|=1$, i.e., $\sup_{x\in X}|f(x)|=1$.
Then
\[ \|g\|=\|f\|\|g\|=\sup_{x\in X}\sum_{x'\in X}|f(x)||g(x')|\geq\sum_{x\in X}|f(x)||g(x)|\geq\left|\sum_{x\in X}f(x)g(x)\right|=|\zeta(g)(f)|, \]
hence 
$\|\zeta(g)\|=\sup\{|\zeta(g)(f)|:f\in\ell^\infty(X),\|f\|=1\}\leq\|g\|$. On the other hand, for ach $x\in X$, we can write $g(x)=|g(x)|e^{i\arg(g(x))}$. Let $f:X\to\mathbb C$ be given by $f(x)=e^{-i\arg(g(x))}$. Then $|f(x)|=1$ for each $x\in X$, hence $f\in\ell^\infty(X)$ with $\|f\|=1$. Then we have
\[ \|\zeta(g)\|\geq|\zeta(g)(f)|=\left|\sum_{x\in X}f(x)g(x)\right|=\left|\sum_{x\in X}|g(x)|\right|=\|g\|, \]
so $\|\zeta(g)\|=\|g\|$, showing that $\zeta$ is an isometry. We conclude:


\begin{lemma}\label{lem:r-map}
	We have an isometric isomorphism $\zeta:\ell^1(X)\to \ell^\infty(X)_*$ defined by 
	\[ \zeta(g)(f)=\sum_{x\in X}f(x)g(x)\] for each $g\in\ell^1(X)$ and each $f\in\ell^\infty(X)$.
\end{lemma}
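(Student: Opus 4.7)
The plan is to verify each clause of the lemma in turn: well-definedness of $\zeta$, linearity, that its image lies in the predual, injectivity, surjectivity, and finally the isometry property. Much of the setup has already been spelled out in the paragraph preceding the statement, so the proposal is essentially to organise that material into a coherent argument.

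First I would check that for each $g\in\ell^1(X)$ and $f\in\ell^\infty(X)$, the series $\sum_{x\in X}f(x)g(x)$ converges absolutely, with $|\zeta(g)(f)|\le\|f\|_\infty\|g\|_1$. This simultaneously shows that $\zeta(g)\in\ell^\infty(X)^*$ is a bounded linear functional and yields the inequality $\|\zeta(g)\|\le\|g\|_1$. Linearity of $\zeta$ itself in $g$ is immediate from the defining formula.

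Next I would show the image of $\zeta$ lies in $\ell^\infty(X)_* \subseteq \ell^\infty(X)^*$, i.e., that $\zeta(g)$ is a normal functional. Since the weak operator topology on $\ell^\infty(X)$ is the topology of pointwise convergence when $\ell^\infty(X)$ is viewed as multiplication operators on $\ell^2(X)$, a monotone increasing net $(f_\alpha)_\alpha$ bounded above by some $f\in\ell^\infty(X)$ satisfies $f_\alpha(x)\uparrow f(x)$ pointwise, and $\ell^1$-summability of $g$ together with the monotone/dominated convergence theorem on the counting measure gives $\zeta(g)(f_\alpha)\uparrow\zeta(g)(f)$.

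Bijectivity of $\zeta$ is where I would invoke the cited results. The separating pairing of \cite[Table B.1]{landsman} gives injectivity of $\zeta$ directly: if $\zeta(g)=\zeta(g')$ then $\langle g-g',f\rangle=0$ for all $f\in\ell^\infty(X)$, hence $g=g'$. For surjectivity onto $\ell^\infty(X)_*$, I would appeal to \cite[Theorem B.47]{landsman}, which identifies the predual of $\ell^\infty(X)$ with $\ell^1(X)$ through exactly this pairing.

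The isometry property is the last step, and it is already sketched in the preceding paragraph. The upper bound $\|\zeta(g)\|\le\|g\|_1$ comes from (a). For the matching lower bound, pick $f(x)\defeq e^{-i\arg g(x)}$ (with the convention $\arg 0 \defeq 0$); then $f\in\ell^\infty(X)$ with $\|f\|_\infty\le 1$, and $\zeta(g)(f)=\sum_{x}|g(x)|=\|g\|_1$, giving $\|\zeta(g)\|\ge\|g\|_1$. Combining the two inequalities yields $\|\zeta(g)\|=\|g\|_1$, completing the proof. The only mildly delicate point is the normality of $\zeta(g)$, but this reduces to a monotone convergence argument on the counting measure, so I do not anticipate any serious obstacle.
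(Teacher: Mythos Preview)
Your proposal is correct and follows essentially the same approach as the paper: the paper also invokes \cite[Theorem B.47]{landsman} for the vector-space isomorphism onto $\ell^\infty(X)_*$ and uses the identical $f(x)=e^{-i\arg g(x)}$ trick for the isometry. The only minor difference is that you supply an explicit monotone-convergence argument for normality, whereas the paper absorbs this into the citation of Theorem B.47; both are fine.
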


Let $\mathcal D(X)$ be the set of all $g\in\ell^1(X)$ such that $g(x)\geq 0$ for each $x\in X$ and $\sum_{x\in X}g(x)\leq 1$. We order $\mathcal D(X)$ by $g\leq g'$ if and only if $g(x)\leq g'(x)$ for each $x\in X$.

\begin{proposition}\label{prop:r-map-1}
	The isometric isomorphism $\zeta:\ell^1(X)\to \ell^\infty(X)_*$ restricts to an isomorphism of Kegelspitzen $\mathcal D(X)\to\vN(\ell^\infty(X),\mathbb C)$.
\end{proposition}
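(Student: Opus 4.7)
The plan is to verify directly that $\zeta$ restricts to the claimed map, that this restriction is bijective, and that it preserves all the relevant Kegelspitze structure (convex operations, distinguished point, order, and Scott-continuity). I would structure the argument in three steps mirroring these three claims.

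First, I would show that $\zeta(g)$ is NCPSU for every $g \in \mathcal D(X)$. Positivity is immediate: if $f \geq 0$ in $\ell^\infty(X)$, then each summand $f(x)g(x)$ in $\zeta(g)(f) = \sum_{x} f(x)g(x)$ is nonnegative. Because the codomain $\mathbb C$ is commutative, every positive functional into it is automatically completely positive (a standard fact, see~\cite{Blackadar}). Subunitality holds because $\zeta(g)(1_{\ell^\infty(X)}) = \sum_{x} g(x) \leq 1$ by definition of $\mathcal D(X)$. For normality, note that a bounded monotone net $(f_\alpha)$ in $[0,1]_{\ell^\infty(X)}$ converges to its supremum pointwise, and monotone convergence applied to the nonnegative series $\sum_{x} f_\alpha(x) g(x)$ yields $\zeta(g)(\sup_\alpha f_\alpha) = \sup_\alpha \zeta(g)(f_\alpha)$.

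Next, I would establish surjectivity onto $\vN(\ell^\infty(X),\mathbb C)$. Any $\varphi \in \vN(\ell^\infty(X),\mathbb C)$ is a bounded normal functional, hence lies in the predual $\ell^\infty(X)_*$, so by Lemma~\ref{lem:r-map} there is a unique $g \in \ell^1(X)$ with $\zeta(g) = \varphi$. Evaluating on the indicator functions $\chi_{\{x\}} \in \ell^\infty(X)$ gives $g(x) = \varphi(\chi_{\{x\}}) \geq 0$ by positivity of $\varphi$, while normality together with subunitality yields
\[
\sum_{x \in X} g(x) \;=\; \varphi\!\left(\sup_{F \subseteq X \text{ finite}} \chi_F\right) \;=\; \sup_{F} \varphi(\chi_F) \;\leq\; \varphi(1_{\ell^\infty(X)}) \;\leq\; 1,
\]
so $g \in \mathcal D(X)$. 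Injectivity of the restriction is inherited from $\zeta$.

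Finally, I would check that the resulting bijection is a Kegelspitze isomorphism. $\mathbb C$-linearity of $\zeta$ immediately gives $\zeta(r g_1 + (1-r) g_2) = \zeta(g_1) +_r \zeta(g_2)$, and $\zeta(0) = 0$, so convex combinations and the distinguished element are preserved. For the order: $g_1 \leq g_2$ in $\mathcal D(X)$ iff $g_2 - g_1 \in \mathcal D(X)$ (up to the bound), and applying $\zeta$ pointwise we get $\zeta(g_2) - \zeta(g_1) = \zeta(g_2 - g_1)$, which is completely positive by the argument of step one, i.e., $\zeta(g_1) \leq \zeta(g_2)$ in the L\"owner order on $\vN(\ell^\infty(X),\mathbb C)$; the converse direction uses testing on indicator functions, exactly as in step two. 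Scott-continuity in both directions follows because suprema of bounded directed subsets are computed pointwise on both sides. The only subtle point is the normality--Scott-continuity correspondence and the interchange of the supremum with the countable sum in step two; both are handled cleanly by monotone convergence, and the rest is bookkeeping.
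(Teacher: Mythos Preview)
Your proposal is correct and follows essentially the same route as the paper: characterize positivity of $\zeta(g)$ via indicator functions, read off subunitality from $\zeta(g)(1)=\sum_x g(x)$, and use linearity of $\zeta$ to transport the convex structure. The paper is slightly terser because it leans on the isometry $\|\zeta(g)\|=\|g\|$ together with $\|\omega\|=\omega(1)$ for positive functionals, and it omits your explicit normality check since by definition $\ell^\infty(X)_*$ already consists of the normal functionals; your verification via monotone convergence is sound but redundant given Lemma~\ref{lem:r-map}.
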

\begin{proof}
Let $g\in\ell^1(X)$. Then $\zeta(g)\in\ell^\infty(X)^*$ is positive if and only if $\zeta(g)(f)$ is positive for each positive $f\in\ell^\infty(X)$, i.e., $\sum_{x\in X}f(x)g(x)\geq 0$ for each positive $f\in\ell^\infty(X)$. Since $f$ is positive if and only if $f(x)\geq 0$ for each $x\in x$, and for each $y\in X$ the function $e_y:X\to\mathbb C$ defined by $e_y(x)=\delta_{x,y}$ for each $x\in X$ is an element of $\ell^\infty(X)$ that clearly is positive, it follows that $\zeta(g)(f)$ is positive if and only if $g(x)\geq 0$ for each $x\in X$.  It also follows that if $g'\in\ell^1(X)$ is another element such that $\zeta(g')$ is positive, then $\zeta(g)\leq\zeta(g')$ if and only if $\zeta(g'-g)$ is positive if and only if $g(x)\leq g'(x)$ for each $x\in X$. 

By \cite[Proposition II.6.2.5]{Blackadar}, any positive $\omega\in\ell^\infty(X)^*$ is bounded with norm $\|\omega\|=\omega(1)$. Thus $\omega$ is subunital if and only if $\|\omega\|\leq 1$. Hence, if $\zeta(g)$ positive, it is bounded with norm $\|\zeta(g)\|=\zeta(g)(1)$, so $\|\zeta(g)\|=\sum_{x\in X}g(x)$. Thus $\zeta(g)$ is positive and subunital if and only if $\|g\|=\|\zeta(g)\|=\sum_{x\in X}g(x)\leq 1$, where we used that $\zeta$ is an isometry. It follows that $\zeta$ restricts to an isometric isomorphism between $\vN(\ell^\infty(X),\mathbb C)$ and $\mathcal D(X)$, which is also an order isomorphism. Since $\zeta$ is an isomorphisms of vector spaces, it in particular preserve the convex structure, whence its restriction is a Kegelspitze isomorphism.
\end{proof}

\begin{proposition}
\label{prop:l-infty}
For any set $X$, there exists an isomorphism of Kegelspitzen $r_{X} \colon \QQ(\ell^\infty (1), \ell^\infty (X)) \cong \KL(1, \mathcal L X) \colon r_X^{-1}$.
Moreover, $r_X$ restricts to an isomorphism of sets $r_{X} \colon \QQ_*(\ell^\infty (1), \ell^\infty (X)) \cong \TD(1, \mathcal L X) \colon r_X^{-1}$.
\end{proposition}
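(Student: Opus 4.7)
The plan is to factor $r_X$ as a composite of three Kegelspitze isomorphisms, two of which follow from results already established and one of which is new. Using $\ell^\infty(1) \cong \mathbb C$, I identify $\QQ(\ell^\infty(1),\ell^\infty(X)) = \vN(\ell^\infty(X), \mathbb C)$, and Proposition~\ref{prop:r-map-1} then supplies a Kegelspitze isomorphism $\zeta^{-1}\colon \vN(\ell^\infty(X), \mathbb C) \cong \mathcal D(X)$. At the other end, evaluation at the unique point of the singleton dcpo $1$ gives a bijection $\KL(1,\mathcal L X) = \dcpo(1, \MM \mathcal L X) \cong \MM \mathcal L X$, and this is a Kegelspitze isomorphism because the Kegelspitze structure on $\KL$-homsets is the pointwise one.

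The new middle piece is a Kegelspitze isomorphism $\kappa\colon \mathcal D(X) \cong \MM \mathcal L X$. Since $\mathcal L X$ carries the discrete order it is trivially a continuous dcpo, so $\MM \mathcal L X = \VV \mathcal L X$ by Example~\ref{exa:MDiskegel}. Every subset of $\mathcal L X$ is Scott-open, and for any $\nu \in \VV \mathcal L X$ and $U \subseteq X$, Scott-continuity together with modularity yields $\nu(U) = \sum_{x \in U} \nu(\{x\})$ (as a supremum of finite partial sums). I then set $\kappa(g)(U) \defeq \sum_{x \in U} g(x)$ with inverse $\nu \mapsto \bigl(x \mapsto \nu(\{x\})\bigr)$, and verify by direct pointwise calculation that $\kappa$ preserves the order, the binary convex combinations $+_r$, the least element, and directed suprema. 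The composite of these three isomorphisms defines $r_X$, completing the Kegelspitze part.

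For the restricted statement, I would trace a point $x \in X$ through the composite: the normal unital $*$-homomorphism $\pi_x \colon f \mapsto f(x)$ maps via $\zeta^{-1}$ to the indicator of $\{x\}$ in $\mathcal D(X)$, then via $\kappa$ to the Dirac valuation $\delta_x \in \MM \mathcal L X$, and finally to $\eta_{\mathcal L X} \circ (\ast \mapsto x) \in \TD(1, \mathcal L X)$; conversely, every morphism in $\TD(1,\mathcal L X)$ is of this shape by Definition~\ref{def:T}. The main obstacle will be showing that every normal unital $*$-homomorphism $\ell^\infty(X) \to \mathbb C$ is a $\pi_x$. Unwinding through $\zeta$, this reduces to: if $g \in \mathcal D(X)$ satisfies $\zeta(g)(fh) = \zeta(g)(f)\zeta(g)(h)$ for all $f,h\in\ell^\infty(X)$ and $\zeta(g)(1)=1$, then $g$ is the indicator of a single point. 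Applying multiplicativity to the idempotent characteristic functions $e_y$ of singletons forces $g(y) \in \{0,1\}$ for each $y$, and unitality ($\sum_y g(y) = 1$) then forces exactly one $g(y)$ to equal $1$. Combined with the bijection $r_X$, this yields the required set isomorphism $\QQs(\ell^\infty(1),\ell^\infty(X)) \cong \TD(1,\mathcal L X)$.
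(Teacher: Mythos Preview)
Your proof is correct, and for the Kegelspitze isomorphism it follows essentially the same three-piece decomposition as the paper: the paper composes the evaluation isomorphism $\alpha\colon\KL(1,\mathcal L X)\to\MM\mathcal L X$, the isomorphism $\zeta$ of Proposition~\ref{prop:r-map-1}, and the identification $\mathbb C\cong\ell^\infty(1)$. The only difference is that the paper writes the middle identification as an equality $\MM\mathcal L X=\mathcal D(X)$ without further comment, whereas you construct $\kappa$ explicitly; your account fills in a detail the paper takes for granted.

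The restriction to $*$-homomorphisms is handled differently. The paper appeals to the full faithfulness of the functor $\ell^\infty\colon\Set\to\QQs$ (an external result) and verifies that the square relating $\ell^\infty$ and $\JJ$ through $r_X$ commutes; since both functors are injective on morphisms and the square commutes, $r_X$ restricts to a bijection between their images $\QQs(\ell^\infty(1),\ell^\infty(X))$ and $\TD(1,\mathcal L X)$. You instead characterise the normal unital $*$-homomorphisms $\ell^\infty(X)\to\mathbb C$ directly, using multiplicativity on the idempotents $e_y$ to force $g(y)\in\{0,1\}$ and unitality to single out a unique point. Your route is more elementary and self-contained, avoiding the imported full-faithfulness theorem; the paper's route makes the correspondence more visibly functorial and reuses structure already present in the categorical model of Figure~\ref{fig:categorical-model}.
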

\begin{proof}
Firstly, we have an isomorphism \[\alpha:\KL(1,\mathcal L X)=\DCPO(1,\mathcal{ML}X)\to \mathcal {ML}(X)=\mathcal D(X),\qquad f\mapsto f(1).\] 
Furthermore, we have an isomorphism $i:\mathbb C\to\ell^\infty(1)$ that to each $\lambda\in\mathbb C$ assigns the function $c_\lambda\in\ell^\infty(1)$ defined by $c_\lambda(*)=\lambda$. This isomorphism induces an isomorphism \[\iota:\vN(\ell^\infty(X),\mathbb C)\to\vN(\ell^\infty(X),\ell^\infty(1))=\QQ(\ell^\infty(1),\ell^\infty(X)), \qquad f\mapsto i\circ f.\] It follows that the required isomorphism $r_X^{-1}:\KL(1, \mathcal L X)\to\QQ(\ell^\infty (1), \ell^\infty (X))$ is given by $\iota\circ\zeta\circ\alpha$, where $\zeta:\mathcal D(X)\to\vN(\ell^\infty(X),\mathbb C)$ is the isomorphism from Proposition \ref{prop:l-infty}.

Recall that $\ell^\infty:\Set\to\QQs$ is a functor if for each function $f:X\to Y$ between sets, we define $\ell^\infty(f):\ell^\infty(Y)\to\ell^\infty(X)$ by $g\mapsto g\circ f$. By \cite[Theorem 7.4]{Kornell18}, this functor is fully faithful. 
Now consider the following diagram:
\cstikz{bert-diagram.tikz}

Since it consists only of isomorphisms and two injective functors (note that $\JJ$, which was introduced in Section \ref{sub:model}, is fully faithful by construction of $\TD$ in Definition \ref{def:T}), commutativity of the diagram implies that $r_X$ restricts to an isomorphism $\QQs(\ell^\infty(1),\ell^\infty(X))\to\TD(1,\mathcal LX)$. In order to show that the diagram indeed is commutative, let $f\in\Set(1,X)$. We will write $f(*)=x$. Then $\ell^\infty(f)\in \QQs(\ell^1(X),\ell^\infty(X))=\HAs(\ell^\infty(X),\ell^\infty(1))$ is given by $\ell^\infty(f)(g)=g\circ f$, i.e, $(\ell^\infty(f)(g))(*)=(g\circ f)(*)=g(x)$, hence $\ell^\infty(f)(g)=c_{g(x)}$.

In the other direction, we have $\JJ(f)(*)=(\eta_{\mathcal L X}\circ f)(*)=\eta_{\mathcal L X}(x)=\delta_x$, hence 
\[(r_{X}^{-1}\circ \JJ(f)) =(\iota\circ\zeta\circ\alpha)(\JJ(f))=(\iota\circ\zeta)(\JJ(f)(*))=(\iota\circ\zeta)(\delta_x),\]
which is a function $\ell^\infty(X)\to\ell^\infty(1)$, hence for $g\in \ell^\infty(X)$, we have
\[(r_{X}^{-1}\circ \JJ(f))(g)=(\iota\circ\zeta)(\delta_x)(g)=\iota\left(\sum_{y\in X}\delta_x(y)g(y)\right)=\iota(g(x))=i(g(x))=c_{g(x)}.\]
Thus $(r_X^{-1}\circ \JJ(f))(g)=c_{g(x)}=\ell^\infty(f)(g)$, so $\JJ(f)=r_X\circ\ell^\infty(f)$, which shows that the diagram indeed commutes.
\end{proof}

Theorem \ref{thm:r-iso-categorical} now follows immediately, because $\MM(X, \sqsubseteq) \cong \KL(1, \mathcal L X)$ and $\TD(1, \mathcal L X)$ consists precisely of the Dirac valuations of $\MM (X, \sqsubseteq).$

\newpage
\section{Construction of the lift isomorphism}
\label{app:lift-isomorphism}

We start by proving a general proposition whose second statement is a coherence property used in the proof of soundness/adequacy.

\begin{proposition}
\label{prop:lift-more-general}
  Given a dcpo $X$, hereditarily atomic von Neumann algebras $A,B$, and a \emph{discrete} dcpo $Y$ (equivalently, a set), there exists a natural Scott-continuous and linear bijection
\begin{align*}
  \widehat{(-)} \colon \dcpo(X \times Y, \QQ(A, B)) \xrightarrow{\cong}  \dcpo(X, \QQ(\ell^\infty(Y) \otimes A, B) ).
\end{align*}
  Furthermore, if $f \in \dcpo(X \times Y, \QQ(A, B)) $ and $g : X \to [1 \to Y]$ is a Scott-continuous function, then
  \[ \forall x \in X. \widehat f(x) \circ (\ell^\infty(g(x)) \otimes \id_A) = \left( f(x,g(x)(*)) \circ \cong \right) : \ell^\infty(1) \otimes A \to B , \]
as morphisms in $\QQ$, where the unnamed *-isomorphism is $\ell^\infty(1) \otimes A \cong A.$ 
\end{proposition}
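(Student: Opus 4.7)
The plan is to assemble $\widehat{(-)}$ as a composite of natural isomorphisms in $\DCPO$, based on three facts: (i) $\ell^\infty(Y) \cong \bigoplus_{y \in Y} \mathbb C$ in $\QQ$, because $\ell^\infty : \Set \to \QQs$ carries the decomposition $Y = \coprod_{y \in Y} 1$ in $\Set$ to a coproduct in $\QQs$, and the inclusion $\II : \QQs \to \QQ$ strictly preserves coproducts (being a left adjoint, since it has a right adjoint); (ii) tensor in $\QQ$ distributes over coproducts, which follows from the cocontinuity of $\otimes$ in $\QQs$ (Proposition~\ref{prop:quantum-omega-functors}) together with $\II$ being a strict monoidal functor that preserves coproducts; and (iii) $\DCPO$ is Cartesian closed and, for discrete $Y$, $[Y \to Z] \cong \prod_{y \in Y} Z$ because every function out of $Y$ is automatically Scott-continuous.

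Concretely, I would form the composite
\begin{align*}
  \dcpo(X \times Y, \QQ(A, B)) &\cong \dcpo(X, [Y \to \QQ(A, B)]) \\
                               &\cong \dcpo\bigl(X, \prod_{y \in Y} \QQ(A, B)\bigr) \\
                               &\cong \dcpo\bigl(X, \QQ(\bigoplus_{y \in Y} A, B)\bigr) \\
                               &\cong \dcpo\bigl(X, \QQ(\ell^\infty(Y) \otimes A, B)\bigr).
\end{align*}
The first step is currying in $\DCPO$; the second uses discreteness of $Y$; the third is the universal property of the coproduct in $\QQ$, which is a $\dcpo$-enriched isomorphism because $\QQ$ is enriched over continuous Kegelspitzen by Theorem~\ref{thm:q-kegelspitze}; and the fourth combines (i) and (ii) to identify $\ell^\infty(Y) \otimes A \cong \bigoplus_{y \in Y}(\mathbb C \otimes A) \cong \bigoplus_{y \in Y} A$ in $\QQ$. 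Each step is natural in all three components and preserves Scott-continuity and linearity (by the enrichment of $\QQ$ over continuous Kegelspitzen), so the composite bijection $\widehat{(-)}$ inherits all of these properties.

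For the coherence statement, I would unfold $\widehat f(x)$: under the decomposition $\ell^\infty(Y) \otimes A \cong \bigoplus_{y \in Y} A$, it is the unique mediating morphism whose $y$-th component, obtained by precomposing with the coprojection $\mathbf{in}_y : A \to \bigoplus_{y \in Y} A$, is $f(x, y) : A \to B$. Functoriality of $\ell^\infty$ on the arrow $g(x) : 1 \to Y$ in $\Set$ identifies $\ell^\infty(g(x)) : \ell^\infty(1) \to \ell^\infty(Y)$ in $\QQ$ with the coprojection $\mathbf{in}_{g(x)(*)} : \mathbb C \to \bigoplus_{y \in Y} \mathbb C$ under the iso of (i). Tensoring with $\id_A$ and using that $\otimes$ preserves coprojections (by (ii)) shows that $\ell^\infty(g(x)) \otimes \id_A$ corresponds, modulo the unitor $\ell^\infty(1) \otimes A \cong A$, to $\mathbf{in}_{g(x)(*)} : A \to \bigoplus_{y \in Y} A$. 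Precomposing with $\widehat f(x)$ then yields $f(x, g(x)(*))$ by the defining property of the mediator, as required.

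The main obstacle I expect is justifying distributivity of tensor over \emph{infinite} coproducts in $\QQ$, since $\QQ$ itself is not claimed to be monoidal closed (only $\QQs$ is). The resolution is to transfer the distributivity from $\QQs$ along $\II$: any coproduct diagram in $\QQs$ remains a coproduct diagram in $\QQ$, and strict monoidality of $\II$ converts the $\QQs$-distributivity $A \otimes \bigoplus_y C_y \cong \bigoplus_y (A \otimes C_y)$ into the corresponding iso in $\QQ$. A secondary bookkeeping point is verifying Kegelspitze compatibility of each step in the composite, but this is routine given the enrichment established in Theorem~\ref{thm:q-kegelspitze}.
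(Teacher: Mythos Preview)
Your proposal is correct and follows essentially the same approach as the paper: the same four-step chain of isomorphisms (currying, discreteness of $Y$, $\dcpo$-enriched coproducts in $\QQ$, and the identification $\ell^\infty(Y)\otimes A\cong\bigoplus_{y\in Y}A$ via monoidal closedness of $\QQs$ transferred along $\II$). The only noteworthy difference is in the coherence statement: you argue categorically by identifying $\ell^\infty(g(x))\otimes\id_A$ with a coprojection and invoking the mediating property of $\widehat f(x)$, whereas the paper unfolds $\widehat f(x)(b)=\sum_{y\in Y}e_y\otimes f(x,y)(b)$ explicitly and computes in $\HA$ using $\sigma$-weak continuity; both arguments are valid, and yours is arguably cleaner.
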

\begin{proof}
This is established via the following sequence of Scott-continuous bijections:


\[ 
\begin{bprooftree}
\AxiomC{$ \dcpo(X \times Y, \QQ(A, B)) $}
\doubleLine
\RightLabel{(currying)}
\UnaryInfC{$ \dcpo(X, [Y \to \QQ(A, B)] ) $}
\doubleLine
\RightLabel{($Y$ is discrete)}
\UnaryInfC{$ \dcpo(X, \prod_{|Y|}  \QQ(A, B) ) $}
\doubleLine
\RightLabel{($\QQ$ has $\dcpo$-enriched coproducts)}
\UnaryInfC{$ \dcpo(X, \QQ(\coprod_{|Y|}   A, B) ) $}
\doubleLine
\UnaryInfC{$ \dcpo(X, \QQ(\ell^\infty(Y) \otimes A, B) ) $}
\end{bprooftree}
\]
where $|Y|$ indicates the cardinality of $Y$.
Here the last isomorphism is induced by an isomorphism $\ell^\infty(Y)\otimes A\to\coprod_{|Y|}A$ in $\QQ$ (and in $\QQ_*$, which has the same isomorphisms as $\QQ$) as follows. 
From Proposition \ref{prop:product of von Neumann algebras} recall that in $\HA_*$ the $|Y|$-fold product of $A$ is given by $\prod_{|Y|}A=\{(a_y)_{y\in Y}:\sup_{y\in Y}\|a_y\|<\infty\}$, from which it easily follows that $\prod_{|Y|}\mathbb C=\ell^\infty(Y)$. Since $\QQ_*$ is monoidal closed, coproducts commute with the monoidal product $\otimes$, hence in $\HA_*$ products commute with the spatial tensor product $\bar\otimes$, i.e., \[\prod_{|Y|}A\cong\prod_{|Y|}(\mathbb C\bar{\otimes}A)\cong\left(\prod_{|Y|}\mathbb C\right)\bar\otimes A=\ell^\infty(Y)\bar\otimes A.\] The resulting $*$-isomorphism $\theta:\ell^\infty(Y)\bar\otimes A\to \prod_{|Y|}A$ is given explicitly by $\theta(a)=\sum_{y\in Y}e_y\otimes a_y$ for each  $a=(a_y)_{y\in Y}$ in $\prod_{|Y|}A$, where $e_y\in\ell^\infty(Y)$ is the function $Y\to\mathbb C$ whose $y$-component is $1$, while all other components vanish. The sum converges in the $\sigma$-weak operator topology.
Let $f\in\DCPO(X\times Y,\QQ(A,B))$, so $f$ is a Scott continuous function $X\times Y\to\HA(B,A)$. We can now construct $\widehat f$ as follows. Fix $x\in X$. Then we obtain a function $f(x,-):Y\to \HA(B,A)$, hence for $b\in B$, we have $f(x,-)(b):Y\to A$. This corresponds with an element $( f(x,y)(b))_{y\in Y}$ in $\prod_{|Y|}A$, hence using $\theta$, to an element $\sum_{y\in Y}e_y\otimes f(x,y)(b)$ in $\ell^\infty(Y)\bar\otimes A$. Thus $\widehat f(x)(b)=\sum_{y\in Y}e_y\otimes f(x,y)(b)$.


Now let $g\in\DCPO(X,[1\to Y])$. Then $g(x)$ is a function $1\to Y$, 
hence $\ell^\infty(g(x))$ is a morphism $\ell^\infty(1)\to\ell^\infty(Y)$ in $\QQ$, corresponding to the $*$-homomorphism $\ell^\infty(Y)\to\ell^\infty(1)$, $k\mapsto k\circ (g(x))$ in $\HA_*$. 
Thus we obtain a morphism \[\QQ(\ell^\infty(g(x))\otimes\id_A,B):\QQ(\ell^\infty(Y)\otimes A,B)\to\QQ(\ell^\infty(1)\otimes A,B),\qquad h\mapsto h\circ (\ell^\infty(g(x))\otimes\id_A).\] In particular, it follows from choosing $h=\widehat f(x)$ that $\widehat f(x)\circ (\ell^\infty(g(x))\otimes\id_A)$ is a morphism $\ell^\infty(1)\otimes A\to B$ in $\QQ$, which in the opposite category $\HA$ corresponds to the map \[(\ell^\infty(g(x))\otimes\id_A)\circ\widehat f(x):B\to \ell^\infty(1)\otimes A.\] For fixed $b\in B$, we then have \[(\ell^\infty(g(x))\otimes\id_A)\circ \widehat f(x)(b)=(\ell^\infty(g(x))\otimes\id_A)\circ\sum_{y\in Y}e_y\otimes f(x,y)(b)=\sum_{y\in Y}e_y(g(x))\otimes f(x,y)(b),\] where we used that $\ell^\infty(g(x))\otimes\id_A$ is continuous with respect to the $\sigma$-weak operator topology. Now, the inverse of the isomorphism $\cong:\ell^\infty(1)\otimes A\to A$ in $\QQ$ corresponds to the $*$-isomorphism $\zeta:\ell^\infty(1)\bar\otimes A\to A$ in $\HA$, which acts on elementary tensors by $k\otimes a\mapsto k(*)a$. By continuity of $\zeta$ with respect to the $\sigma$-weak operator topology, we obtain \[\zeta\circ (\ell^\infty(g(x))\otimes\id_A)\circ \widehat f(x)(b)=\sum_{y\in Y}e_y(g(x)(*))f(x,y)(b)=\sum_{y\in Y}\delta_{y,g(x)(*)}f(x,y)(b)=f(x,g(x)(*))(b).\] Thus $(\ell^\infty(g(x))\otimes\id_A)\circ \widehat f(x)=\zeta^{-1}\circ f(x,g(x)(*))$ in $\HA$, whence the expression in the statement holds in the opposite category $\QQ$.
\end{proof}



\newpage
\section{Proof of Strong Adequacy}
\label{app:adequacy}

In this appendix we provide a proof of Theorem \ref{thm:strong-adequacy}. We begin by stating a corollary for the soundness theorem.

\begin{corollary}
\label{cor:soundness-inequality}
  Let $\cdot \vdash m : P$ be a closed classical term and $\cdot \vdash \mathcal C: \mathbf A ; \qbit^k$ a closed quantum configuration. Then:
  \begin{align*}
  \lrb m \geq \sum_{v \in \text{Val}(m)} P(m \probto{}_* v) \lrb{v} \qquad \qquad
  \lrb{\mathcal C} \geq \sum_{\mathcal V \in \mathrm{ValC}(\mathcal C)} P(\mathcal C \probto{}_* \mathcal V) \lrb{\mathcal V} .
  \end{align*}
\end{corollary}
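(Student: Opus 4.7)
The plan is to derive the corollary from Soundness (Theorem \ref{thm:soundness}) by a straightforward induction on reduction length, together with Scott-continuity and linearity of the (countable) convex sum operation on the ambient Kegelspitzen. I would first establish, by mutual induction on $n \in \mathbb N$ over all closed classical terms $m$ and all closed quantum configurations $\mathcal C$, the finite-step inequalities
\[ \lrb m \geq \sum_{v \in \Val(m)} P(m \probto{}_{\leq n} v) \lrb{v} \qquad \text{and} \qquad \lrb{\mathcal C} \geq \sum_{\mathcal V \in \ValC(\mathcal C)} P(\mathcal C \probto{}_{\leq n} \mathcal V) \lrb{\mathcal V}, \]
and then pass to the supremum over $n$. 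Both sides live in Kegelspitzen (the Kleisli hom-set $\KL(1,\lrb P)$ and the continuous Kegelspitze $\QQ(\mathbb C,\lrb{\AAA}\otimes \lrb{\qbit^k})$ respectively), so both the inequality $\geq$ and the convex-sum notation are well defined.

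For the base case $n = 0$: if $m$ is a value, the only path of length $\leq 0$ is the empty path at $m$, so the right-hand side equals $\lrb m$; if $m$ is not a value, every path has length $\geq 1$, so the right-hand side is the empty convex sum, i.e.\ the least element of the Kegelspitze, and the inequality is trivial. The configuration case is identical, using Progress to separate value configurations from reducible ones.

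For the inductive step from $n$ to $n+1$: if $m$ is a value the reasoning is unchanged from the base case. Otherwise Progress supplies at least one reduction $m \probto{p} m'$, and Soundness gives the finite convex decomposition $\lrb m = \sum_{m \probto{p} m'} p \lrb{m'}$. Applying the inductive hypothesis to each reduct and then commuting the two convex sums, which is permitted because scalar multiplication and binary (hence finite) convex combinations are Scott-continuous and linear in each argument on any Kegelspitze (Definition~\ref{def:convex-sums}), yields
\[ \lrb m \;\geq\; \sum_{m \probto{p} m'} p \sum_{v} P(m' \probto{}_{\leq n} v) \lrb{v} \;=\; \sum_v \Bigl( \sum_{m \probto{p} m'} p \cdot P(m' \probto{}_{\leq n} v) \Bigr) \lrb{v}. \]
Since $m$ is not a value, every path in $\Paths_{\leq n+1}(m,v)$ decomposes uniquely as a first step $m \probto{p} m'$ followed by a tail in $\Paths_{\leq n}(m',v)$, and its weight is the product of the weights; so the inner coefficient is exactly $P(m \probto{}_{\leq n+1} v)$. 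The same argument works verbatim for configurations, invoking the configuration half of Soundness.

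Finally, taking suprema over $n$ and using Scott-continuity and linearity of the countable convex sum on a Kegelspitze (Definition \ref{def:convex-sums}), the supremum passes inside the sum:
\[ \lrb m \;\geq\; \sup_n \sum_{v\in\Val(m)} P(m\probto{}_{\leq n} v) \lrb{v} \;=\; \sum_{v \in \Val(m)} \bigl( \sup_n P(m\probto{}_{\leq n} v) \bigr) \lrb{v} \;=\; \sum_{v \in \Val(m)} P(m \probto{}_* v) \lrb{v}, \]
and analogously for $\mathcal C$. The only subtlety I expect to need care with is that the classical and quantum dynamics are mutually recursive (via the ``run'', ``$\mathbf{init}$'', ``$m\mathbf q$'' and ``$\mathbf{lift}$'' rules), so the induction must be joint over terms and configurations with a common step counter $n$, rather than two independent inductions; Soundness itself is already proved in this joint form, so this adds no real difficulty beyond bookkeeping.
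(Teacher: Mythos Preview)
Your argument is correct and is the standard derivation of this inequality from Soundness; the paper itself does not spell out the proof but simply cites \cite[Corollary~67]{m-monad} for the classical statement and declares the quantum statement ``fully analogous,'' so your induction on reduction length followed by passage to the supremum is precisely the expected proof behind that citation.
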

\begin{proof}
  The classical statement is identical to \cite[Corollary 67]{m-monad} and the quantum statement is fully analogous.
\end{proof}

The remainder of the appendix is dedicated to showing the converse inequalities, which are much more difficult to prove.

\subsection{Overview of the Proof Strategy}

Our proof strategy uses logical relations to establish strong adequacy.  Our
logical relations are described in Theorem \ref{thm:formal-relations} and their
design follows that of \cite{m-monad} which is in turn based on the logical
relations of Claire Jones in her thesis \cite{jones90}.  We establish some
useful closure properties for these relations in Subsection
\ref{sub:closure-properties} and this allows us to prove the Fundamental Lemma
(Lemma \ref{lem:fundamental}). Once the Fundamental Lemma is proved, strong
adequacy follows easily.

A large part of the effort in proving Strong Adequacy lies in the proof of
Theorem \ref{thm:formal-relations}. The classical logical relations there are
defined via non-well-founded induction.  The proof of the existence of these
relations is not obvious. We use methods from
\cite{m-monad,icfp19,lnl-fpc-lmcs} (which are in turn based on ideas from
\cite{fiore-thesis}) to show the existence of these logical relations.

The quantum logical relations are actually easier to define and we do this
first (Subsection \ref{sub:quantum-logical}). The reason for this is that the
quantum subsystem is \emph{first-order} and all the \emph{quantum values}
depend only on themselves and not on classical terms or quantum terms. This is
not the case for the classical values, because they may depend on terms that
are not values (e.g. lambda abstractions). Existence of the quantum logical
relations is clear and immediate from their definition.

The main idea for the proof of existence of the classical relations is the
following. For every type $P$, we define a category $\RR(P)$ of logical
relations with a suitable notion of morphism.  We show that every such category
has sufficient structure to construct parameterised initial algebras
(Proposition \ref{prop:logical-colimits}). It follows that we may define
functors on these categories (Proposition \ref{prop:logical-functors1}) which
construct logical relations as they are needed in Theorem
\ref{thm:formal-relations}. All of these functors are $\omega$-cocontinuous
(Proposition \ref{prop:logical-functors2}) and therefore we may form
(parameterised) initial algebras using them. This, in turn, allows us to define
\emph{augmented interpretation of types} on the categories $\RR(P)$. These
interpretations satisfy important coherence conditions with respect to the
standard interpretation of types (Corollary \ref{cor:cool-form}). These
coherence conditions are important, because they show that every augmented
interpretation $\elrbs P$ of a type $P$ contains the standard interpretation
$\lrb P$, together with the logical relation that we need for the proof, as
shown in Theorem \ref{thm:formal-relations}.

The proof stategy that we use to define the classical logical relations and
their existence is heavily based on \cite{m-monad}, so we use the same notation
as there.

\subsection{Notation for Reduction Paths}
\label{sub:reduction-paths}

Before we may define our logical relations, we have to introduce some auxiliary definitions for reduction paths.

\begin{assumption}
Throughout this appendix, we assume that all types are closed, unless otherwise noted.
\end{assumption}

\begin{definition}
For each classical type $P$ and quantum type $\AAA$ we write:
\begin{itemize}
  \item $\Val(P) \defeq \{ V \ |\ V \text{ is a classical value and } \cdot \vdash V : P\}.$
  \item $\Prog(P) \defeq \{ M \ |\ M \text{ is a classical term and } \cdot \vdash M : P\}.$
  \item $\ValC(\AAA; \qbit^k) \defeq \{ \mathcal V \ |\ \mathcal V \text{ is a value configuration and } \cdot \vdash \mathcal V : \AAA; \qbit^k \}.$
  \item $\mathrm{C}(\AAA; \qbit^k) \defeq \{ \mathcal C \ |\ \mathcal C \text{ is a configuration and } \cdot \vdash \mathcal C : \AAA; \qbit^k \}.$
\end{itemize}
\end{definition}

\begin{definition}
\label{def:paths}
Let $M \colon P$ and $N \colon P$ be closed classical terms of the same type. We define
\[ \mathrm{Paths(M, N)} \defeq \left\{ \pi\ |\ \pi = \left( M = M_0  \probto{p_0} M_1 \probto{p_1} M_2 \probto{p_2} \cdots \probto{p_n} M_n = N \right)  \text{ is a reduction path} \right\} . \]
In other words, $\mathrm{Paths(M, N)}$ is the set of all reduction paths from $M$ to $N$. The \emph{probability weight} of a path $\pi \in \Paths(M,N)$ is $P(\pi) \eqdef \prod_{i=0}^n p_i,$ i.e., it is simply the product of all the probabilities of single-step reductions within the path.
The \emph{set of terminal reduction paths of $M$} is
\[ \mathrm{\TPaths(M)} \defeq \bigcup_{V \in \Val(P)} \Paths(M,V) . \]
Thus the endpoint of any path $\pi \in \TPaths(M)$ is a value. If $\pi \in \Paths(M,W)$, where $W$ is a value, then we shall write $V_\pi \eqdef W.$ That is, for a path $\pi \in \TPaths(M)$, the notation $V_\pi$ indicates the endpoint of the path $\pi$ which is indeed a value.
\end{definition}

Reduction paths for quantum configurations are defined in the same way.

\begin{definition}
\label{def:paths-quantum}
Let $\cdot \vdash \mathcal C \colon \AAA;\qbit^k$ and $\cdot \vdash \mathcal D \colon \AAA; \qbit^k$ be closed quantum configurations of the same type. We define
\[ \mathrm{Paths(\mathcal C, \mathcal D)} \defeq \left\{ \pi\ |\ \pi = \left( \mathcal C = \mathcal C_0  \probto{p_0} \mathcal C_1 \probto{p_1} \mathcal C_2 \probto{p_2} \cdots \probto{p_n} \mathcal C_n = \mathcal D \right)  \text{ is a reduction path} \right\} . \]
In other words, $\mathrm{Paths(\mathcal C, \mathcal D)}$ is the set of all reduction paths from $\mathcal C$ to $\mathcal D$. The \emph{probability weight} of a path $\pi \in \Paths(\mathcal C, \mathcal D)$ is $P(\pi) \eqdef \prod_{i=0}^n p_i,$.
The \emph{set of terminal reduction paths of $\mathcal C$} is
\[ \mathrm{\TPaths(\mathcal C)} \defeq \bigcup_{\mathcal V \in \ValC(\AAA; \qbit^k)} \Paths(\mathcal C, \mathcal V) . \]
Thus the endpoint of any path $\pi \in \TPaths(\mathcal C)$ is a value configuration. If $\pi \in \Paths(\mathcal C, \mathcal W)$, where $\mathcal W$ is a value configuration, then we shall write $\mathcal V_\pi \eqdef \mathcal W.$ That is, for a path $\pi \in \TPaths(\mathcal C)$, the notation $\mathcal V_\pi$
indicates the endpoint of the path $\pi$ which is indeed a value configuration.
\end{definition}

\begin{remark}
  We also note that for each closed classical term $M$ and each closed quantum configuration $\mathcal C$ the sets $\TPaths(M)$ and $\TPaths(\mathcal C)$ are both countable.
\end{remark}

\subsection{The Quantum Logical Relation}
\label{sub:quantum-logical}

Next, we define a logical relation between quantum configurations and their semantic domain.

\begin{definition}
\label{def:logical-configuration}
  For each closed quantum type $\AAA$ and $k \in \mathbb N$, we define a \emph{logical relation}
  \begin{align*}
    \btleq_\AAA^k &\subseteq \QQ(\mathbb C, \lrb{\AAA \otimes \qbit^k}) \times \mathrm{C}(\AAA ; \qbit^k) \qquad \text{by}\\
    c \btleq_\AAA^k & \mathcal C \text{ iff } c \in \mathcal S(\btleq_\AAA^k; \mathcal C),  \text{ where } \mathcal S(\btleq_\AAA^k; \mathcal C) \text{ is the Scott-closure in $\QQ(\mathbb C, \lrb{\AAA \otimes \qbit^k})$ of the set } \\
      &\mathcal S_{0}(\btleq_\AAA^k; \mathcal C) \defeq \left\{ \sum_{\pi \in F} P(\pi) \lrb{\mathcal V_\pi} \ |\ F \subseteq \TPaths(\mathcal C),\ F \text{ is finite} \right \} .
  \end{align*}
\end{definition}

\begin{lemma}
  \label{lem:value-configuration}
  If $\cdot \vdash \mathcal V : \AAA; \qbit^k$ is a value configuration, then $\lrb{\mathcal V} \btleq^k_{\AAA} \mathcal V.$ 
\end{lemma}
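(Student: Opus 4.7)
The plan is to observe that the membership $\lrb{\mathcal V}\in\mathcal S_0(\btleq_\AAA^k;\mathcal V)$ already holds on the nose, so there is no need even to invoke the Scott-closure. Concretely, I will exhibit a single finite subset $F\subseteq\TPaths(\mathcal V)$ whose associated weighted sum of value-configuration denotations equals $\lrb{\mathcal V}$.

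First, I would appeal to Definition~\ref{def:paths-quantum} with $n=0$: the trivial reduction path $\pi_{0}$ consisting only of the configuration $\mathcal V$ itself is a legitimate element of $\Paths(\mathcal V,\mathcal V)$, and its probability weight is the empty product $P(\pi_{0})=1$. Since $\mathcal V$ is by hypothesis a value configuration, the endpoint $\mathcal V_{\pi_{0}}=\mathcal V$ is a value configuration, so $\pi_{0}\in\TPaths(\mathcal V)$. Therefore setting $F\defeq\{\pi_{0}\}$ yields
\[
\sum_{\pi\in F}P(\pi)\lrb{\mathcal V_{\pi}} \;=\; 1\cdot\lrb{\mathcal V} \;=\; \lrb{\mathcal V},
\]
which witnesses that $\lrb{\mathcal V}\in\mathcal S_{0}(\btleq_\AAA^{k};\mathcal V)$. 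From $\mathcal S_{0}(\btleq_\AAA^{k};\mathcal V)\subseteq\mathcal S(\btleq_\AAA^{k};\mathcal V)$, the conclusion $\lrb{\mathcal V}\btleq_\AAA^{k}\mathcal V$ follows immediately by unfolding Definition~\ref{def:logical-configuration}.

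The only subtlety worth confirming is the use of a length-zero path. Strictly speaking, Definition~\ref{def:paths-quantum} permits reduction sequences of any natural-number length $n\geq 0$, and the convention $P(\pi)=\prod_{i=0}^{n-1}p_{i}=1$ when $n=0$ is standard (as in the classical development of \cite{m-monad} on which this is modeled). If the definition were intended to exclude such trivial paths, one would instead argue: value configurations are normal for $\probto{p}$ (no operational rule applies to a term whose outermost constructor is a value former), so $\TPaths(\mathcal V)$ is either empty or contains only $\mathcal V$ itself, and to avoid the first option the length-zero path must be admitted. In either reading, the proof obligation reduces to the same trivial computation, so no genuine obstacle arises — this lemma exists purely to serve as a base case in the later inductive arguments establishing the Fundamental Lemma.
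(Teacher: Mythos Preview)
Your proof is correct and is precisely the unfolding that the paper has in mind when it writes ``This is immediate by Definition~\ref{def:logical-configuration}.'' You have simply made explicit the one-element set $F=\{\pi_0\}$ with the length-zero path, which is the intended witness.
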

\begin{proof}
  This is immediate by Definition \ref{def:logical-configuration}.
\end{proof}

\subsection{Classical Logical Relations}

\begin{assumption}
  In this section and the next two, all types are assumed to be classical, unless otherwise noted.
\end{assumption}

We define sets of relations that are parameterised by dcpo's $X$ from our semantic category, types $P$ from our language and partial deterministic embeddings $e_X : X \kto \lrb P$ which show how $X$ approximates $\lrb P$.
We shall write relation membership in infix notation, that is, for a binary relation $\tleq$, we write $v \tleq V$ to indicate $(v, V) \in \tleq.$

\begin{definition}
\label{def:logical-relations}
For any dcpo $X$, type $P$ and morphism $e \colon X \kto \lrb P$ in $\PDe$, let:
\begin{align*}
  \ValRel(X, P, e)  &= \{ \tleq_{X,P}^e \subseteq \TD(1, X) \times \Val(P) \ |\ \forall V \in \Val(P).\ (-) \tleq_{X,P}^e V \text{ is a Scott closed subset of }\\
  & \TD(1,X) \text{ and} \forall V \in \Val(P).\ v \tleq_{X,P}^e V \Rightarrow e \kcirc v \leq \lrb V \}. 
\end{align*}
\end{definition} 

\begin{remark}
In the above definition, relations $\tleq_{X,P}^e \in \ValRel(X,P,e)$ can be seen as ternary relations $\tleq_{X,P}^e \subseteq \TD(1,X) \times \Val(P) \times \{e\}$. However, since there is no choice for the third component, we prefer to see them as binary relations that are parameterised by the embeddings $e$. Indeed, this leads to a much nicer notation.
We shall also sometimes indicate the parameters $X, P$ and $e$ of the relation in order to avoid confusion as to which set $\ValRel(X,P,e)$ it belongs to.
\end{remark}

The relations we need for the adequacy proof inhabit the sets $\ValRel(\lrb P, P, \kid_{\lrb P})$. In the remainder of the appendix, we will show how to choose exactly one relation (the one we need) from each of those sets.

The next definition we introduce is crucial for the proof of strong adequacy.

\begin{definition}\label{def:SM}
Given a relation $\tleqd \in \ValReld$ and a term $\cdot \vdash M : P$, let $\mathcal S(\tleqd; M)$ be the Scott-closure in $\KL(1,X)$ of the set
\begin{equation}
\label{eq:logical-sums}
\mathcal S_{0}(\tleqd; M) \defeq \left\{ \sum_{\pi \in F} P(\pi) v_\pi \ |\ F \subseteq \TPaths(M),\ F \text{ is finite and }  \text{$v_\pi \tleqd V_\pi$ for each $\pi \in F$} \right\} .
\end{equation}
In other words, $\mathcal S(\tleqd; M)$ is the smallest Scott-closed subset of $\KL(1,X)$ which contains all morphisms of the form in \eqref{eq:logical-sums}.
For a subset $U \subseteq \KL(1,X),$ we write $\overline U$ to indicate its Scott-closure in $\KL(1,X)$.
\end{definition}

\begin{lemma}
\label{lem:semantically-dense-value}
For any value $V$, we have $\mathcal S(\tleqd; V) = \ol{\{ v\ |\ v \tleqd V\}} \cup \{ 0\} = \ol{\{ v\ |\ v \tleqd V\} \cup \{ 0\}} .$
\end{lemma}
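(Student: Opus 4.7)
The plan is to unpack the definition of $\mathcal S(\tleqd; V)$ when $V$ is a value and observe that the relevant set is already Scott-closed, so no nontrivial closure computation is needed.

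First I would note that since $V$ is a value, reduction does not apply to $V$ (this is standard for the call-by-value operational semantics of Section~\ref{sec:syntax}), so $\TPaths(V)$ consists of exactly the trivial length-zero path $\pi_0$ with endpoint $V_{\pi_0}=V$ and weight $P(\pi_0)=1$. Enumerating the two finite subsets $F \subseteq \TPaths(V)$, namely $F = \varnothing$ (contributing the empty sum $0$) and $F = \{\pi_0\}$ (contributing $v$ for any $v \tleqd V$), we get
\begin{equation*}
  \mathcal S_0(\tleqd; V) \;=\; \{0\} \,\cup\, \{\,v \mid v \tleqd V\,\}.
\end{equation*}

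Next I would check that this set is already Scott-closed in $\KL(1,X)$, which means $\mathcal S(\tleqd; V)$ equals it on the nose. By the last clause of Definition~\ref{def:logical-relations}, the set $\{v \mid v \tleqd V\}$ is Scott-closed by hypothesis. The singleton $\{0\}$ is Scott-closed since $0$ is the least element of the Kegelspitze $\KL(1,X)$. Finally, in any topological (hence any Scott) space, the union of two closed sets is closed, so $\{0\} \cup \{v \mid v \tleqd V\}$ is Scott-closed. Therefore
\begin{equation*}
  \mathcal S(\tleqd; V) \;=\; \overline{\mathcal S_0(\tleqd; V)} \;=\; \{0\} \,\cup\, \{\,v \mid v \tleqd V\,\}.
\end{equation*}

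For the two displayed equalities in the lemma, the same Scott-closedness observations give $\overline{\{v \mid v \tleqd V\}} = \{v \mid v \tleqd V\}$, and hence $\overline{\{v \mid v \tleqd V\}} \cup \{0\} = \{v \mid v \tleqd V\} \cup \{0\}$. Moreover, since closure commutes with finite unions (closure of a finite union equals the union of the closures), $\overline{\{v \mid v \tleqd V\} \cup \{0\}} = \overline{\{v \mid v \tleqd V\}} \cup \overline{\{0\}}$, yielding the second equality. There is no real obstacle in this proof: the content is entirely bookkeeping around the fact that $V$ is irreducible, combined with the built-in Scott-closedness of $\{v \mid v \tleqd V\}$ from Definition~\ref{def:logical-relations}.
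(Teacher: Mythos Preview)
Your computation of $\mathcal S_0(\tleqd;V)=\{0\}\cup\{v\mid v\tleqd V\}$ is correct and is exactly the content of the paper's one-line proof. The gap is in the next step: you assert that $\{v\mid v\tleqd V\}$ is Scott-closed in $\KL(1,X)$, citing Definition~\ref{def:logical-relations}. But that definition only says this set is Scott-closed in $\TD(1,X)$, not in $\KL(1,X)$. These are different: $\TD(1,X)\cong X$ sits inside $\KL(1,X)\cong\MM X$ as the Dirac valuations, and a Scott-closed subset of $X$ is typically not Scott-closed in $\MM X$ (it need not even be a lower set there; e.g.\ for $X=\{*\}$, the set $\{\delta_*\}$ is closed in $\TD(1,X)$ but corresponds to $\{1\}\subseteq[0,1]$). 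So your conclusion $\mathcal S(\tleqd;V)=\mathcal S_0(\tleqd;V)$ and the consequence $\overline{\{v\mid v\tleqd V\}}=\{v\mid v\tleqd V\}$ are not justified, and indeed the lemma retains the closure bars precisely because they are generally needed.

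The fix is already implicit in what you wrote. From $\mathcal S_0=\{0\}\cup\{v\mid v\tleqd V\}$ you get $\mathcal S(\tleqd;V)=\overline{\mathcal S_0}=\overline{\{v\mid v\tleqd V\}\cup\{0\}}$, which is the third expression. Then your own observation that closure commutes with finite unions, together with $\overline{\{0\}}=\{0\}$ (since $0$ is least), yields $\overline{\{v\mid v\tleqd V\}\cup\{0\}}=\overline{\{v\mid v\tleqd V\}}\cup\{0\}$, the second expression. No appeal to Scott-closedness of $\{v\mid v\tleqd V\}$ in $\KL(1,X)$ is needed anywhere.
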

\begin{proof}
This is because all of the sums in \eqref{eq:logical-sums} are singleton sums or the empty sum.
\end{proof}

\begin{lemma}[{\cite[Lemma 8.4]{jones90}}]
\label{lem:topological-goodness}
Let $Y$ be a dcpo and let $\{X_i\}_{i \in F}$ be a finite collection of dcpo's.
Let $f \colon \prod_i X_i \to Y$ be a Scott-continuous function. Let $C_Y$ be a Scott-closed subset of $Y$. Let $U_i \subseteq X_i$ be arbitrary subsets, such that $f(\prod_i U_i) \subseteq C_Y$.
Then $f(\prod_i \overline{U_i}) \subseteq C_Y$, where $\overline{U_i}$ is the Scott-closure of $U_i$ in $X_i$.
\end{lemma}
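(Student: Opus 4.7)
The plan is to prove the statement by induction on the cardinality $|F|$, peeling off one coordinate at a time and exploiting the fact that Scott-continuity of $f$ on the product dcpo implies Scott-continuity in each argument separately.

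More precisely, enumerate $F = \{1, 2, \ldots, n\}$. For $0 \leq k \leq n$, I would prove by induction on $k$ the claim
\[
  f\Bigl(\prod_{i \leq k} \overline{U_i} \times \prod_{i > k} U_i\Bigr) \subseteq C_Y.
\]
The case $k = 0$ is exactly the hypothesis. For the inductive step, assume the claim for $k$ and fix any element $(\overline{x}_1, \ldots, \overline{x}_k)$ with $\overline{x}_i \in \overline{U_i}$ and any $(x_{k+2}, \ldots, x_n)$ with $x_j \in U_j$. Define
\[
  g \colon X_{k+1} \to Y, \qquad g(x) \defeq f(\overline{x}_1, \ldots, \overline{x}_k, x, x_{k+2}, \ldots, x_n).
\]
Since $f$ is Scott-continuous on the product dcpo and $g$ is obtained by pre-composition with the (Scott-continuous) insertion map $X_{k+1} \to \prod_i X_i$ that fills in the fixed coordinates, $g$ is Scott-continuous. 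Therefore $g^{-1}(C_Y)$ is a Scott-closed subset of $X_{k+1}$. By the inductive hypothesis applied with the $(k+1)$-th coordinate ranging over $U_{k+1}$ (and the first $k$ already in the closures, the rest in $U_j$), we have $U_{k+1} \subseteq g^{-1}(C_Y)$. Since the latter is Scott-closed, $\overline{U_{k+1}} \subseteq g^{-1}(C_Y)$, which is exactly what is needed to extend the claim to $k+1$. The case $k = n$ is the desired conclusion.

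The argument has essentially no obstacle: the only subtlety is making sure that Scott-continuity of $f \colon \prod_i X_i \to Y$ does imply Scott-continuity in each variable separately, which is standard (a constant section of a product of dcpo's is Scott-continuous, and composition of Scott-continuous maps is Scott-continuous). Everything else is a routine use of the defining property of the Scott closure: a Scott-closed set containing $U_i$ automatically contains $\overline{U_i}$.
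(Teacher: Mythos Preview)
Your proof is correct. The paper does not give its own proof of this lemma; it simply cites \cite[Lemma 8.4]{jones90} and uses the result as a black box. Your coordinate-by-coordinate induction, using that $g^{-1}(C_Y)$ is Scott-closed whenever $g$ is Scott-continuous and that $\overline{U_{k+1}}$ is the smallest Scott-closed set containing $U_{k+1}$, is the standard argument and goes through without difficulty.
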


\begin{lemma}[{\cite[Lemma 77]{m-monad}}]
\label{lem:logical-composition}
Let $\tleq_{X_1, P}^{e_1}$ and $\tleq_{X_2,P}^{e_2}$ be two logical relations and $\cdot \vdash M : P$ a term.
Assume that $g: X_1 \kto X_2$ is a morphism, such that $v \tleq_{X_1, P}^{e_1} V$ implies $g \kcirc v \in \mathcal S(\tleq_{X_2, P}^{e_2} ; V),$ for any $V \in \Val(M).$
If $m \in \mathcal S(\tleq_{X_1, P}^{e_1} ; M)$, then $g \kcirc m \in \mathcal S(\tleq_{X_2, P}^{e_2} ; M).$
\end{lemma}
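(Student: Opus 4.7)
The plan is to reduce the claim to a finite-sum inclusion and then push the sum into the Scott closure using Lemmas \ref{lem:topological-goodness} and \ref{lem:semantically-dense-value}.

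First, I would observe that since $\KL$ is $\DCPO$-enriched, Kleisli composition $g \kcirc (-) \colon \KL(1, X_1) \to \KL(1, X_2)$ is Scott-continuous. Hence the preimage
\[
\Omega \defeq \{\, m \in \KL(1, X_1) \mid g \kcirc m \in \mathcal{S}(\tleqtwo; M) \,\}
\]
is Scott-closed in $\KL(1, X_1)$, because $\mathcal{S}(\tleqtwo; M)$ is Scott-closed by construction. Since $\mathcal{S}(\tleqone; M)$ is the smallest Scott-closed set containing $\mathcal{S}_0(\tleqone; M)$, it will suffice to show the inclusion $\mathcal{S}_0(\tleqone; M) \subseteq \Omega$; the desired conclusion then follows by Scott-closure.

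Second, fix a generator $m_0 = \sum_{\pi \in F} P(\pi)\, v_\pi \in \mathcal{S}_0(\tleqone; M)$, where $F \subseteq \TPaths(M)$ is finite and $v_\pi \tleqone V_\pi$ for each $\pi \in F$. Using that $\kcirc$ preserves the convex (Kegelspitze) structure componentwise (a consequence of the $\DCPO_{\perp!}$/Kegelspitze enrichment of $\KL$ described in \S\ref{sub:model}), I compute
\[
g \kcirc m_0 \;=\; \sum_{\pi \in F} P(\pi)\, (g \kcirc v_\pi).
\]
By hypothesis, each summand satisfies $g \kcirc v_\pi \in \mathcal{S}(\tleqtwo; V_\pi)$, and by Lemma \ref{lem:semantically-dense-value} this equals $\overline{U_\pi}$, where $U_\pi \defeq \{\, w \mid w \tleqtwo V_\pi\,\} \cup \{0\}$ and the closure is taken in $\KL(1, X_2)$.

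Third, I would apply Lemma \ref{lem:topological-goodness} to the map
\[
f \colon \prod_{\pi \in F} \KL(1, X_2) \;\to\; \KL(1, X_2), \qquad (w_\pi)_{\pi \in F} \;\mapsto\; \sum_{\pi \in F} P(\pi)\, w_\pi,
\]
which is Scott-continuous since convex combinations in a Kegelspitze are Scott-continuous in each component (Definition of Kegelspitze, \S\ref{sub:kegelspitzen}). Taking $C_Y \defeq \mathcal{S}(\tleqtwo; M)$ and the subsets $U_\pi$ above, I need to verify $f\!\left(\prod_{\pi \in F} U_\pi\right) \subseteq C_Y$: for any choice $(w_\pi)_{\pi \in F} \in \prod_{\pi} U_\pi$, let $F' = \{\pi \in F : w_\pi \neq 0\}$; then $f((w_\pi)_\pi) = \sum_{\pi \in F'} P(\pi)\, w_\pi$ with $w_\pi \tleqtwo V_\pi$ for each $\pi \in F'$, and $F' \subseteq \TPaths(M)$ is finite, so this element belongs to $\mathcal{S}_0(\tleqtwo; M) \subseteq C_Y$. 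Lemma \ref{lem:topological-goodness} then yields $f\!\left(\prod_{\pi} \overline{U_\pi}\right) \subseteq C_Y$, and applying this to the specific tuple $(g \kcirc v_\pi)_{\pi \in F} \in \prod_{\pi} \overline{U_\pi}$ gives $g \kcirc m_0 \in \mathcal{S}(\tleqtwo; M)$, as required.

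The main obstacle I anticipate is the careful handling of the bookkeeping when some $w_\pi$ equals $0$: the sums in $\mathcal{S}_0$ are indexed by sets of paths rather than by arbitrary formal sums that allow zero coefficients, so one must observe explicitly that dropping a path with $w_\pi = 0$ yields the \emph{same} element of $\KL(1, X_2)$ and reindex $F$ by $F'$ before recognising the result as a member of $\mathcal{S}_0(\tleqtwo; M)$. Apart from this, the argument is a routine combination of Scott-continuity of Kleisli composition, linearity of the convex structure, and the two auxiliary lemmas.
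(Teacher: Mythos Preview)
Your proof is correct and follows precisely the route the paper sets up: the paper itself defers to the cited reference for the proof, but it places Lemmas~\ref{lem:semantically-dense-value} and~\ref{lem:topological-goodness} immediately beforehand exactly so they can be combined with the Scott-continuity and linearity of Kleisli composition in the way you describe. One minor point you should make explicit: the hypothesis is stated only for $V \in \Val(M)$, so before invoking it you should drop from $F$ any path $\pi$ with $P(\pi)=0$ (which leaves $m_0$ unchanged), thereby ensuring each remaining $V_\pi$ lies in $\Val(M)$.
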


Next, we define important \emph{closure relations} which we use for terms.

\begin{definition}
\label{def:logical-closure}
If $\tleqd \in \ValRel(X, P, e)$, let ${\qtleqd} \subseteq \KL(1, X) \times \Prog(P)$ be the relation defined by
  \[ m \qtleqd M \text{ iff } m \in \mathcal S(\tleqd; M) . \]
\end{definition} 

\begin{lemma}
For any term  $\cdot \vdash M : P$ and $\tleqd \in \ValReld$, the set $(-) \qtleqd M$ is a Scott-closed subset of $\KL(1,X).$
\end{lemma}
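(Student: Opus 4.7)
The plan is essentially to unfold the definitions. By Definition \ref{def:logical-closure}, the set $(-)\qtleqd M$ coincides exactly with $\mathcal S(\tleqd; M)$: an element $m \in \KL(1,X)$ satisfies $m \qtleqd M$ iff $m \in \mathcal S(\tleqd; M)$, so $\{m \in \KL(1,X) \mid m \qtleqd M\} = \mathcal S(\tleqd; M)$.

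Then I would appeal directly to Definition \ref{def:SM}, which defines $\mathcal S(\tleqd; M)$ as the Scott-closure in $\KL(1,X)$ of the set $\mathcal S_0(\tleqd; M)$. Since the Scott-closure of any subset of a dcpo is Scott-closed by construction, the claim follows immediately. Note that $\KL(1,X) = [1 \to \MM X] \cong \MM X$ is indeed a dcpo (in fact a pointed dcpo, as recalled in the enrichment discussion of Section \ref{sub:model}), so the notion of Scott-closure is well defined.

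There is no obstacle here: the lemma is a bookkeeping observation recording that the relation $\qtleqd$ on the semantic side was defined to be Scott-closed from the outset. It is stated separately because it will be used repeatedly in the sequel (for instance, to apply Lemma \ref{lem:topological-goodness} with $C_Y = \mathcal S(\tleqd; M)$ when reasoning about Scott-continuous operations on terms, and in the proof of the Fundamental Lemma where one needs to pass from elements of $\mathcal S_0(\tleqd; M)$ to their directed suprema).
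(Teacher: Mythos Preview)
Your proposal is correct and matches the paper's approach exactly: the paper's proof is the one-liner ``This follows immediately by definition, because $\mathcal S(\tleqd; M)$ is Scott-closed,'' and you have simply spelled out the two definitions being unfolded.
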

\begin{proof}
This follows immediately by definition, because $\mathcal S(\tleqd; M)$ is Scott-closed.
\end{proof}

\begin{lemma}[{\cite[Lemma 80]{m-monad}}]
\label{lem:scott-closure-embedding}
Let $C$ be a Scott-closed subset of a dcpo $X$. Let $W \eqdef \{ \delta_x\ |\ x \in C \} \subseteq \MM X$ and let $\overline W$ be the Scott-closure of $W$ in $\MM X.$ Then, $\delta_y \in \overline W$ iff $y \in C.$
\end{lemma}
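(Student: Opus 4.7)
The $(\Leftarrow)$ direction is immediate: if $y \in C$, then $\delta_y \in W \subseteq \overline{W}$.

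For the $(\Rightarrow)$ direction, my plan is to exhibit a Scott-closed subset $K$ of $\MM X$ that contains $W$ but excludes $\delta_y$ whenever $y \notin C$; then $\overline W \subseteq K$ gives the contrapositive. The natural candidate is
\[
K \defeq \{\nu \in \MM X \mid \nu(X \setminus C) = 0\},
\]
where I use that $X \setminus C$ is Scott-open (since $C$ is Scott-closed). First I would verify that $K$ is Scott-closed in $\MM X$: for any fixed open set $U \subseteq X$, the evaluation map $\mathrm{ev}_U \colon \MM X \to [0,1]$, $\nu \mapsto \nu(U)$, is Scott-continuous, because directed suprema in $\MM X$ are computed pointwise on open sets (as recalled in Section~\ref{sub:monad}). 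Since $\{0\} \subseteq [0,1]$ is Scott-closed (it is the principal downset of the bottom element), $K = \mathrm{ev}_{X \setminus C}^{-1}(\{0\})$ is Scott-closed.

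Next, I would check $W \subseteq K$: for every $x \in C$, the Dirac valuation $\delta_x$ satisfies $\delta_x(X \setminus C) = 0$ by definition, since $x \notin X \setminus C$. Hence $\overline{W} \subseteq K$.

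Finally, suppose $\delta_y \in \overline{W}$; then $\delta_y \in K$, so $\delta_y(X \setminus C) = 0$. If $y \notin C$, then $y \in X \setminus C$, which forces $\delta_y(X \setminus C) = 1$, a contradiction. Therefore $y \in C$.

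No step here looks like a serious obstacle; the only thing to be careful about is invoking the right continuity fact for evaluation on $\MM X$ (rather than the more delicate continuity in the valuation argument for non-open sets), and the observation that $C$ being Scott-closed is used precisely to ensure $X \setminus C$ is a Scott-open set on which subprobability valuations are defined.
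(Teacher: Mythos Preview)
Your proof is correct. The paper does not supply its own argument for this lemma; it is quoted verbatim from \cite[Lemma~80]{m-monad} without proof. Your approach---exhibiting the Scott-closed set $K = \mathrm{ev}_{X\setminus C}^{-1}(\{0\})$ via Scott-continuity of evaluation at a fixed open set---is a clean and standard way to establish the result, and there is nothing to compare against here.
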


\begin{lemma}[{\cite[Lemma 81]{m-monad}}]
\label{lem:id}
Let $X$ be a dcpo, let $v \in \TD(1,X)$ and let $V$ be a value. Then $v \tleqd V $ iff $v \qtleqd V.$
\end{lemma}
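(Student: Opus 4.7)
The plan is to reduce the statement to an application of Lemma~\ref{lem:scott-closure-embedding}, using the identification $\TD(1,X) \cong X$ that sends $\delta_x \mapsto x$. Throughout, let $A \defeq \{v'\in\TD(1,X)\mid v' \tleqd V\}$, and let $C \subseteq X$ be the subset corresponding to $A$ under this isomorphism.

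First I would handle the easy direction. If $v \tleqd V$, then $v$ belongs trivially to $\mathcal S_0(\tleqd; V)$ (taking the singleton index set $F = \{V\}$ with the length-zero path), hence $v \in \mathcal S(\tleqd; V)$, which by Definition~\ref{def:logical-closure} is exactly $v \qtleqd V$. For the converse, I would first apply Lemma~\ref{lem:semantically-dense-value} to rewrite
\[
\mathcal S(\tleqd; V) \;=\; \overline{A \cup \{0\}},
\]
where the closure is taken in $\KL(1,X) \cong \MM X$. Thus I must show: for $v \in \TD(1,X)$, if $v \in \overline{A \cup \{0\}}$, then $v \in A$.

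The main step is to observe that adding the zero valuation does not enlarge the set of Dirac valuations in the Scott-closure. Indeed, $\mathbf 0_X$ is the least element of $\MM X$, so $\mathbf 0_X \leq \delta_c$ for any $c$. If $A \neq \emptyset$, then $\mathbf 0_X$ already lies in the downward closure of $\{\delta_c \mid c \in C\}$ inside $\MM X$; if $A = \emptyset$, then $\overline{A \cup \{0\}} = \{\mathbf 0_X\}$, which contains no Dirac valuation (since $\mathbf 0_X \neq \delta_x$ for any $x \in X$, because $\delta_x(X) = 1$ while $\mathbf 0_X(X) = 0$). In either case, the Dirac valuations in $\overline{A \cup \{0\}}$ coincide with those in $\overline{\{\delta_c \mid c \in C\}}$.

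By Definition~\ref{def:logical-relations}, the set $A$ is Scott-closed in $\TD(1,X)$, which under the isomorphism $\TD(1,X) \cong X$ means that $C$ is Scott-closed in $X$. Lemma~\ref{lem:scott-closure-embedding} then applies: a Dirac valuation $\delta_y$ lies in $\overline{\{\delta_c \mid c \in C\}}$ (taken in $\MM X$) iff $y \in C$. Combining these observations, if $v \in \TD(1,X)$ corresponds to some $\delta_x$ and $v \in \overline{A \cup \{0\}}$, then $x \in C$, i.e., $v \in A$, i.e., $v \tleqd V$, as required. The main obstacle I anticipate is the careful bookkeeping between the Scott topologies on $\TD(1,X)$, $\MM X$, and $X$, and verifying that $\mathbf 0_X$ contributes nothing to the Dirac-valuation content of the closure; both are handled by the observations above.
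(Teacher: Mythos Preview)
Your proof is correct and follows essentially the same approach as the paper (via the cited source): the forward direction is trivial, and the reverse direction combines Lemma~\ref{lem:semantically-dense-value} with Lemma~\ref{lem:scott-closure-embedding} under the identification $\TD(1,X)\cong X$. One small simplification: Lemma~\ref{lem:semantically-dense-value} already gives $\mathcal S(\tleqd;V)=\overline{A}\cup\{0\}$, so since $v\in\TD(1,X)$ is a Dirac valuation and hence $v\neq 0$, you get $v\in\overline{A}$ directly and can invoke Lemma~\ref{lem:scott-closure-embedding} without the case split on whether $A$ is empty.
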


\begin{lemma}[{\cite[Lemma 82]{m-monad}}]
\label{lem:logical-inequality}
For any value $\cdot \vdash V : P$ and $\tleqd \in \ValReld$, if $m \qtleqd V$ then $ e \kcirc m \leq \lrb V$.
\end{lemma}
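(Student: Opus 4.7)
The plan is to leverage Lemma \ref{lem:semantically-dense-value}, which identifies $\mathcal S(\tleqd; V)$ with the Scott-closure of $\{v \mid v \tleqd V\} \cup \{0\}$ in $\KL(1, X)$. So it suffices to exhibit a Scott-closed subset of $\KL(1,X)$ that contains both the generators $\{v \mid v \tleqd V\} \cup \{0\}$ and is contained in $\{m \mid e \kcirc m \leq \lrb V\}$. The natural candidate is exactly $C \defeq \{m \in \KL(1,X) \mid e \kcirc m \leq \lrb V\}$ itself, and the proof reduces to showing (i) $C$ is Scott-closed, and (ii) $C$ contains the generating set.

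For (i), I would argue that $C$ is a lower set closed under directed suprema. Lower set: if $m \in C$ and $m' \leq m$, then by the $\DCPO$-enrichment of Kleisli composition (recalled in Section \ref{sub:model}), $e \kcirc m' \leq e \kcirc m \leq \lrb V$, so $m' \in C$. Directed suprema: if $A \subseteq C$ is directed with supremum $m = \sup A$, then Scott-continuity of $e \kcirc (-)$ gives $e \kcirc m = \sup_{a \in A} e \kcirc a \leq \lrb V$, since the right-hand side is an upper bound of the family. For (ii), note that $v \tleqd V$ implies $e \kcirc v \leq \lrb V$ directly from the defining clause of $\ValRel(X,P,e)$ in Definition \ref{def:logical-relations}. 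For the element $0 \in \KL(1,X)$ (which is the least element of the pointed dcpo $\KL(1,X)$ by the $\dcpobs$-enrichment of $\KL$), Kleisli composition is strict, so $e \kcirc 0 = 0 \leq \lrb V$ trivially, giving $0 \in C$.

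Combining (i) and (ii), $C$ is a Scott-closed superset of $\{v \mid v \tleqd V\} \cup \{0\}$, hence contains the smallest such Scott-closed set, which by Lemma \ref{lem:semantically-dense-value} is $\mathcal S(\tleqd; V)$. Thus, if $m \qtleqd V$, then $m \in \mathcal S(\tleqd; V) \subseteq C$, i.e., $e \kcirc m \leq \lrb V$, as required.

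There is essentially no obstacle here; the lemma is a direct consequence of packaging together two features of our setup: the defining inequality of a logical relation on values, and the fact that Kleisli composition is strict and Scott-continuous (so that the ``approximates $\lrb V$'' condition automatically extends from generators to their Scott-closure). The only thing to watch is to invoke strictness of $\kcirc$ when handling the $0$ generator and Scott-continuity of $e \kcirc (-)$ when verifying closure under directed suprema; both follow from the $\dcpobs$-enrichment of $\KL$ recorded in Section \ref{sub:model}.
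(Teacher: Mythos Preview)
Your argument is correct. The paper itself does not reproduce a proof of this lemma; it merely cites \cite[Lemma 82]{m-monad}. Your approach---showing that $C=\{m\in\KL(1,X)\mid e\kcirc m\leq\lrb V\}$ is Scott-closed (via monotonicity and Scott-continuity of $e\kcirc(-)$) and contains the generating set $\{v\mid v\tleqd V\}\cup\{0\}$ identified by Lemma~\ref{lem:semantically-dense-value}---is the natural one and goes through without issue.
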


\subsection{Logical Relations for types 1 and $Q(\AAA,\BBB)$}

The unit type $1$ and the type of quantum functions $Q(\AAA, \BBB)$ have simple
type structures, because they do not depend on any other classical types. As a
result, it is easy to define the required logical relations at those types and
we do so now.

\begin{definition}
\label{def:logical-unit}
  We define a logical relation $\tleq_1 \in \ValRel(\lrb 1, 1, \kid_{\lrb 1})$ by:
  \[ f \tleq_1 ()\ \text{iff}\ f = \kid_{\lrb 1} .  \]
\end{definition}

\begin{definition}
\label{def:logical-quantum-function}
  For every two closed quantum types $\AAA$ and $\BBB$, we define a logical relation $\tleq_{Q(\AAA,\BBB)} \in \ValRel(\lrb{Q(\AAA,\BBB)}, Q(\AAA,\BBB), \kid_{\lrb{Q(\AAA,\BBB)}})$ by:
  \[ f \tleq_{Q(\AAA, \BBB)} \ff \text{ iff }  f \leq \lrb{\ff} \text{ and } \forall k \in \mathbb N. \forall (\cdot \vdash [\ket{\psi}, \ell, \vq] : \AAA; \qbit^k) . \] \[ (\beta(f(*)) \otimes \id) \circ \lrb{[\ket{\psi}, \ell, \vq]} \btleq_{\BBB}^k [\ket \psi, \ell, \ff \vq] , \]
  where the second quantifier ranges over well-formed \emph{value confiugrations} of the indicated type.
\end{definition}

It is easy to see that both logical relations are well-defined.
Furthermore, notice that the second family of logical relations is defined via the quantum logical relations on configurations.

\subsection{Categories of Logical Relations}

\begin{definition}
\label{def:logical-categories}
For any type $P$, we define a category $\RR(P)$ where:
\begin{itemize}
\item Each object is a triple $(X, e_X, \tleq_X)$, where $X$ is a dcpo, $e_X \colon X \kto \lrb P$ is a morphism in $\PD_e$ and $\tleq_X \in \ValRel(X,P,e_X)$.
\item A morphism $f: (X, e_X, \tleq_{X}) \to (Y, e_Y, \tleq_{Y})$ is a morphism $f : X \kto Y$ in $\PD_e$, which satisfies the three additional conditions:
  \begin{itemize}
    \item If $v \tleq_{X} V,$ then $f \kcirc v\ \overline{\tleq_{Y}}\ V.$
    \item If $v \tleq_{Y} V,$ then $f^p \kcirc v\ \overline{\tleq_{X}}\ V.$
    \item $e_X = e_Y \kcirc f.$
  \end{itemize}
\item Composition and identities coincide with those in $\PD_e.$
\end{itemize}
\end{definition}

\begin{lemma}[{\cite[Lemma 84]{m-monad}}]
For every type $P$, the category $\RR(P)$ is indeed well-defined.
\end{lemma}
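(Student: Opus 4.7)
The plan is to verify the three categorical axioms for $\RR(P)$: existence of identities, closure of composition, and the associativity/unit laws, where the latter will follow automatically since both composition and identities are inherited from $\PD_e$. So the real content lies in checking that the three side conditions on morphisms (preservation of the relations along the embedding, along the projection, and commutation with the $e$-maps) are compatible with identities and with composition.

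For identities, I would take $\id_X : X \kto X$ in $\PD_e$ and check the three conditions for the morphism $(X,e_X,\tleq_X) \to (X,e_X,\tleq_X)$. Conditions (1) and (2) reduce to: if $v \tleq_X V$ then $v\ \overline{\tleq_X}\ V$, which is exactly Lemma \ref{lem:id} (using that $\id_X^p = \id_X$). Condition (3) is trivial. For composition, given $f : (X,e_X,\tleq_X) \to (Y,e_Y,\tleq_Y)$ and $g : (Y,e_Y,\tleq_Y) \to (Z,e_Z,\tleq_Z)$, I would set $h \defeq g \kcirc f$. Condition (3) is immediate: $e_Z \kcirc h = e_Z \kcirc g \kcirc f = e_Y \kcirc f = e_X$.

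The main obstacle—though it is really just bookkeeping—is verifying conditions (1) and (2) for the composite, and this is exactly what Lemma \ref{lem:logical-composition} was set up for. For (1), assume $v \tleq_X V$. Since condition (1) for $f$ says $v \tleq_X V \Rightarrow f \kcirc v \in \mathcal S(\tleq_Y; V)$, the hypothesis of Lemma \ref{lem:logical-composition} (with $\tleq_1 = \tleq_X$, $\tleq_2 = \tleq_Y$, $g = f$) is satisfied; applied to the term $M = V$ and using $v \in \mathcal S(\tleq_X; V)$ (Lemma \ref{lem:id}), we get $f \kcirc v \in \mathcal S(\tleq_Y; V)$. Reapplying Lemma \ref{lem:logical-composition} with $g$ in place of $f$ then yields $h \kcirc v = g \kcirc f \kcirc v \in \mathcal S(\tleq_Z; V)$, that is, $h \kcirc v\ \overline{\tleq_Z}\ V$. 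For (2), I will first note that in any $\DCPO$-enriched category with embedding-projection pairs $(g \kcirc f)^p = f^p \kcirc g^p$, which is the standard calculation $f^p \kcirc g^p \kcirc g \kcirc f = \id$ and $g \kcirc f \kcirc f^p \kcirc g^p \leq \id$. Then the same two-step application of Lemma \ref{lem:logical-composition}, now to $g^p$ and $f^p$ using condition (2) of $g$ and $f$ respectively, yields $h^p \kcirc v\ \overline{\tleq_X}\ V$ whenever $v \tleq_Z V$.

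Finally, associativity and the unit laws are inherited verbatim from $\PD_e$ since both composition and identities in $\RR(P)$ are defined to coincide with those in $\PD_e$. This completes the verification that $\RR(P)$ is a well-defined category.
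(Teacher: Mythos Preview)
Your proof is correct and follows the natural approach: verify identities via Lemma~\ref{lem:id}, verify closure under composition using Lemma~\ref{lem:logical-composition} (together with $(g \kcirc f)^p = f^p \kcirc g^p$), and inherit associativity and unit laws from $\PD_e$. The paper does not give its own proof here but simply cites the source \cite{m-monad}; your argument is exactly the routine verification one would expect. One small remark: your first invocation of Lemma~\ref{lem:logical-composition} (with $f$) is redundant, since condition~(1) for $f$ already gives $f \kcirc v \in \mathcal S(\tleq_Y; V)$ directly; only the second invocation (with $g$, applied to $m = f \kcirc v$) is actually needed.
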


\begin{lemma}
\label{lem:logical-composition-fancy}
Let $\cdot \vdash M : P$ be a term and let $g \colon (X, e_X, \tleq_X) \to (Y, e_Y, \tleq_Y)$ be a morphism in $\RR(P)$. If $m \ol{\tleq_X} M$ then $g \kcirc m \ol{\tleq_Y} M$.
Moreover, if $n \ol{\tleq_Y} N,$ then $g^p \kcirc n \ol{\tleq_X}$ N.
\end{lemma}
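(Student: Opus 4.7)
The plan is to derive this lemma as a direct application of Lemma~\ref{lem:logical-composition}, using the defining properties of morphisms in $\RR(P)$ as the required hypothesis of that lemma. Both halves of the statement are symmetric, so I will spell out the first and indicate how to reuse the argument for the second.

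For the first statement, I will instantiate Lemma~\ref{lem:logical-composition} with $X_1 \defeq X$, $e_1 \defeq e_X$, $\tleqone \defeq \tleq_X$ and $X_2 \defeq Y$, $e_2 \defeq e_Y$, $\tleqtwo \defeq \tleq_Y$, using the underlying $\PD_e$-morphism $g \colon X \kto Y$. The hypothesis demanded by Lemma~\ref{lem:logical-composition} is: for every value $V \in \Val(P)$, if $v \tleq_X V$ then $g \kcirc v \in \mathcal S(\tleq_Y; V)$. But this is exactly the first defining clause of a morphism in $\RR(P)$ (Definition~\ref{def:logical-categories}), since $v \tleq_X V$ implies $g \kcirc v \ \ol{\tleq_Y}\ V$, which by Definition~\ref{def:logical-closure} unfolds to $g \kcirc v \in \mathcal S(\tleq_Y; V)$. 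Hence Lemma~\ref{lem:logical-composition} applies, and any $m \in \mathcal S(\tleq_X; M)$ satisfies $g \kcirc m \in \mathcal S(\tleq_Y; M)$; translating back through Definition~\ref{def:logical-closure}, this says $m \ \ol{\tleq_X}\ M$ implies $g \kcirc m \ \ol{\tleq_Y}\ M$, as required.

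For the second statement I will repeat the argument with the roles of the two objects swapped: instantiate Lemma~\ref{lem:logical-composition} with $X_1 \defeq Y$, $X_2 \defeq X$, the morphism $g^p \colon Y \kto X$ (which exists because $g$ is an embedding in $\PD_e$), and the logical relations $\tleq_Y, \tleq_X$ respectively. The required hypothesis is now exactly the second defining clause of a morphism in $\RR(P)$: for every value $V$, $v \tleq_Y V$ implies $g^p \kcirc v \ \ol{\tleq_X}\ V$, i.e.\ $g^p \kcirc v \in \mathcal S(\tleq_X; V)$. Lemma~\ref{lem:logical-composition} then yields: if $n \in \mathcal S(\tleq_Y; N)$ then $g^p \kcirc n \in \mathcal S(\tleq_X; N)$, i.e.\ $n \ \ol{\tleq_Y}\ N$ implies $g^p \kcirc n \ \ol{\tleq_X}\ N$.

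There is no real obstacle here; the work has already been done in Lemma~\ref{lem:logical-composition}, and this lemma is essentially a repackaging of that result in the language of the categories $\RR(P)$. The only thing to verify is that the two ``closure compatibility'' clauses in Definition~\ref{def:logical-categories} are precisely the pointwise hypotheses needed to feed into Lemma~\ref{lem:logical-composition}, which is immediate by unfolding Definition~\ref{def:logical-closure}. The third defining clause of a morphism in $\RR(P)$ ($e_X = e_Y \kcirc g$) is not used in this proof; it plays its role elsewhere in the logical-relations framework.
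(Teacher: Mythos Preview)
Your proposal is correct and is precisely the argument the paper intends: the paper's own proof is the single sentence ``This follows immediately by Lemma~\ref{lem:logical-composition},'' and you have simply unpacked how the two closure clauses in Definition~\ref{def:logical-categories} supply the hypothesis of that lemma for $g$ and $g^p$ respectively.
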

\begin{proof}
This follows immediately by Lemma \ref{lem:logical-composition}.
\end{proof}

\begin{definition}
\label{def:logical-forgetful}
For every type $P$, we define the obvious forgetful functor $U^P \colon \RR(P) \to \PDe$ by
\begin{align*}
U^P(X,e, \tleq) &= X \\
U^P(f) &= f.
\end{align*}
\end{definition}

\begin{proposition}[{\cite[Proposition 87]{m-monad}}]
\label{prop:logical-colimits}
For each type $P$, the category $\RR(P)$ has an initial object and all $\omega$-colimits. Furthermore, the forgetful functor $U^P \colon \RR(P) \to \PDe$ preserves and reflects $\omega$-colimits (and also the initial objects).
\end{proposition}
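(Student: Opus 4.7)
The plan is to construct the initial object and $\omega$-colimits in $\RR(P)$ by lifting the corresponding structure from $\PD_e$ (which exists by Proposition \ref{prop:omega-functors}), and then equipping the underlying object with an appropriate logical relation. For the initial object, take $(\varnothing, !, \tleq_\varnothing)$, where $\varnothing$ is the initial object of $\PD_e$, $!$ is the unique morphism $\varnothing \kto \lrb P$, and $\tleq_\varnothing$ is the empty relation (which is trivially in $\ValRel(\varnothing, P, !)$ since $\TD(1, \varnothing) = \emptyset$). Initiality then follows immediately from initiality in $\PD_e$, because the two morphism conditions involving the relations are vacuous.

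For $\omega$-colimits, given an $\omega$-chain $\{(X_n, e_n, \tleq_n), f_{n,n+1}\}$ in $\RR(P)$, first form the colimit $X_\infty$ in $\PD_e$ with coprojections $i_n \colon X_n \kto X_\infty$; since each $f_{n,n+1}$ satisfies $e_n = e_{n+1} \kcirc f_{n,n+1}$, the $e_n$'s form a cocone and induce a unique $e_\infty \colon X_\infty \kto \lrb P$ in $\PD_e$ with $e_n = e_\infty \kcirc i_n$. The key step is to define the relation on $X_\infty$. I will use the limit-colimit coincidence: every $v \in \TD(1, X_\infty)$ decomposes as $v = \sup_n (i_n \kcirc i_n^p \kcirc v)$. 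Declaring
\[ v \tleq_\infty V \ \text{iff}\ \forall n.\ i_n^p \kcirc v\ \overline{\tleq_n}\ V \]
gives a relation that (i) is Scott-closed in $v$ since each $\overline{\tleq_n}$ is and $i_n^p \kcirc (-)$ is Scott-continuous, and (ii) satisfies $e_\infty \kcirc v \leq \lrb V$ by taking suprema of the inequalities $e_n \kcirc i_n^p \kcirc v \leq \lrb V$ coming from Lemma \ref{lem:logical-inequality}.

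Next, I verify each $i_n$ is a morphism in $\RR(P)$. The equation $e_n = e_\infty \kcirc i_n$ holds by construction. The backward condition (if $w \tleq_\infty V$ then $i_n^p \kcirc w \ol{\tleq_n} V$) is immediate from the definition. The forward condition (if $v \tleq_n V$ then $i_n \kcirc v \ol{\tleq_\infty} V$) reduces to showing $i_k^p \kcirc i_n \kcirc v \ol{\tleq_k} V$ for each $k$; but $i_k^p \kcirc i_n$ equals either a connecting embedding $f_{n,k}$ (when $n \leq k$) or a composition of projection parts $f_{k,n}^p$ (when $n \geq k$), and in both cases, the fact that every $f_{j,j+1}$ is a morphism in $\RR(P)$ together with Lemma \ref{lem:logical-composition-fancy} gives what we need. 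Similar bookkeeping handles the universal property: given a competing cocone $\{g_n \colon (X_n, e_n, \tleq_n) \to (Y, e_Y, \tleq_Y)\}$ in $\RR(P)$, the induced mediating map $g \colon X_\infty \kto Y$ in $\PD_e$ satisfies $e_\infty = e_Y \kcirc g$ automatically, while the two relational conditions on $g$ follow from the corresponding conditions on each $g_n$ together with the definition of $\tleq_\infty$ and Lemma \ref{lem:logical-composition-fancy}.

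Preservation of $\omega$-colimits and initial objects by $U^P$ is built into the construction, so only reflection remains. For reflection: given a cocone in $\RR(P)$ whose underlying cocone in $\PD_e$ is colimiting, the universal property just established, applied to the canonical colimit $(X_\infty, e_\infty, \tleq_\infty)$ constructed above, provides mutually inverse $\PD_e$-isomorphisms that are automatically $\RR(P)$-morphisms, showing the original cocone is colimiting in $\RR(P)$. The main obstacle is really just orchestrating the forward morphism condition using both the embedding and projection halves of the connecting maps $i_k^p \kcirc i_n$, since these go in opposite directions in the chain; once it is recognized that Lemma \ref{lem:logical-composition-fancy} applies uniformly to both embeddings and projections through $\overline{(-)}$-closure, the verification becomes routine.
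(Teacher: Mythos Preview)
Your proposal is correct and follows exactly the standard construction one finds in the cited source: lift the colimit from $\PD_e$, induce $e_\infty$ from the cocone of embeddings $e_n$, define $\tleq_\infty$ by projecting down through all $i_n^p$, and verify the cocone and mediating morphism conditions using the limit--colimit coincidence together with Lemma~\ref{lem:logical-composition-fancy}. The paper itself does not give a proof here, deferring entirely to \cite[Proposition~87]{m-monad}, so there is nothing further to compare against; one minor point you leave implicit is that $i_n \kcirc v \in \TD(1,X_\infty)$ whenever $v \in \TD(1,X_n)$ (needed to invoke Lemma~\ref{lem:id} when checking the forward condition for $i_n$), but this holds because embeddings in $\PD$ never send a Dirac valuation to $0$.
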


Next, we introduce important relation constructors and some new notation.

\begin{notation}
Given morphisms $m_i : 1 \kto X_i$, for $i \in \{1, \ldots, n\},$ we define
\[ \llangle m_1, \ldots, m_n \rrangle \eqdef (m_1 \ktimes \cdots \ktimes m_n) \kcirc \JJ \langle \id_1, \ldots, \id_1 \rangle : 1 \kto X_1 \times \cdots \times X_n . \]
\end{notation}
\begin{notation}
\label{not:application}
Given morphisms $x : 1 \kto X$ and $f: 1 \kto [X \kto Y]$ in $\KL$, let $f[x] : 1 \kto Y$ be the morphism defined by
\[ f[x] \eqdef \epsilon \kcirc (f \ktimes x) \kcirc \JJ\langle \id_1, \id_1 \rangle . \]
\end{notation}

\begin{definition}[Relation Constructions]
\label{def:logical-constructions}
We define relation constructors:
\begin{itemize}
\item If $\tleqone \in \ValRel(X_1, P_1, e_1)$ and $\tleqtwo \in \ValRel(X_2, P_2, e_2)$, define
  \begin{align*}
  (\tleqone &+ \tleqtwo) \in \ValRel(X_1+ X_2, P_1 + P_2, e_1 \kplus e_2) \text{ by: } \\
  \JJ \emph{in}_i \kcirc v\ (\tleqone &+ \tleqtwo)\ \mathtt{in}_i V \text{ iff }  v \tleq_{X_i, P_i}^{e_i} V   \qquad\qquad (\text{for $i \in \{1,2\}$}). 
  \end{align*}
\item If $\tleqone \in \ValRel(X_1, P_1, e_1)$ and $\tleqtwo \in \ValRel(X_2, P_2, e_2)$, define
  \begin{align*}
  (\tleqone &\times \tleqtwo) \in \ValRel(X_1 \times X_2, P_1 \times P_2, e_1 \ktimes e_2) \text{ by: } \\
  \llangle v_1, v_2 \rrangle \ (\tleqone &\times \tleqtwo)\ (V_1,V_2) \text{ iff }  v_1 \tleqone V_1 \text{ and } v_2 \tleqtwo V_2.
  \end{align*}
\item If $\tleqone \in \ValRel(X_1, P_1, e_1)$ and $\tleqtwo \in \ValRel(X_2, P_2, e_2)$, define
  \begin{align*}
  (\tleqone &\to\ \tleqtwo) \in \ValRel([X_1 \kto X_2], P_1 \to P_2, \JJ [e_1^p \kto e_2]) \text{ by: } \\
  f\ (\tleqone &\to\ \tleqtwo)\ \lambda x. M \text{ iff }  \JJ [e_1^p \kto e_2] \kcirc f \leq \lrb{\lambda x.M} \text{ and } \forall (v \tleqone V).\  f[v] \qtleqtwo (\lambda x.M)V. 
  \end{align*}
\end{itemize}
\end{definition}

\begin{lemma}[{\cite[Lemma 91]{m-monad}}]
The assignments in Definition \ref{def:logical-constructions} are indeed well-defined.
\end{lemma}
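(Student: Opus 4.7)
The plan is to verify, in each of the three cases, the two defining clauses of $\ValRel(X,P,e)$: (i) for every closed value $V$ of the appropriate type, the fibre $\{u : u \tleq V\}$ is Scott-closed in $\TD(1,X)$, and (ii) $v \tleq V$ implies $e \kcirc v \leq \lrb{V}$. Since $\TD \cong \DCPO$, the hom-dcpo $\TD(1, X)$ is order-isomorphic to $X$ itself, so Scott-closedness of the fibres can equivalently be checked in the underlying dcpo.

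The sum and product cases are routine. For the sum, every closed value of $P_1+P_2$ has the form $\mathtt{in}_i V_i$, and the fibre $\{\JJ\emph{in}_i \kcirc w : w \tleq_i V_i\}$ is the image under the coproduct injection of the inductively Scott-closed fibre $\{w : w \tleq_i V_i\}$; since coproduct injections in $\dcpo$ are Scott embeddings whose images are Scott-clopen in the disjoint union, the image is Scott-closed. Clause (ii) follows from the routine identity $(e_1 \kplus e_2) \kcirc \JJ\emph{in}_i = \JJ\emph{in}_i \kcirc e_i$ combined with the inductive inequality for $\tleq_i$. For the product, the fibre over $(V_1, V_2)$ corresponds under $\TD(1, X_1 \times X_2) \cong X_1 \times X_2$ to the Cartesian product of the two inductive fibres, hence is Scott-closed; clause (ii) is immediate by applying the inductive inequalities componentwise.

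The function case is the most delicate. The fibre over $\lambda x.M$ is the intersection of the single condition $\JJ[e_1^p \kto e_2] \kcirc f \leq \lrb{\lambda x.M}$ with a family of conditions, one for each pair $v \tleqone V$, of the form $f[v] \in \mathcal{S}(\tleqtwo; (\lambda x.M)V)$. The first of these defines a Scott-closed subset of $\TD(1, [X_1 \kto X_2])$ because post-composition with a fixed Kleisli morphism is Scott-continuous (by the $\dcpo$-enrichment of $\KL$) and the principal lower set $\downarrow \lrb{\lambda x.M}$ is always Scott-closed. Each of the remaining conditions is a preimage, under the Scott-continuous Kleisli-application map $f \mapsto f[v]$, of the set $\mathcal{S}(\tleqtwo; (\lambda x.M)V)$, which is Scott-closed by Definition~\ref{def:SM}. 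An arbitrary intersection of Scott-closed sets is Scott-closed, so the fibre is Scott-closed, and clause (ii) is immediate from the first conjunct. The only real obstacle—and it is a bookkeeping one rather than a mathematical one—is to unpack the nested categorical notation carefully and to keep in mind that all operations appearing in the definition (Kleisli composition, pairing, currying/evaluation, post-composition with a fixed Kleisli morphism) are Scott-continuous on the Kleisli hom-dcpos, which is precisely the $\dcpo$-enrichment of $\KL$ recalled in Section~\ref{sub:model}.
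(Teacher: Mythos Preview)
Your proof plan is correct. The paper itself does not supply a proof of this lemma: it simply imports the result from \cite[Lemma~91]{m-monad}. Your sketch is the natural verification one would expect that reference to contain, and each step checks out: the sum and product cases are exactly as you describe, and in the function case the fibre over $\lambda x.M$ is indeed an intersection of preimages of Scott-closed sets under Scott-continuous maps (post-composition, and Kleisli application $f\mapsto f[v]$), together with a principal downset condition, all of which are Scott-closed by the $\DCPO$-enrichment of $\KL$. One small point you leave implicit and might make explicit is that $\TD(1,X)$ sits inside $\KL(1,X)$ as a sub-dcpo (directed suprema of Dirac valuations are Dirac), so that Scott-closedness can be tested by intersecting with sets that are Scott-closed in the larger dcpo $\KL(1,X)$; this is what justifies pulling back the conditions in the function case, which are naturally phrased in $\KL(1,X_2)$, to $\TD(1,[X_1\kto X_2])$.
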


\begin{notation}
Throughout the rest of the paper we shall write $(- \ktwo_e -) \eqdef [- \kto -]_e^\JJ : \PDe \times \PDe \to \PDe$. That is, we just introduce a more concise notation for the functor $[- \kto - ]_e^\JJ$ from Proposition \ref{prop:omega-functors}.
\end{notation}

The next definition is crucial. Given two logical relations, it is used to define the product, coproduct and function space logical relations. Moreover, this is done in a functorial sense on the categories $\RR(P)$.

\begin{proposition}[{\cite[Proposition 95]{m-monad}}]
\label{prop:logical-functors1}
Let $P$ and $R$ be types. The following assignments (recall Definition \ref{def:logical-constructions}):
\begin{enumerate}
\item $\times^{P,R} \colon \RR(P) \times \RR(R) \to \RR(P \times R)$ by
  \begin{align*}
  (X, e_X, \tleq_X) \times^{P,R} (Y, e_Y, \tleq_Y) &\eqdef (X \times Y, e_X \ktimes_e e_Y, \tleq_X \times \tleq_Y) \\
  f \times^{P,R} g &\eqdef f \ktimes_e g
  \end{align*}
\item $+^{P,R} \colon \RR(P) \times \RR(R) \to \RR(P + R)$ by
  \begin{align*}
  (X, e_X, \tleq_X) +^{P,R} (Y, e_Y, \tleq_Y) &\eqdef (X + Y, e_X \kplus_e e_Y, \tleq_X + \tleq_Y) \\
  f +^{P,R} g &\eqdef f \kplus_e g
  \end{align*}
\item $\to^{P,R} \colon \RR(P) \times \RR(R) \to \RR(P \to R)$ by
  \begin{align*}
  (X, e_X, \tleq_X) \to^{P,R} (Y, e_Y, \tleq_Y) &\eqdef ([X \kto Y], e_X \ktwo_e e_Y, \tleq_X \to \tleq_Y) \\
  f \to^{P,R} g &\eqdef f \ktwo_e g
  \end{align*}
\end{enumerate}
define \emph{covariant} functors with the indicated types.
\end{proposition}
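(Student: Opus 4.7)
The plan is to verify the three functor axioms in sequence for each of the three operations: that the object-part produces an object of $\RR(P \star R)$, that the morphism-part produces a morphism of $\RR(P \star R)$, and that identities and composition are preserved. Well-definedness of the relational object parts $(\tleqone + \tleqtwo)$, $(\tleqone \times \tleqtwo)$, and $(\tleqone \to \tleqtwo)$ together with the fact that $e_{X_1} \kplus_e e_{X_2}$, $e_{X_1} \ktimes_e e_{X_2}$, and $e_{X_1} \ktwo_e e_{X_2}$ are morphisms in $\PDe$ are inherited from the well-definedness lemma for Definition~\ref{def:logical-constructions} and from Proposition~\ref{prop:omega-functors}, so the object assignments land in $\Ob(\RR(P \star R))$. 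For the morphism assignments, given $f \colon (X_1, e_{X_1}, \tleq_{X_1}) \to (Y_1, e_{Y_1}, \tleq_{Y_1})$ and $g \colon (X_2, e_{X_2}, \tleq_{X_2}) \to (Y_2, e_{Y_2}, \tleq_{Y_2})$ in the respective $\RR(\cdot)$'s, the compatibility identity $e_{X_1 \star X_2} = e_{Y_1 \star Y_2} \kcirc (f \star_e g)$ follows directly from bifunctoriality of $\star_e$ on $\PDe$, so only the two Scott-closure relation conditions remain nontrivial.

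For the product case, a value of $P \times R$ has the form $(V_1, V_2)$ and by Definition~\ref{def:logical-constructions} we have $\llangle v_1, v_2 \rrangle\ (\tleq_{X_1} \times \tleq_{X_2})\ (V_1, V_2)$ iff $v_i \tleq_{X_i} V_i$ for $i = 1, 2$. Then $(f \ktimes_e g) \kcirc \llangle v_1, v_2 \rrangle = \llangle f \kcirc v_1, g \kcirc v_2 \rrangle$ by a routine naturality calculation, and by hypothesis $f \kcirc v_i \ol{\tleq_{Y_i}}\ V_i$. Using Lemma~\ref{lem:semantically-dense-value} together with Lemma~\ref{lem:topological-goodness} applied to the Scott-continuous pairing $\llangle -, - \rrangle$, the fact that Scott-closure is preserved componentwise yields $\llangle f \kcirc v_1, g \kcirc v_2 \rrangle \in \mathcal S((\tleq_{Y_1} \times \tleq_{Y_2}); (V_1, V_2))$, which is exactly $(f \ktimes_e g) \kcirc \llangle v_1, v_2 \rrangle\ \overline{\tleq_{Y_1} \times \tleq_{Y_2}}\ (V_1, V_2)$. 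The symmetric condition uses the projections $f^p, g^p$, and the coproduct case is entirely analogous using the two injection families $\JJ \mathtt{in}_i$.

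The function-space case is the main obstacle. Given $h\ (\tleq_{X_1} \to \tleq_{X_2})\ \lambda x . M$, we must show $(f \ktwo_e g) \kcirc h \in \mathcal S((\tleq_{Y_1} \to \tleq_{Y_2}); \lambda x . M)$. Since $\mathcal S(-;-)$ at a value reduces to Scott-closure of related singletons (Lemma~\ref{lem:semantically-dense-value}), it suffices to establish the defining conditions of $\tleq_{Y_1} \to \tleq_{Y_2}$ for the morphism $(f \ktwo_e g) \kcirc h$, namely the inequality $\JJ[e_{Y_1}^p \ktwo e_{Y_2}] \kcirc (f \ktwo_e g) \kcirc h \leq \lrb{\lambda x . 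M}$ and the application condition $((f \ktwo_e g) \kcirc h)[w] \qtleq_{Y_2} (\lambda x . M) W$ for every $w \tleq_{Y_1} W$. The inequality follows from the compatibility identity $e_{X_1 \ktwo X_2} = e_{Y_1 \ktwo Y_2} \kcirc (f \ktwo_e g)$ and the assumption that $h$ itself satisfies the first condition. For the application condition, note that $f^p \kcirc w\ \ol{\tleq_{X_1}}\ W$ by the reverse-preservation property of morphisms in $\RR(P)$, so using the naturality of currying/evaluation one rewrites $((f \ktwo_e g) \kcirc h)[w]$ as $g \kcirc h[f^p \kcirc w]$; then Lemma~\ref{lem:logical-composition-fancy} (applied to $g$ together with the hypothesis $h[f^p \kcirc w] \qtleq_{X_2} (\lambda x . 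M) W$, which follows from $h$'s defining property after a Scott-closure continuity argument extending the condition from $f^p \kcirc w \tleq_{X_1} W$ to $f^p \kcirc w \ol{\tleq_{X_1}} W$) delivers $g \kcirc h[f^p \kcirc w] \qtleq_{Y_2} (\lambda x . M) W$ as required. The symmetric reverse condition uses $(f \ktwo_e g)^p = f \ktwo_e g^p$, with the same scheme.

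Finally, preservation of identities and composition by $\star^{P,R}$ is immediate because, on underlying $\PDe$-morphisms, $\star^{P,R}$ acts exactly as the bifunctor $\star_e \colon \PDe \times \PDe \to \PDe$ (which is functorial by Proposition~\ref{prop:omega-functors}), and the forgetful functors $U^P$ are faithful. I expect steps (1) and (2) of the function-space case — in particular juggling the Scott-closures on both sides of the embedding/projection pair and reducing everything through Lemma~\ref{lem:logical-composition-fancy} — to be the technically delicate part, whereas the product and coproduct cases and all functoriality bookkeeping are routine.
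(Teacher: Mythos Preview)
The paper does not prove this proposition; it is imported verbatim from \cite[Proposition 95]{m-monad}. Your proposal is a direct verification along the lines one would expect that reference to contain, and the overall strategy---checking the three morphism conditions of Definition~\ref{def:logical-categories} for each constructor and deferring functoriality to the underlying $\PDe$-bifunctors---is correct.

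Two minor points of care. First, in the function-space forward condition you need $h[f^p \kcirc w] \ol{\tleq_{X_2}} (\lambda x.M)W$ from the hypothesis $f^p \kcirc w \ol{\tleq_{X_1}} W$, whereas the defining clause (A3) only quantifies over $v \tleq_{X_1} W$. Your ``Scott-closure continuity argument'' is the right idea, but spell it out: the map $z \mapsto h[z]$ is Scott-continuous and strict (so $h[0]=0$), the target set $\mathcal S(\tleq_{X_2};(\lambda x.M)W)$ is Scott-closed and contains $0$, hence its preimage is Scott-closed and contains both $\{v : v \tleq_{X_1} W\}$ and $0$; by Lemma~\ref{lem:semantically-dense-value} it therefore contains all of $\mathcal S(\tleq_{X_1};W)$. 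Second, your formula $(f \ktwo_e g)^p = f \ktwo_e g^p$ is a type error as written (the functor $\ktwo_e$ consumes embeddings); the correct identity is $(f \ktwo_e g)^p = \JJ[f \kto g^p]$, after which the reverse condition indeed goes through by the same scheme.
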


Observe that Proposition \ref{prop:logical-functors1} lifts the functors that we use to interpret our types in the category $\KL$ to the categories $\RR(P)$. Next, we show that the functors we just defined are also suitable for forming (parameterised) initial algebras.

\begin{proposition}[{\cite[Proposition 96]{m-monad}}]
\label{prop:logical-functors2}
For $\star \in \{\times, +, \to\},$ for all types $P$ and $R$, the functor $\star^{P,R} : \RR(P) \times \RR(R) \to \RR(P \star R)$ is $\omega$-cocontinuous and the following diagram:
\[ \stikz{logical-functors.tikz} \]
commutes.
\end{proposition}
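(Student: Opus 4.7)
The plan is to reduce both claims to the corresponding facts about the forgetful functors and the base functors on $\PDe$, using Propositions~\ref{prop:logical-colimits} and~\ref{prop:omega-functors} as the principal tools.

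First I would verify commutativity of the diagram, which essentially holds by the very definition of $\star^{P,R}$ in Proposition~\ref{prop:logical-functors1}. On objects, $(X,e_X,\tleq_X)\star^{P,R}(Y,e_Y,\tleq_Y)$ has underlying dcpo $X\star_e Y$ (for $\star_e\in\{\ktimes_e, \kplus_e, \ktwo_e\}$ as appropriate), and on morphisms it is defined to act simply as $f\star_e g$. Since $U^{P\star R}$ just extracts the underlying dcpo/morphism, the composite $U^{P\star R}\circ \star^{P,R}$ agrees with $\star_e\circ (U^P\times U^R)$ on the nose.

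For $\omega$-cocontinuity, let $((X_i,e_i,\tleq_i))_i$ and $((Y_i,e'_i,\tleq'_i))_i$ be $\omega$-chains in $\RR(P)$ and $\RR(R)$ with colimits $(X,e,\tleq)$ and $(Y,e',\tleq')$, and consider the candidate colimit $(X,e,\tleq)\star^{P,R}(Y,e',\tleq')$ together with the cocone obtained by applying $\star^{P,R}$ to the given colimit cocones. By Proposition~\ref{prop:logical-colimits}, the forgetful functors $U^P$ and $U^R$ preserve $\omega$-colimits, so $X$ and $Y$ are the colimits in $\PDe$ of the chains $(X_i)_i$ and $(Y_i)_i$. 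By Proposition~\ref{prop:omega-functors}, $\star_e:\PDe\times\PDe\to\PDe$ is $\omega$-cocontinuous, so $X\star_e Y$ is the colimit of $(X_i\star_e Y_i)_i$ in $\PDe$. Applying the commutative square from the first paragraph, this says precisely that $U^{P\star R}$ sends the candidate colimit cocone in $\RR(P\star R)$ to an honest colimit cocone in $\PDe$. Finally, since $U^{P\star R}$ also reflects $\omega$-colimits (Proposition~\ref{prop:logical-colimits}), the candidate cocone is itself a colimit cocone in $\RR(P\star R)$, which is what we needed.

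I expect the main obstacle to be mostly bookkeeping, namely tracking that $\star^{P,R}$ genuinely takes the colimit cocones of the input chains into a cocone over the image chain — this is immediate from functoriality but must be stated — and then invoking the reflection property of $U^{P\star R}$ cleanly. The relational components of the colimit (the Scott-closed relations on the limiting dcpos) are handled uniformly by the construction of colimits in $\RR(P)$ given in the proof of Proposition~\ref{prop:logical-colimits}, so no new verification at the level of $\tleq$ is required; the whole argument is driven by preservation and reflection of colimits along the forgetful functors. The case $\star=\to$ is conceptually no harder than $\star\in\{\times,+\}$ because Proposition~\ref{prop:omega-functors} already handles the mixed-variance subtlety of $[-\kto -]_e^\JJ$ by restricting to embeddings.
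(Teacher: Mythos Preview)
Your approach is correct and is precisely the standard argument one would expect: commutativity of the square holds by definition of $\star^{P,R}$, and $\omega$-cocontinuity follows by combining preservation of $\omega$-colimits by $U^P\times U^R$, $\omega$-cocontinuity of $\star_e$ on $\PDe$, and reflection of $\omega$-colimits by $U^{P\star R}$.

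Note, however, that the paper itself does not supply a proof of this proposition; it is simply imported verbatim from \cite[Proposition~96]{m-monad}. Your reconstruction is the natural one and almost certainly coincides with the argument given there, since the whole point of establishing Proposition~\ref{prop:logical-colimits} (preservation and reflection of $\omega$-colimits by the forgetful functors) is precisely to enable this kind of lifting argument. There is nothing to compare against in the present paper beyond the citation.
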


Next, we establish an isomorphism between the categories $\RR(\mu X. P)$ and $\RR(P[\mu X.P / X]).$

\begin{definition}
\label{def:logical-fold-unfold}
We define constructors for folding and unfolding logical relations as follows:
\begin{itemize}
\item If $\tleq_{X, P[\mu Y. P / Y]}^e \in \ValRel(X, P[\mu Y.P / Y], e)$, define
  \begin{align*}
  (\mathbb I^{\mu Y. P}    &\tleq_{X, P[\mu Y. P / Y]}^e ) \in \ValRel(X, \mu Y. P, \sfold \kcirc e) \text{ by:} \\
  v\ (\mathbb I^{\mu Y. P} &\tleq_{X, P[\mu Y. P / Y]}^e )\ \mathtt{fold}\ V \text{ iff } v \tleq_{X, P[\mu Y. P / Y]}^e V .
  \end{align*}
\item If $\tleq_{X, \mu Y.P}^e \in \ValRel(X, \mu Y.P, e )$, define
  \begin{align*}
  (\mathbb E^{\mu Y. P}    &\tleq_{X, \mu Y.P}^e) \in \ValRel(X, P[\mu Y. P / Y], \sunfold \kcirc e) \text{ by:}\\
  v\ (\mathbb E^{\mu Y. P} &\tleq_{X, \mu Y.P}^e) )\ V \text{ iff } v \tleq_{X, \mu Y.P}^e \mathtt{fold}\ V .
  \end{align*}
\end{itemize}
\end{definition}

\begin{proposition}[{\cite[Proposition 99]{m-monad}}]
\label{prop:logical-folding-isomorphism}
For every type $\cdot \vdash \mu X.P,$ we have an isomorphism of categories
\[\FOLD{\mu X.P} : \RR(P[\mu X.P / X]) \cong \RR(\mu X. P) : \UNFOLD{\mu X. P} , \]
where the functors are defined by 
\begin{align*}
&\FOLD{\mu X.P} :  \RR(P[\mu X.P / X]) \to \RR(\mu X. P)   & & \UNFOLD{\mu X.P} : \RR(\mu X. P) \to \RR(P[\mu X.P / X]) \\
&\FOLD{\mu X.P}(Y, e, \tleq) = (Y, \sfold \kcirc e , \FOLD{\mu X.P} \tleq) & & \UNFOLD{\mu X.P}(Y, e, \tleq) = (Y, \sunfold \kcirc e, \UNFOLD{\mu X.P} \tleq)\\
&\FOLD{\mu X.P}(f) = f & & \UNFOLD{\mu X.P}(f) = f .
\end{align*}
\end{proposition}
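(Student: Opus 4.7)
The plan is to verify two things: that $\FOLD{\mu X.P}$ and $\UNFOLD{\mu X.P}$ are well-defined functors between the stated categories, and that they are mutually inverse. Both functors act as the identity on underlying morphisms of $\PDe$ and merely rewrite the embedding component and the relation, so functoriality (preservation of composition and identities) is automatic once well-definedness is checked.

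For well-definedness of $\FOLD{\mu X.P}$ on objects, take $(Y, e, \tleq) \in \RR(P[\mu X.P/X])$. I would first observe that $\sfold_{\mu X.P} \colon \lrb{P[\mu X.P/X]} \to \lrb{\mu X.P}$ is an isomorphism in $\TD$ (by Figure \ref{fig:folding}, as the unnamed initial algebra arrow), hence an embedding in $\PDe$, so $\sfold \kcirc e$ is again a morphism of $\PDe$. Next, by canonical forms for the classical fragment, every closed value of type $\mu X.P$ has the shape $\mathtt{fold}\ V$ for a unique value $V$ of type $P[\mu X.P/X]$, so $\FOLD{\mu X.P}\tleq$ is defined on all of $\Val(\mu X.P)$. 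Its predecessor set at $\mathtt{fold}\ V$ literally equals $(-)\tleq V$, hence is Scott-closed in $\TD(1,Y)$. For the dominance condition required by Definition \ref{def:logical-relations}, I use that $\lrb{\mathtt{fold}\ V} = \sfold \kcirc \lrb V$ by construction of the term semantics, so the hypothesis $e \kcirc v \leq \lrb V$ yields $(\sfold \kcirc e)\kcirc v \leq \sfold \kcirc \lrb V = \lrb{\mathtt{fold}\ V}$ by monotonicity of postcomposition. For morphism-well-definedness, note that $e_1 = e_2 \kcirc f$ gives $\sfold \kcirc e_1 = \sfold \kcirc e_2 \kcirc f$ at once, and the two relation-preservation clauses of Definition \ref{def:logical-categories} translate without change, since $\FOLD{\mu X.P}$ does not alter the underlying pair sets of the relations. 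A perfectly symmetric argument handles $\UNFOLD{\mu X.P}$.

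For the mutual inverse claim, on objects I would compute
$\UNFOLD{\mu X.P} \circ \FOLD{\mu X.P}(Y, e, \tleq) = (Y, \sunfold \kcirc \sfold \kcirc e, \UNFOLD{\mu X.P}\FOLD{\mu X.P}\tleq),$ which equals $(Y, e, \tleq)$ because $\sunfold \circ \sfold = \id$ in $\PDe$ and because, unfolding the two definitions in Definition \ref{def:logical-fold-unfold}, $v\; (\UNFOLD{\mu X.P}\FOLD{\mu X.P}\tleq)\; V$ is equivalent to $v\; \FOLD{\mu X.P}\tleq\; \mathtt{fold}\ V$ and hence to $v \tleq V$. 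The composition in the other direction uses the canonical forms fact once more: an arbitrary value of $\mu X.P$ is $\mathtt{fold}\ V$, so $v\; (\FOLD{\mu X.P}\UNFOLD{\mu X.P}\tleq)\; \mathtt{fold}\ V \iff v\; \UNFOLD{\mu X.P}\tleq\; V \iff v \tleq \mathtt{fold}\ V$. On morphisms both functors are the identity, so mutual inverse is immediate there.

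I do not expect any deep obstacle. The one place that requires a little care is the reliance on the canonical forms lemma, which guarantees that $\FOLD{\mu X.P}\tleq$ actually covers every value in $\Val(\mu X.P)$ and that the inverse computation at the relation level is bijective; without this, the round-trip could miss values not syntactically of the form $\mathtt{fold}\ V$. Everything else is essentially bookkeeping supported by the fact that $\sfold$ is an isomorphism in $\PDe$ and that the semantics commutes with $\mathtt{fold}$ on values.
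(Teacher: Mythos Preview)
Your proposal is correct and is exactly the expected direct verification; the paper itself does not supply a proof but simply cites the result from \cite{m-monad}, and your argument matches what that verification must look like. The only point you gloss over slightly is that the morphism clauses in Definition~\ref{def:logical-categories} involve the closure relation $\ol{\tleq}$ rather than $\tleq$ itself, but since $\{w \mid w\,(\FOLD{\mu X.P}\tleq)\,\mathtt{fold}\ V\} = \{w \mid w \tleq V\}$ as subsets of $\TD(1,Y)$, Lemma~\ref{lem:semantically-dense-value} gives $\mathcal S(\FOLD{\mu X.P}\tleq; \mathtt{fold}\ V) = \mathcal S(\tleq; V)$ and the clauses indeed transfer verbatim.
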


This finishes the categorical development of the categories $\RR(P)$.

\subsection{Augmented Interpretation of Types}

We have now established sufficient categorical structure in order to construct parameterised initial algebras in the categories $\RR(P).$ Furthermore, we have sufficient structure to also define an \emph{augmented} interpretation of types in these categories.
The main idea behind providing the augmented interpretation is to show how to pick out the logical relations we need from all those that exist in the categories $\RR(P)$.

\begin{notation}
Given any type context $\Theta = X_1, \ldots , X_n$ and closed types $\cdot \vdash C_i$ with $i \in \{1, \ldots, n \}$,
we shall write $\vec C$ for $C_1, \ldots , C_n$ and we also write $[\vec C / \Theta]$ for $[C_1 / X_1, \ldots , C_n / X_n]$.
\end{notation}

\begin{definition}
For any type $\Theta \vdash P$ and closed types $\vec C$, we define their \emph{augmented interpretation} to be the functor
\[ \elrbc{\Theta \vdash P} : \RR(C_1) \times \cdots \times \RR(C_n) \to \RR(P[ \vec C / \Theta ]) \]
defined by induction on the derivation of $\Theta \vdash P$:
\begin{align*}
\elrbc{\Theta \vdash \Theta_i} &:= \Pi_i &&\\
  \elrbc{\Theta \vdash 1} &:= K_{(\lrb 1, \kid_{\lrb 1}, \tleq_1)} &&\\
  \elrbc{\Theta \vdash Q(\AAA, \BBB)} &:= K_{(\lrb{Q(\AAA, \BBB)}, \kid_{\lrb{Q(\AAA,\BBB)}}, \tleq_{Q(\AAA,\BBB)})} &&\\
\elrbc{\Theta \vdash P \star R } &:= \star^{P[\vec C / \Theta], R[\vec C / \Theta]} \circ \langle \elrbc{\Theta \vdash P}, \elrbc{\Theta \vdash R} \rangle &(\text{for }\star \in \{ +, \times, \to \})&\\
\elrbc{\Theta \vdash \mu X.P} &:= \left(\FOLD{\mu X. P[\vec C / \Theta]} \circ \elrb{\Theta, X \vdash P}{\vec C, \mu X. P[\vec C / \Theta]} \right)^\sharp , &&
\end{align*}
where $K_Y$ is the constant functor on $Y$ and the $(-)^\sharp$ operation is from Definition \ref{def:initial-algebra}.
\end{definition} 
 
\begin{proposition}
\label{prop:augmented-interpretation}
Each functor $\elrbc{\Theta \vdash P}$ is well-defined and $\omega$-cocontinuous. Moreover, the following diagram:
\cstikz{augmented-diagram.tikz}
commutes.
\end{proposition}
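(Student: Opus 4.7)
The plan is to proceed by induction on the derivation of $\Theta \vdash P$, verifying simultaneously the three claims: well-definedness, $\omega$-cocontinuity, and commutation of the diagram with the forgetful functors $U^{(-)}$ of Definition~\ref{def:logical-forgetful}. The base cases are essentially immediate. For $\Theta \vdash \Theta_i$, the functor $\Pi_i$ is a projection, which is trivially $\omega$-cocontinuous and makes the diagram commute on the nose. For $\Theta \vdash 1$ and $\Theta \vdash Q(\AAA,\BBB)$, the augmented interpretation is a constant functor at the designated objects $(\lrb 1, \kid, \tleq_1)$ and $(\lrb{Q(\AAA,\BBB)}, \kid, \tleq_{Q(\AAA,\BBB)})$ of $\RR(1)$ and $\RR(Q(\AAA,\BBB))$ respectively (using Definitions~\ref{def:logical-unit} and \ref{def:logical-quantum-function}); constant functors to $\omega$-cocomplete categories preserve $\omega$-colimits, and commutation with the forgetful functors follows because $U$ sends these constant objects to $\lrb 1$ and $\lrb{Q(\AAA,\BBB)}$ respectively.

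The inductive cases for $+$, $\times$, and $\to$ are handled uniformly by invoking Proposition~\ref{prop:logical-functors2}, which gives exactly the $\omega$-cocontinuity of $\star^{P,R}$ and the commutation of the relevant square with $U^P \times U^R$ and $U^{P \star R}$. Composing the inductive hypotheses for $\elrbc{\Theta \vdash P}$ and $\elrbc{\Theta \vdash R}$ with $\star^{P[\vec C / \Theta], R[\vec C / \Theta]}$ preserves both $\omega$-cocontinuity (composition of $\omega$-cocontinuous functors) and diagram commutation (pasting two commuting squares). The substitution Lemma~\ref{lem:type-substitution} is used implicitly to identify $(P \star R)[\vec C / \Theta]$ with $P[\vec C / \Theta] \star R[\vec C / \Theta]$.

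The main obstacle is the recursive case $\mu X. P$. By induction, the functor $\elrb{\Theta, X \vdash P}{\vec C, \mu X. P[\vec C / \Theta]}$ is $\omega$-cocontinuous into $\RR(P[\vec C / \Theta][\mu X. P[\vec C / \Theta]/X])$, and composing with the isomorphism of categories $\FOLD{\mu X. P[\vec C / \Theta]}$ from Proposition~\ref{prop:logical-folding-isomorphism} yields an $\omega$-cocontinuous functor valued in $\RR(\mu X. P[\vec C / \Theta])$. Since $\RR(\mu X. P[\vec C / \Theta])$ has an initial object and all $\omega$-colimits by Proposition~\ref{prop:logical-colimits}, Proposition~\ref{prop:par-initial-algebra} produces the parameterised initial algebra $(-)^\sharp$, which is itself $\omega$-cocontinuous, giving well-definedness and $\omega$-cocontinuity of $\elrbc{\Theta \vdash \mu X. P}$.

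The hard part is showing the diagram commutes in the $\mu$-case: one must verify that forgetting via $U^{\mu X. P[\vec C/\Theta]}$ commutes with the $(-)^\sharp$ construction, i.e., that $U^{\mu X. P[\vec C / \Theta]} \circ (\FOLD{} \circ \elrb{\Theta, X \vdash P}{\cdots})^\sharp$ agrees with $\lrb{\Theta \vdash \mu X. P}[\vec C / \Theta] \circ (U^{C_1} \times \cdots \times U^{C_n})$. The key fact is that, by Proposition~\ref{prop:logical-colimits}, $U^{\mu X. P[\vec C / \Theta]}$ preserves (and reflects) initial objects and $\omega$-colimits, and hence commutes with parameterised initial algebras of $\omega$-cocontinuous functors up to canonical isomorphism. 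Combined with the inductive hypothesis applied to the inner composite and the fact that $U$ sends $\FOLD{}$ to $\sfold \kcirc (-)$ (which implements exactly the folding isomorphism used in $\lrb{\Theta \vdash \mu X. P}$), the outer diagram commutes. A careful computation tracking the isomorphism components of the initial algebras on each side completes the argument.
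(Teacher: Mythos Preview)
Your induction is the same as the argument the paper defers to in \cite{lnl-fpc-lmcs}, and the base cases and the $+,\times,\to$ cases are handled exactly right via Proposition~\ref{prop:logical-functors2}.

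There is one inaccuracy in the $\mu$-case. You write that ``$U$ sends $\FOLD{}$ to $\sfold \kcirc (-)$'', but this is wrong: inspecting Proposition~\ref{prop:logical-folding-isomorphism}, $\FOLD{}$ is the identity on the first (dcpo) component of objects and on morphisms, so $U^{\mu X.P[\vec C/\Theta]} \circ \FOLD{} = U^{P[\mu X.P/X][\vec C/\Theta]}$, i.e.\ the forgetful functor simply absorbs $\FOLD{}$. The standard interpretation $\lrb{\Theta \vdash \mu X.P} = \lrb{\Theta, X \vdash P}^\sharp$ involves no explicit fold at the level of the functor; the map $\sfold$ only appears later as the initial-algebra structure isomorphism. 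With this correction your argument is actually simpler: the inductive hypothesis gives $U \circ (\FOLD{} \circ \elrb{\Theta,X\vdash P}{\vec C, \mu X.P[\vec C/\Theta]}) = \lrb{\Theta,X\vdash P} \circ (U^{C_1} \times \cdots \times U^{C_n} \times U^{\mu X.P[\vec C/\Theta]})$, and because $U$ preserves and reflects initial objects and $\omega$-colimits (Proposition~\ref{prop:logical-colimits}) it commutes strictly, not merely up to isomorphism, with the $(-)^\sharp$ construction built from the initial $\omega$-chain.
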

\begin{proof}
The proof is essentially the same as \cite[Proposition 7.26]{lnl-fpc-lmcs}.
\end{proof}

Next, a corollary which shows that parameterised initial algebras for our type expressions are constructed in the same way in both categories.

\begin{corollary}
\label{cor:augmented-algebras}
The 2-categorical diagram:
\cstikz{parameterised-initial-algebra-augmented.tikz}
commutes, where $\iota$ is the parameterised initial algebra isomorphism (see Definition \ref{def:initial-algebra}).
\end{corollary}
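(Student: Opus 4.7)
The plan is to leverage the three main structural facts already established: (i) Proposition~\ref{prop:augmented-interpretation}, which gives commutation of the augmented interpretation with the standard one via the forgetful functors $U^P$; (ii) Proposition~\ref{prop:logical-colimits}, which says $U^P$ preserves and reflects $\omega$-colimits and initial objects; and (iii) the explicit construction of parameterised initial algebras as $\omega$-colimits of initial chains, available because of $\omega$-cocontinuity (Propositions~\ref{prop:par-initial-algebra}, \ref{prop:omega-functors}, and \ref{prop:augmented-interpretation}).

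First, I would unfold both sides of the claimed diagram. For the augmented interpretation one has, by definition,
\[
\elrb{\Theta \vdash \mu X.P}{\vec C} \;=\; \bigl(\FOLD{\mu X.P[\vec C/\Theta]}\circ \elrb{\Theta,X \vdash P}{\vec C,\mu X. P[\vec C/\Theta]}\bigr)^{\sharp},
\]
whereas $\lrb{\Theta \vdash \mu X.P} = \lrb{\Theta,X \vdash P}^\sharp$. The key observation is that the folding functor $\FOLD{\mu X.P[\vec C/\Theta]}$ acts as the identity on underlying objects and morphisms (Proposition~\ref{prop:logical-folding-isomorphism}): it only relabels the embedding into $\lrb{\mu X.P[\vec C/\Theta]}$ via $\sfold$ and repackages the relation. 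Hence, on underlying $\PDe$-data we have the strict equality
\[
U^{\mu X.P[\vec C/\Theta]}\circ \FOLD{\mu X.P[\vec C/\Theta]}\circ \elrb{\Theta,X \vdash P}{\vec C,\mu X.P[\vec C/\Theta]} \;=\; \lrb{\Theta,X\vdash P}\circ \langle U^{C_1}\times\cdots\times U^{C_n},\, U^{\mu X.P[\vec C/\Theta]}\rangle,
\]
by Proposition~\ref{prop:augmented-interpretation}.

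Second, I would appeal to the general fact that a functor preserving $\omega$-colimits and initial objects automatically preserves parameterised initial algebras built as colimits of initial chains (a standard Adámek-style argument, used in essentially the same form in \cite[\S 6]{lnl-fpc-lmcs}). Since $U^P$ preserves both (Proposition~\ref{prop:logical-colimits}) and the functors involved are $\omega$-cocontinuous (Propositions~\ref{prop:logical-functors2} and \ref{prop:augmented-interpretation}, together with the fact that $\FOLD$ is an isomorphism of categories, hence trivially $\omega$-cocontinuous), pushing $U^P$ through the $\sharp$ construction yields the colimit computed in $\PDe$, which is precisely $\lrb{\Theta,X\vdash P}^\sharp = \lrb{\Theta\vdash \mu X.P}$. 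This gives the object-level commutation of the diagram.

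Finally, to obtain the 2-categorical commutation claimed for the initial-algebra structure maps $\iota$, I would check that $U^{\mu X.P[\vec C/\Theta]}$ sends the initial-algebra isomorphism $\iota^{\FOLD\circ\elrb{-}{-}}$ in $\RR(\mu X.P[\vec C/\Theta])$ to $\iota^{\lrb{-}}$ in $\PDe$. This is immediate once one recalls that the initial algebra isomorphism for $\FOLD\circ\elrb{\Theta,X\vdash P}{\vec C,-}$ is, by construction, $\sfold$ composed with the unit at the colimit stage, and $U^P$ acts as the identity on morphisms. The main (mild) obstacle is simply bookkeeping: ensuring at each step that the embedding component $e$ of objects in $\RR(P)$ is tracked correctly through $\FOLD$ and the colimit construction, but since $\FOLD$ is defined by precomposing $e$ with $\sfold$ (which is exactly the initial algebra isomorphism in $\PDe$), this bookkeeping lines up by design.
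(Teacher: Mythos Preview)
Your proposal is correct and follows the standard argument; the paper itself does not give a proof but simply defers to \cite[Corollary 7.27]{lnl-fpc-lmcs}, and what you have written is essentially a reconstruction of that argument. The key ingredients you identify---that $U^P$ preserves and reflects $\omega$-colimits and initial objects (Proposition~\ref{prop:logical-colimits}), that $\FOLD$ is the identity on underlying $\PDe$-data (Proposition~\ref{prop:logical-folding-isomorphism}), and that the $\sharp$ construction is built from initial chains---are exactly the right ones, and the bookkeeping you flag about the embedding component being adjusted by $\sfold$ is indeed the only point requiring care.
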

\begin{proof}
The proof is the same as \cite[Corollary 7.27]{lnl-fpc-lmcs}.
\end{proof}

Proposition \ref{prop:augmented-interpretation} shows that the first component of the augmented interpretation coincides with the standard interpretation. This is true for all types, including open ones.
In the special case for closed types, let $\elrbs{P} \eqdef \elrb{\cdot \vdash P}{\cdot}(*)$, where $*$ is the unique object of the terminal category $\mathbf 1 = \RR(P)^0$.
Proposition \ref{prop:augmented-interpretation} therefore shows that $U \elrbs P = \lrb P$, which means that $\elrbs P$ has the form $\elrbs P = (\lrb P, e, \tleq),$ where $e : \lrb P \kto \lrb P$ is some
embedding. Next, we show that $e = \kid.$ In order to do this, we prove a stronger proposition first. We show that the action of the functor $\elrbc{\Theta \vdash P}$ on the embedding component is also completely determined by the action of $\lrb{\Theta \vdash P}$ on embeddings.

\begin{proposition}
\label{prop:augmented-embedding}
For every functor $\elrbc{\Theta \vdash P}$ and objects $(X_i, e_i, \tleq_i)$ with $i \in \{1, \ldots, n\}$, we have:
\[ \pi_e \left( \elrbc{\Theta \vdash P}\left( (X_1, e_1, \tleq_1), \ldots, (X_n, e_n, \tleq_n) \right) \right) =  \lrb{\Theta \vdash P}(e_1, \ldots, e_n) , \]
where for an object $(Z, e_Z, \tleq_Z)$ in any category $\RR(R)$, we define $\pi_e(Z, e_Z, \tleq_Z) = e_Z.$ 
\end{proposition}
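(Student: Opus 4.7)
The plan is to proceed by structural induction on the derivation of $\Theta \vdash P$. The base cases and the connective cases ($+$, $\times$, $\to$) are essentially routine, and reduce to matching up definitions. The serious work lies in the recursive-type case, where we must argue that the parameterised initial algebra constructed in $\RR(\mu X.P[\vec C/\Theta])$ sits over the parameterised initial algebra constructed in $\PDe$ via a commuting triangle of embeddings.

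First I would handle the base cases. For $\Theta \vdash \Theta_i$, both sides are the $i$-th projection, so the equation is $e_i = e_i$. For $\Theta \vdash 1$ and $\Theta \vdash Q(\AAA,\BBB)$, the augmented interpretation is the constant functor whose embedding component is $\kid$, which matches the action of the constant functor $K_1$ (resp.\ $K_{\lrb{Q(\AAA,\BBB)}}$) on embeddings. For the connective cases, I would inspect the action of the functors $\times^{P,R}$, $+^{P,R}$ and $\to^{P,R}$ from Proposition~\ref{prop:logical-functors1}. By definition these functors send an object $(X,e_X,\tleq_X)$, $(Y,e_Y,\tleq_Y)$ to an object whose embedding component is $e_X \star_e e_Y$ for the corresponding bifunctor $\star_e \in \{\ktimes_e,\kplus_e,[\kto]^\JJ_e\}$ on $\PDe$. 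Combining this with the induction hypothesis on the subtypes, and with the fact that $\lrb{\Theta \vdash P \star R}$ is defined via composition with precisely the same bifunctor (Figure~\ref{fig:type-interpretation}), yields the required equality.

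The recursive case is the main obstacle. Here $\elrbc{\Theta \vdash \mu X.P} = \bigl(\FOLD{\mu X.P[\vec C/\Theta]} \circ \elrb{\Theta,X \vdash P}{\vec C,\mu X.P[\vec C/\Theta]}\bigr)^\sharp$ while $\lrb{\Theta \vdash \mu X.P} = \lrb{\Theta,X \vdash P}^\sharp$, and both parameterised initial algebras are built via the colimit of the initial $\omega$-chain (Proposition~\ref{prop:par-initial-algebra}). By Proposition~\ref{prop:logical-colimits} the forgetful $U^P$ preserves and reflects $\omega$-colimits, so the underlying dcpo of the colimit in $\RR(\mu X.P[\vec C/\Theta])$ coincides with the underlying dcpo of the colimit in $\PDe$. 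By Proposition~\ref{prop:logical-folding-isomorphism} the isomorphism $\FOLD{\mu X.P[\vec C/\Theta]}$ acts on the embedding component by post-composition with $\sfold$, so, step-by-step along the $\omega$-chain, the embedding components in $\RR$ agree with the corresponding embedding components produced by $\lrb{\Theta,X \vdash P}$ in $\PDe$ (using the induction hypothesis at the intermediate stage $\Theta,X \vdash P$). The colimiting embedding into $\lrb{\mu X.P[\vec C/\Theta]}$ is then determined uniquely by the universal property of the colimit in $\PDe$ as a cocone morphism, and the same universal property determines the embedding component of $\elrbc{\Theta \vdash \mu X.P}(\vec X_i)$; hence they agree. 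The action on morphisms of the parameters $(X_i,e_i,\tleq_i)$ follows from the analogous argument applied level-by-level along the $\omega$-chain, together with $\omega$-cocontinuity of $\elrb{\Theta,X \vdash P}{-}$ (Proposition~\ref{prop:augmented-interpretation}).

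The cleanest way to phrase the last step is to observe that the assignment $(Z,e_Z,\tleq_Z)\mapsto e_Z$ refines to a functor from $\RR(\mu X.P[\vec C/\Theta])$ into the comma category $\PDe \downarrow \lrb{\mu X.P[\vec C/\Theta]}$ (morphisms of $\RR$ satisfy $e_X = e_Y \kcirc f$ by Definition~\ref{def:logical-categories}), and this functor preserves $\omega$-colimits by the same argument underlying Proposition~\ref{prop:logical-colimits}. The preservation of $\omega$-colimits by this refined projection is really the only nontrivial observation; once it is in hand, Corollary~\ref{cor:augmented-algebras} and a straightforward induction close the recursive case and complete the proof.
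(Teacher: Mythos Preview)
Your proposal is correct and follows essentially the same approach as the paper: structural induction on the derivation of $\Theta \vdash P$, with the base cases $1$ and $Q(\AAA,\BBB)$ handled trivially via the constant-functor definitions, and the remaining cases (variable, connectives, $\mu$) following the pattern of \cite[Proposition 104]{m-monad}. Your treatment of the recursive case via the comma-category projection and preservation of $\omega$-colimits is a clean way to spell out what the paper leaves implicit in its citation.
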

\begin{proof}
By induction on the derivation of $\Theta \vdash P.$

\paragraph*{\textbf{Case} $P=1$ } The functors on both sides are constant ones, so this is a trivial verification.

\paragraph*{\textbf{Case} $P=Q(\AAA,\BBB)$ } The functors on both sides are constant ones, so this is a trivial verification.

  The remaining cases follow using exactly the same arguments as in \cite[Proposition 104]{m-monad}.
\end{proof}

\begin{corollary}
\label{cor:cool-form}
For every closed classical type $P$, we have $\elrbs P = (\lrb P, \kid_{\lrb P}, \tleq_P)$ for some logical relation $\tleq_P.$ 
\end{corollary}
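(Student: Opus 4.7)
The plan is to derive the corollary directly from Propositions \ref{prop:augmented-interpretation} and \ref{prop:augmented-embedding}, specialised to the empty type context $\Theta = \cdot$. By construction, $\elrbs P = \elrb{\cdot \vdash P}{\cdot}(*)$ is an object of $\RR(P)$, so it necessarily has the form $(Z, e_Z, \tleq_P)$ for some dcpo $Z$, some embedding $e_Z : Z \kto \lrb P$ in $\PDe$, and some logical relation $\tleq_P \in \ValRel(Z, P, e_Z)$. It therefore suffices to identify $Z$ with $\lrb P$ and $e_Z$ with $\kid_{\lrb P}$; the $\tleq_P$ appearing in the statement can then simply be defined as the third component of this triple.

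For the first identification, I would apply the forgetful functor $U^P \colon \RR(P) \to \PDe$ of Definition \ref{def:logical-forgetful}. Proposition \ref{prop:augmented-interpretation}, specialised to $\Theta = \cdot$ and evaluated at the unique object of the terminal category, gives $U^P(\elrbs P) = \lrb{\cdot \vdash P}(*) = \lrb P$. Since $U^P$ extracts the first component, this forces $Z = \lrb P$.

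For the second identification, I would read off the embedding component via Proposition \ref{prop:augmented-embedding}. At $\Theta = \cdot$ with no arguments, the equation $\pi_e(\elrbc{\Theta \vdash P}(\dots)) = \lrb{\Theta \vdash P}(e_1, \dots, e_n)$ collapses to $\pi_e(\elrbs P) = \lrb{\cdot \vdash P}()$, which is the image of the identity of the terminal category $\PDe^0$ under the functor $\lrb{\cdot \vdash P} \colon \PDe^0 \to \PDe$. By functoriality this is $\id_{\lrb P}$ in $\PDe$, and since $\PDe$ is a subcategory of $\KL$ with identities inherited from $\KL$, this identity coincides with $\kid_{\lrb P} = \eta_{\lrb P}$. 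Combining the two identifications yields $\elrbs P = (\lrb P, \kid_{\lrb P}, \tleq_P)$ as required.

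There is no genuine obstacle in this argument; the corollary is a bookkeeping consequence of the two preceding propositions, both of which have already been established. The only point that warrants care is the $n=0$ case of Proposition \ref{prop:augmented-embedding}: one must interpret the right hand side as the image of the unique (identity) morphism of $\PDe^0$ under a functor out of the terminal category, and then verify that the identity on $\lrb P$ in $\PDe$ is exactly $\kid_{\lrb P}$ in the Kleisli notation.
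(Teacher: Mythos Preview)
Your proof is correct and follows essentially the same approach as the paper: use Proposition~\ref{prop:augmented-interpretation} to identify the first component as $\lrb P$, then Proposition~\ref{prop:augmented-embedding} in the degenerate case $n=0$ to identify the embedding component as $\kid_{\lrb P}$. The paper's argument is terser (it writes $\pi_e\elrbs P = \lrb{\cdot \vdash P}(\id_*) = \kid_{\lrb P}$ directly), but your added care about the $n=0$ case and the identification of identities in $\PDe$ with $\kid$ is entirely appropriate.
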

\begin{proof}
We already know that the first component is $\lrb P$. For the second component, the previous proposition shows that $\pi_e \elrbs P = \pi_e \elrb{\cdot \vdash P}{\cdot}(*) = \lrb{\cdot \vdash P}(\id_*) = \kid_{\lrb P},$ where $*$ denotes the empty tuple of objects and $\id_*$ the empty tuple of embeddings.
\end{proof}

Finally, we want to show that the third component of $\elrbs P$ is the logical relation that we need to carry out the adequacy proof. For this, we have to prove a substitution lemma first.

\begin{lemma}[Substitution]
\label{lem:substitution-augmented}
For any classical types $\Theta, X \vdash P$ and $\Theta \vdash R$ and closed types $C_1, \ldots, C_{n}$, we have:
\[ \elrbc{\Theta \vdash P[R/X]} = \elrb{\Theta, X \vdash P}{\vec C, R[\vec C / \Theta]} \circ \langle \Id, \elrbc{\Theta \vdash R} \rangle . \]
\end{lemma}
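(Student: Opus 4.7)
The plan is to prove this by straightforward structural induction on the derivation of the well-formed type judgement $\Theta, X \vdash P$, mirroring the proof of the analogous substitution lemma (Lemma~\ref{lem:type-substitution}) for the ordinary interpretation. Since all the functors involved are well-defined, $\omega$-cocontinuous and respect the standard interpretation via the forgetful functor $U^{(-)}$ (Proposition~\ref{prop:augmented-interpretation}), it suffices to track how substitution commutes with each type constructor at the augmented level.

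For the base cases, if $P$ is a type variable then the claim reduces to projection equalities: either $P = X$, in which case $P[R/X] = R$ and both sides evaluate to $\elrbc{\Theta \vdash R}$, or $P = \Theta_i$, in which case $P[R/X] = \Theta_i$ and both sides equal $\Pi_i$. The cases $P = 1$ and $P = Q(\AAA,\BBB)$ are immediate because substitution does not affect these closed types and the augmented interpretation is a constant functor. For the binary connectives $P = P_1 \star P_2$ with $\star \in \{+,\times,\to\}$, the equality follows from the inductive hypotheses applied to $P_1$ and $P_2$, together with the fact that $\star^{P_1[\vec C/\Theta], P_2[\vec C/\Theta]}$ is a bifunctor that is compatible with the pairing and substitution on contexts; this is entirely analogous to the corresponding step for the ordinary interpretation and uses Proposition~\ref{prop:logical-functors2} only to guarantee that these functors commute with the forgetful functor in a manner compatible with the ambient argument.

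The only nontrivial case is the recursive type $P = \mu Y. P'$. Here both sides are constructed via parameterised initial algebras, so I will invoke Corollary~\ref{cor:augmented-algebras} to reduce the equality of the $\sharp$-operations to the equality of the underlying functors before $\sharp$ is applied. Concretely, expanding both sides:
\[
\elrbc{\Theta \vdash (\mu Y. P')[R/X]} = \elrbc{\Theta \vdash \mu Y. P'[R/X]} = \bigl(\FOLD{\mu Y.P'[R/X][\vec C/\Theta]} \circ \elrb{\Theta, Y \vdash P'[R/X]}{\vec C, \mu Y. P'[R/X][\vec C/\Theta]}\bigr)^\sharp,
\]
while the right-hand side unfolds to
\[
\elrb{\Theta,X \vdash \mu Y. P'}{\vec C, R[\vec C/\Theta]} \circ \langle \Id, \elrbc{\Theta \vdash R}\rangle = \bigl(\FOLD{\mu Y.P'[\vec C, R[\vec C/\Theta]/\Theta,X]} \circ \elrb{\Theta, X, Y \vdash P'}{\vec C, R[\vec C/\Theta], -}\bigr)^\sharp \circ \langle \Id, \elrbc{\Theta \vdash R}\rangle.
\]
I will apply the induction hypothesis to $\Theta, Y \vdash P'[R/X]$ (equivalently, to $\Theta, X, Y \vdash P'$ with substitution of $R$ into $X$) to show the functors before $(-)^\sharp$ agree up to the canonical rearrangement of parameters, using that $\FOLD{-}$ only depends on the type $\mu Y. P'[R/X][\vec C/\Theta]$, which is the same on both sides by a syntactic check that $(\mu Y. P')[R/X][\vec C/\Theta] = \mu Y. P'[\vec C, R[\vec C/\Theta]/\Theta, X]$ (noting $Y$ is bound and may be assumed fresh). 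Then Corollary~\ref{cor:augmented-algebras}, together with uniqueness of parameterised initial algebras up to isomorphism and coherence of the $\sharp$-operation under precomposition with functors of the form $\langle \Id, G\rangle$, yields the equality.

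The main obstacle will be the $\mu$-case, specifically the bookkeeping needed to show the parameterised initial algebra commutes with substitution: one must ensure that freshness conventions on the bound variable $Y$ are handled correctly and that the coherence property of $(-)^\sharp$ with respect to reindexing by $\langle \Id, \elrbc{\Theta \vdash R}\rangle$ is applied correctly. This is a standard argument (cf.\ the analogous proofs in \cite{lnl-fpc-lmcs, icfp19, m-monad}), but care is required because the $\FOLD{-}$ functor depends on the whole closed recursive type.
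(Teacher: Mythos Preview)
Your approach is essentially the same as the paper's: structural induction on the derivation of $\Theta, X \vdash P$, with the constant-functor cases ($1$ and $Q(\AAA,\BBB)$) trivial and the remaining cases handled exactly as in \cite[Lemma 7.30]{lnl-fpc-lmcs}. One small correction: Corollary~\ref{cor:augmented-algebras} is about compatibility of $(-)^\sharp$ with the forgetful functor $U$, not about reducing equality of $\sharp$-ed functors to equality of their bodies; the ingredient you actually need in the $\mu$-case is the standard reindexing law $T^\sharp \circ G = (T \circ (G \times \Id))^\sharp$ for parameterised initial algebras, which you correctly identify later in your outline.
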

\begin{proof}
  By induction on the derivation of $\Theta, X \vdash P$. The cases for the types $1$ and $Q(\AAA,\BBB)$ are trivial, because they involve constant functors. The remaining cases can be proven in the same way as \cite[Lemma 7.30]{lnl-fpc-lmcs}.
\end{proof}

For each type $P$, we have now provided an augmented interpretation $\elrbs P$ of $P$ in the category $\RR(P).$
The interpretation $\elrbs{-}$ satisfies all the fundamental properties of $\lrb{-},$ as we have now shown. It should now be clear that this augmented interpretation is true to its name, because it carries strictly more information compared to the standard interpretation of types.
The additional information that $\elrbs{P}$ carries is precisely the logical relation that we need at type $P$, as we show in the next subsection.

\subsection{Existence of the Logical Relations}

We can now show that the logical relations we need for the adequacy proof exist.

\begin{theorem}
\label{thm:formal-relations}
For each closed classical type $P$, there exist \emph{formal approximation relations:}
\begin{align*}
\tleq_P &\subseteq \TD(1, \lrb P) \times \mathrm{Val}(P) \\
\ol{\tleq_P} &\subseteq \KL(1, \lrb P) \times \mathrm{Prog}(P)
\end{align*}
which satisfy the following properties:
\begin{enumerate}
\setlength\itemsep{0.4em}
\item [(A0)] $ v \tleq_{1} () \text{ iff } v = \kid_{1}$.
\item [(A1)] $ \JJ \emph{in}_i \kcirc v \tleq_{P_1 + P_2} \mathtt{in}_i V \text{ iff } v \tleq_{P_i} V$, where $i \in \{1,2\}$.
\item [(A2)] $\llangle v_1, v_2 \rrangle \tleq_{P_1 \times P_2} (V_1,V_2) \text{ iff }  v_1 \tleq_{P_1} V_1 \text{ and } v_2 \tleq_{P_2} V_2 .$
\item [(A3)] $ f \tleq_{P \to R} \lambda x. M \text{ iff }  f \leq \lrb{\lambda x.M} \text{ and } \forall (v \tleq_P V).\  f[v] \ol{\tleq_R} (\lambda x.M)V. $
\item [(A4)] $ v \tleq_{\mu X.P} \mathtt{fold}\ V \text{ iff } \sunfold \kcirc v \tleq_{P[\mu X. P / X]} V$.
\item [(A5)] $ f \tleq_{Q(\AAA, \BBB)} \ff \text{ iff }  f \leq \lrb{\ff} \text{ and } \forall k \in \mathbb N. \forall (\cdot \vdash [\ket{\psi}, \ell, \vq] : \AAA; \qbit^k) . $ \[ (\beta(f(*)) \otimes \id) \circ \lrb{[\ket{\psi}, \ell, \vq]} \btleq_{\BBB}^k [\ket \psi, \ell, \ff \vq]. \]
\item [(R)] $ m \ol{\tleq_P} M \text{ iff } m \in \mathcal S(\tleq_P; M), $ where $\mathcal S(\tleq_P; M)$ is the Scott-closure in $\KL(1, \lrb P)$ of the set
    \[ \mathcal S_{0}(\tleq_P; M) \defeq \left\{ \sum_{\pi \in F} P(\pi) v_\pi \ |\ F \subseteq \TPaths(M)  \text{ is finite and }  \text{$v_\pi \tleq_P V_\pi$ for each $\pi \in F$} \right\} \ (\text{see Definition \ref{def:paths}}) . \]
\item [(C1)]\label{item:below} If $v \tleq_P V$, then $v \leq \lrb V$.
\item [(C2)] $(- \tleq_P V)$ is a Scott-closed subset of $\TD(1,\lrb P).$
\item [(C3)] If $m \ol{\tleq_P} M$, then $m \leq \lrb M$.
\item [(C4)] $(- \ol{\tleq_P} M)$ is a Scott-closed subset of $\KL(1,\lrb P).$
\item [(C5)] If $v \in \TD(1, \lrb P)$ and $V$ is a value, then $v \tleq_P V$ iff $v \ol{\tleq_P} V.$
\end{enumerate}
\end{theorem}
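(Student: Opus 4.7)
The plan is to extract the required logical relations directly from the augmented interpretation $\elrbs{P}$ built in the previous subsection, and then verify each property by unfolding the relevant constructions. By Corollary~\ref{cor:cool-form}, for each closed classical type $P$ we have $\elrbs{P} = (\lrb P, \kid_{\lrb P}, \tleq_P)$ for some $\tleq_P \in \mathrm{ValRel}(\lrb P, P, \kid_{\lrb P})$. Take this $\tleq_P$ as the formal approximation relation at type $P$, and define $\ol{\tleq_P}$ by Definition~\ref{def:logical-closure}. Properties (C1) and (C2) are immediate from the definition of $\mathrm{ValRel}$; (R) is exactly Definition~\ref{def:logical-closure}; (C3) follows from Lemma~\ref{lem:logical-inequality} specialised to $e = \kid_{\lrb P}$; (C4) follows because $\mathcal S(\tleq_P; M)$ is Scott-closed by construction; and (C5) is Lemma~\ref{lem:id}.

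For (A0) and (A5), the augmented interpretations $\elrbs{1}$ and $\elrbs{Q(\AAA,\BBB)}$ are defined via constant functors whose values are exactly $(\lrb 1, \kid, \tleq_1)$ and $(\lrb{Q(\AAA,\BBB)}, \kid, \tleq_{Q(\AAA,\BBB)})$ from Definitions~\ref{def:logical-unit} and~\ref{def:logical-quantum-function}. So (A0) and (A5) hold on the nose. For (A1), (A2) and (A3), I would use that $\elrbs{P_1 \star P_2}$ factors through the functor $\star^{P_1, P_2}$ of Proposition~\ref{prop:logical-functors1} applied to $\elrbs{P_1}$ and $\elrbs{P_2}$. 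Combined with Corollary~\ref{cor:cool-form}, this means that the third component of $\elrbs{P_1 \star P_2}$ is literally $\tleq_{P_1} \star \tleq_{P_2}$ for $\star \in \{+, \times, \to\}$, and so (A1)--(A3) are the defining clauses in Definition~\ref{def:logical-constructions}.

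The main obstacle is (A4) for recursive types, since this is where substitution, the folding/unfolding isomorphism $\FOLD{\mu X.P}$, and the parameterised initial algebra all have to interact coherently. Here I would argue as follows. By definition, $\elrb{\cdot \vdash \mu X. P}{\cdot} = (\FOLD{\mu X.P} \circ \elrb{X \vdash P}{\mu X.P})^\sharp$, so the component $\tleq_{\mu X.P}$ is the third component of the initial algebra $\iota$ for the functor $\FOLD{\mu X.P} \circ \elrb{X \vdash P}{\mu X.P}$ applied to the empty tuple. By Corollary~\ref{cor:augmented-algebras} and Proposition~\ref{prop:augmented-embedding}, this initial algebra lives over the initial algebra $\sfold_{\mu X.P}$ in $\PD_e$ and its structure map is an isomorphism in $\RR(\mu X.P)$. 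Unfolding the definition of $\FOLD{\mu X.P}$ in Definition~\ref{def:logical-fold-unfold}, together with the substitution Lemma~\ref{lem:substitution-augmented} applied to $\elrb{X \vdash P}{\mu X.P}$, identifies the third component of this initial algebra with $\mathbb I^{\mu X.P}\, \tleq_{P[\mu X.P / X]}$, which is precisely the clause (A4). The only subtlety is tracking the fact that the $\sfold/\sunfold$ isomorphisms from Figure~\ref{fig:folding} are the same ones used both to interpret types in $\PD_e$ and to build the logical relation constructor $\FOLD{\mu X.P}$; but this is ensured by Corollary~\ref{cor:augmented-algebras}. Once (A4) is established, the remaining properties (C1)--(C5) and (R) transfer to $\tleq_{\mu X.P}$ because they are part of the general definition of $\mathrm{ValRel}$ and of $\ol{\tleq_{(-)}}$, completing the verification.
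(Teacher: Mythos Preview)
Your proposal is correct and follows exactly the paper's approach: extract $\tleq_P$ as the third component of $\elrbs{P}$ via Corollary~\ref{cor:cool-form}, note that (A0) and (A5) hold by construction from the constant functors, and derive the remaining properties from the functorial constructions on $\RR(P)$ and the general lemmas about $\mathrm{ValRel}$ and its closure. The paper's own proof is in fact more terse than yours, simply citing \cite[Theorem~107]{m-monad} for everything except (A0) and (A5); your expanded account of (A4) via Corollary~\ref{cor:augmented-algebras}, Lemma~\ref{lem:substitution-augmented}, and Definition~\ref{def:logical-fold-unfold} is the right unfolding of that citation.
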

\begin{proof}
Consider the object $\elrbs P \in \RR(P).$ We have already shown that $\elrbs P = (\lrb P, \kid_{\lrb P}, \tleq_P)$ for some logical relation $\tleq_P \in \ValRel(\lrb P, P, \kid_{\lrb P}).$ We now show that $\tleq_P$ satisfies the required properties.
Notice that the embedding components are just identities.

Properties (A0) and (A5) are satisfied by construction. To show the remaining properties are satisfied we simply use the same arguments as in \cite[Theorem 107]{m-monad}.

\end{proof}

\subsection{Closure Properties of the Logical Relations}
\label{sub:closure-properties}

Here we establish some important closure properties of our logical relations.

\begin{lemma}[{\cite[Lemma 108]{m-monad}}]
\label{lem:logical-convex-sum}
Let $\cdot \vdash M : P$ be a term and let $F$ be some finite index set. Assume that we are given morphisms $m_i$ and terms $M_i$ such that $m_i \ol{\tleq_P} M_i$ for $i \in F$.
Assume further that for each $i \in F$, we are given a reduction path $\pi_i \in \Paths(M, M_i)$, such that all paths $\pi_i$ are distinct.
Then
\[ \sum_{i \in F} P(\pi_i) m_i \ol{\tleq_P} M. \]
\end{lemma}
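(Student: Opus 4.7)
My plan is to prove this by reducing to the special case where each $m_i$ lies in the generating set $\mathcal S_0(\tleq_P; M_i)$ and then extending to the Scott-closure via Lemma~\ref{lem:topological-goodness}. The statement to prove is that $\sum_{i \in F} P(\pi_i) m_i \in \mathcal S(\tleq_P; M)$, so by property (R) it suffices to exhibit membership in the Scott-closure of the finite weighted sums over terminal paths of $M$.

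First I would handle the base case: suppose each $m_i$ has the form $m_i = \sum_{\sigma \in G_i} P(\sigma) v_\sigma$ for some finite $G_i \subseteq \TPaths(M_i)$ with $v_\sigma \tleq_P V_\sigma$ for every $\sigma \in G_i$. For each $i \in F$ and $\sigma \in G_i$, the concatenation $\pi_i \cdot \sigma$ is a reduction path from $M$ ending at the value $V_\sigma$, with probability weight $P(\pi_i)\cdot P(\sigma)$. Because the paths $\pi_i$ are pairwise distinct, the concatenations $\pi_i \cdot \sigma$ for $(i,\sigma) \in \{(i,\sigma) : i \in F,\ \sigma \in G_i\}$ are also pairwise distinct elements of $\TPaths(M)$. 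Therefore
\[ \sum_{i \in F} P(\pi_i) m_i \;=\; \sum_{i \in F} \sum_{\sigma \in G_i} P(\pi_i \cdot \sigma)\, v_{\pi_i \cdot \sigma} \]
is exactly a finite weighted sum of the form defining $\mathcal S_0(\tleq_P; M)$, where each $v_{\pi_i \cdot \sigma} = v_\sigma$ satisfies $v_\sigma \tleq_P V_\sigma = V_{\pi_i \cdot \sigma}$. Hence the sum lies in $\mathcal S_0(\tleq_P; M) \subseteq \mathcal S(\tleq_P; M)$.

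For the general case, I would apply Lemma~\ref{lem:topological-goodness} to the Scott-continuous map
\[ f \colon \prod_{i \in F} \KL(1, \lrb P) \to \KL(1, \lrb P), \qquad f(m_1, \ldots, m_{|F|}) \defeq \sum_{i \in F} P(\pi_i)\, m_i, \]
which is well-defined since $\sum_{i \in F} P(\pi_i) \leq 1$ (the $\pi_i$ extend to distinct terminal paths of $M$, whose total probability is bounded by $1$), and Scott-continuous because $\KL(1, \lrb P)$ is a Kegelspitze with Scott-continuous convex combinations. Taking $U_i \defeq \mathcal S_0(\tleq_P; M_i)$ and $C_Y \defeq \mathcal S(\tleq_P; M)$, the base case shows $f(\prod_i U_i) \subseteq C_Y$, and $C_Y$ is Scott-closed by construction. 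Lemma~\ref{lem:topological-goodness} then gives $f(\prod_i \overline{U_i}) \subseteq C_Y$; since $\overline{U_i} = \mathcal S(\tleq_P; M_i)$ and $m_i \ol{\tleq_P} M_i$ means exactly $m_i \in \mathcal S(\tleq_P; M_i)$, we conclude $\sum_i P(\pi_i) m_i \in \mathcal S(\tleq_P; M)$, i.e.\ $\sum_i P(\pi_i) m_i \ol{\tleq_P} M$.

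The main obstacle is purely bookkeeping rather than conceptual: one must verify carefully that the concatenated paths $\pi_i \cdot \sigma$ are pairwise distinct so that the resulting sum really does fit into the form prescribed by $\mathcal S_0(\tleq_P; M)$ without any collisions, and that the total mass $\sum P(\pi_i)$ stays below $1$ so the convex combination is legal inside the Kegelspitze. Both follow from the hypothesis that the $\pi_i$ are distinct reduction paths from $M$, together with the basic fact that distinct prefixes composed with distinct suffixes yield distinct compositions.
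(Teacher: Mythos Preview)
Your overall strategy---reduce to $m_i \in \mathcal S_0(\tleq_P; M_i)$ by path concatenation, then pass to Scott-closures via Lemma~\ref{lem:topological-goodness}---is the standard one, and the paper itself gives no proof beyond the citation to \cite{m-monad}. The gap is in the two ``bookkeeping'' claims at the end, which are not as innocent as you suggest. Distinctness of the $\pi_i$ does \emph{not} force the concatenations $\pi_i \cdot \sigma$ to be pairwise distinct: if $\pi_i$ happens to be a proper prefix of $\pi_j$, then any $\sigma \in \TPaths(M_i)$ that passes through $M_j$ satisfies $\pi_i \cdot \sigma = \pi_j \cdot \sigma'$ for the corresponding suffix $\sigma'$. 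The same phenomenon breaks your bound $\sum_i P(\pi_i) \leq 1$: extending each $\pi_i$ to a terminal path need not yield distinct terminal paths, and even when it does, $P(\pi_i) \geq P(\rho_i)$ for any extension $\rho_i$, so $\sum_i P(\rho_i) \leq 1$ gives no control over $\sum_i P(\pi_i)$. Concretely, if $M \probto{1} M_1 \probto{1} M_2$ then $\pi_1 = (M \to M_1)$ and $\pi_2 = (M \to M_1 \to M_2)$ are distinct with $P(\pi_1)+P(\pi_2)=2$, so the convex sum in the conclusion is not even well-defined.

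Both problems vanish once the $\pi_i$ are assumed to form an antichain in the prefix order (no $\pi_i$ a proper prefix of another): then each concatenation $\pi_i \cdot \sigma$ has a unique decomposition, and the $\pi_i$ name disjoint cylinder events in the reduction tree, giving $\sum_i P(\pi_i) \leq 1$. Every application of this lemma in the present paper (single-step reductions, the $\mathtt{or}_p$ case in Lemma~\ref{lem:logical-probabilistic-choice}) satisfies this condition, and it is almost certainly the intended reading in \cite{m-monad}. With that proviso your argument is complete; without it the lemma as literally stated need not even type-check.
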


\begin{lemma}[{\cite[Lemma 109]{m-monad}}]
\label{lem:logical-probabilistic-choice}
If $m \ol{\tleq_P} M$ and $n \ol{\tleq_P} N,$ then $p \cdot m + (1-p) \cdot n \ol{\tleq_P} M\ \mathtt{or}_p\ N.$
\end{lemma}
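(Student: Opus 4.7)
The plan is to apply the preceding Lemma~\ref{lem:logical-convex-sum}, treating $M\ \mathtt{or}_p\ N$ as derived syntactic sugar built from the quantum primitives. Recall from the $\mathtt{coin}_p$ example that binary probabilistic choice is not primitive in $\VQPL$ but is \emph{derivable} by combining a biased coin toss (built from $\mathtt{new}$, a suitable unitary $U_p$, $\mathtt{meas}$, and $\mathrm{run}$) with a $\mathtt{case}$ construct. Concretely, $M\ \mathtt{or}_p\ N$ can be taken to mean $(\mathtt{case}\ \mathtt{coin}_p\ \mathtt{of}\ \mathtt{ff} \Rightarrow M\ |\ \mathtt{tt} \Rightarrow N)$, whose operational behaviour is completely determined by the rules in Figures~\ref{fig:structural-reduction}--\ref{fig:switching-rules-operational}.

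The first step is then to analyse the reduction tree of $M\ \mathtt{or}_p\ N$. Unrolling the operational rules, the term reduces deterministically through a preamble of structural steps (allocating a fresh qubit via $\mathtt{new}$, applying $U_p$ inside the $\mathrm{run}$, running inside the case context) until it reaches the single measurement step. At that step, by the rules in Figure~\ref{fig:quantum-information-operational}, the configuration reduces with probability $p$ to a branch that afterwards reduces deterministically to $M$, and with probability $1-p$ to a branch that reduces deterministically to $N$. This yields two reduction paths
$\pi_M \in \Paths(M\ \mathtt{or}_p\ N,\ M)$ with $P(\pi_M)=p$ and $\pi_N \in \Paths(M\ \mathtt{or}_p\ N,\ N)$ with $P(\pi_N)=1-p$, which are syntactically distinct because the unique measurement step branches to different resulting configurations.

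The second step is to invoke Lemma~\ref{lem:logical-convex-sum}. We have the hypotheses $m\ \ol{\tleq_P}\ M$ and $n\ \ol{\tleq_P}\ N$, and the two distinct paths $\pi_M, \pi_N$ of weights $p$ and $1-p$. The lemma then delivers
\[ p\cdot m + (1-p)\cdot n \;=\; P(\pi_M)\cdot m + P(\pi_N)\cdot n \;\ol{\tleq_P}\; M\ \mathtt{or}_p\ N, \]
which is exactly the conclusion.

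The only genuine obstacle is the first step: verifying that the chosen derived definition of $M\ \mathtt{or}_p\ N$ really has this two-path structure with the correct probability weights $p$ and $1-p$. This is essentially operational bookkeeping — one must check that the deterministic prefix of reductions does not create multiple competing paths, and that the measurement rule with the specific $U_p$ produces the required Born probabilities. Once this is discharged (it is just the content of the $\mathtt{coin}_p$ example applied inside a case context), the proof concludes immediately via Lemma~\ref{lem:logical-convex-sum}.
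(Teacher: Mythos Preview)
Your proposal is correct and uses the same core idea as the cited proof: exhibit two distinct reduction paths from $M\ \mathtt{or}_p\ N$ to $M$ and to $N$ with weights $p$ and $1-p$, then invoke Lemma~\ref{lem:logical-convex-sum}. The only difference is contextual: the reference \cite[Lemma~109]{m-monad} is stated in $\PFPC$, where $\mathtt{or}_p$ is a \emph{primitive} with the one-step rules $M\ \mathtt{or}_p\ N \probto{p} M$ and $M\ \mathtt{or}_p\ N \probto{1-p} N$, so the two paths there have length one and the bookkeeping is trivial. You correctly recognised that in $\VQPL$ probabilistic choice is derived via $\mathtt{coin}_p$ and a $\mathtt{case}$, so the paths have a deterministic preamble before the single measurement-induced branch point; your argument that this still yields exactly two distinct paths of the required weights is sound, and after that Lemma~\ref{lem:logical-convex-sum} applies verbatim.
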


\begin{lemma}[{\cite[Lemma 110]{m-monad}}]
\label{lem:logical-injections}
For $i \in \{1,2\}:$ if $m \ol{\tleq_{P_i}} M$, then $\JJ \emph{\emph{in}}_i \kcirc m \ol{\tleq_{P_1 + P_2}} \mathtt{in}_i M .$
\end{lemma}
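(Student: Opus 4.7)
The plan is to exploit two facts: (i) reduction of $\mathtt{in}_i M$ proceeds under the evaluation context $\mathtt{in}_i [\cdot]$, so terminal reduction paths from $\mathtt{in}_i M$ correspond bijectively to terminal reduction paths from $M$; and (ii) property (A1) of Theorem~\ref{thm:formal-relations} exactly matches $\tleq_{P_i}$ with $\tleq_{P_1 + P_2}$ across the injection $\JJ\emph{in}_i$. These two facts, together with the Scott-continuity and linearity of Kleisli composition, will let us transport the basic witnesses in $\mathcal S_0(\tleq_{P_i}; M)$ to witnesses in $\mathcal S_0(\tleq_{P_1+P_2}; \mathtt{in}_i M)$, and then pass to the Scott closure.

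First, I would establish the path bijection: for every reduction step $M \probto{p} M'$ there is a corresponding step $\mathtt{in}_i M \probto{p} \mathtt{in}_i M'$ using the evaluation-context rule of Figure~\ref{fig:structural-reduction}, and conversely every step out of $\mathtt{in}_i M$ must have this shape since $\mathtt{in}_i M$ is not a redex. Consequently the assignment $\pi \mapsto \mathtt{in}_i \pi$ is a bijection $\TPaths(M) \cong \TPaths(\mathtt{in}_i M)$ that preserves probability weights and satisfies $V_{\mathtt{in}_i \pi} = \mathtt{in}_i V_\pi$.

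Next, I would show directly that $\JJ\emph{in}_i \kcirc (-)$ sends $\mathcal S_0(\tleq_{P_i}; M)$ into $\mathcal S_0(\tleq_{P_1+P_2}; \mathtt{in}_i M)$. Given a basic element $m = \sum_{\pi \in F} P(\pi) v_\pi$ with $v_\pi \tleq_{P_i} V_\pi$, linearity of Kleisli post-composition with respect to the Kegelspitze structure on $\KL(1,-)$ (\secref{sub:model}) gives
\[
\JJ\emph{in}_i \kcirc m = \sum_{\pi \in F} P(\pi) \bigl( \JJ\emph{in}_i \kcirc v_\pi \bigr),
\]
and by property (A1) each summand satisfies $\JJ\emph{in}_i \kcirc v_\pi \tleq_{P_1+P_2} \mathtt{in}_i V_\pi$. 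The path bijection above re-indexes this as a basic sum in $\mathcal S_0(\tleq_{P_1+P_2}; \mathtt{in}_i M)$, as required.

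Finally, to reach the Scott closure, I would invoke Scott-continuity of Kleisli composition: the preimage $U = \{ m \in \KL(1, \lrb{P_i}) \mid \JJ\emph{in}_i \kcirc m \ol{\tleq_{P_1+P_2}} \mathtt{in}_i M \}$ is Scott-closed in $\KL(1,\lrb{P_i})$ because it is the preimage of the Scott-closed set $(- \ol{\tleq_{P_1+P_2}} \mathtt{in}_i M)$ (property (C4)) under a Scott-continuous map. By the previous step $U \supseteq \mathcal S_0(\tleq_{P_i}; M)$, and hence $U \supseteq \mathcal S(\tleq_{P_i}; M)$. Therefore $m \ol{\tleq_{P_i}} M$ implies $\JJ\emph{in}_i \kcirc m \ol{\tleq_{P_1+P_2}} \mathtt{in}_i M$. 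The only slightly delicate point is verifying that post-composition with $\JJ\emph{in}_i$ truly preserves the finite convex sums occurring in $\mathcal S_0$, which follows from the fact noted in \secref{sub:model} that Kleisli composition preserves the Kegelspitze structure of the homsets; modulo this, the argument is routine.
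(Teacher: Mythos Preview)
Your argument is correct. The paper does not give a proof of this lemma itself; it simply cites \cite[Lemma~110]{m-monad}, so there is no in-paper proof to compare against. Your three-step strategy---the path bijection $\TPaths(M)\cong\TPaths(\mathtt{in}_i M)$ via the evaluation context $\emph{in}_i[\cdot]$, transporting basic witnesses using (A1) and linearity of Kleisli post-composition, and then closing under the Scott topology via a preimage argument---is exactly the natural proof and is sound. The one place to be careful, which you flagged yourself, is that post-composition with $\JJ\emph{in}_i$ preserves finite convex sums; this is indeed covered by the statement in \secref{sub:model} that Kleisli composition preserves the Kegelspitze structure of the homsets (\cite[Lemma~38]{m-monad}). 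A minor stylistic alternative for the final step would be to invoke Lemma~\ref{lem:topological-goodness} (Jones' lemma) in place of your preimage argument, but what you wrote is equally valid and arguably cleaner.
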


\begin{lemma}[{\cite[Lemma 111]{m-monad}}]
\label{lem:logica-case}
Let $m \ol{\tleq_{P_1 + P_2}} M$. Next, assume that for $k \in \{1,2\}$ we have terms $x_k : P_k \vdash N_k : R$ and morphisms $n_k \colon \lrb{P_k} \kto \lrb{R}$, such that for every $v_k \tleq_{P_k} V_k,$ it is the case that $n_k \kcirc v_k \ol{\tleq_R} N_k[V_k / x_k].$
Then 
\[ [n_1, n_2] \kcirc m \ol{\tleq_R} \mathtt{case}\ M\ \mathtt{of}\ \mathtt{in}_1 x_1 \Rightarrow N_1\ |\ \mathtt{in}_2 x_2 \Rightarrow N_2 . \]
\end{lemma}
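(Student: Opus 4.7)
The plan is to use the Scott-closure description of $\ol{\tleq_R}$ given by property (R) of Theorem~\ref{thm:formal-relations}, together with the structural and beta reduction rules for case, to reduce the statement to a ``finite-sum'' analogue that is then closed under Lemma~\ref{lem:logical-convex-sum}. Throughout, write $\mathcal E(M') \defeq (\text{case}\ M'\ \text{of}\ \mathit{in}_1 x_1 \Rightarrow N_1\ |\ \mathit{in}_2 x_2 \Rightarrow N_2)$.

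First, introduce the test set $T \defeq \{m' \in \KL(1, \lrb{P_1+P_2}) \mid [n_1, n_2] \kcirc m' \in \mathcal S(\tleq_R; \mathcal E(M))\}$. Because Kleisli composition is Scott-continuous in its right argument (by the $\dcpo$-enrichment of $\KL$) and $\mathcal S(\tleq_R; \mathcal E(M))$ is Scott-closed by construction, the preimage $T$ is Scott-closed in $\KL(1, \lrb{P_1+P_2})$. Unfolding $m \ol{\tleq_{P_1+P_2}} M$ gives $m \in \mathcal S(\tleq_{P_1+P_2}; M)$, the Scott-closure of $\mathcal S_0(\tleq_{P_1+P_2}; M)$, so it suffices to show $\mathcal S_0(\tleq_{P_1+P_2}; M) \subseteq T$ and conclude by Scott-closedness of $T$.

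So I fix a generator $m' = \sum_{\pi \in F} P(\pi) v_\pi$ with $F \subseteq \TPaths(M)$ finite and $v_\pi \tleq_{P_1+P_2} V_\pi$ for each $\pi$. By property (A1) and the construction of $+$ on relations in Definition~\ref{def:logical-constructions}, every such pair must have the form $v_\pi = \JJ\,\mathit{in}_{i_\pi} \kcirc w_\pi$ and $V_\pi = \mathit{in}_{i_\pi} W_\pi$ with $w_\pi \tleq_{P_{i_\pi}} W_\pi$, for some $i_\pi \in \{1,2\}$. Using that Kleisli composition preserves the convex structure (by the Kegelspitze enrichment of $\KL$) and that $[n_1,n_2]\kcirc \JJ\,\mathit{in}_{i_\pi} = n_{i_\pi}$, I compute
\[ [n_1, n_2] \kcirc m' \;=\; \sum_{\pi \in F} P(\pi)\, (n_{i_\pi} \kcirc w_\pi), \]
and by the hypothesis on the $n_k$, each summand satisfies $n_{i_\pi} \kcirc w_\pi \ol{\tleq_R} N_{i_\pi}[W_\pi / x_{i_\pi}]$.

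To finish, for each $\pi \in F$ I would construct a reduction path $\pi' \in \Paths(\mathcal E(M), N_{i_\pi}[W_\pi/x_{i_\pi}])$ of the same probability weight $P(\pi)$: lift $\pi \colon M \probto{}_* \mathit{in}_{i_\pi} W_\pi$ stepwise through the classical evaluation context $\mathcal E[\cdot]$ using the structural rule of Figure~\ref{fig:structural-reduction}, and then append the beta reduction $\mathcal E(\mathit{in}_{i_\pi} W_\pi) \probto{1} N_{i_\pi}[W_\pi/x_{i_\pi}]$ from the same figure; distinct $\pi$'s clearly yield distinct $\pi'$'s. Lemma~\ref{lem:logical-convex-sum} then gives $\sum_{\pi \in F} P(\pi)(n_{i_\pi} \kcirc w_\pi) \ol{\tleq_R} \mathcal E(M)$, i.e. $m' \in T$. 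The main obstacle is really the correct framing of this Scott-closure argument; once $T$ is identified and its Scott-closedness established via the $\dcpo$-enrichment, the operational step (lifting $\pi$ through the evaluation context and appending a beta step) and the algebraic step (distributing $[n_1,n_2]\kcirc(-)$ over the convex sum via Kegelspitze enrichment) are routine.
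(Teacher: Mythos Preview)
Your argument is correct and is the standard one: the paper does not give its own proof here but simply cites \cite[Lemma~111]{m-monad}, and your reconstruction---reducing to generators of $\mathcal S_0(\tleq_{P_1+P_2};M)$ via Scott-closedness of the preimage $T$, decomposing each $v_\pi$ through (A1), distributing $[n_1,n_2]\kcirc(-)$ over the convex sum by Kegelspitze-enrichment, and then invoking Lemma~\ref{lem:logical-convex-sum} with the lifted-and-extended paths---is exactly the intended proof. One small point worth making explicit is why every $v_\pi \in \TD(1,\lrb{P_1}+\lrb{P_2})$ necessarily has the form $\JJ\,\mathit{in}_{i}\kcirc w$: this holds because elements of $\TD(1,X_1+X_2)$ are Dirac valuations on the disjoint union $X_1+X_2$, hence supported in a single summand; otherwise the appeal to (A1) alone is slightly underspecified.
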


\begin{lemma}[{\cite[Lemma 112]{m-monad}}]
\label{lem:logical-pairs}
If $m_1 \ol{\tleq_{P_1}} M_1$ and $m_2 \ol{\tleq_{P_2}} M_2$ then $\llangle m_1, m_2 \rrangle \ol{\tleq_{P_1 \times P_2}} (M_1, M_2).$
\end{lemma}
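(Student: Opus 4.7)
My plan is to prove this by reducing to the case of finite formal sums via Lemma \ref{lem:topological-goodness}, then using the bilinearity of the Kleisli pairing $\llangle -,- \rrangle$ with respect to convex sums, together with a path-matching argument on the operational semantics of $(M_1,M_2)$.

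First I would set $C \defeq \mathcal S(\tleq_{P_1 \times P_2}; (M_1,M_2))$, which is Scott-closed in $\KL(1,\lrb{P_1 \times P_2})$ by construction, and consider the map
\[
\Phi \colon \KL(1,\lrb{P_1}) \times \KL(1,\lrb{P_2}) \to \KL(1,\lrb{P_1 \times P_2}),\qquad (m_1,m_2) \mapsto \llangle m_1, m_2 \rrangle.
\]
This map is Scott-continuous since it is built from the bifunctor $\ktimes$ and Kleisli composition with the fixed morphism $\JJ\langle \id_1,\id_1\rangle$, both of which are continuous by the $\DCPO$-enrichment of $\KL$. Setting $U_i \defeq \mathcal S_0(\tleq_{P_i}; M_i)$, the hypothesis $m_i \ol{\tleq_{P_i}} M_i$ means $m_i \in \overline{U_i}$, so by Lemma \ref{lem:topological-goodness} it suffices to verify $\Phi(U_1 \times U_2) \subseteq C$.

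Now take finite-sum elements $m_1 = \sum_{\pi \in F_1} P(\pi) v_\pi \in U_1$ and $m_2 = \sum_{\pi \in F_2} P(\pi) w_\pi \in U_2$, with $v_\pi \tleq_{P_1} V_\pi$ and $w_\pi \tleq_{P_2} V_\pi$. Because $\MM$ is commutative, Kleisli composition, $\ktimes$, and $\kplus$ preserve the pointwise Kegelspitze structure on homsets (see \cite[Lemma 38]{m-monad}), so $\Phi$ is bilinear with respect to finite convex sums, giving
\[
\Phi(m_1,m_2) = \sum_{(\pi_1,\pi_2) \in F_1 \times F_2} P(\pi_1)\,P(\pi_2)\,\llangle v_{\pi_1}, w_{\pi_2} \rrangle.
\]
By property (A2) each summand satisfies $\llangle v_{\pi_1}, w_{\pi_2} \rrangle \tleq_{P_1 \times P_2} (V_{\pi_1}, V_{\pi_2})$. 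Using the evaluation contexts $(E,m)$ and $(v,E)$ from Figure \ref{fig:structural-reduction}, each pair $(\pi_1,\pi_2) \in F_1 \times F_2$ induces a unique reduction path $\widehat{\pi_{12}} \in \TPaths((M_1,M_2))$ that first lifts $\pi_1$ through the context $(-,M_2)$ to reach $(V_{\pi_1},M_2)$, then lifts $\pi_2$ through $(V_{\pi_1},-)$ to reach $(V_{\pi_1},V_{\pi_2})$, with weight $P(\widehat{\pi_{12}}) = P(\pi_1)\,P(\pi_2)$. Distinct pairs give distinct paths, so Lemma \ref{lem:logical-convex-sum} (or equivalently, the direct definition of $\mathcal S_0$) yields $\Phi(m_1,m_2) \in \mathcal S_0(\tleq_{P_1 \times P_2}; (M_1,M_2)) \subseteq C$, as required.

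The main subtle point is the bilinearity of $\Phi$: it crucially depends on $\ktimes$ preserving convex sums in each argument, which is an enrichment property inherited from the \emph{commutativity} of $\MM$. Without commutativity this step would fail and the whole approach would collapse. Thankfully, the required compatibility is already isolated in \cite[Lemma 38]{m-monad}. The path-matching step is entirely combinatorial and follows from the fact that call-by-value evaluation of pairs enforces a canonical serialisation (first component fully evaluated, then the second), so the multiplication of path weights matches exactly the multiplication of probabilities along the composite reduction of $(M_1,M_2)$.
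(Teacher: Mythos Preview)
Your proof is correct and follows exactly the expected approach. The paper itself does not give a proof but defers to \cite[Lemma 112]{m-monad}; your argument---reducing to the pre-closure sets $\mathcal S_0$ via Lemma \ref{lem:topological-goodness}, expanding the pairing by bilinearity of $\ktimes$ with respect to convex sums, and then matching pairs of terminal paths to composite terminal paths of $(M_1,M_2)$ through the evaluation contexts---is precisely the standard technique used for all the closure lemmas in this subsection and is what the cited proof does.
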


\begin{lemma}[{\cite[Lemma 113]{m-monad}}]
\label{lem:logical-projections}
If $m \ol{\tleq_{P_1 \times P_2}} M$ then $\JJ \pi_i \kcirc m \ol{\tleq_{P_i}} \pi_i M$, for $i \in \{1,2\}.$
\end{lemma}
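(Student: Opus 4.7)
The plan is to unfold the definition of $\overline{\tleq_{P_1 \times P_2}}$ via (R) of Theorem~\ref{thm:formal-relations}, handle elementary sums of the defining set $\mathcal S_{0}$ directly, and then promote the conclusion to its Scott-closure using Lemma~\ref{lem:topological-goodness}. By definition, $m \in \mathcal S(\tleq_{P_1 \times P_2}; M)$, and we want $\JJ \pi_i \kcirc m \in \mathcal S(\tleq_{P_i}; \pi_i M)$. Observe first that there is a weight-preserving bijection $\Psi_i \colon \TPaths(M) \to \TPaths(\pi_i M)$: since values of type $P_1 \times P_2$ have the form $(W_1, W_2)$, a path $\pi \in \TPaths(M)$ with $V_\pi = (W_1, W_2)$ corresponds (via the structural reduction rule for the evaluation context $E = \pi_i [\cdot]$ together with the axiom $\pi_i (W_1, W_2) \probto{1} W_i$) to a path $\Psi_i(\pi) \in \TPaths(\pi_i M)$ satisfying $P(\Psi_i(\pi)) = P(\pi)$ and $V_{\Psi_i(\pi)} = W_i$.

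Next, I handle the base set $\mathcal S_0(\tleq_{P_1 \times P_2}; M)$. Take $m_0 = \sum_{\pi \in F} P(\pi) v_\pi$ with $F \subseteq \TPaths(M)$ finite and $v_\pi \tleq_{P_1 \times P_2} V_\pi$. Writing $V_\pi = (W_1^\pi, W_2^\pi)$, property (A2) forces $v_\pi = \llangle v_1^\pi, v_2^\pi \rrangle$ with $v_j^\pi \tleq_{P_j} W_j^\pi$. A direct computation from the definitions of $\llangle -, - \rrangle$ and Kleisli composition gives $\JJ \pi_i \kcirc \llangle v_1^\pi, v_2^\pi \rrangle = v_i^\pi$, and since $\kcirc$ preserves convex sums (see the enrichment discussion in \S\ref{subsub:subcategories}), we obtain
\[
\JJ \pi_i \kcirc m_0 \;=\; \sum_{\pi \in F} P(\pi) v_i^\pi \;=\; \sum_{\Psi_i(\pi) \in \Psi_i(F)} P(\Psi_i(\pi)) v_i^\pi .
\]
Since $v_i^\pi \tleq_{P_i} W_i^\pi = V_{\Psi_i(\pi)}$ and $\Psi_i(F) \subseteq \TPaths(\pi_i M)$ is finite, this sum lies in $\mathcal S_0(\tleq_{P_i}; \pi_i M) \subseteq \mathcal S(\tleq_{P_i}; \pi_i M)$.

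Finally, I extend from the base set to its Scott-closure. The assignment $g \colon \KL(1, \lrb{P_1 \times P_2}) \to \KL(1, \lrb{P_i})$ given by $g(m) \eqdef \JJ \pi_i \kcirc m$ is Scott-continuous by the $\DCPO$-enrichment of $\kcirc$. We have just shown $g\big(\mathcal S_0(\tleq_{P_1 \times P_2}; M)\big) \subseteq \mathcal S(\tleq_{P_i}; \pi_i M)$, and the latter is Scott-closed by construction. Applying Lemma~\ref{lem:topological-goodness} with $f = g$, $U_1 = \mathcal S_0(\tleq_{P_1 \times P_2}; M)$ and $C_Y = \mathcal S(\tleq_{P_i}; \pi_i M)$ yields $g\big(\mathcal S(\tleq_{P_1 \times P_2}; M)\big) \subseteq \mathcal S(\tleq_{P_i}; \pi_i M)$, hence $\JJ \pi_i \kcirc m \ol{\tleq_{P_i}} \pi_i M$, as required.

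The only non-routine point is verifying the path bijection $\Psi_i$ precisely matches the operational semantics—here one must invoke the evaluation context rule for $\pi_i[\cdot]$ to show every terminal reduction of $\pi_i M$ factors uniquely as a reduction of $M$ to some pair value followed by the single-step projection axiom. Once this is in hand, the rest is a mechanical combination of (A2), linearity of $\kcirc$, and Scott-continuity via Lemma~\ref{lem:topological-goodness}.
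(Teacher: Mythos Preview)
Your proof is correct and follows the natural strategy: reduce to the generating set $\mathcal S_0$ using the structure of product values via (A2), then pass to the Scott-closure via continuity of post-composition with $\JJ\pi_i$. The paper itself gives no proof of this lemma---it simply cites \cite[Lemma~113]{m-monad}---so there is nothing to compare against beyond noting that your argument is precisely the kind of direct unfolding one would expect the cited reference to contain. One small remark: your phrase ``(A2) forces $v_\pi = \llangle v_1^\pi, v_2^\pi\rrangle$'' slightly over-attributes; the decomposition of $v_\pi$ into that form comes from the fact that every element of $\TD(1,\lrb{P_1}\times\lrb{P_2})$ is automatically a $\llangle-,-\rrangle$ pair (since $\TD\cong\DCPO$), and (A2) then supplies the component relations---but this is a matter of phrasing, not substance.
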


\begin{lemma}[{\cite[Lemma 114]{m-monad}}]
\label{lem:logical-unfold}
If $m \ol{\tleq_{\mu X. P}} M$ then $\sunfold{} \kcirc m \ol{\tleq_{ P[\mu X. P/X] }} \mathtt{unfold}\ M.$
\end{lemma}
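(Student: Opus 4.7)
My plan is to unfold the definition $m \ol{\tleq_{\mu X. P}} M \Leftrightarrow m \in \mathcal S(\tleq_{\mu X. P}; M)$, establish the claim for the generating simple sums in $\mathcal S_0(\tleq_{\mu X. P}; M)$, and then extend to the full Scott-closure. Two ingredients do the heavy lifting: property (A4) of Theorem~\ref{thm:formal-relations}, which converts the $\mu X.P$-logical relation on values to the $P[\mu X.P/X]$-logical relation on their unfoldings, and the structural reduction rule applied to the evaluation context $E = \mathtt{unfold}\ [\cdot]$, which lifts reduction paths of $M$ to reduction paths of $\mathtt{unfold}\ M$ of the same probability weight.

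First I would handle an arbitrary element $m \in \mathcal S_0(\tleq_{\mu X. P}; M)$, written $m = \sum_{\pi \in F} P(\pi)\, v_\pi$ with $F \subseteq \TPaths(M)$ finite and $v_\pi \tleq_{\mu X. P} V_\pi$ for each $\pi \in F$. Every value of type $\mu X.P$ must have the form $V_\pi = \mathtt{fold}\ V'_\pi$, so (A4) gives $\sunfold \kcirc v_\pi \tleq_{P[\mu X. P/X]} V'_\pi$, and hence $\sunfold \kcirc v_\pi \ol{\tleq_{P[\mu X. P/X]}} V'_\pi$ by (C5). For each $\pi \in F$, the structural rule with $E = \mathtt{unfold}\ [\cdot]$ lifts $\pi$ to a path $\mathtt{unfold}\ M \to^* \mathtt{unfold}\ V_\pi$ of probability weight $P(\pi)$, and appending the single-step reduction $\mathtt{unfold}\ \mathtt{fold}\ V'_\pi \probto{1} V'_\pi$ produces $\pi' \in \TPaths(\mathtt{unfold}\ M)$ with $P(\pi') = P(\pi)$ and endpoint $V'_\pi$. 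The map $\pi \mapsto \pi'$ is clearly injective, so the $\pi'$ are pairwise distinct. Lemma~\ref{lem:logical-convex-sum} then yields
\[ \sum_{\pi \in F} P(\pi)\,(\sunfold \kcirc v_\pi) \ \ol{\tleq_{P[\mu X. P/X]}}\ \mathtt{unfold}\ M, \]
and by linearity of Kleisli post-composition in the Kegelspitze enrichment of $\KL$, the left-hand side equals $\sunfold \kcirc m$.

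To conclude I would extend from $\mathcal S_0$ to $\mathcal S$ by a Scott-closure argument. The set
\[ A \defeq \{\, m \in \KL(1, \lrb{\mu X. P}) \mid \sunfold \kcirc m \ \ol{\tleq_{P[\mu X. P/X]}}\ \mathtt{unfold}\ M \,\} \]
is the preimage, under the Scott-continuous map $(-) \mapsto \sunfold \kcirc (-)$ (Scott-continuous by the $\DCPO$-enrichment of Kleisli composition), of the set $\{\, n \mid n \ol{\tleq_{P[\mu X. P/X]}} \mathtt{unfold}\ M \,\}$, which is Scott-closed by property (C4). Thus $A$ is Scott-closed. The previous step shows $\mathcal S_0(\tleq_{\mu X. P}; M) \subseteq A$, so its Scott-closure $\mathcal S(\tleq_{\mu X. P}; M) \subseteq A$, which is exactly the statement of the lemma. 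I do not anticipate any substantial obstacle: the framework of the preceding lemmas, in particular Lemma~\ref{lem:logical-convex-sum} and the Scott-closedness of $\ol{\tleq}$, is tailored to this sort of argument. The only real bookkeeping is constructing the path lift $\pi \mapsto \pi'$ and verifying its injectivity and probability-preservation, both of which are immediate from the reduction rule $\mathtt{unfold}\ \mathtt{fold}\ V' \probto{1} V'$ and the evaluation-context congruence.
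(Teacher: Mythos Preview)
Your argument is correct and follows the standard approach: reduce to generating sums in $\mathcal S_0$, use (A4) and the structural reduction rule for the evaluation context $\mathtt{unfold}\ [\cdot]$ to build lifted terminal paths, apply Lemma~\ref{lem:logical-convex-sum}, and pass to the Scott-closure via (C4) and continuity of post-composition. The paper itself does not give a proof here but merely cites \cite[Lemma~114]{m-monad}; your write-up is exactly the direct argument one would expect in that reference, so there is nothing to contrast.
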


\begin{lemma}[{\cite[Lemma 115]{m-monad}}]
\label{lem:logical-fold}
If $m \ol{\tleq_{P[\mu X. P / X]}} M$ then $\sfold{} \kcirc m \ol{\tleq_{\mu X. P}} \mathtt{fold}\ M.$
\end{lemma}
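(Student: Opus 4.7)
The plan is to establish the claim first for the ``simple'' generators of $\mathcal S_0(\tleq_{P[\mu X.P/X]}; M)$ and then propagate it to the whole Scott-closure $\mathcal S(\tleq_{P[\mu X.P/X]}; M)$ using the Scott-continuity of Kleisli composition with $\sfold$. This follows the same blueprint as the cited \cite[Lemma 115]{m-monad}.

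Concretely, let $m_0 = \sum_{\pi \in F} P(\pi)\, v_\pi \in \mathcal S_0(\tleq_{P[\mu X.P/X]}; M)$ with $F \subseteq \TPaths(M)$ finite and $v_\pi \tleq_{P[\mu X.P/X]} V_\pi$ for each $\pi \in F$. By the evaluation-context reduction rule of Figure \ref{fig:structural-reduction}, every path $\pi \in \TPaths(M)$ lifts through the context $\fold\,[\cdot]$ to a path $\fold\,\pi$, which belongs to $\TPaths(\fold\, M)$ because $\fold\, V_\pi$ is a classical value (Figure \ref{fig:syntax-grammars-terms}); the lift is injective and preserves probability weights. Since $\sunfold \kcirc \sfold = \kid$, property (A4) of Theorem \ref{thm:formal-relations} gives $\sfold \kcirc v_\pi \tleq_{\mu X.P} \fold\, V_\pi$ for each $\pi \in F$, and property (C5) then yields $\sfold \kcirc v_\pi\ \ol{\tleq_{\mu X.P}}\ \fold\, V_\pi$. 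Applying Lemma \ref{lem:logical-convex-sum} along the pairwise distinct paths $\fold\,\pi \in \Paths(\fold\, M, \fold\, V_\pi)$ therefore yields
\[
\sfold \kcirc m_0 \;=\; \sum_{\pi \in F} P(\pi)\,(\sfold \kcirc v_\pi) \;\ol{\tleq_{\mu X.P}}\; \fold\, M,
\]
where the equality uses linearity of Kleisli composition in its argument (equivalently, that the Kleisli extension of $\eta \circ \sfold$ preserves convex sums).

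For the general case, consider the set
\[
A \;\defeq\; \{\,m \in \KL(1, \lrb{P[\mu X.P/X]}) : \sfold \kcirc m \in \mathcal S(\tleq_{\mu X.P}; \fold\, M)\,\}.
\]
The preceding paragraph shows $\mathcal S_0(\tleq_{P[\mu X.P/X]}; M) \subseteq A$. Since Kleisli composition with the fixed map $\sfold$ is Scott-continuous, and $\mathcal S(\tleq_{\mu X.P}; \fold\, M)$ is Scott-closed by construction, $A$ is the preimage of a Scott-closed set under a Scott-continuous map, hence Scott-closed. Consequently $A$ contains the Scott-closure $\mathcal S(\tleq_{P[\mu X.P/X]}; M) = \overline{\mathcal S_0(\tleq_{P[\mu X.P/X]}; M)}$, so any $m \ol{\tleq_{P[\mu X.P/X]}} M$ satisfies $\sfold \kcirc m \ol{\tleq_{\mu X.P}} \fold\, M$, as required. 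The only genuine subtlety is bookkeeping, namely verifying that the path lift $\pi \mapsto \fold\,\pi$ is injective and probability-preserving; both properties are immediate from the structural rule $E[m] \probto{p} E[m']$ whenever $m \probto{p} m'$ with $E = \fold\,[\cdot]$.
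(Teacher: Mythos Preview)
Your proof is correct and follows the standard blueprint for this kind of closure lemma: lift paths through the evaluation context $\fold\,[\cdot]$, use (A4) together with $\sunfold\kcirc\sfold=\kid$ to obtain $\sfold\kcirc v_\pi \tleq_{\mu X.P} \fold\,V_\pi$, apply Lemma~\ref{lem:logical-convex-sum}, and then pass to the Scott-closure via the Scott-continuity and linearity of $\sfold\kcirc(-)$. The paper does not give its own proof here but merely cites \cite[Lemma 115]{m-monad}, and your argument is exactly the intended one.
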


\begin{lemma}[{\cite[Lemma 116]{m-monad}}]
\label{lem:logical-aplication}
If $m \ol{\tleq_{P \to R}} M$ and $n \ol{\tleq_P} N,$ then $m[n] \ol{\tleq_R} MN.$
\end{lemma}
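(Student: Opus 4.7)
The plan is to follow the standard pattern from \cite{m-monad}, reducing the claim to a ``value-at-value'' case via the Scott-closure machinery and then handling that case by direct appeal to property (A3). Concretely, observe that the ``application'' map $(x,y) \mapsto x[y] \colon \KL(1,\lrb{P\to R}) \times \KL(1,\lrb P) \to \KL(1,\lrb R)$ is Scott-continuous and is in fact bilinear with respect to the pointwise Kegelspitze structure (since Kleisli composition, Kleisli tensor, and the constant $\JJ\langle\id,\id\rangle$ are all linear in each Kleisli argument). Since $\mathcal{S}(\tleq_R; MN)$ is Scott-closed by construction, Lemma \ref{lem:topological-goodness} reduces the lemma to showing that $m_0[n_0] \in \mathcal{S}(\tleq_R; MN)$ whenever $m_0 \in \mathcal{S}_0(\tleq_{P\to R}; M)$ and $n_0 \in \mathcal{S}_0(\tleq_P; N)$.

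Next, unfold the definition of $\mathcal{S}_0$: write $m_0 = \sum_{i\in I} P(\pi_i)\, v_i$ with $\pi_i \in \TPaths(M)$, $V_{\pi_i} = \lambda x_i.M_i'$ and $v_i \tleq_{P\to R} \lambda x_i.M_i'$; and similarly $n_0 = \sum_{j\in J} P(\pi'_j)\, w_j$ with $\pi'_j \in \TPaths(N)$, $V_{\pi'_j} = N_j$ and $w_j \tleq_P N_j$. By the bilinearity noted above,
\[
  m_0[n_0] \;=\; \sum_{(i,j)\in I\times J} P(\pi_i)\, P(\pi'_j)\, v_i[w_j].
\]
For each pair $(i,j)$, property (A3) applied to $v_i \tleq_{P\to R} \lambda x_i.M_i'$ and $w_j \tleq_P N_j$ gives $v_i[w_j] \;\ol{\tleq_R}\; (\lambda x_i.M_i')\,N_j$.

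To assemble this into a statement about $MN$, I will lift the paths $\pi_i, \pi'_j$ through evaluation contexts: the structural reduction rule for $E = [\cdot]N$ turns $\pi_i$ into a path $\tilde\pi_i \colon MN \to^* (\lambda x_i.M_i')\,N$ of weight $P(\pi_i)$, and then the rule for $E = (\lambda x_i.M_i')[\cdot]$ (valid because $\lambda x_i.M_i'$ is a value) turns $\pi'_j$ into a path $(\lambda x_i.M_i')\,N \to^* (\lambda x_i.M_i')\,N_j$ of weight $P(\pi'_j)$. Concatenation yields paths $\hat\pi_{i,j} \colon MN \to^* (\lambda x_i.M_i')\,N_j$ of weight $P(\pi_i)\,P(\pi'_j)$, and these are pairwise distinct since the $\pi_i$ are distinct at the first phase and, for fixed $i$, the $\pi'_j$ are distinct at the second. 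Lemma \ref{lem:logical-convex-sum} applied to the finite family $\{((\lambda x_i.M_i')N_j, v_i[w_j], \hat\pi_{i,j})\}_{(i,j)\in I\times J}$ then delivers $m_0[n_0] \;\ol{\tleq_R}\; MN$, completing the reduction.

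The only nontrivial bit is the bookkeeping in the path-lifting step — verifying that the lifted paths really are distinct and carry the correct probability weights — but this is routine given the single-hole, deterministic nature of evaluation contexts. The reliance on bilinearity of $(-)[=]$ is also worth stating explicitly, but it follows immediately from the fact that the Kleisli category is enriched in Kegelspitzen with Kleisli composition and tensor preserving the convex structure (as recorded after Theorem \ref{thm:q-kegelspitze} and in \secref{sub:model}).
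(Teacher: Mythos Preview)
Your proposal is correct and follows the standard approach: the paper does not give its own proof here but cites \cite[Lemma 116]{m-monad}, and your argument---reducing to $\mathcal S_0$ via Lemma~\ref{lem:topological-goodness} and Scott-continuity, distributing the application over the two finite convex sums using bilinearity of $(-)[=]$, invoking (A3) pairwise, lifting the paths through the evaluation contexts $[\cdot]N$ and $(\lambda x_i.M_i')[\cdot]$, and finishing with Lemma~\ref{lem:logical-convex-sum}---is exactly that proof. The only point worth making explicit is that bilinearity of $(-)[=]$ includes strictness in each argument (needed because the convex sums may have total mass $<1$), which follows since $\ktimes$ and $\kcirc$ are $\dcpobs$-enriched and preserve the Kegelspitze structure as recorded in \secref{sub:model}.
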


It is also helpful to state some closure lemmas for the quantum logical relations. However, instead of stating this for the logical relations on configurations, it is more convenient to extend those relations to \emph{quantum terms} and establish the closure properties for them.

\begin{notation}
Given a quantum term $ \Phi; \xx_1 : \AAA_1, \ldots, \xx_n : \AAA_n \vdash \qq : \BBB$ and a value configuration $\Phi \vdash \mathcal V : \AAA_1 \otimes \cdots \otimes \AAA_n ; \qbit^k$
where $\mathcal V = [\ket \psi, \ell, \vq_1 \otimes \cdots \otimes \vq_n]$, we shall write $[ \qq \qsubst \mathcal V ] \defeq [\ket \psi, \ell, \qq[  \vq_1, \ldots, \vq_n / \xx_1, \ldots, \xx_n  ]]$ for the configuration obtained by performing the indicated substitution.
  Then $\Phi \vdash [\qq \qsubst \mathcal V] : \BBB; \qbit^k.$
\end{notation}

\begin{definition}
  For each closed quantum type $\BBB$ and quantum context $\BGamma = \xx_1 : \AAA_1, \ldots, \xx_n : \AAA_n$, we define a \emph{logical relation}
  \begin{align*}
    \btleq_{\BGamma \vdash \BBB} &\subseteq \QQ(\lrb{\BGamma}, \lrb{\BBB}) \times \{ \qq \ |\ \cdot; \BGamma \vdash \qq : \BBB \} \qquad \text{by}\\
    q \btleq_{\BGamma \vdash \BBB} & \qq \text{ iff } \forall k \in \mathbb N. \forall \left( \cdot \vdash \mathcal V : \AAA_1 \otimes \cdots \otimes \AAA_n ; \qbit^k \right). (q \otimes \id_{\qbit^k}) \circ \lrb{\mathcal V} \btleq^k_{\BBB} [\qq \qsubst \mathcal V],
  \end{align*}
  where the second quantifier ranges over well-formed \emph{value confiugrations} of the indicated type. 
\end{definition}

\begin{lemma}
  For any quantum value $\cdot; \BGamma \vdash \vq : \BBB$, we have that $\lrb{\vq} \btleq_{\BGamma \vdash \BBB} \vq.$
\end{lemma}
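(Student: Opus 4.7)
The plan is to unfold the definition of $\btleq_{\BGamma \vdash \BBB}$ and reduce the statement to a routine application of the substitution lemma together with Lemma~\ref{lem:value-configuration}. Fix a closed quantum value $\cdot; \BGamma \vdash \vq : \BBB$, some $k \in \mathbb N$, and a closed value configuration $\cdot \vdash \mathcal V : \AAA_1 \otimes \cdots \otimes \AAA_n ; \qbit^k$ with $\mathcal V = [\ket \psi, \ell, \vq_1 \otimes \cdots \otimes \vq_n]$. We must show $(\lrb{\vq} \otimes \id_{\lrb{\qbit^k}}) \circ \lrb{\mathcal V} \btleq^k_{\BBB} [\vq \qsubst \mathcal V]$, where by definition $[\vq \qsubst \mathcal V] = [\ket \psi, \ell, \vq[\vq_1, \ldots, \vq_n / \xx_1, \ldots, \xx_n]]$.

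The first key observation is that the grammar of quantum values is closed under value substitution: substituting quantum values $\vq_i$ into a quantum value $\vq$ yields again a quantum value. Consequently $[\vq \qsubst \mathcal V]$ is itself a closed value configuration of type $\BBB; \qbit^k$. Lemma~\ref{lem:value-configuration} then gives $\lrb{[\vq \qsubst \mathcal V]} \btleq^k_\BBB [\vq \qsubst \mathcal V]$, so it suffices to prove the equality
\[
(\lrb{\vq} \otimes \id_{\lrb{\qbit^k}}) \circ \lrb{\mathcal V} \;=\; \lrb{[\vq \qsubst \mathcal V]}.
\]
This follows by iterating Lemma~\ref{lem:term-substitution}(2): setting $\vec \vq = \vq_1, \ldots, \vq_n$ and $\vec \xx = \xx_1, \ldots, \xx_n$, repeated substitution yields $\qlrb{\vq[\vec{\vq}/\vec{\xx}]} = \lrb{\vq} \circ (\qlrb{\vq_1} \otimes \cdots \otimes \qlrb{\vq_n}) = \lrb{\vq} \circ \qlrb{\vq_1 \otimes \cdots \otimes \vq_n}$ as morphisms in $\QQ$. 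Plugging this into the definition of the configuration interpretation from Figure~\ref{fig:quantum-configuration} (and using functoriality of $(-)\otimes \id_{\lrb{\qbit^k}}$) gives
\[
\lrb{[\vq \qsubst \mathcal V]} = (\lrb{\vq} \otimes \id_{\lrb{\qbit^k}}) \circ (\qlrb{\vq_1 \otimes \cdots \otimes \vq_n} \otimes \id_{\lrb{\qbit^k}}) \circ \sigma_\ell \circ \mathrm{state}_{\ket{\psi}\bra{\psi}}^{\ddagger} = (\lrb{\vq} \otimes \id_{\lrb{\qbit^k}}) \circ \lrb{\mathcal V},
\]
which completes the argument.

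There is no genuine obstacle; the only mildly subtle point is keeping the bookkeeping straight between the categories $\QQ$ and $\dcpo$, since $\lrb{\vq}$ and $\lrb{\mathcal V}$ are taken in $\QQ$ (thanks to the empty classical context $\Phi = \cdot$) while the substitution lemma is stated pointwise over $\lrb \Phi$. This is trivial here because $\lrb{\cdot}$ is the singleton, so the pointwise statement reduces to a single equation in $\QQ$.
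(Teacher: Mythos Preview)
Your proof is correct and follows exactly the same approach as the paper: use the Substitution Lemma to identify $(\lrb{\vq} \otimes \id_{\lrb{\qbit^k}}) \circ \lrb{\mathcal V}$ with $\lrb{[\vq \qsubst \mathcal V]}$, then invoke Lemma~\ref{lem:value-configuration}. You are simply more explicit about why $[\vq \qsubst \mathcal V]$ is again a value configuration and about the bookkeeping with the empty classical context, both of which the paper leaves implicit.
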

\begin{proof}
  Let $\BGamma = \xx_1 : \AAA_1, \ldots, \xx_n : \AAA_n$ and let $k \in \mathbb N$ and $\cdot \vdash \mathcal V : \AAA_1 \otimes \cdots \otimes \AAA_n; \qbit^k$ be arbitrary.
  By the Substitution Lemma \ref{lem:term-substitution}, it follows that $\lrb{ [\vq \qsubst \mathcal V]} = (\lrb \vq \otimes \id_{\qbit^k}) \circ \lrb{\mathcal V}.$
  Then the proof follows by Lemma \ref{lem:value-configuration}.
\end{proof}

Before we may prove the necessary lemmas for terms dealing with observable primitives, the following lemma is useful.

\begin{lemma}
  Let $\cdot \vdash v : O$ be an \emph{observable} value. Then $\lrb v \tleq_O v.$ Furthermore, if $f \tleq_O v$, then $f = \lrb v.$
\end{lemma}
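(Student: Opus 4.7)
The plan is to prove the two assertions separately, using structural induction for the first and a semantic collapse argument for the second. For the first assertion, $\lrb v \tleq_O v$, I will proceed by induction on the derivation of $\cdot \vdash v : O$. The only value constructors at observable types are $()$, $\mathtt{in}_i V$, $(V_1, V_2)$, and $\mathtt{fold}\ V$, and crucially, in each non-atomic case the immediate subvalues inhabit observable types (since observable types are closed under $+$, $\times$ and $\mu$ over observable bodies). The base case $v = ()$ is immediate from axiom (A0) together with $\lrb{()} = \kid_1$. For the three inductive cases, unfolding the definition of $\lrb{-}$ in Figure~\ref{fig:classical-term-semantics} yields $\lrb{\mathtt{in}_i V} = \JJ \mathtt{in}_i \kcirc \lrb V$, $\lrb{(V_1,V_2)} = \llangle \lrb{V_1}, \lrb{V_2}\rrangle$, and $\lrb{\mathtt{fold}\ V} = \sfold \kcirc \lrb V$, which are exactly the composites appearing on the left of the biconditionals (A1), (A2), (A4) respectively. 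Applying the inductive hypothesis to the subvalues discharges the right-hand sides. Note that the (A4) case also uses $\sunfold \kcirc \sfold = \kid$ to match up the premise.

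For the second assertion, I will argue as follows. By Proposition~\ref{prop:observable-types} combined with the $1$-$1$ correspondence between observable classical and observable quantum types, the interpretation $\lrb O$ of any closed observable classical type is a discrete dcpo $\mathcal L \flrb O$. Moreover, every classical value is interpreted in $\TD$, so both $f$ and $\lrb v$ are of the form $\JJ(-)$ and hence point to Dirac valuations $\delta_{f'(*)}$ and $\delta_{\flrb v(*)}$ in $\MM \lrb O$. Closure property (C1) of Theorem~\ref{thm:formal-relations} gives $f \leq \lrb v$ in $\KL(1, \lrb O)$, i.e.\ $\delta_{f'(*)} \leq \delta_{\flrb v(*)}$ in the stochastic order. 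Since $\lrb O$ is discrete, every subset is Scott-open; taking $U = \lrb O \setminus \{\flrb v(*)\}$ and using $\delta_{\flrb v(*)}(U) = 0$ forces $f'(*) \notin U$, i.e.\ $f'(*) = \flrb v(*)$, and therefore $f = \lrb v$.

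The two small prerequisites I expect to be the main technical point are: (a) confirming that $\lrb O$ is genuinely a discrete domain for every observable classical type $O$, which I will read off from Proposition~\ref{prop:observable-types} via the classical counterpart correspondence; and (b) observing that on a discrete dcpo the stochastic order on Dirac valuations collapses to equality. Neither is deep, but both are essential for the second assertion to close. The structural induction in the first assertion, by contrast, amounts to little more than rewriting with the axioms (A0)--(A4).
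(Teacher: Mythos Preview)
Your proposal is correct. The first assertion follows the paper's approach exactly: structural induction using the clauses (A0)--(A4) of Theorem~\ref{thm:formal-relations}. For the second assertion, however, you take a genuinely different route. The paper proves both parts by the same structural induction on the derivation of $v$; for the uniqueness direction, each inductive step unfolds the relevant clause of Theorem~\ref{thm:formal-relations} in the converse direction (e.g.\ if $f \tleq_{\mu X.P} \mathtt{fold}\ V$ then $\sunfold \kcirc f \tleq_{P[\mu X.P/X]} V$, whence $\sunfold \kcirc f = \lrb V$ by the inductive hypothesis, so $f = \sfold \kcirc \lrb V = \lrb{\mathtt{fold}\ V}$, and similarly for the sum and product cases). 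Your argument instead appeals once to (C1) to obtain $f \leq \lrb v$ and then uses the discreteness of $\lrb O$ to collapse the stochastic order on Dirac valuations. Your approach is more conceptual and avoids repeating the case analysis, at the cost of invoking the external fact (via Proposition~\ref{prop:observable-types} and the observable type correspondence) that observable classical types denote discrete dcpo's; the paper's inductive approach is self-contained within the logical-relations framework but slightly more repetitive.
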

\begin{proof}
  Observable values have a very simple structure, because they do not involve any use of function space. The lemma follows by straightforward induction on the derivation of $v$ using Theorem \ref{thm:formal-relations}.
\end{proof}

\begin{lemma}
\label{lem:quantum-logical-convex-sum}
  Let $\cdot \vdash \mathcal C : \AAA; \qbit^k$ be a configuration and let $F$ be some finite index set. Assume that we are given morphisms $c_i \in \QQ(\mathbb C, \lrb \AAA \otimes \lrb{\qbit^k})$ and configurations $\cdot \vdash \mathcal C_i : \AAA;\qbit^k$ such that $c_i {\btleq_\AAA^k} \mathcal C_i$ for $i \in F$.
Assume further that for each $i \in F$, we are given a reduction path $\pi_i \in \Paths(\mathcal C, \mathcal C_i)$, such that all paths $\pi_i$ are distinct.
Then
\[ \sum_{i \in F} P(\pi_i) c_i {\btleq_\AAA^k} \mathcal C. \]
\end{lemma}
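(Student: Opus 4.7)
The proof will mirror the classical counterpart (Lemma \ref{lem:logical-convex-sum}, cited from \cite[Lemma 108]{m-monad}), adapted to the quantum setting. The underlying structure that makes such an argument go through is already in place: the relation $\btleq^k_\AAA$ is defined as a Scott closure of a set of finite convex sums indexed by terminal reduction paths, and the hom-set $\QQ(\mathbb C, \lrb{\AAA \otimes \qbit^k})$ is a continuous Kegelspitze by Theorem~\ref{thm:q-kegelspitze}, so arbitrary finite convex combinations inside it are Scott-continuous in every argument.

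The key combinatorial observation is closure under path concatenation. For each $i \in F$, any $\sigma \in \TPaths(\mathcal C_i)$ yields a concatenated terminal path $\pi_i \cdot \sigma \in \TPaths(\mathcal C)$, with probability weight $P(\pi_i \cdot \sigma) = P(\pi_i)\,P(\sigma)$ and endpoint value configuration $\mathcal V_{\pi_i \cdot \sigma} = \mathcal V_\sigma$. The hypothesis that the $\pi_i$ are pairwise distinct guarantees that whenever $(i_1,\sigma_1) \neq (i_2,\sigma_2)$, the concatenations $\pi_{i_1}\cdot\sigma_1$ and $\pi_{i_2}\cdot\sigma_2$ are distinct as reduction paths from $\mathcal C$. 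Consequently, for any choice of finite subsets $G_i \subseteq \TPaths(\mathcal C_i)$, the element
\[
\sum_{i \in F} P(\pi_i) \sum_{\sigma \in G_i} P(\sigma)\,\lrb{\mathcal V_\sigma}
\;=\; \sum_{i \in F,\ \sigma \in G_i} P(\pi_i \cdot \sigma)\,\lrb{\mathcal V_{\pi_i\cdot\sigma}}
\]
belongs to $\mathcal S_0(\btleq^k_\AAA;\mathcal C)$ and hence to $\mathcal S(\btleq^k_\AAA;\mathcal C)$.

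I then lift this from $\mathcal S_0$ to $\mathcal S$ using Lemma~\ref{lem:topological-goodness} (topological goodness). Take $X_i := \QQ(\mathbb C, \lrb{\AAA \otimes \qbit^k})$ for $i \in F$, set $Y := X_i$, and consider the Scott-continuous map
\[
f \colon \prod_{i \in F} X_i \longrightarrow Y, \qquad (x_i)_{i \in F} \longmapsto \sum_{i \in F} P(\pi_i)\,x_i,
\]
whose Scott continuity in each argument is exactly the Kegelspitze enrichment of $\QQ$ from Theorem~\ref{thm:q-kegelspitze}. Let $U_i := \mathcal S_0(\btleq^k_\AAA; \mathcal C_i)$ and let $C_Y := \mathcal S(\btleq^k_\AAA; \mathcal C)$, which is Scott-closed by construction. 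The calculation above shows $f\bigl(\prod_i U_i\bigr) \subseteq \mathcal S_0(\btleq^k_\AAA;\mathcal C) \subseteq C_Y$. Lemma~\ref{lem:topological-goodness} then yields $f\bigl(\prod_i \overline{U_i}\bigr) \subseteq C_Y$, where $\overline{U_i} = \mathcal S(\btleq^k_\AAA;\mathcal C_i)$. Since $c_i \btleq^k_\AAA \mathcal C_i$ means precisely $c_i \in \overline{U_i}$, we conclude $\sum_{i\in F} P(\pi_i)\,c_i = f((c_i)_i) \in C_Y$, i.e. $\sum_{i\in F} P(\pi_i)\,c_i \btleq^k_\AAA \mathcal C$, as desired.

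I do not anticipate a serious obstacle: the content of the lemma is really a bookkeeping statement about path concatenation combined with standard Scott-topological lifting. The only subtlety worth flagging is ensuring that the multi-variable convex-combination map $f$ is genuinely Scott-continuous on the product — this is not immediate from $\DCPO_{\perp!}$-enrichment alone, but it is automatic from the continuous Kegelspitze enrichment of $\QQ$ established in Section~\ref{sub:quantum-probabilistic-combination}, which is precisely why the stronger enrichment developed in this paper pays off here.
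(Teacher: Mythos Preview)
Your proposal is correct and takes essentially the same approach as the paper, which simply states that the proof is ``fully analogous to Lemma~\ref{lem:logical-convex-sum}'' (the classical counterpart from \cite{m-monad}). You have spelled out precisely that analogy: path concatenation to land in $\mathcal S_0(\btleq^k_\AAA;\mathcal C)$, followed by Lemma~\ref{lem:topological-goodness} to pass to Scott closures, with the Kegelspitze enrichment of $\QQ$ supplying the required Scott-continuity of the convex-combination map.
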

\begin{proof}
  Fully analogous to Lemma \ref{lem:logical-convex-sum}. 
\end{proof}

\begin{lemma}
  If $q_1 \btleq_{\BGamma_1 \vdash \BBB_1} \qq_1$ and $q_2 \btleq_{\BGamma_2 \vdash \BBB_2} \qq_2$, then $q_1 \otimes q_2 \btleq_{\BGamma_1, \BGamma_2 \vdash \BBB_1 \otimes \BBB_2} \qq_1 \otimes \qq_2$.
\end{lemma}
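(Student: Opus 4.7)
The approach is to first unfold the definition: we must show that for every $k \in \mathbb N$ and every closed value configuration $\cdot \vdash \mathcal V : \BGamma_1 \otimes \BGamma_2 ; \qbit^k$ one has $((q_1 \otimes q_2) \otimes \id_{\lrb{\qbit^k}}) \circ \lrb{\mathcal V} \btleq^k_{\BBB_1 \otimes \BBB_2} [(\qq_1 \otimes \qq_2) \qsubst \mathcal V]$. A value of a tensor type is a tensor of values, so $\mathcal V$ has the form $[\ket\psi,\ell,\vq_1\otimes\vq_2]$ with $\vq_i$ a value of $\BGamma_i$'s tensor type and $\ell = \ell_1 \cupdot \ell_2$, where $\ell_i$ restricts $\ell$ to the qubit-typed free variables of $\vq_i$. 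After substitution, we must compare the above morphism with $\mathcal C \defeq [\ket\psi,\ell,\qq_1[\vq_1/\BGamma_1] \otimes \qq_2[\vq_2/\BGamma_2]]$.

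The first main step is to view $\mathcal V$ from $\qq_1$'s perspective as $\mathcal V_1^\star \defeq [\ket\psi,\ell_1,\vq_1]$, a configuration of type $\BGamma_1$ with $m_2 \defeq |\mathrm{QFV}(\vq_2)| + k$ auxiliary qubits. The hypothesis $q_1 \btleq_{\BGamma_1 \vdash \BBB_1} \qq_1$ gives that $(q_1 \otimes \id_{\lrb{\qbit^{m_2}}}) \circ \lrb{\mathcal V_1^\star}$ lies in the Scott closure in $\QQ(\mathbb C, \lrb{\BBB_1}\otimes\lrb{\qbit^{m_2}})$ of finite convex sums $\sum_{\pi\in F} P(\pi)\,\lrb{\mathcal V_{1,\pi}}$ indexed by terminal paths $\pi \in \TPaths([\ket\psi,\ell_1,\qq_1[\vq_1]])$, with $\mathcal V_{1,\pi} = [\ket{\psi_\pi},\ell_{1,\pi},\vq_{1,\pi}]$. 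By the structural reduction rule for the evaluation context $\mathbf E_1 \defeq [\cdot] \otimes \qq_2[\vq_2]$ (Figure \ref{fig:structural-reduction}), each $\pi$ lifts with the same probability weight to a reduction of $\mathcal C$ reaching the intermediate configuration $\mathcal C_\pi \defeq [\ket{\psi_\pi},\ell_{1,\pi}\cupdot\ell_2,\vq_{1,\pi}\otimes\qq_2[\vq_2]]$.

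The second main step is symmetric: since $\vq_{1,\pi}$ is a value, $\mathbf E_2 \defeq \vq_{1,\pi} \otimes [\cdot]$ is a legal evaluation context, and reductions in $\mathcal C_\pi$ now proceed only in the right component. View $\mathcal C_\pi$ as $\mathcal V_{2,\pi}^\star \defeq [\ket{\psi_\pi}, \ell_2, \vq_2]$ of type $\BGamma_2$ with auxiliary qubits given by $\vq_{1,\pi}$'s qubits together with the original $k$ ancillas. The hypothesis $q_2 \btleq_{\BGamma_2 \vdash \BBB_2} \qq_2$ gives a second Scott closure statement for $(q_2 \otimes \id) \circ \lrb{\mathcal V_{2,\pi}^\star}$, indexed over terminal paths $\sigma$ of $[\qq_2 \qsubst \mathcal V_{2,\pi}^\star]$; by the structural rule through $\mathbf E_2$, each $\sigma$ lifts to a path from $\mathcal C_\pi$ to a value configuration $\mathcal V_{\pi,\sigma} = [\ket{\psi_{\pi,\sigma}}, \ell_{\pi,\sigma}, \vq_{1,\pi}\otimes\vq_{2,\sigma}]$ of the same weight. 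Since the evaluation contexts for $\otimes$ are left-to-right deterministic, every terminal path of $\mathcal C$ factors uniquely as some $\pi$ followed by some $\sigma$, and $P(\pi)P(\sigma)$ is the overall path weight.

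To conclude, exploit the $\QQ$-factorisation $q_1\otimes q_2 = (\id_{\BBB_1}\otimes q_2)\circ(q_1\otimes\id_{\BGamma_2})$ to carry out step one, then step two; Scott-continuity and linearity of composition and of $(-\otimes\id)$ in $\QQ$ (continuous-Kegelspitze enrichment, Theorem \ref{thm:q-kegelspitze}) combined with Lemma \ref{lem:topological-goodness} push the first Scott closure through the operation by which step two is applied, and the convex-sum Lemma \ref{lem:quantum-logical-convex-sum} consolidates the resulting double sum over $(\pi,\sigma)$ into a single Scott-closure membership indexed by $\TPaths(\mathcal C)$. The main obstacle is the careful bookkeeping of linking functions, qubit permutations $\sigma_\ell$, and tensor re-associations needed to verify that the interpretation of $\mathcal C_\pi$ agrees with the composite of $\lrb{\mathcal V_{1,\pi}}$ and the right-tensor with $\lrb{\qq_2[\vq_2]}$, and symmetrically for $\mathcal V_{\pi,\sigma}$; once this bookkeeping is done, the two Scott-closure statements compose into the required $\btleq^k_{\BBB_1\otimes\BBB_2}$-membership.
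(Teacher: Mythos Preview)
The paper states this lemma without proof (it appears in the list of closure lemmas in \S\ref{sub:closure-properties}, all of which are left unproved in the text, the classical analogues being cited from \cite{m-monad}). Your outline is the natural argument and is correct in its architecture: exploit the left-to-right evaluation order of $\otimes$ to factor every terminal path of $[\ket\psi,\ell,\qq_1[\vq_1]\otimes\qq_2[\vq_2]]$ uniquely as a $\qq_1$-path followed by a $\qq_2$-path, invoke each hypothesis with the other component's free qubits pushed into the ancilla pool, and then use Scott-continuity/linearity of post-composition in $\QQ$ (Theorem~\ref{thm:q-kegelspitze}) together with Lemma~\ref{lem:topological-goodness} and Lemma~\ref{lem:quantum-logical-convex-sum} to compose the two closure statements.

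One point worth tightening: in your second step you write ``view $\mathcal C_\pi$ as $\mathcal V_{2,\pi}^\star \defeq [\ket{\psi_\pi},\ell_2,\vq_2]$''. The reduction of $\qq_1[\vq_1]$ may have created or destroyed qubits, so the indices recorded in $\ell_2$ need not point to the same physical positions in $\ket{\psi_\pi}$ as they did in $\ket\psi$; the structural rule in Figure~\ref{fig:structural-reduction} literally keeps $\ell_0$ unchanged, which is only sensible up to the implicit re-indexing convention the paper uses. This is exactly the ``careful bookkeeping of linking functions, qubit permutations $\sigma_\ell$'' you flag, and it is the only place where real work remains. The semantic side of the same bookkeeping is verifying that post-composing $\lrb{\mathcal V_{1,\pi}}$ with $(\id_{\lrb{\BBB_1}}\otimes(q_2\circ\lrb{\vq_2})\otimes\id_{\lrb{\qbit^k}})$ (conjugated by the right permutation) yields $\lrb{\mathcal V_{\pi,\sigma}}$ on terminal values and the target morphism globally; once that identity is written down the rest is routine.
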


\begin{lemma}
  If $q \btleq_{\BGamma_1 \vdash \AAA_1 \otimes \AAA_2} \qq$ and $r \btleq_{\BGamma_2, \xx : \AAA_1, \yy : \AAA_2 \vdash \BBB} \rr$, then
  \[ r \circ (\id \otimes q) \circ \mathrm{swap} \btleq_{ \BGamma_1, \BGamma_2 \vdash \BBB }  \mathbf{let\ x \otimes y = q\ in\ r} . \]
\end{lemma}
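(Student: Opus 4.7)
The plan is to mimic the strategy of the classical closure lemmas (e.g., Lemma \ref{lem:logica-case}), using the quantum analog of the convex-sum lemma (Lemma \ref{lem:quantum-logical-convex-sum}) to combine approximations arising from the hypotheses on $q$ and $r$. First I would fix $k$ and a value configuration $\cdot \vdash \mathcal V : \BGamma_1 \otimes \BGamma_2; \qbit^k$; since values of tensor product type split uniquely, I can write $\mathcal V = [\ket\psi, \ell, \vq_1 \otimes \vq_2]$ with each $\vq_i$ a value whose free quantum variables cover $\BGamma_i$. The key structural observation is that $\mathcal V$ is an ``extension'' of an auxiliary sub-configuration $\mathcal V_1 = [\ket\psi, \ell|_{\mathrm{QFV}(\vq_1)}, \vq_1] : \BGamma_1; \qbit^{n_2 + k}$, in which the $n_2$ extra auxiliary qubits correspond to the free quantum variables of $\vq_2$; correspondingly, $\lrb{\mathcal V}$ factors as $(\id_{\lrb{\BGamma_1}} \otimes \lrb{\vq_2} \otimes \id_{\qbit^k}) \circ \lrb{\mathcal V_1}$ up to a canonical rearrangement of tensor factors.

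Next I would apply the hypothesis on $q$ at auxiliary size $n_2 + k$ with $\mathcal V_1$, which yields simple-sum approximations $s_\alpha = \sum_{\pi \in F_\alpha} P(\pi) c_\pi$ to $(q \otimes \id_{\qbit^{n_2+k}}) \circ \lrb{\mathcal V_1}$; here each $\pi$ is a terminal reduction path of $[\qq \qsubst \mathcal V_1]$ with endpoint value configuration $\mathcal V_\pi = [\ket{\psi_\pi}, \ell_\pi^1, \mathbf v_\pi^1 \otimes \mathbf v_\pi^2]$ and $c_\pi \btleq^{n_2 + k}_{\AAA_1 \otimes \AAA_2} \mathcal V_\pi$. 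Via the structural rule for the evaluation context wrapping the let-binder (with $\ell|_{\mathrm{QFV}(\vq_2)}$ supplying the disjoint extra linking), each $\pi$ lifts to a reduction path $\bar\pi$ of $\mathcal C_0 := [\mathbf{let\ x \otimes y = q\ in\ r} \qsubst \mathcal V]$ ending at a configuration $\mathcal D_\pi$ whose term has $\mathbf v_\pi^1 \otimes \mathbf v_\pi^2$ in the binder, and $\mathcal D_\pi$ then reduces deterministically to $\mathcal E_\pi = [\rr \qsubst \mathcal W_\pi]$, where $\mathcal W_\pi$ is the combined value configuration of type $\BGamma_2 \otimes \AAA_1 \otimes \AAA_2;\qbit^k$ built from $\vq_2, \mathbf v_\pi^1, \mathbf v_\pi^2$, the state $\ket{\psi_\pi}$, and the corresponding linking. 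The hypothesis on $r$ now gives $(r \otimes \id_{\qbit^k}) \circ \lrb{\mathcal W_\pi} \btleq^k_\BBB \mathcal E_\pi$.

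To package these pieces, I would introduce a Scott-continuous linear operator $\Phi$ that ``applies $r$ and re-interprets the $n_2$ auxiliary qubits via $\vq_2$'', namely $\Phi(c) = (r \otimes \id_{\qbit^k}) \circ \gamma \circ (\id \otimes \id_{\qbit^k} \otimes \lrb{\vq_2}) \circ c$ for a canonical rearrangement $\gamma$; Scott continuity and linearity of $\Phi$ follow from the Kegelspitze enrichment of $\QQ$ (Theorem \ref{thm:q-kegelspitze}). The two identities I would verify are
\[ \Phi\bigl((q \otimes \id_{\qbit^{n_2+k}}) \circ \lrb{\mathcal V_1}\bigr) = \bigl((r \circ (\id \otimes q) \circ \mathrm{swap}) \otimes \id_{\qbit^k}\bigr) \circ \lrb{\mathcal V} \]
and $\Phi(\lrb{\mathcal V_\pi}) = (r \otimes \id_{\qbit^k}) \circ \lrb{\mathcal W_\pi}$. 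Since $c_\pi$ is dominated by $\lrb{\mathcal V_\pi}$ (simple sums based at a value configuration are singletons), monotonicity of $\Phi$ combined with the Scott-closure and downward-closure of $\mathcal S(\btleq^k_\BBB; \mathcal E_\pi)$ gives $\Phi(c_\pi) \btleq^k_\BBB \mathcal E_\pi$. Lemma \ref{lem:quantum-logical-convex-sum} applied to the distinct paths $\bar\pi$ extended by the one-step deterministic reduction to $\mathcal E_\pi$ then yields $\Phi(s_\alpha) = \sum_\pi P(\pi) \Phi(c_\pi) \btleq^k_\BBB \mathcal C_0$, and Scott continuity of $\Phi$ together with Scott closure of $\mathcal S(\btleq^k_\BBB; \mathcal C_0)$ transfer the bound to the directed supremum dominating $\Phi$'s argument, hence to the goal morphism.

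The main obstacle will be the verification of the first semantic identity, relating $\Phi$ applied to $(q \otimes \id) \circ \lrb{\mathcal V_1}$ to the target morphism $((r \circ (\id \otimes q) \circ \mathrm{swap}) \otimes \id_{\qbit^k}) \circ \lrb{\mathcal V}$; this requires carefully unwinding the permutation $\sigma_\ell$ that appears in the definition of $\lrb{\mathcal V}$, the monoidal symmetry $\mathrm{swap}$, and the way the $n_2$ auxiliary qubits of $\mathcal V_1$ recover $\lrb{\vq_2}$ and then feed into the first tensor factor of $r$'s domain. A secondary bookkeeping point, less serious, is that we invoke Lemma \ref{lem:quantum-logical-convex-sum} with paths whose endpoints $\mathcal E_\pi$ are not value configurations; but as stated the lemma already permits arbitrary configurations as endpoints, so no extension is needed.
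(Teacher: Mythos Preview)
The paper states this lemma without proof (as it does for all of the quantum closure lemmas in \S\ref{sub:closure-properties}), so there is no explicit argument to compare against; the implicit claim is that it follows by the same pattern as the classical analogues cited from \cite{m-monad}. Your strategy is precisely that pattern and is essentially correct: split the incoming value configuration along the context split $\BGamma_1,\BGamma_2$, apply the hypothesis on $q$ at an enlarged auxiliary dimension, lift terminal reduction paths through the $\mathbf{let}$-evaluation-context rule, apply the hypothesis on $r$ to each resulting value configuration, and combine via Lemma~\ref{lem:quantum-logical-convex-sum} and Scott-continuity of your transfer map $\Phi$.

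One small correction that in fact simplifies your argument. In the quantum relation $\btleq_\AAA^k$ (Definition~\ref{def:logical-configuration}) the elements of $\mathcal S_0$ are \emph{exactly} the finite sums $\sum_{\pi\in F} P(\pi)\,\lrb{\mathcal V_\pi}$: there is no separate choice of approximant $c_\pi$, in contrast to the classical $\mathcal S_0(\tleq_P;M)$ where each summand involves some $v_\pi \tleq_P V_\pi$. So your sentence about ``$c_\pi$ dominated by $\lrb{\mathcal V_\pi}$'' and the appeal to monotonicity of $\Phi$ at that point are unnecessary. The clean version is: show directly that $\Phi$ (which is Scott-continuous and linear by Theorem~\ref{thm:q-kegelspitze}) sends each element of $\mathcal S_0(\btleq^{m_2+k}_{\AAA_1\otimes\AAA_2};[\qq\qsubst\mathcal V_1])$ into $\mathcal S(\btleq^k_\BBB;\mathcal C_0)$---this is where you apply the hypothesis on $r$ to obtain $\Phi(\lrb{\mathcal V_\pi})\btleq^k_\BBB \mathcal E_\pi$ and then invoke Lemma~\ref{lem:quantum-logical-convex-sum} along the lifted paths---and then conclude for the whole Scott closure by taking $\Phi^{-1}$ of the target, which is Scott-closed. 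Your identification of the permutation/linking bookkeeping in the first semantic identity as the main remaining work is accurate; note also that the ``$n_2$'' in your enlarged auxiliary count should be the number of free \emph{qubit} variables of $\vq_2$, not the number of context variables in $\BGamma_2$.
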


\begin{lemma}
  For $i \in \{1,2\},$ if $q \btleq_{\BGamma \vdash \AAA} \qq$ then $\mathbf{in}_i \circ q \btleq_{\BGamma \vdash \AAA \oplus \BBB} \mathbf{in}_i\ \qq$.
\end{lemma}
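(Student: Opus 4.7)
The plan is to unfold the definition of $\btleq_{\BGamma \vdash \AAA}$ and show that structural reduction under the evaluation context $\mathbf{in}_i[\cdot]$ gives a bijection on terminal reduction paths that is compatible with the denotational interpretation, and then conclude by Scott-continuity of postcomposition with $\mathbf{in}_i \otimes \id$.

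First I would fix $k \in \mathbb N$ and a value configuration $\cdot \vdash \mathcal V : \lrb{\BGamma}; \qbit^k$, and note $\bigl((\mathbf{in}_i \circ q) \otimes \id_{\qbit^k}\bigr) \circ \lrb{\mathcal V} = (\mathbf{in}_i \otimes \id_{\qbit^k}) \circ \bigl((q \otimes \id_{\qbit^k}) \circ \lrb{\mathcal V}\bigr)$. By hypothesis, the right factor lies in $\mathcal S(\btleq^k_{\AAA}; [\qq \qsubst \mathcal V])$, the Scott-closure of $\mathcal S_0(\btleq^k_{\AAA}; [\qq \qsubst \mathcal V])$.

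The key combinatorial observation is that the only way $[\mathbf{in}_i\ (\qq[\vec \vq/\vec \xx])]$ can reduce is by a reduction inside $\qq[\vec \vq/\vec \xx]$ lifted via the quantum evaluation context $\mathbf{E} = \mathbf{in}_i[\cdot]$ of Figure~\ref{fig:structural-reduction}. Hence there is a probability-preserving bijection $\TPaths([\qq \qsubst \mathcal V]) \cong \TPaths([\mathbf{in}_i\ \qq \qsubst \mathcal V])$, $\pi \mapsto \pi'$, where if $\mathcal V_\pi = [\ket{\psi'}, \ell', \vq']$, then $\mathcal V_{\pi'} = [\ket{\psi'}, \ell', \mathbf{in}_i\ \vq']$. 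Using Figures~\ref{fig:quantum-term-semantics} and \ref{fig:quantum-configuration} plus $\lrb{\mathbf{in}_i\ \vq'} = \mathbf{in}_i \circ \lrb{\vq'}$, one checks $\lrb{\mathcal V_{\pi'}} = (\mathbf{in}_i \otimes \id_{\qbit^k}) \circ \lrb{\mathcal V_\pi}$. Consequently, for every finite $F \subseteq \TPaths([\qq \qsubst \mathcal V])$, the corresponding finite sum is carried by $(\mathbf{in}_i \otimes \id)$ to a finite sum in $\mathcal S_0(\btleq^k_{\AAA \oplus \BBB}; [\mathbf{in}_i\ \qq \qsubst \mathcal V])$.

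To promote this from $\mathcal S_0$ to its Scott-closure, I would consider the set $D \defeq \{\, d \mid (\mathbf{in}_i \otimes \id_{\qbit^k}) \circ d \in \mathcal S(\btleq^k_{\AAA \oplus \BBB}; [\mathbf{in}_i\ \qq \qsubst \mathcal V])\,\}$. Since $\QQ$ is $\DCPO$-enriched (Theorem~\ref{thm:continuous enrichment}), the map $(\mathbf{in}_i \otimes \id_{\qbit^k}) \circ (-)$ is Scott-continuous, so $D$ is the preimage of a Scott-closed set and therefore Scott-closed. The previous paragraph shows $\mathcal S_0(\btleq^k_{\AAA}; [\qq \qsubst \mathcal V]) \subseteq D$, whence $\mathcal S(\btleq^k_{\AAA}; [\qq \qsubst \mathcal V]) \subseteq D$. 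Applying this to $(q \otimes \id_{\qbit^k}) \circ \lrb{\mathcal V}$ yields exactly the required relation, and since $k$ and $\mathcal V$ were arbitrary, we conclude $\mathbf{in}_i \circ q \btleq_{\BGamma \vdash \AAA \oplus \BBB} \mathbf{in}_i\ \qq$.

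The only non-routine step is the correspondence of terminal paths together with the equality of their semantic interpretations; once that is verified, the Scott-closure argument is entirely mechanical and mirrors the pattern used in Lemmas~\ref{lem:logical-injections} and \ref{lem:quantum-logical-convex-sum}.
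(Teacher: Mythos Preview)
Your proof is correct and is exactly the argument the paper has in mind: this lemma is one of several quantum closure properties the paper states without proof, relying on the reader to adapt the pattern of the classical analog (Lemma~\ref{lem:logical-injections}, itself deferred to \cite{m-monad}) together with the Scott-closure machinery around Definition~\ref{def:logical-configuration}. Your unfolding via the bijection on terminal paths under the evaluation context $\mathbf{in}_i[\cdot]$, the identity $\lrb{\mathcal V_{\pi'}}=(\mathbf{in}_i\otimes\id)\circ\lrb{\mathcal V_\pi}$, and the Scott-closed preimage argument are precisely the expected details; the only implicit ingredient you might make explicit is that $(\mathbf{in}_i\otimes\id)\circ(-)$ is \emph{linear} (Theorem~\ref{thm:q-kegelspitze}), which is what lets you push it through the finite convex sums in $\mathcal S_0$.
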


\begin{lemma}
  If $q \btleq_{\BGamma_1 \vdash \AAA_1 \oplus \AAA_2} \qq$ and $r_1 \btleq_{\BGamma_2, \xx : \AAA_1 \vdash \BBB} \rr_1$ and
  $r_2 \btleq_{\BGamma_2, \yy : \AAA_2 \vdash \BBB} \rr_2$, then
  \[ [r_1, r_2] \circ  d \circ (\id \otimes q) \circ \mathrm{swap} \btleq_{ \BGamma_1, \BGamma_2 \vdash \BBB }  \mathbf{(case\ q\ of\ in_1 x \Rightarrow r_1\ |\ in_2 y \Rightarrow r_2)} . \]
\end{lemma}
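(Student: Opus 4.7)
Unpack the definition of $\btleq_{\BGamma_1, \BGamma_2 \vdash \BBB}$: fix $k \in \mathbb N$ and an arbitrary value configuration $\cdot \vdash \mathcal V : \BGamma_1 \otimes \BGamma_2; \qbit^k$. Decompose $\mathcal V = [\ket \psi, \ell, \vq_{\BGamma_1} \otimes \vq_{\BGamma_2}]$ and form the ``$\BGamma_1$-slice'' $\mathcal V_1 = [\ket \psi, \ell|_{\BGamma_1}, \vq_{\BGamma_1}]$, which is a well-formed value configuration of type $\BGamma_1; \qbit^{k + \dim \BGamma_2}$ (treating the $\BGamma_2$-wires and the original $k$ auxiliary wires as joint auxiliary). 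By the hypothesis $q \btleq_{\BGamma_1 \vdash \AAA_1 \oplus \AAA_2} \qq$, the morphism $(q \otimes \id) \circ \lrb{\mathcal V_1}$ belongs to $\mathcal S(\btleq^{k+\dim\BGamma_2}_{\AAA_1 \oplus \AAA_2}; \mathcal C_\qq)$, where $\mathcal C_\qq = [\qq \qsubst \mathcal V_1]$.

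Next, build a Scott-continuous linear map
$H \colon \QQ(\mathbb C, \lrb{\AAA_1 \oplus \AAA_2} \otimes \lrb{\qbit^{k+\dim \BGamma_2}}) \to \QQ(\mathbb C, \lrb{\BBB} \otimes \lrb{\qbit^k})$
that reshuffles the $\dim\BGamma_2$ auxiliary wires back to the $\BGamma_2$-slot and then post-composes with $([r_1, r_2] \otimes \id_{\qbit^k}) \circ (\mathbf d \otimes \id_{\qbit^k}) \circ (\text{swap} \otimes \id_{\qbit^k})$. A direct (but bookkeeping-heavy) computation using bifunctoriality of $\otimes$, naturality of the symmetry, and the definitions in Figure~\ref{fig:quantum-configuration} gives the key identity
\[ H\bigl((q \otimes \id) \circ \lrb{\mathcal V_1}\bigr) = (s \otimes \id_{\qbit^k}) \circ \lrb{\mathcal V}, \qquad \text{where } s = [r_1, r_2] \circ \mathbf d \circ (\id \otimes q) \circ \text{swap}. \]
Because $H$ is Scott-continuous and linear, it suffices by Lemma~\ref{lem:topological-goodness} (applied to the codomain $\mathcal S(\btleq^k_\BBB; \mathcal C)$) to show that $H$ sends every element of $\mathcal S_0(\btleq^{k+\dim\BGamma_2}_{\AAA_1 \oplus \AAA_2}; \mathcal C_\qq)$ into $\mathcal S(\btleq^k_\BBB; \mathcal C)$, where $\mathcal C = [\text{case-term} \qsubst \mathcal V]$.

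To handle a basic generator $\sum_{\pi \in F} P(\pi) \lrb{\mathcal V_\pi}$ of $\mathcal S_0$, observe that each $\mathcal V_\pi \in \TPaths(\mathcal C_\qq)$ is a value configuration of sum type, so $\mathcal V_\pi = [\ket{\psi_\pi}, \ell_\pi, \mathbf{in}_{i_\pi} \wq_\pi]$. Lift each reduction path $\pi$ to a reduction path of $\mathcal C$ by using the quantum evaluation context $\mathbf{case}\,[\cdot]\,\mathbf{of}\ldots$ and the disjoint-linking rule of Figure~\ref{fig:structural-reduction}, and then fire the case-rule; this yields a reduction $\mathcal C \probto{}_{*} \mathcal R_\pi$ of weight $P(\pi)$, where $\mathcal R_\pi = [\ket{\psi_\pi}, \ell_\pi', \rr_{i_\pi} \qsubst \mathcal W_\pi]$ and $\mathcal W_\pi = [\ket{\psi_\pi}, \ell_\pi', \vq_{\BGamma_2} \otimes \wq_\pi]$ is a value configuration of type $\BGamma_2 \otimes \AAA_{i_\pi}; \qbit^k$. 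By linearity of $H$ and a further bifunctoriality/distributivity chase, $H(\lrb{\mathcal V_\pi}) = (r_{i_\pi} \otimes \id_{\qbit^k}) \circ \lrb{\mathcal W_\pi}$; the hypothesis $r_{i_\pi} \btleq_{\BGamma_2, -:\AAA_{i_\pi} \vdash \BBB} \rr_{i_\pi}$ then places this in $\mathcal S(\btleq^k_\BBB; \mathcal R_\pi)$.

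Finally, since the lifted reduction paths $\mathcal C \probto{}_* \mathcal R_\pi$ are pairwise distinct (their prefixes coincide with the distinct $\pi$'s) and their weights equal $P(\pi)$, Lemma~\ref{lem:quantum-logical-convex-sum} gives
$\sum_{\pi \in F} P(\pi) H(\lrb{\mathcal V_\pi}) \in \mathcal S(\btleq^k_\BBB; \mathcal C)$, completing the verification. The main obstacle is purely bookkeeping: carefully tracking the permutations $\sigma_\ell$, the split of $\ket\psi$ across the $\BGamma_1$/$\BGamma_2$/auxiliary wires, and the precise way the evaluation-context lifting interacts with the definition of $[\qq \qsubst \mathcal V_1]$, so that the semantic equation $H((q\otimes\id)\circ\lrb{\mathcal V_1}) = (s\otimes \id)\circ \lrb{\mathcal V}$ holds on the nose; once this is in place, the closure and linearity arguments proceed exactly in parallel with the classical Lemma~\ref{lem:logica-case}.
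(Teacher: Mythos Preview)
The paper states this lemma (along with the other quantum closure lemmas in \S\ref{sub:closure-properties}) without proof, so there is nothing to compare against directly. Your plan is the natural one: it transports the proof of the classical analogue Lemma~\ref{lem:logica-case} to the quantum setting, replacing Lemma~\ref{lem:logical-composition} and Lemma~\ref{lem:logical-convex-sum} by Scott-continuity/linearity of the post-composition map $H$ together with Lemma~\ref{lem:quantum-logical-convex-sum}. The overall architecture---unpack the definition, slice off $\BGamma_1$ and push $\BGamma_2$ into the auxiliary wires, apply the hypothesis on $q$, then lift each terminal path through the case evaluation context and invoke the $r_i$ hypothesis---is exactly what the omitted proof would do, and is sound.

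Two small points worth tightening. First, when you invoke Lemma~\ref{lem:quantum-logical-convex-sum} at the end, note that each $H(\lrb{\mathcal V_\pi})$ lies in $\mathcal S(\btleq^k_\BBB;\mathcal R_\pi)$, not merely in $\mathcal S_0$; the lemma as stated requires $c_i \btleq^k_\AAA \mathcal C_i$, which is precisely membership in the closure, so this is fine---but you are then combining elements of \emph{different} closures via paths of matching weight, and a careful reading of the proof of Lemma~\ref{lem:quantum-logical-convex-sum} (analogous to \cite[Lemma~108]{m-monad}) is needed to see that this still lands in $\mathcal S(\btleq^k_\BBB;\mathcal C)$; it does, by the same finite-subset/Scott-closure argument. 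Second, the ``key identity'' for $H$ depends on the semantics of configurations commuting with the $\BGamma_1/\BGamma_2$ split; this is where the permutation $\sigma_\ell$ and the monoidal coherence of $\QQ$ do real work, and it is worth writing out at least one instance explicitly rather than leaving it as ``bookkeeping-heavy''.
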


\begin{lemma}
  If $m \ol{\tleq_{Q(\AAA,\BBB)}} M$ and $q \btleq_{\BGamma \vdash \AAA} \qq,$ then $\beta(m(*)) \circ q \btleq_{\BGamma \vdash \BBB} M\qq$.
\end{lemma}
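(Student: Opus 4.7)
The plan is to reduce to the value case first, and then lift using closure properties, mirroring the pattern used throughout this appendix (e.g.\ in Lemmas~\ref{lem:logical-aplication} and \ref{lem:logica-case}).

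\emph{Step 1 (value case).} First I would show that if $v \tleq_{Q(\AAA,\BBB)} V$ and $q \btleq_{\BGamma \vdash \AAA} \qq$, then $\beta(v(*)) \circ q \btleq_{\BGamma \vdash \BBB} V\qq$. Fix $k$ and a value configuration $\cdot \vdash \mathcal V : \lrb\BGamma; \qbit^k$. The set
\[
S \defeq \{ c \in \QQ(\mathbb C, \lrb{\AAA \otimes \qbit^k}) \mid (\beta(v(*)) \otimes \id) \circ c \btleq_{\BBB}^k [V\qq \qsubst \mathcal V] \}
\]
is Scott-closed, since it is the inverse image of a Scott-closed set under the Scott-continuous map $(\beta(v(*)) \otimes \id) \circ (-)$ on $\QQ$ (Theorem~\ref{thm:continuous enrichment}). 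Hence it suffices to show that $\mathcal S_0(\btleq_{\AAA}^k ; [\qq \qsubst \mathcal V]) \subseteq S$, because then the Scott closure gives $(q \otimes \id) \circ \lrb{\mathcal V} \in S$, which is exactly what we want. Take a representative sum $c = \sum_{\pi \in F} P(\pi) \lrb{\mathcal V_\pi}$ with $\mathcal V_\pi$ ranging over endpoints of terminal paths of $[\qq \qsubst \mathcal V]$. Each $\mathcal V_\pi = [\ket{\psi_\pi}, \ell_\pi, \vq_\pi]$ is a value configuration of type $\AAA; \qbit^k$, so by property (A5) of Theorem~\ref{thm:formal-relations},
\[
(\beta(v(*)) \otimes \id) \circ \lrb{\mathcal V_\pi} \ \btleq_{\BBB}^k \ [\ket{\psi_\pi}, \ell_\pi, V\vq_\pi].
\]
Each classical path $\pi$ lifts through the rule that embeds quantum configuration reduction inside $V(-)$ (Figure~\ref{fig:switching-rules-operational}) to a configuration path $[V\qq \qsubst \mathcal V] \probto{}_* [\ket{\psi_\pi}, \ell_\pi, V\vq_\pi]$ of the same probability weight, and these lifted paths remain distinct. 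Lemma~\ref{lem:quantum-logical-convex-sum} then combines the above and gives, using that $(\beta(v(*)) \otimes \id) \circ (-)$ is linear (Theorem~\ref{thm:q-kegelspitze}),
\[
(\beta(v(*)) \otimes \id) \circ c = \sum_{\pi \in F} P(\pi) (\beta(v(*)) \otimes \id) \circ \lrb{\mathcal V_\pi} \ \btleq_{\BBB}^k \ [V\qq \qsubst \mathcal V],
\]
so $c \in S$, as required.

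\emph{Step 2 (general case).} Now let $m \ol{\tleq_{Q(\AAA,\BBB)}} M$ be arbitrary, fix $k$ and $\mathcal V$ as before, and consider
\[
T \defeq \{ m' \in \KL(1, \lrb{Q(\AAA,\BBB)}) \mid (\beta(m'(*)) \circ q \otimes \id) \circ \lrb{\mathcal V} \btleq_{\BBB}^k [M\qq \qsubst \mathcal V] \}.
\]
Because $\beta$, $(-) \circ q$, $(-) \otimes \id$ and $(-) \circ \lrb{\mathcal V}$ are all Scott-continuous (by Theorems~\ref{thm:M-alg}, \ref{thm:continuous enrichment}), $T$ is Scott-closed in $\KL(1, \lrb{Q(\AAA,\BBB)})$. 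It suffices to show $\mathcal S_0(\tleq_{Q(\AAA,\BBB)}; M) \subseteq T$; so pick a representative sum $m' = \sum_{\sigma \in G} P(\sigma) v_\sigma$ with $v_\sigma \tleq_{Q(\AAA,\BBB)} V_\sigma$ along distinct terminal paths $\sigma$ of $M$. Linearity of $\beta$, $(-) \circ q$, $(- \otimes \id)$ and $(-) \circ \lrb{\mathcal V}$ gives
\[
(\beta(m'(*)) \circ q \otimes \id) \circ \lrb{\mathcal V} = \sum_{\sigma \in G} P(\sigma) (\beta(v_\sigma(*)) \circ q \otimes \id) \circ \lrb{\mathcal V}.
\]
By Step~1 applied to each $\sigma$, the $\sigma$th summand is related by $\btleq_{\BBB}^k$ to $[V_\sigma \qq \qsubst \mathcal V]$. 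The classical evaluation-context rule for the $``m\qq"$ term in Figure~\ref{fig:switching-rules-operational} lifts each $\sigma$ to a distinct configuration path $[M\qq \qsubst \mathcal V] \probto{}_* [V_\sigma \qq \qsubst \mathcal V]$ of the same weight, so a final application of Lemma~\ref{lem:quantum-logical-convex-sum} yields $m' \in T$. Thus $m \in T$, i.e.\ $\beta(m(*)) \circ q \btleq_{\BGamma \vdash \BBB} M\qq$.

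\emph{Main obstacle.} The step requiring most care is setting up the ``test set'' $S$ in Step~1 and verifying Scott-closedness and linearity in a way compatible with the tensor structure: one has to know that composition in $\QQ$ is jointly Scott-continuous and linear in each variable, which is exactly what the continuous-Kegelspitze enrichment (Theorems~\ref{thm:continuous enrichment} and \ref{thm:q-kegelspitze}) and the existence of barycentre maps (Theorem~\ref{thm:M-alg}) buy us. Without that strong enrichment, neither the Scott-closure argument nor the linearity manipulation with $\beta$ would go through, which is precisely why this lemma motivates the semantic machinery developed in Sections~\ref{sec:operator-algebras}--\ref{sec:relationship}.
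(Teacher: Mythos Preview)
The paper states this lemma without proof, so there is no explicit argument to compare against; your two–step argument (first the value case via property (A5), then the general case via Scott closure and linearity of $\beta$ together with Lemma~\ref{lem:quantum-logical-convex-sum}) is correct and is exactly the pattern one would expect by analogy with Lemma~\ref{lem:logical-aplication} and the other closure lemmas in \S\ref{sub:closure-properties}. One small terminology slip: in Step~1 you call $\pi$ a ``classical path'', but $\pi$ is a \emph{configuration} path $[\qq \qsubst \mathcal V]\probto{}_*\mathcal V_\pi$; the relevant lifting rule is the second rule in the bottom row of Figure~\ref{fig:switching-rules-operational} (premise $[\ket\psi,\ell,\mathbf q]\probto p [\ket{\psi'},\ell',\mathbf q']$, conclusion $[\ket\psi,\ell,v\mathbf q]\probto p[\ket{\psi'},\ell',v\mathbf q']$), not the classical evaluation-context rule. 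This does not affect the argument. The ``jointly Scott-continuous'' remark in your concluding paragraph is a slight overstatement: only separate Scott continuity of composition (Theorems~\ref{thm:continuous enrichment} and~\ref{thm:q-kegelspitze}) is needed, since every composition/tensor step in your argument fixes one factor.
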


\begin{lemma}
  If $c \btleq_{\OO}^k \mathcal C$, then $(* \mapsto r_{\OO}(\mathrm{drop}_k \circ c)) \ol{\tleq_{|\OO|}} \run\ \mathcal C.$
\end{lemma}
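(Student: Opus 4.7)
The plan is to show that $\Phi(c) \defeq (* \mapsto r_\OO(\mathrm{drop}_k^\ddagger \circ c))$ lies in $\mathcal S(\tleq_{|\OO|}; \run\ \mathcal C)$ by exploiting the Scott-closure structure. Since $c \btleq_\OO^k \mathcal C$ means $c$ lies in the Scott-closure of $\mathcal S_0(\btleq_\OO^k; \mathcal C)$, it will suffice to (i) verify that $\Phi$ is Scott-continuous, and (ii) check that $\Phi$ sends the generating set $\mathcal S_0(\btleq_\OO^k; \mathcal C)$ into $\mathcal S(\tleq_{|\OO|}; \run\ \mathcal C)$. Since $\mathcal S(\tleq_{|\OO|}; \run\ \mathcal C)$ is Scott-closed by construction, its preimage under $\Phi$ is then Scott-closed and contains $\mathcal S_0(\btleq_\OO^k; \mathcal C)$, hence also its Scott-closure, which yields the result.

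For (i), Scott-continuity of $\Phi$ follows from three ingredients: composition in $\QQ$ is Scott-continuous (indeed, $\QQ$ is enriched over $\DOM$ by Theorem~\ref{thm:continuous enrichment}); the isomorphism $r_\OO$ is Scott-continuous by Theorem~\ref{thm:r-iso-categorical}; and the transpose to $\KL(1, \lrb{|\OO|})$ that identifies an element of $\MM\lrb{|\OO|}$ with a Kleisli morphism out of the singleton is trivially Scott-continuous. For (ii), consider a generator $c = \sum_{\pi \in F} P(\pi)\lrb{\mathcal V_\pi}$ with $F \subseteq \TPaths(\mathcal C)$ finite, and write each $\mathcal V_\pi = [\ket{\psi_\pi}, \varnothing, \vq_\pi]$; note that $\vq_\pi$ must be a closed observable value, so its interpretation is a state $\lrb{\vq_\pi} \colon \mathbb C \to \lrb{\OO}$ in $\QQ$. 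Using Scott-continuity and linearity of $\mathrm{drop}_k^\ddagger \circ (-)$ and $r_\OO$ (guaranteed by the continuous Kegelspitze enrichment of $\QQ$, Theorem~\ref{thm:q-kegelspitze}), together with the identity $\mathrm{drop}_k^\ddagger \circ (\id \otimes \mathrm{state}_\rho^\ddagger) = \id$ noted in \secref{sub:quantum-submodel}, one computes $\mathrm{drop}_k^\ddagger \circ \lrb{\mathcal V_\pi} = \lrb{\vq_\pi}$. Applying Proposition~\ref{prop:observable-value-interpretation} then gives $r_\OO(\lrb{\vq_\pi}) = \lrb{|\vq_\pi|}_*$, so
\[
\Phi(c) \;=\; \sum_{\pi \in F} P(\pi)\,\lrb{|\vq_\pi|}.
\]

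Finally, observe that $\TPaths(\run\ \mathcal C)$ is in bijection with $\TPaths(\mathcal C)$: each terminal quantum path $\pi \in \TPaths(\mathcal C)$ ending at $\mathcal V_\pi = [\ket{\psi}, \varnothing, \vq_\pi]$ lifts, via the structural rule for $\run$ followed by the terminal rule $\run\ [\ket\psi, \varnothing, \vq_\pi] \probto{1} |\vq_\pi|$, to a unique path $\pi' \in \TPaths(\run\ \mathcal C)$ with $V_{\pi'} = |\vq_\pi|$ and $P(\pi') = P(\pi)$. Since each $|\vq_\pi|$ is a classical observable value, the auxiliary lemma preceding the statement gives $\lrb{|\vq_\pi|} \tleq_{|\OO|} |\vq_\pi|$. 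Thus $\Phi(c) = \sum_{\pi' \in F'} P(\pi')\, \lrb{V_{\pi'}}$ with each $\lrb{V_{\pi'}} \tleq_{|\OO|} V_{\pi'}$, witnessing $\Phi(c) \in \mathcal S_0(\tleq_{|\OO|}; \run\ \mathcal C)$. The main technical point to get right will be the bookkeeping around observable types — specifically, ensuring that the discard map correctly trivialises the auxiliary qubits in value configurations and that the identification $r_\OO$ from Theorem~\ref{thm:r-iso-categorical} is compatible with the interpretation of closed observable values provided by Proposition~\ref{prop:observable-value-interpretation}; once these compatibilities are in place, the rest of the argument is a routine Scott-closure / linearity computation.
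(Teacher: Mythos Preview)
The paper states this closure lemma without proof (it is one of several such lemmas listed in \secref{sub:closure-properties}), so there is no explicit argument to compare against; your proposal supplies exactly the kind of argument the paper leaves implicit, and it is correct. The key ingredients---Scott-continuity and linearity of $c \mapsto \mathrm{drop}_k^\ddagger \circ c$ via Theorem~\ref{thm:q-kegelspitze}, the Kegelspitze isomorphism $r_\OO$ from Theorem~\ref{thm:r-iso-categorical}, the identity $\mathrm{drop}_k^\ddagger \circ (\id \otimes \mathrm{state}_\rho^\ddagger) = \id$, Proposition~\ref{prop:observable-value-interpretation} for the compatibility $r_\OO(\lrb{\vq}) = \lrb{|\vq|}_*$, and the injection $\TPaths(\mathcal C) \hookrightarrow \TPaths(\run\ \mathcal C)$ preserving weights---are all correctly identified and assembled, and the Scott-closure transfer via the continuous map $\Phi$ is the standard mechanism (essentially Lemma~\ref{lem:topological-goodness} in one variable).
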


\begin{lemma}
  If $m \ol{\tleq_{|\OO|}} M$, then $r^{-1}_{\OO}(m_*) \btleq_{\cdot \vdash \OO} \textbf{init}\ M.$
\end{lemma}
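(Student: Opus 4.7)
The plan is to reduce the claim to a finite computation and then conclude by Scott-continuity. Unfolding the definition of $\btleq_{\cdot \vdash \OO}$, since the quantum context of $\textbf{init}\ M$ is empty, it suffices to prove that for every $k \in \mathbb N$ and every value configuration $\cdot \vdash \mathcal V = [\ket\psi, \varnothing, *] : \mathbf I; \qbit^k$ one has $\Phi(m) \btleq^k_\OO [\ket\psi, \varnothing, \textbf{init}\ M]$, where I abbreviate $\Phi(m) \defeq (r^{-1}(m_*) \otimes \id_{\lrb{\qbit^k}}) \circ \lrb{\mathcal V}$. The map $\Phi : \KL(1,\lrb{|\OO|}) \to \QQ(\mathbb C,\lrb{\OO}\otimes\lrb{\qbit^k})$ will be Scott-continuous: evaluation at $*$ is Scott-continuous, $r^{-1}$ is a Kegelspitze isomorphism (Theorem~\ref{thm:r-iso-categorical}), and tensoring by $\id$ and precomposing with $\lrb{\mathcal V}$ are Scott-continuous by the $\DOM$-enrichment of $\QQ$ (Theorem~\ref{thm:continuous enrichment}). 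Hence the preimage $\Phi^{-1}\bigl(\mathcal S(\btleq^k_\OO;\ [\ket\psi, \varnothing, \textbf{init}\ M])\bigr)$ is Scott-closed, and since $\mathcal S(\tleq_{|\OO|};M)$ is the smallest Scott-closed set containing $\mathcal S_0(\tleq_{|\OO|};M)$, I only need to verify the claim for finite sums $m = \sum_{\pi \in F} P(\pi) v_\pi$ with $F \subseteq \TPaths(M)$ finite and $v_\pi \tleq_{|\OO|} V_\pi$.

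For such an $m$ I will argue that $v_\pi = \lrb{V_\pi}$ for every $\pi \in F$: by~\eqref{eq:observable-types} the dcpo $\lrb{|\OO|} = \mathcal L\flrb\OO$ is discrete, and property (C1) of Theorem~\ref{thm:formal-relations} gives $v_\pi \leq \lrb{V_\pi}$, so discreteness forces equality. Proposition~\ref{prop:observable-value-interpretation} then associates to each $V_\pi$ a unique observable quantum value $\vq_\pi$ with $|\vq_\pi| = V_\pi$ satisfying $r^{-1}(\lrb{V_\pi}_*) = \lrb{\vq_\pi}$.

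Using linearity of $r^{-1}$, of $(-) \otimes \id$, and of precomposition (all guaranteed by the continuous-Kegelspitze enrichment of $\QQ$, Theorem~\ref{thm:q-kegelspitze}) together with the formula in Figure~\ref{fig:quantum-configuration}, I will then compute
\[
\Phi(m) \;=\; \sum_{\pi \in F} P(\pi)\, (\lrb{\vq_\pi} \otimes \id_{\lrb{\qbit^k}}) \circ \lrb{\mathcal V} \;=\; \sum_{\pi \in F} P(\pi)\, \lrb{[\ket\psi,\varnothing,\vq_\pi]}.
\]
For each classical path $\pi : M \probto{}_* V_\pi$ of weight $P(\pi)$, the structural reduction rule for $\textbf{init}$ (Figure~\ref{fig:switching-rules-operational}) lifts $\pi$ to a configuration path $[\ket\psi,\varnothing,\textbf{init}\ M] \probto{}_* [\ket\psi,\varnothing,\textbf{init}\ V_\pi]$ of the same weight, and one further reduction $[\ket\psi,\varnothing,\textbf{init}\ |\vq_\pi|] \probto{1} [\ket\psi,\varnothing,\vq_\pi]$ lands at the value configuration $\mathcal V_\pi \defeq [\ket\psi,\varnothing,\vq_\pi]$; distinct $\pi$ yield distinct lifted paths $\hat\pi$, each of weight $P(\hat\pi) = P(\pi)$. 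Since $\lrb{\mathcal V_\pi} \btleq^k_\OO \mathcal V_\pi$ by Lemma~\ref{lem:value-configuration}, Lemma~\ref{lem:quantum-logical-convex-sum} finishes the proof.

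The main obstacle is the delicate interplay between the classical Kegelspitze structure on $\MM\lrb{|\OO|}$ and the quantum one on $\QQ(\mathbb C,\lrb{\OO}\otimes\lrb{\qbit^k})$ required to justify Scott-continuity and linearity of $\Phi$; the critical facts are that $r$ is an \emph{isomorphism} (not merely a bijection) of Kegelspitzen and that $\QQ$ is enriched over continuous Kegelspitzen, so that the passage from a finite convex sum of classical reductions to the matching finite convex sum of quantum reductions commutes with every operation involved.
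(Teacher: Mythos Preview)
Your proof is correct and follows precisely the pattern one would expect for this closure lemma (the paper states it without proof, but the template is that of Lemma~\ref{lem:logical-composition} and its siblings): reduce from $\mathcal S(\tleq_{|\OO|};M)$ to the generating finite sums via Scott-continuity of the transfer map $\Phi$, identify each $v_\pi$ with $\lrb{V_\pi}$ using discreteness of $\lrb{|\OO|}$ (this is exactly the unnamed observable-value lemma just before Lemma~\ref{lem:quantum-logical-convex-sum}), translate through $r^{-1}$ using Proposition~\ref{prop:observable-value-interpretation}, lift the classical reduction paths to configuration paths via the structural rule for $\mathbf{init}$, and conclude by Lemma~\ref{lem:quantum-logical-convex-sum}.

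One small point worth tightening: you invoke Scott-continuity and linearity of $f \mapsto f \otimes \id_{\lrb{\qbit^k}}$ on $\QQ$, but the paper only states the Kegelspitze enrichment for pre- and post-composition (Theorem~\ref{thm:q-kegelspitze}), not for tensoring. The fact is true and routine---on elementary tensors $(r\varphi+(1-r)\psi)\otimes\id$ agrees with $r(\varphi\otimes\id)+(1-r)(\psi\otimes\id)$, and Scott-continuity follows from normality of the spatial tensor---but strictly speaking it is not recorded in the paper, so you should either add a one-line justification or cite the $\dcpobs$-enrichment of $\vN$ from \cite{cho:semantics} together with the standard compatibility of $\bar\otimes$ with the L\"owner order.
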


\begin{lemma}
  Let $q \btleq_{\BGamma_1 \vdash \OO} \qq$. Let $x : |\OO|; \BGamma_2 \vdash \rr : \AAA$ be a term and $r : \lrb{|\OO|} \to \QQ(\lrb{\BGamma_2}, \lrb{\AAA})$ a Scott-continuous function such that for any observable value $\cdot \vdash V : |\OO|$ we have that
  $ r_{\lrb{V}} \btleq_{\BGamma_2 \vdash \AAA} \rr[V/x] . $ Then $\widehat r_* \circ (q \otimes \id) \btleq_{\BGamma_1, \BGamma_2 \vdash \AAA} \mathbf{let}\  x = \text{lift}\ \qq\ \mathbf{ in\ r} . $
\end{lemma}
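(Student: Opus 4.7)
The plan is to unfold the definition of $\btleq_{\BGamma_1, \BGamma_2 \vdash \AAA}$ and verify it directly. Fix $k \in \Na$ and a value configuration $\mathcal V = [\ket\psi, \ell_1 \cupdot \ell_2, \vq_1 \otimes \vq_2] : \BGamma_1 \otimes \BGamma_2; \qbit^k$, where $\ell_1, \ell_2$ are the linking functions for the free quantum variables of $\vq_1, \vq_2$ respectively. Write $\mathcal W = [\mathbf{let}\ x = \mathrm{lift}\ \qq\ \mathbf{in}\ \rr \qsubst \mathcal V]$. The goal is to show that $((\widehat r_* \circ (q \otimes \id_{\lrb{\BGamma_2}})) \otimes \id_{\lrb{\qbit^k}}) \circ \lrb{\mathcal V}$ lies in $\mathcal S(\btleq_\AAA^k; \mathcal W)$.

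First, I would apply the hypothesis $q \btleq_{\BGamma_1 \vdash \OO} \qq$ to the sub-configuration $\mathcal V_1 = [\ket\psi, \ell_1, \vq_1]$, treating the $\ell_2$-qubits together with the $k$ extra qubits as auxiliary. This places $(q \otimes \id) \circ \lrb{\mathcal V_1}$ in the Scott closure of the set of finite sums $\sum_{\pi \in F} P(\pi) \lrb{\mathcal V_\pi}$ indexed over $F \subseteq \TPaths([\qq \qsubst \mathcal V_1])$, where each endpoint $\mathcal V_\pi = [\ket{\psi_\pi}, \ell_\pi, \vq_\pi]$ is a value configuration with $\vq_\pi$ an observable quantum value. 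The map that post-composes by $\widehat r_*$ (suitably tensored with identities and the interpretation of $\vq_2$) is Scott-continuous and linear, so by Lemma~\ref{lem:topological-goodness} it suffices to establish the conclusion for sums of this specific form.

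Second, each $\pi \in \TPaths([\qq \qsubst \mathcal V_1])$ lifts, via the quantum evaluation context $\mathbf{let}\ x = \mathrm{lift}\ \mathbf E\ \mathbf{in}\ \rr[\vq_2/\xx_2]$ followed by one application of the lift-of-value reduction rule (of probability~$1$), to a reduction path of $\mathcal W$ of the same probability $P(\pi)$ ending in $\mathcal D_\pi = [\ket{\psi_\pi}, \ell_\pi \cupdot \ell_2, \rr[|\vq_\pi|/x][\vq_2/\xx_2]]$. Since $|\vq_\pi|$ is a closed observable value of type $|\OO|$, the hypothesis on $r$ yields $r_{\lrb{|\vq_\pi|}} \btleq_{\BGamma_2 \vdash \AAA} \rr[|\vq_\pi|/x]$, so applying this relation to the value sub-configuration $[\ket{\psi_\pi}, \ell_2, \vq_2]$ places $(r_{\lrb{|\vq_\pi|}} \otimes \id) \circ \lrb{[\ket{\psi_\pi}, \ell_2, \vq_2]}$ in $\mathcal S(\btleq_\AAA^k; \mathcal D_\pi)$. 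Every terminal path of $\mathcal W$ arises by concatenating these lifted prefixes with a terminal path of some $\mathcal D_\pi$, and such concatenations are pairwise distinct.

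The crucial algebraic identity is the coherence property of $\widehat{(-)}$ from Proposition~\ref{prop:lift-more-general}: combined with $\lrb{\vq_\pi} = \ell^\infty(\flrb{|\vq_\pi|})$ from Proposition~\ref{prop:observable-value-interpretation}, it yields $\widehat r_* \circ (\lrb{\vq_\pi} \otimes \id_{\lrb{\BGamma_2}}) = r_{\lrb{|\vq_\pi|}} \circ \cong$ as morphisms in $\QQ$ (up to the canonical isomorphism $\mathbb C \otimes \lrb{\BGamma_2} \cong \lrb{\BGamma_2}$). Applying this identity to each term of the sums produced in the first step rewrites them as probabilistic convex combinations of witnesses of the form produced in the second step; Lemma~\ref{lem:quantum-logical-convex-sum}, together with Scott closure, then assembles the conclusion. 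The main obstacle I foresee is the careful bookkeeping of linking functions and auxiliary qubits across the structural lifting of paths and the application of the coherence identity, especially because $\ket\psi$ may entangle qubits across $\BGamma_1$, $\BGamma_2$, and the auxiliary register; this should nevertheless be a routine verification once the splitting $\ell = \ell_1 \cupdot \ell_2$ is fixed.
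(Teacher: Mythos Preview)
The paper states this closure lemma without proof, so there is no paper argument to compare against; your outline is precisely the natural argument the paper's framework anticipates, correctly combining the structural lifting of reduction paths through the quantum evaluation context, the coherence identity of Proposition~\ref{prop:lift-more-general} together with Proposition~\ref{prop:observable-value-interpretation}, and Lemma~\ref{lem:quantum-logical-convex-sum} under Scott closure via Lemma~\ref{lem:topological-goodness}. The bookkeeping you flag (splitting $\ell = \ell_1 \cupdot \ell_2$, tracking auxiliary qubits through the reductions of $\qq$, and matching the tensor factors so that $(\widehat r_* \otimes \id) \circ ((q \otimes \id) \otimes \id) \circ \lrb{\mathcal V}$ factors through $(q \otimes \id_{\qbit^{k'}}) \circ \lrb{\mathcal V_1}$) is indeed the only place requiring care, and it goes through because post-composition in $\QQ$ is Scott-continuous and linear (Theorem~\ref{thm:q-kegelspitze}) and because observable values have no free qubit variables, so $\ell_\pi$ is empty and the $\ell_2$-register survives untouched.
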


\begin{lemma}
  If $q \btleq_{\BGamma \vdash \mathbf{P[\mu X. P / X]}} \qq$ then $\mathrm{fold} \circ q \btleq_{\BGamma \vdash \mathbf{\mu X. P}} \mathbf{fold}\ \qq$.
\end{lemma}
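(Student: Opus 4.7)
The plan is to unfold the definitions of $\btleq_{\BGamma \vdash \BBB}$ and $\btleq^k_\BBB$ and reduce the claim to a statement about Scott closures that follows by Scott continuity of post-composition in $\QQ$. Fix an arbitrary $k \in \mathbb N$ and value configuration $\mathcal V = [\ket\psi,\ell,\vq_1 \otimes \cdots \otimes \vq_n]$ of the right type, where $\BGamma = \xx_1:\AAA_1,\ldots,\xx_n:\AAA_n$. The goal becomes
\[ ((\mathrm{fold} \circ q) \otimes \id_{\lrb{\qbit^k}}) \circ \lrb{\mathcal V} \;\btleq^k_{\mathbf{\mu X.P}}\; [\mathbf{fold}\ \qq \qsubst \mathcal V], \]
while the hypothesis reads
\[ (q \otimes \id_{\lrb{\qbit^k}}) \circ \lrb{\mathcal V} \;\btleq^k_{\mathbf{P[\mu X.P/X]}}\; [\qq \qsubst \mathcal V]. \]

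Next I will record two key observations. First, applying the structural reduction rule of Figure~\ref{fig:structural-reduction} to the quantum evaluation context $\mathbf E = \bfold\ [\cdot]$ yields a weight-preserving bijection between $\TPaths([\qq \qsubst \mathcal V])$ and $\TPaths([\mathbf{fold}\ \qq \qsubst \mathcal V])$, sending a path with endpoint $[\ket{\psi_\pi},\ell_\pi,\vq_\pi]$ to the path with the same quantum state and linking function and with endpoint $[\ket{\psi_\pi},\ell_\pi,\bfold\ \vq_\pi]$ (note $\bfold\ \vq_\pi$ is a value). Second, by Figures~\ref{fig:quantum-term-semantics} and \ref{fig:quantum-configuration},
\[ \lrb{[\ket{\psi_\pi},\ell_\pi,\bfold\ \vq_\pi]} \;=\; (\mathrm{fold} \otimes \id_{\lrb{\qbit^k}}) \circ \lrb{[\ket{\psi_\pi},\ell_\pi,\vq_\pi]}. \]

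The rest of the proof is pure Scott-closure manipulation. By Theorem~\ref{thm:q-kegelspitze} the map
\[ h \colon \QQ(\mathbb C, \lrb{\mathbf{P[\mu X.P/X]}} \otimes \lrb{\qbit^k}) \longrightarrow \QQ(\mathbb C, \lrb{\mathbf{\mu X.P}} \otimes \lrb{\qbit^k}), \quad h(c) \;\eqdef\; (\mathrm{fold} \otimes \id_{\lrb{\qbit^k}}) \circ c, \]
is Scott-continuous and linear. Linearity of $h$ together with the two observations above shows $h$ sends each generator $\sum_{\pi \in F} P(\pi)\lrb{\mathcal V_\pi}$ of $\mathcal S_0(\btleq^k_{\mathbf{P[\mu X.P/X]}};[\qq \qsubst \mathcal V])$ into $\mathcal S_0(\btleq^k_{\mathbf{\mu X.P}};[\mathbf{fold}\ \qq \qsubst \mathcal V])$, hence into $\mathcal S(\btleq^k_{\mathbf{\mu X.P}};[\mathbf{fold}\ \qq \qsubst \mathcal V])$. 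Since $h$ is Scott-continuous, $h^{-1}(\mathcal S(\btleq^k_{\mathbf{\mu X.P}};[\mathbf{fold}\ \qq \qsubst \mathcal V]))$ is Scott-closed, and therefore contains the entire Scott closure $\mathcal S(\btleq^k_{\mathbf{P[\mu X.P/X]}};[\qq \qsubst \mathcal V])$. Applying $h$ to the hypothesis gives exactly the required membership.

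No step is genuinely hard: the main point of attention is ensuring that the bijection of terminal paths is weight-preserving, which is clear because $\bfold\ [\cdot]$ is a quantum evaluation context and the only single-step reductions available to a term of the form $\bfold\ \rr$ are those obtained by lifting reductions of $\rr$ (until $\rr$ becomes a value $\vq_\pi$, at which point $\bfold\ \vq_\pi$ is already a value). Everything else is a routine transport of the logical-relation argument along the Scott-continuous linear functor $h$, a pattern already used implicitly via Lemma~\ref{lem:topological-goodness} in the classical analogue.
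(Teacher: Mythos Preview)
Your proof is correct and is exactly the argument the paper leaves implicit: the paper states this lemma without proof, treating it as routine by analogy with the classical case (Lemma~\ref{lem:logical-fold}, cited from \cite{m-monad}). Your unfolding of the definition of $\btleq_{\BGamma\vdash\BBB}$, the weight-preserving path bijection through the evaluation context $\bfold\,[\cdot]$, and the transport along the Scott-continuous linear map $c\mapsto(\mathrm{fold}\otimes\id)\circ c$ are precisely the expected steps.
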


\begin{lemma}
  If $q \btleq_{\BGamma \vdash \mathbf{\mu X. P} } \qq$ then $\mathrm{unfold} \circ q \btleq_{\BGamma \vdash \mathbf{P[\mu X. P / X]} } \mathbf{unfold}\ \qq$.
\end{lemma}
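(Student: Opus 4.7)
My plan is to unfold the definition of the quantum logical relation on terms, reducing the statement to a calculation at value configurations combined with a bijective correspondence between reduction paths. Fix $k \in \mathbb N$ and an arbitrary value configuration $\cdot \vdash \mathcal V : \AAA_1 \otimes \cdots \otimes \AAA_n; \qbit^k$ with $\mathcal V = [\ket \psi, \ell, \vq_1 \otimes \cdots \otimes \vq_n]$; writing $\qq' \eqdef \qq[\vec\vq/\vec\xx]$, what must be shown is
\[ (\qunfold \otimes \id) \circ (q \otimes \id) \circ \lrb{\mathcal V}\ \btleq^k_{\mathbf{P[\mu X.P/X]}}\ [\ket \psi, \ell, \bunfold\ \qq']. \]
By hypothesis $(q \otimes \id) \circ \lrb{\mathcal V}$ lies in $\mathcal S(\btleq^k_{\mathbf{\mu X.P}}; [\ket\psi,\ell,\qq'])$. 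Post-composition with $\qunfold \otimes \id$ is Scott-continuous by Theorem~\ref{thm:continuous enrichment} and linear since $\QQ$ is enriched over Kegelspitzen (Theorem~\ref{thm:q-kegelspitze}), so by Lemma~\ref{lem:topological-goodness} it suffices to show that this post-composition sends $\mathcal S_0(\btleq^k_{\mathbf{\mu X.P}}; [\ket\psi,\ell,\qq'])$ into $\mathcal S(\btleq^k_{\mathbf{P[\mu X.P/X]}}; [\ket\psi, \ell, \bunfold\ \qq'])$.

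Consider a representative sum $\sum_{\pi \in F} P(\pi) \lrb{\mathcal V_\pi}$ with $F \subseteq \TPaths([\ket\psi,\ell,\qq'])$ finite. Every value of type $\mathbf{\mu X.P}$ has the form $\bfold\ \vq'_\pi$, so $\mathcal V_\pi = [\ket{\psi_\pi}, \ell_\pi, \bfold\ \vq'_\pi]$. Using $\lrb{\bfold\ \vq'_\pi} = \qfold \circ \qlrb{\vq'_\pi}$ from Figure~\ref{fig:quantum-term-semantics} together with $\qunfold \circ \qfold = \id$, a direct computation on the interpretation of configurations (Figure~\ref{fig:quantum-configuration}) yields $(\qunfold \otimes \id) \circ \lrb{\mathcal V_\pi} = \lrb{[\ket{\psi_\pi}, \ell_\pi, \vq'_\pi]}$, so by linearity the image of the sum equals $\sum_{\pi \in F} P(\pi) \lrb{[\ket{\psi_\pi}, \ell_\pi, \vq'_\pi]}$.

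Finally, the structural rule for the quantum evaluation context $\bunfold\ \mathbf E$ (Figure~\ref{fig:structural-reduction}) lifts each path $\pi : [\ket\psi,\ell,\qq'] \probto{}_* [\ket{\psi_\pi},\ell_\pi,\bfold\ \vq'_\pi]$ to a path $\tilde\pi : [\ket\psi,\ell,\bunfold\ \qq'] \probto{}_* [\ket{\psi_\pi},\ell_\pi,\bunfold\ \bfold\ \vq'_\pi]$ with $P(\tilde\pi)=P(\pi)$, and appending the deterministic step $\bunfold\ \bfold\ \vq'_\pi \probto{1} \vq'_\pi$ produces a terminal path $\pi^\star \in \TPaths([\ket\psi,\ell,\bunfold\ \qq'])$ of total weight $P(\pi^\star)=P(\pi)$. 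The map $\pi \mapsto \pi^\star$ is injective (one recovers $\pi$ by stripping the final step and the outer $\bunfold$), so the image $\sum_{\pi \in F} P(\pi^\star) \lrb{\mathcal V_{\pi^\star}}$ already belongs to $\mathcal S_0(\btleq^k_{\mathbf{P[\mu X.P/X]}}; [\ket\psi,\ell,\bunfold\ \qq'])$, as required. The main delicacy will be verifying this path correspondence carefully enough that $\mathcal S_0$-membership (not merely closure-membership) is immediate, so Lemma~\ref{lem:topological-goodness} closes the argument without further work.
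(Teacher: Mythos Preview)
Your proof is correct and is exactly the argument the paper's framework implicitly relies on: the paper states this closure lemma without proof, but the pattern is the quantum analogue of Lemma~\ref{lem:logical-unfold} (cited from \cite{m-monad}), carried out via the path-lifting bijection through the evaluation context $\bunfold\ \mathbf E$ together with $\qunfold\circ\qfold=\id$, just as you do. The only cosmetic remark is that invoking Lemma~\ref{lem:topological-goodness} is slightly heavier than needed---since you already land in $\mathcal S_0$ of the target, Scott-continuity of post-composition alone pushes the closure forward---but this does not affect correctness.
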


\begin{lemma}
  If $q \btleq_{\BGamma_1 \vdash \mathbf{I} } \qq$ and $r \btleq_{\BGamma_2 \vdash \AAA} \rr$ then $\cong \circ (q \otimes r) \btleq_{\BGamma_1, \BGamma_2 \vdash \AAA} \qq; \rr$.
\end{lemma}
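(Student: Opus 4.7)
The plan is to unfold both hypotheses by running $\qq$ first via the evaluation-context rule for the sequencing term $\qq;\rr$, then running $\rr$ once the first component has reduced to $*$. Fix an arbitrary $k \in \mathbb N$ and a value configuration $\cdot \vdash \mathcal V : \BGamma_1 \otimes \BGamma_2; \qbit^k$. Since values at a tensor type are tensor products of values, $\mathcal V = [\ket\psi, \ell, \vq_1 \otimes \vq_2]$ with $\vq_i$ a value of type $\BGamma_i$; write $\ell = \ell_1 \cupdot \ell_2$ for the corresponding restriction of the linking function and $m_i$ for the number of free qubit variables of $\vq_i$. Form the ``partial'' value configuration $\mathcal V_1 \defeq [\ket\psi,\ell_1,\vq_1]$ of type $\BGamma_1; \qbit^{m_2+k}$ (treating $\vq_2$'s qubits plus the original $k$ as auxiliary), and for any value configuration $\mathcal W = [\ket\varphi,\varnothing,*]$ of type $\mathbf I; \qbit^{m_2+k}$, form $\rho(\mathcal W) \defeq [\ket\varphi,\ell_2,\vq_2]$ of type $\BGamma_2; \qbit^k$.

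Next I would identify a single Scott-continuous linear map that mediates the whole argument. Using naturality of the left unitor ($\cong\circ(q\otimes r) = r \circ \cong' \circ (q \otimes \id_{\BGamma_2})$) and expanding $\lrb{\mathcal V}$ via Figure~\ref{fig:quantum-configuration}, one checks the semantic identity
\[
(\cong \circ (q \otimes r) \otimes \id_{\qbit^k}) \circ \lrb{\mathcal V} \;=\; h(c_1), \qquad c_1 \defeq (q \otimes \id_{\qbit^{m_2+k}}) \circ \lrb{\mathcal V_1},
\]
where $h(c) \defeq (r \otimes \id_{\qbit^k}) \circ (\qlrb{\vq_2} \otimes \id_{\qbit^k}) \circ \cong'' \circ c$ for a suitable unitor $\cong''$. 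The same expansion shows that for each value configuration $\mathcal W = [\ket\varphi,\varnothing,*]$ one gets $h(\lrb{\mathcal W}) = (r \otimes \id) \circ \lrb{\rho(\mathcal W)}$. The map $h$ is Scott-continuous and linear by Theorem~\ref{thm:continuous enrichment} and Theorem~\ref{thm:q-kegelspitze}.

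With $h$ in hand, I would first apply the hypothesis $q \btleq_{\BGamma_1 \vdash \mathbf I} \qq$ at the total qubit count $m_2+k$, giving $c_1 \in \overline{\mathcal S_0(\btleq_{\mathbf I}^{m_2+k};[\qq\qsubst \mathcal V_1])}$. Each element of $\mathcal S_0$ has the form $s = \sum_{\pi\in F} P(\pi)\lrb{\mathcal V_\pi}$ over finitely many terminal reduction paths $\pi$ of $[\qq\qsubst\mathcal V_1]=[\ket\psi,\ell_1,\qq[\vq_1]]$ with $\mathcal V_\pi=[\ket{\psi_\pi},\varnothing,*]$; by linearity, $h(s) = \sum_\pi P(\pi)(r\otimes \id)\circ \lrb{\rho(\mathcal V_\pi)}$. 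Each path $\pi$ lifts through the evaluation context $\mathbf E = [-];\rr[\vq_2]$ (Figure~\ref{fig:structural-reduction}) to a reduction prefix of $[(\qq;\rr)\qsubst\mathcal V]$ of the same weight, ending in $[\ket{\psi_\pi},\ell_2,*;\rr[\vq_2]]$, followed by one deterministic step (probability $1$) to $[\rr\qsubst\rho(\mathcal V_\pi)]$. Applying $r \btleq_{\BGamma_2\vdash \AAA} \rr$ to $\rho(\mathcal V_\pi)$ gives $(r\otimes \id)\circ \lrb{\rho(\mathcal V_\pi)} \btleq_\AAA^k [\rr\qsubst\rho(\mathcal V_\pi)]$, and Lemma~\ref{lem:quantum-logical-convex-sum} (applied to the family of lifted+extended prefixes with target $[(\qq;\rr)\qsubst\mathcal V]$) combines these to yield $h(s) \btleq_\AAA^k [(\qq;\rr)\qsubst \mathcal V]$. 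Finally, Scott-continuity of $h$ together with Lemma~\ref{lem:topological-goodness} promotes this from $s\in\mathcal S_0$ to $h(c_1)$ itself, so $h(c_1)=(\cong\circ(q\otimes r)\otimes\id)\circ\lrb{\mathcal V}$ lies in $\overline{\mathcal S_0(\btleq_\AAA^k;[(\qq;\rr)\qsubst\mathcal V])}$, which is exactly the required relation.

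The main obstacle will be the semantic bookkeeping in the middle step: carefully tracking how $\lrb{\mathcal V}$ decomposes through $\lrb{\mathcal V_1}$ and $\qlrb{\vq_2}$ modulo permutations $\sigma_\ell,\sigma_{\ell_1},\sigma_{\ell_2}$ of qubit positions and the left-unitor isomorphisms $\mathbf I \otimes (-) \cong (-)$, in order to verify the identity $h(c_1) = (\cong\circ(q\otimes r)\otimes\id)\circ\lrb{\mathcal V}$ and the pointwise formula $h(\lrb{\mathcal W}) = (r\otimes\id)\circ\lrb{\rho(\mathcal W)}$. Once these are established, the rest is a routine combination of the two logical-relation hypotheses via the standard lifting of reduction through an evaluation context, together with the convex-sum closure lemma for the quantum logical relations.
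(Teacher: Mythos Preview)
The paper states this lemma without proof; it is one of a list of quantum closure lemmas in Section~\ref{sub:closure-properties} that are asserted and then used in the Fundamental Lemma. Your approach is the natural one and matches the pattern behind the analogous classical closure lemmas (compare the role of Lemma~\ref{lem:logical-composition} in the proofs referenced from \cite{m-monad}): reduce to showing that a single Scott-continuous linear map $h$ carries the generating set $\mathcal S_0(\btleq_{\mathbf I}^{m_2+k};[\qq\qsubst\mathcal V_1])$ into the target Scott-closed set $\mathcal S(\btleq_\AAA^k;[(\qq;\rr)\qsubst\mathcal V])$, then pass to the closure by continuity. Your use of Lemma~\ref{lem:quantum-logical-convex-sum} to combine the per-path contributions, together with the lifting of each terminal path $\pi$ of $[\qq\qsubst\mathcal V_1]$ through the evaluation context $\mathbf E=[-];\rr[\vq_2]$ followed by the deterministic step $[\ket{\psi_\pi},\ell_2,*;\rr[\vq_2]]\probto{1}[\rr\qsubst\rho(\mathcal V_\pi)]$, is exactly what is needed.

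Two minor remarks. First, the appeal to Lemma~\ref{lem:topological-goodness} is more than you need: since $h$ is Scott-continuous and the target set is Scott-closed, $h^{-1}$ of the target is already Scott-closed and contains $\mathcal S_0$, hence contains its closure; no product structure is involved. Second, your honest flagging of the permutation/unitor bookkeeping as the main obstacle is accurate. The delicate point is that reduction of $\qq$ may create and destroy qubits (via $\mathrm{new}$ and $\mathrm{meas}$), so the positions that $\ell_2$ refers to in $\ket\psi$ need not literally coincide with positions in the terminal state $\ket{\psi_\pi}$; one must track how the structural rule in Figure~\ref{fig:structural-reduction} threads $\ell_0=\ell_2$ through these steps and match this against the semantic permutations $\sigma_{\ell},\sigma_{\ell_1},\sigma_{\ell_2}$ in Figure~\ref{fig:quantum-configuration}. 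Once that is written out, the identities $h(c_1)=(\cong\circ(q\otimes r)\otimes\id)\circ\lrb{\mathcal V}$ and $h(\lrb{\mathcal W})=(r\otimes\id)\circ\lrb{\rho(\mathcal W)}$ follow, and the rest of your argument goes through.
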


\subsection{Fundamental Lemma and Strong Adequacy}

We extend the definition of the logical relations to cover \emph{all terms}, including those whose non-linear context may be non-empty.

\begin{definition}
  For any classical term $x_1 : P_1, \ldots, x_n : P_n \vdash M : R,$ and morphism $m \in \KL(\lrb{ P_1} \times \cdots \times \lrb{ P_n }, \lrb R)$ we shall write $m \ol{\tleq_{x_1 : P_1, \ldots , x_n : P_n \vdash R}} M$
  iff $\forall (v_1 \tleq_{P_1} V_1), \ldots, (v_n \tleq_{P_n} V_n)$ we have that $m \kcirc \llangle\ \vec v\ \rrangle \ol{\tleq_R} M[\vec V / \vec x].$
\end{definition}

\begin{definition}
  For any quantum term $x_1 : P_1, \ldots, x_n : P_n ; \yy_1 : \AAA_1, \ldots, \yy_m : \AAA_m \vdash \qq : \BBB$ and morphism $q \in \DCPO(\lrb{ P_1} \times \cdots \times \lrb{ P_n }, \QQ(\lrb{\AAA_1} \otimes \cdots \otimes \lrb{\AAA_m}, \lrb{\BBB}) )$
  we shall write $q \btleq_{x_1 : P_1, \ldots , x_n : P_n; \yy_1 : \AAA_1, \ldots, \yy_m : \AAA_m \vdash \BBB} \qq$
  iff $\forall (v_1 \tleq_{P_1} V_1), \ldots, (v_n \tleq_{P_n} V_n)$ we have that $q_{\vec v} {\btleq}_{  \yy_1 : \AAA_1, \ldots, \yy_m : \AAA_m  \vdash \BBB} \qq[\vec V / \vec x].$
\end{definition}

We may now prove the Fundamental Lemma which then easily implies our adequacy result.

\begin{lemma}[Fundamental]
\label{lem:fundamental}
For any classical term $\Phi \vdash M : R$ and any quantum term $\Phi; \BGamma \vdash \qq : \BBB$ we have that
  \[ \lrb M \ol{\tleq_{\Phi \vdash R}} M \qquad \text{ and } \qquad \lrb{\qq} {\btleq}_{\Phi; \BGamma \vdash \BBB} \qq \]
\end{lemma}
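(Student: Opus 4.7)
The plan is to prove both statements simultaneously by mutual induction on the derivations of $\Phi \vdash M : R$ and $\Phi; \BGamma \vdash \qq : \BBB$. Unpacking the definitions, it suffices to fix arbitrary substitutions $v_i \tleq_{P_i} V_i$ for the classical context $\Phi = x_1:P_1,\ldots,x_n:P_n$, and then show that $\lrb{M}\kcirc\llangle\vec v\rrangle \;\ol{\tleq_R}\; M[\vec V/\vec x]$ in the classical case, and that for every such substitution the resulting morphism is related to $\qq[\vec V/\vec x]$ by $\btleq_{\BGamma\vdash\BBB}$ in the quantum case. In all cases the Substitution Lemma~\ref{lem:term-substitution} reduces the interpretation of a substituted term to a composition of the interpretations of its components, which is precisely what the closure lemmas of \secref{sub:closure-properties} are designed to handle.

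For the classical fragment, each term-formation rule in Figure~\ref{fig:classical-term-syntax} is matched by a closure lemma: variables are handled by clause (C5) of Theorem~\ref{thm:formal-relations}; unit, pairs and projections use (A0), Lemma~\ref{lem:logical-pairs} and Lemma~\ref{lem:logical-projections}; sums and case analysis use Lemma~\ref{lem:logical-injections} and Lemma~\ref{lem:logica-case}; fold/unfold use Lemmas~\ref{lem:logical-fold}/\ref{lem:logical-unfold}; application is Lemma~\ref{lem:logical-aplication}. For the $\lambda$-abstraction case one uses (A3) together with the induction hypothesis on the body to verify both the inequality $\lrb{\lambda x.M}\kcirc\llangle\vec v\rrangle \leq \lrb{\lambda x.M[\vec V/\vec x]}$ (which follows from (C1) and (C3)) and the functional condition on $f[v]$. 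The quantum-constant cases (\text{new}, \text{meas}, $U$) reduce to showing that the constant classical value representing the corresponding $\QQ$-morphism is related to itself at type $Q(-,-)$; this is immediate from (A5) together with the definition of $\btleq^k_{\BBB}$ on value configurations (Lemma~\ref{lem:value-configuration}) and the one-step operational rules of Figure~\ref{fig:quantum-information-operational} combined with soundness (Theorem~\ref{thm:soundness}) applied to a single reduction.

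For the quantum fragment, each rule in Figures~\ref{fig:first-order-term-syntax} and~\ref{fig:syntax-mixed} is covered by a matching closure lemma on the $\btleq$-relations listed at the end of \secref{sub:closure-properties}: variables and $*$ by the trivial value case; $\qq;\rr$, $\qq\otimes\rr$, $\mathbf{let\ x\otimes y}$, $\mathbf{in}_i$, $\mathbf{case}$, $\mathbf{fold}/\mathbf{unfold}$ by the corresponding lemmas. The three mixed constructs are the delicate ones: for $\mathrm{run}\ \mathcal C$ and $\mathbf{init}\ m$ the dedicated lemmas use Theorem~\ref{thm:r-iso-categorical} (the isomorphism $r_\OO$) together with Proposition~\ref{prop:observable-value-interpretation} to match the interpretation with the observable reduction outcome; for $\mathbf{let}\ x=\mathrm{lift}\ \qq\ \mathbf{in}\ \rr$ the coherence property of $\widehat{(-)}$ from Proposition~\ref{prop:lift-more-general} ensures that the semantic value obtained by specialising at a Dirac-like input equals the component produced by the operational rule.

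The main obstacle I anticipate is the quantum function application $m\qq$: its interpretation invokes the barycentre map $\beta : \MM \QQ(\qlrb\AAA,\qlrb\BBB)\to \QQ(\qlrb\AAA,\qlrb\BBB)$ of Theorem~\ref{thm:M-alg}, which mixes the two distinct Kegelspitze structures. Here one first uses the induction hypothesis to get $\lrb m \ol{\tleq_{Q(\AAA,\BBB)}} m$, i.e.\ $\lrb m \in \mathcal S(\tleq_{Q(\AAA,\BBB)}; m)$, and similarly $\lrb\qq \btleq_{\BGamma\vdash\AAA}\qq$; then one must show that $\beta(\lrb m(*))\circ \lrb\qq$ lies in the Scott-closure defining $\btleq_\BBB^k$ on $[m\qq \qsubst \mathcal V]$. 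This is done by reducing to the simple-valuation case: when $\lrb m(*)=\sum_i r_i\,\delta_{f_i}$ with $f_i \tleq_{Q(\AAA,\BBB)} F_i$, the barycentre is $\sum_i r_i f_i$, and the one-step probabilistic fan-out of $m\qq$ into the terms $F_i\,\qq$ combined with condition (A5) on each $f_i$ yields a convex sum of the required form via Lemma~\ref{lem:quantum-logical-convex-sum}; the general case follows because $\btleq_\BBB^k$ is Scott-closed in its first argument and $\beta$ is Scott-continuous and linear. The analogous, but simpler, use of Scott-closure and convex sums (Lemma~\ref{lem:logical-convex-sum}) handles the remaining inductive cases. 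Once the Fundamental Lemma is established, Theorem~\ref{thm:strong-adequacy} follows by instantiating it at closed terms and configurations and combining with clause~(C3) and Corollary~\ref{cor:soundness-inequality} to squeeze the two inequalities into an equality.
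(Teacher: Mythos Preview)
Your proposal is correct and follows essentially the same route as the paper: mutual induction on derivations, with each formation rule discharged by the matching closure lemma of \secref{sub:closure-properties}. The paper's write-up differs only in emphasis: it singles out the \emph{quantum} lambda abstraction $\lambdaq(\xx_1,\ldots,\xx_n).\qq$ (a classical value of type $Q(\AAA_1\otimes\cdots\otimes\AAA_n,\BBB)$, so handled via (A5) rather than (A3)) as the case it spells out in full, whereas for $m\qq$ it simply invokes the closure lemma ``$m\ol{\tleq_{Q(\AAA,\BBB)}}M$ and $q\btleq_{\BGamma\vdash\AAA}\qq$ imply $\beta(m(*))\circ q\btleq_{\BGamma\vdash\BBB}M\qq$'' without proof. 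Your sketch of how that closure lemma is established (reduce to finite path-indexed convex sums in $\mathcal S_0$, use (A5) on each summand, apply Lemma~\ref{lem:quantum-logical-convex-sum}, then pass to the Scott-closure via continuity and linearity of $\beta$) is the right argument; just note that the ``fan-out'' is over \emph{reduction paths} of $M$ rather than a single step.
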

\begin{proof}
By induction on the derivation of the term.


\noindent\textbf{Classical Lambda Abstractions.}
  The case for classical lambda abstractions follows using exactly the same arguments as \cite[Lemma 117]{m-monad}.

\noindent\textbf{Quantum Lambda Abstractions.}
The case for quantum lambda abstractions follows using similar arguments which we now present.

  Let us assume that the term of the induction hypothesis is
\[ x_1 : P_1, \ldots, x_n : P_n; \yy_1 : \AAA_1, \ldots, \yy_m : \AAA_m  \vdash \qq : \BBB. \]
Let $(v_1 \tleq_{P_1} V_1), \ldots, (v_n \tleq_{P_n} V_n)$ be arbitrary.
  Let us write $l \eqdef \lrb{\lambdaq \vec{\yy}. \qq} \kcirc \llangle\ \vec v\ \rrangle$ and $R \eqdef \lambdaq \vec{\yy}. \qq[ \vec V / \vec x]$.
Observe that $l \in \TD$ and therefore by Theorem \ref{thm:formal-relations} (C5), we may equivalently show that
  \[ l  \tleq_{Q(\AAA_1 \otimes \cdots \otimes \AAA_m, \BBB)}  R . \]
By Theorem \ref{thm:formal-relations} (A5), this is in turn equivalent to showing that
  \[ l \leq \lrb{R} \text{ and } \forall k \in \mathbb N. \forall (\cdot \vdash [\ket{\psi}, \ell, \vq_1 \otimes \cdots \otimes \vq_m] : \AAA_1 \otimes \cdots \otimes \AAA_m; \qbit^k) .   \] 
  \[  (\beta(l(*)) \otimes \id) \circ \lrb{[\ket{\psi}, \ell, \vq_1 \otimes \cdots \otimes \vq_m]} \btleq_{\BBB}^k [\ket \psi, \ell, R( \vq_1 \otimes \cdots \otimes \vq_m )] \]
The inequality is satisfied, because
\begin{align*}
  l &= \lrb{\lambdaq \vec{\yy}. \qq} \kcirc \llangle\ \vec v\ \rrangle & & \\
    & \leq \lrb{\lambdaq \vec{\yy}. \qq} \kcirc \llangle\ \vec{\lrb V}\ \rrangle & & (\text{Theorem \ref{thm:formal-relations} (C1)} ) \\
& = \lrb R . & & (\text{Lemma \ref{lem:term-substitution}} )
\end{align*}
For the other requirement, we reason as follows
\begin{align*}
  \beta( l(*) )   &= \beta( \lrb{\lambdaq \vec{\yy}. \qq}_{\vec v} ) & & (\text{Definition})\\
  &= \beta( \JJ\lrb{\qq}_{\vec v} )  & & \\
  &= \lrb{\qq}_{\vec v}  & & \\
  & {\btleq_{\yy_1 : \AAA_1, \ldots, \yy_m : \AAA_m \vdash \BBB}} \qq[\vec V/ \vec x]  . & & \text{(Induction Hypothesis)}
\end{align*}
It now follows (by definition) that for any $k \in \mathbb N$ and any $\cdot \vdash [\ket{\psi}, \ell, \vq_1 \otimes \cdots \otimes \vq_m] : \AAA_1 \otimes \cdots \otimes \AAA_m; \qbit^k$ we have that
  \[  (\beta(l(*)) \otimes \id) \circ \lrb{[\ket{\psi}, \ell, \vq_1 \otimes \cdots \otimes \vq_m]} \btleq_{\BBB}^k [\ket \psi, \ell, \qq[\vec\vq / \vec \yy, \vec V / \vec x ] ] . \]
  Finally, observe that $[\ket \psi, \ell, R( \vq_1 \otimes \cdots \otimes \vq_m )] \probto{1} [\ket \psi, \ell, \qq[\vec\vq / \vec \yy, \vec V / \vec x ] ]$ and then by Lemma \ref{lem:quantum-logical-convex-sum} it follows that
  \[  (\beta(l(*)) \otimes \id) \circ \lrb{[\ket{\psi}, \ell, \vq_1 \otimes \cdots \otimes \vq_m]} \btleq_{\BBB}^k [\ket \psi, \ell, R( \vq_1 \otimes \cdots \otimes \vq_m )]  , \]
  as required.


The cases for terms whose formation rules do not have a premise follow by straightforward verification.
All other cases follow by induction using the relevant closure lemma from Section \ref{sub:closure-properties}.
\end{proof}

\begin{corollary}
\label{cor:logical-configuration}
  For any closed quantum configuration $\cdot \vdash \mathcal C : \AAA; \qbit^k$, we have that $\lrb{\mathcal C} \btleq_{\AAA}^k \mathcal C.$
\end{corollary}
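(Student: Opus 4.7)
The plan is to derive this corollary directly from the quantum half of the Fundamental Lemma (Lemma \ref{lem:fundamental}) by choosing a canonical ``dummy'' value configuration that makes the substitution act as the identity. A closed configuration $\cdot \vdash \mathcal{C} : \mathbf{A}; \qbit^k$ has the shape $[\ket\psi,\ell,\mathbf{q}]$ where $\cdot;\mathbf{x}_1:\qbit,\ldots,\mathbf{x}_m:\qbit \vdash \mathbf{q}:\mathbf{A}$ is a closed quantum term judgement (with empty classical context because $\mathcal{C}$ is closed), $\dim(\ket\psi)=m+k$, and $\ell$ is injective. Applying the Fundamental Lemma to $\mathbf{q}$ yields $\qlrb{\mathbf{q}}\,\btleq_{\cdot;\,\mathbf{x}_1:\qbit,\ldots,\mathbf{x}_m:\qbit\,\vdash\,\mathbf{A}}\,\mathbf{q}$, so by definition of this extended relation, for every $k'\in\Na$ and every well-formed value configuration $\cdot\vdash\mathcal{V}:\qbit^m;\qbit^{k'}$ we have $(\qlrb{\mathbf{q}}\otimes\id_{\lrb{\qbit^{k'}}})\circ\lrb{\mathcal{V}}\,\btleq^{k'}_{\mathbf{A}}\,[\mathbf{q}\qsubst\mathcal{V}]$.

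The key step is to pick the right $\mathcal{V}$. Since the only quantum value of type $\qbit$ is a variable, the configuration $\mathcal{V}\defeq[\ket\psi,\ell,\mathbf{x}_1\otimes\cdots\otimes\mathbf{x}_m]$ is a well-formed value configuration of type $\qbit^m;\qbit^k$: its linking function is the same $\ell$ as in $\mathcal{C}$ (hence injective), and the quantum state has the required dimension $m+k$. Taking $k'\defeq k$ and $\mathcal{V}$ as above, the substitution $[\mathbf{q}\qsubst\mathcal{V}]$ replaces each $\mathbf{x}_i$ with itself, so it yields exactly $[\ket\psi,\ell,\mathbf{q}]=\mathcal{C}$.

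It then remains to verify that $(\qlrb{\mathbf{q}}\otimes\id_{\lrb{\qbit^{k}}})\circ\lrb{\mathcal{V}}=\lrb{\mathcal{C}}$ as morphisms in $\QQ(\mathbb{C},\qlrb{\mathbf{A}}\otimes\lrb{\qbit^k})$. Unfolding Figure~\ref{fig:quantum-configuration}, the interpretation $\lrb{\mathcal{V}}$ of the tuple-of-variables configuration is $(\id_{\qbit^m}\otimes\id_{\qbit^k})\circ\sigma_\ell\circ\mathrm{state}^\ddagger_{\ket\psi\bra\psi}=\sigma_\ell\circ\mathrm{state}^\ddagger_{\ket\psi\bra\psi}$, using that the quantum term $\mathbf{x}_1\otimes\cdots\otimes\mathbf{x}_m$ in the obvious context is interpreted as the identity (which follows by induction from Figure~\ref{fig:quantum-term-semantics}). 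Composing with $\qlrb{\mathbf{q}}\otimes\id_{\qbit^k}$ on the left reproduces precisely the interpretation of $\mathcal{C}$ from Figure~\ref{fig:quantum-configuration}, completing the argument.

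There is essentially no obstacle here beyond bookkeeping: the Fundamental Lemma does all the heavy lifting, and the only ingredient particular to this corollary is the observation that value configurations of qubit-only type are just linking-function-plus-state pairs decorated with variables, so they can always be chosen to make the Fundamental Lemma ``fire'' on the given configuration. The one subtlety to double-check is that the semantic interpretation of $\mathbf{x}_1\otimes\cdots\otimes\mathbf{x}_m$ is genuinely the identity rather than some permutation, but this follows immediately from the clauses for variables and for $\otimes$ in Figure~\ref{fig:quantum-term-semantics}.
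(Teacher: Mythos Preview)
Your proposal is correct and follows essentially the same approach as the paper: apply the Fundamental Lemma to the quantum term $\qq$, instantiate the resulting relation with the value configuration $[\ket\psi,\ell,\xx_1\otimes\cdots\otimes\xx_m]$, and observe that both the syntactic substitution and the semantic composition reproduce $\mathcal C$ and $\lrb{\mathcal C}$ respectively. The paper's proof is slightly terser but the key idea and the chosen value configuration are identical.
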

\begin{proof}
  Let $\mathcal C = [\ket \psi, \ell, \qq]$. It follows that for $n = \dim(\ket \psi) - k$ we have $\xx_1 : \qbit, \ldots, \xx_n : \qbit \vdash \qq : \AAA$ and then by the Fundamental Lemma we know that
  \[ \lrb{\qq} \btleq_{\xx_1 : \qbit, \ldots, \xx_n : \qbit  \vdash \AAA} \qq. \]
  By definition it now follows that for $\cdot \vdash [\ket \psi, \ell, \xx_1 \otimes \cdots \otimes \xx_n ] : \qbit \otimes \cdots \otimes \qbit; \qbit^k$ we have
  \[ (\lrb{\qq} \otimes \id_{\qbit^k}) \circ \lrb{ [\ket \psi, \ell, \xx_1 \otimes \cdots \otimes \xx_n ] } \btleq_{\AAA}^k [\qq \qsubst [\ket \psi, \ell, \xx_1 \otimes \cdots \otimes \xx_n ] ] . \]
  The LHS is precisely $\lrb{\mathcal C}$ and the RHS is exactly $\mathcal C$ which completes the proof.
\end{proof}

Adequacy now follows as a consequence of the Fundamental Lemma.

\begin{theorem}[Strong Adequacy]
  Let $\cdot \vdash m : P$ be a closed classical term and $\cdot \vdash \mathcal C: \mathbf A ; \qbit^k$ a closed quantum configuration. Then:
  \begin{align*}
  \lrb m = \sum_{v \in \text{Val}(m)} P(m \probto{}_* v) \lrb{v} \qquad \qquad
  \lrb{\mathcal C} = \sum_{\mathcal V \in \mathrm{ValC}(\mathcal C)} P(\mathcal C \probto{}_* \mathcal V) \lrb{\mathcal V} .
  \end{align*}
\end{theorem}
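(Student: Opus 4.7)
The plan is to establish the two inequalities separately. The $\geq$ direction is essentially an iteration of single-step Soundness (Theorem~\ref{thm:soundness}): by induction on $n$, using linearity of composition and the Kegelspitze structure on homsets together with Scott-continuity, one shows $\lrb m \geq \sum_{\pi \in \Paths_{\leq n}(m,v), v \in \Val(m)} P(\pi) \lrb v$ and analogously for configurations; taking the directed supremum in $n$ yields the $\geq$ inequality (this is Corollary~\ref{cor:soundness-inequality}). The difficult and interesting direction is $\leq$, which I will obtain via a logical-relations (formal approximation) argument, following the standard recipe for call-by-value languages with recursion and probability.

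Concretely, I would construct, for each closed classical type $P$, a binary \emph{formal approximation relation} $\tleq_P \subseteq \TD(1,\lrb P) \times \Val(P)$ and its closure $\overline{\tleq_P} \subseteq \KL(1,\lrb P) \times \Prog(P)$, where $m \overline{\tleq_P} M$ holds iff $m$ lies in the Scott-closure of the set of partial convex sums $\sum_{\pi \in F} P(\pi) v_\pi$ with $F \subseteq \TPaths(M)$ finite and $v_\pi \tleq_P V_\pi$. The defining clauses are the obvious ones: $f \tleq_{1} ()$ iff $f=\kid$; the coproduct/product clauses use the tags and pairing; the function-type clause says $f \tleq_{P\to R} \lambda x.M$ iff the usual pointwise property holds on related arguments; the recursive-type clause folds/unfolds; and the crucial new clause for $Q(\AAA,\BBB)$ relates $f \tleq_{Q(\AAA,\BBB)} \ff$ to a \emph{quantum} logical relation $\btleq_{\BBB}^k$ on configurations, which I also define: $c \btleq_\AAA^k \mathcal C$ iff $c$ lies in the Scott-closure of $\{\sum_{\pi \in F} P(\pi) \lrb{\mathcal V_\pi}\mid F \subseteq \TPaths(\mathcal C)\text{ finite}\}$ in $\QQ(\mathbb C,\lrb{\AAA\otimes\qbit^k})$.

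With these in hand, I would prove a suite of closure lemmas (probabilistic choice over finite sets of reduction paths, case, pair/projection, $\fold$/$\unfold$, application, tensor, dynamic lift, run, init, quantum function application via the barycentre map, etc.), each saying that if the subterms are related, so is the compound term. The centrepiece is the Fundamental Lemma $\lrb m \overline{\tleq_P} m$ and $\lrb{\qq} \btleq_{\BGamma\vdash \BBB} \qq$, proved by mutual induction on typing derivations. From the Fundamental Lemma and Lemma~\ref{lem:value-configuration} (together with its classical analogue) one deduces that $\lrb m$ is the Scott-limit of finite convex combinations $\sum_{\pi \in F} P(\pi) \lrb{V_\pi}$, which by monotone convergence gives exactly the countable convex sum $\sum_{v \in \Val(m)} P(m \probto{}_* v) \lrb v$, i.e.\ the $\leq$ direction. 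The same argument in $\QQ(\mathbb C, \lrb{\AAA \otimes \qbit^k})$ yields the configuration inequality.

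The main obstacle is \emph{existence} of the family $\{\tleq_P\}$ in the presence of mixed-variance recursive types, since the clause at $P \to R$ depends on $\tleq_P$ \emph{negatively}, so a naive inductive definition on type size fails. I would resolve this by setting up, for each closed $P$, a category $\RR(P)$ whose objects are triples $(X, e_X \colon X \kto \lrb P, \tleq_{X})$ and whose morphisms are embedding/projection pairs compatible with the relations; verify that each $\RR(P)$ is $\dcpo$-algebraically compact over $\PD_e$ (with the forgetful functor preserving and reflecting $\omega$-colimits), lift the bifunctors $+,\times,\to$ to $\omega$-cocontinuous functors on these categories, and construct an \emph{augmented interpretation} $\elrbs{P} \in \RR(P)$ by parameterised initial algebras. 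A substitution lemma and a coherence argument show that the underlying embedding of $\elrbs{P}$ is the identity on $\lrb P$, hence the third component is the desired logical relation $\tleq_P$. The quantum side is easier: since $\QPL$ is first-order and quantum values do not embed classical function spaces in a harmful way, $\btleq_\BBB^k$ can be defined directly without categorical gymnastics, and the only subtlety is handling the $m\qq$ rule, where the barycentre map from Theorem~\ref{thm:M-alg} must be shown to preserve the approximation relation — this uses exactly the continuous-domain enrichment of $\QQ$ that the paper worked hard to establish.
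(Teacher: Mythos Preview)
Your proposal is correct and follows essentially the same approach as the paper: the $\geq$ direction is Corollary~\ref{cor:soundness-inequality}, and the $\leq$ direction is obtained via precisely the logical-relations machinery you describe, including the categories $\RR(P)$ of relations to handle mixed-variance recursion, the augmented type interpretation $\elrbs{P}$, the separately-defined quantum relation $\btleq_\AAA^k$, and the Fundamental Lemma. The only cosmetic difference is the very last step: rather than phrasing it as ``$\lrb m$ is a Scott-limit of finite convex combinations,'' the paper observes that the set $\mathcal S_0$ of such combinations lies in the downset $\down u$ of the countable sum, so its Scott-closure does too, whence $\lrb m \leq u$ --- but this is exactly the content of your monotone-convergence remark.
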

\begin{proof}
  The equation for classical terms may be established using exactly the same arguments as \cite[Theorem 118]{m-monad}. The equation for quantum configurations also follows using the same arguments which we repeat now.
Let 
  \[ u \eqdef \sum_{\mathcal V \in \ValC(\mathcal C)} P(\mathcal C \probto{}_* \mathcal V) \lrb{\mathcal V} . \]
  From Corollary \ref{cor:soundness-inequality}, we know that $\lrb{\mathcal C} \geq u.$ It remains to show the converse inequality.
Towards this end, observe that $\mathcal S_0(\btleq_\AAA^k ; \mathcal C) \subseteq \down u$.
  To establish this, we reason as follows. Taking an arbitrary element of $\mathcal S_0(\btleq_\AAA^k ; \mathcal C)$ as in Definition \ref{def:logical-configuration}:
\begin{align*}
\sum_{\pi \in F} P(\pi) \lrb{\mathcal V_\pi} &= \sum_{\mathcal V \in \cup\{\mathcal V_\pi | \pi \in F\}} \left( \sum_{\substack{\pi \in F \\ \mathcal V_\pi = \mathcal V}} P(\pi) \right) \lrb{\mathcal V} & & \\
&\leq \sum_{\mathcal V \in \cup\{\mathcal V_\pi | \pi \in F\}} \left( \sum_{\pi \in \Paths(\mathcal C, \mathcal V) } P(\pi) \right) \lrb{\mathcal V} & & \\
&= \sum_{\mathcal V \in \cup\{\mathcal V_\pi | \pi \in F\}} P(\mathcal C \probto{}_* \mathcal V) \lrb{\mathcal V} & & \\
&\leq \sum_{\mathcal V \in \ValC(\mathcal C)} P(\mathcal C \probto{}_* \mathcal V) \lrb{\mathcal V} . & &
\end{align*}

The set $\down u $ is Scott-closed and therefore $\mathcal S(\btleq_\AAA^k ; \mathcal C) \subseteq \down u $. By Corollary \ref{cor:logical-configuration}, we know
that $\lrb{\mathcal C} {\btleq_\AAA^k} \mathcal C.$ By definition of ${\btleq_\AAA^k}$ it follows $\lrb{\mathcal C} \in \mathcal S(\btleq_\AAA^k ; \mathcal C)$ and therefore $\lrb{\mathcal C} \leq u,$ thus finishing the proof.
\end{proof}

\end{document}